%% LyX 2.4.2.1 created this file.  For more info, see https://www.lyx.org/.
%% Do not edit unless you really know what you are doing.
\documentclass[12pt,english]{article}
\usepackage[T1]{fontenc}
\usepackage[utf8]{inputenc}
\synctex=-1
\usepackage{color}
\usepackage{babel}
\usepackage{prettyref}
\usepackage{url}
\usepackage{enumitem}
\usepackage{bm}
\usepackage{amsmath}
\usepackage{amsthm}
\usepackage{amssymb}
\usepackage{graphicx}
\usepackage{geometry}
\geometry{verbose,tmargin=1in,bmargin=1in,lmargin=1in,rmargin=1in}
\usepackage{setspace}
\usepackage[authoryear]{natbib}
\usepackage{microtype}
\onehalfspacing
\usepackage[pdfusetitle,
 bookmarks=true,bookmarksnumbered=false,bookmarksopen=false,
 breaklinks=false,pdfborder={0 0 0},pdfborderstyle={},backref=false,colorlinks=true]
 {hyperref}
\hypersetup{
 linkcolor=magenta, urlcolor=cyan, citecolor=blue}

\makeatletter
%%%%%%%%%%%%%%%%%%%%%%%%%%%%%% Textclass specific LaTeX commands.
      % auxiliary length
\theoremstyle{plain}
\newtheorem{assumption}{\protect\assumptionname}
\theoremstyle{remark}
\newtheorem{rem}{\protect\remarkname}
\theoremstyle{plain}
\newtheorem{lyxalgorithm}{\protect\algorithmname}
\theoremstyle{plain}
\newtheorem{prop}{\protect\propositionname}
\theoremstyle{plain}
\newtheorem{thm}{\protect\theoremname}
\theoremstyle{plain}
\newtheorem{cor}{\protect\corollaryname}
\theoremstyle{plain}
\newtheorem{assumptionprimepinner}{Assumption}
\newenvironment{assumptionprimep}[2][]{%
  \def\temp{#1}\ifx\temp\empty%
  \begin{assumptionprimepinner}
  \else%
  \begin{assumptionprimepinner}[#1]
  \fi%
}{\end{assumptionprimepinner}}
\theoremstyle{plain}
\newtheorem{lem}{\protect\lemmaname}
% requires: hyperref + prettyref
% https://tex.stackexchange.com/questions/450357
\newcommand{\raisedtarget}[1]{%
  \raisebox{\baselineskip}[0pt][0pt]{\hypertarget{#1}{}}%
}

\newenvironment{delayedproof}[1]
  {\begin{proof}[\raisedtarget{#1}Proof of \prettyref{#1}]}
  {\end{proof}}

%%%%%%%%%%%%%%%%%%%%%%%%%%%%%% User specified LaTeX commands.
%!TEX program = pdflatex

\date{}

\usepackage{babel}

\usepackage{microtype}

\usepackage{eucal}
\usepackage{relsize}
\usepackage{bm}

\usepackage[bottom]{footmisc}
 % 1.1linespacing for footnote
\addtolength{\footnotesep}{1.2mm}
\setlength{\skip\footins}{1.5pc plus 2pt}

\usepackage{mleftright}
\mleftright

\usepackage{enumitem}
\setlist[enumerate]{label=(\roman*),itemsep=0pt}

\usepackage{multirow}
\usepackage{longtable}
\usepackage{rotating}
\usepackage{makecell}
\usepackage{threeparttable}
\usepackage[table]{xcolor}

\usepackage{changepage}  % adjustwidth

\usepackage{siunitx}
% https://tex.stackexchange.com/questions/174876
\sisetup{
    detect-all,
    % round-integer-to-decimal = true,
    group-digits             = true,
    % group-minimum-digits     = 4,
    % group-separator          = {\,},
    table-align-text-pre     = false,
    table-align-text-post    = false,
    input-signs              = + -,
    % input-symbols            = {*} {**} {***},
    input-open-uncertainty   = ,
    input-close-uncertainty  = ,
    retain-explicit-plus
}

% \usepackage{tabularx}
% \newcolumntype{Y}{>{\centering\arraybackslash}X}

\usepackage{booktabs}
\usepackage{caption}
\captionsetup{labelsep=quad,labelfont={sc,small},font={rm,small}}
\captionsetup[table]{skip=.5\baselineskip}

\setlength{\textfloatsep}{25pt plus 2.5pt minus 5pt} % https://tex.stackexchange.com/a/26522/146373

% \renewcommand\@makefntext[1]{\makebox[1.2em][l]{\@thefnmark.}#1}

%\renewcommand{\th@plain}{\slshape}

% https://tex.stackexchange.com/questions/85400
\def\thm@space@setup{%
  \thm@preskip=12pt plus 4pt minus 3pt
  \thm@postskip=1.05\thm@preskip % or whatever, if you don't want them to be equal
}

\def\th@remark{%
  \thm@headfont{\bfseries}
  \itshape
  \thm@preskip=12pt plus 4pt minus 3pt
  \thm@postskip\thm@preskip
}

\renewenvironment{proof}[1][\proofname]{\par
    \pushQED{\qed}%
    \normalfont \topsep12\p@\@plus4\p@\@minus3\p@\relax
    \trivlist
    \item\relax
          {\itshape
      #1\@addpunct{.}}\hspace\labelsep\ignorespaces
  }{%
    \popQED\endtrivlist\@endpefalse
    \addvspace{13pt plus 4pt minus 3pt}
  }

\DefineFNsymbols*{noasterisk}{
   {\TextOrMath \textdagger \dagger}%
   {\TextOrMath \textdaggerdbl \ddagger}%
   {\TextOrMath \textsection  \mathsection}%
   {\TextOrMath \textparagraph \mathparagraph}%
   {\TextOrMath \textbardbl \|}%
   {\TextOrMath {\textdagger\textdagger}{\dagger\dagger}}%
   {\TextOrMath {\textdaggerdbl\textdaggerdbl}{\ddagger\ddagger}}%
}
\setfnsymbol{noasterisk}

% https://tex.stackexchange.com/questions/249352
\DeclareFontFamily{U}{mathx}{}
\DeclareFontShape{U}{mathx}{m}{n}{<-> mathx10}{}
\DeclareSymbolFont{mathx}{U}{mathx}{m}{n}
\DeclareMathAccent{\widehat}{0}{mathx}{"70}
\DeclareMathAccent{\widecheck}{0}{mathx}{"71}

% https://tex.stackexchange.com/questions/235118/
\let\smallcdot\cdot
\DeclareRobustCommand*{\bigcdot}{%
  \mathbin{\mathpalette\bigcdot@{}}%
}
\newcommand*{\bigcdot@scalefactor}{.5}
\newcommand*{\bigcdot@widthfactor}{1.15}
\newcommand*{\bigcdot@}[2]{%
  % #1: math style
  % #2: unused
  \sbox0{$#1\vcenter{}$}% math axis
  \sbox2{$#1\smallcdot\m@th$}%
  \hbox to \bigcdot@widthfactor\wd2{%
    \hfil
    \raise\ht0\hbox{%
      \scalebox{\bigcdot@scalefactor}{%
        \lower\ht0\hbox{$#1\bullet\m@th$}%
      }%
    }%
    \hfil
  }%
}

\let\cdot\bigcdot

\let\Gamma\varGamma
\let\Delta\varDelta
\let\Theta\varTheta
\let\Lambda\varLambda
\let\Xi\varXi
\let\Pi\varPi
\let\Sigma\varSigma
\let\Upsilon\varUpsilon
\let\Phi\varPhi
\let\Psi\varPsi
\let\Omega\varOmega

\let\epsilon\varepsilon

\let\hat\widehat
\let\tilde\widetilde
\allowdisplaybreaks

\g@addto@macro\normalsize{%
 \abovedisplayskip=7pt plus 1pt minus 2pt
 \abovedisplayshortskip=5.5pt plus 1pt minus 2pt
 \belowdisplayskip=7pt plus 1pt minus 2pt
 \belowdisplayshortskip=6.5pt plus 1pt minus 2pt
}{}{}

\providecommand{\corollaryname}{Corollary}

\providecommand{\lemmaname}{Lemma}
\providecommand{\theoremname}{Theorem}
\providecommand{\propositionname}{Proposition}
\providecommand{\assumptionname}{Assumption}

%\newref{assump}{name = Assumption~, names = Assumptions~, Name = Assumption~, Names = Assumptions~}
%\newref{lem}{name = Lemma~, names = Lemmas~, Name = Lemma~, Names = Lemmas~}
%\newref{cor}{name = Corollary~, names = Corollaries~, Name = Corollary~, Names = Corollaries~}
%\newref{thm}{name = Theorem~, names = Theorems~, Name = Theorem~, Names = Theorems~}
%\newref{prop}{name = Proposition~, names = Propositions~, Name = Proposition~, Names = Propositions~}
\newrefformat{assump}{Assumption~\ref{#1}}
\newrefformat{lem}{Lemma~\ref{#1}}
\newrefformat{cor}{Corollary~\ref{#1}}
\newrefformat{thm}{Theorem~\ref{#1}}
\newrefformat{prop}{Proposition~\ref{#1}}
\newrefformat{rem}{Remark~\ref{#1}}

\newcommand{\proofref}[2][]{%
  \hyperlink{#2}{%
    \if\relax\detokenize{#1}\relax
      proof of \prettyref{#2}%
    \else
      #1%
    \fi
  }%
}

\usepackage{apptools}
\AtAppendix{
  %\counterwithin{thm}{section}
  \counterwithin{fact}{section}
  \counterwithin{rem}{section}
  \counterwithin{table}{section}
}
\AtAppendix{\counterwithin{lem}{section}}
\AtAppendix{\counterwithin{prop}{section}}
\AtAppendix{\counterwithin{cor}{section}}
\AtAppendix{\counterwithin{footnote}{section}}
% \AtAppendix{\counterwithin{equation}{section}}
\AtAppendix{
  \counterwithin{equation}{section}
  \renewcommand{\theequation}{\thesection-\arabic{equation}}
}

% https://tex.stackexchange.com/questions/237647
\def\td#1{\tilde{#1}\sbcorr{0mu}{1.5mu}}

\def\sbcorr#1#2{%
  \def\tmpa{#1}%
  \def\tmpb{#2}%
  \futurelet\next\sbcorrA%
}
\def\sbcorrA{%
  \ifx\next_%
    \expandafter\sbcorrB%
  \else%
    \expandafter\sbcorrC%
  \fi%
}
\def\sbcorrB_#1{%
  _{\mkern\tmpa#1}%
  \futurelet\next\sbcorrC%
}
\def\sbcorrC{%
  \ifx\next^%
    \expandafter\sbcorrD%
  \else%
    \fi%
}
\def\sbcorrD^#1{%
  ^{\mkern\tmpb#1}%
}

\newcommand{\ncal}{\mathcal{N}}

\newcommand{\xtd}{\tilde{x}}

\newcommand{\fe}{{\mathrm{WG}}}

\newcommand{\ivx}{{\mathrm{IVX}}}

\newcommand{\epct}{\mathbb{E}}

\newcommand{\bhatfe}{\widehat{\beta}^{\mkern2mu\fe}}

\newcommand{\bhativx}{\widehat{\beta}^{\mkern2mu\ivx}}

\newcommand{\rohatfe}{\widehat{\rho}^{\mkern2mu\fe}}

\newcommand{\joto}{(n,T)\to\infty}
\newcommand{\pto}{\to_p}
\newcommand{\dto}{\to_d}

\let\math@org=$
\def\itinlinemath#1{%
  \math@org%
  \mkern+1mu\relax%
  #1%
  \mkern-1.2mu\relax%
  \math@org%
}

\begingroup
  \catcode`\$=13
  \gdef\activateitalicmath{%
    \catcode`\$=13%
    \def${\math@org}%
    \def$##1${\itinlinemath{##1}}%
  }
\endgroup

\AtBeginEnvironment{assumption}{\activateitalicmath}
\AtBeginEnvironment{assumptionprime}{\activateitalicmath}
\AtBeginEnvironment{assumptionprimep}{\activateitalicmath}
\AtBeginEnvironment{assumptionppp}{\activateitalicmath}
\AtBeginEnvironment{lem}{\activateitalicmath}
\AtBeginEnvironment{thm}{\activateitalicmath}
\AtBeginEnvironment{cor}{\activateitalicmath}
\AtBeginEnvironment{prop}{\activateitalicmath}
\AtBeginEnvironment{lyxalgorithm}{\activateitalicmath}
\AtBeginEnvironment{rem}{\activateitalicmath}

\usepackage{upref}
\usepackage{xparse}
\usepackage{letltxmacro}

\AtBeginDocument{%
  % \citep[pre][post]{key}
  \LetLtxMacro{\origcitep}{\citep}
  \RenewDocumentCommand{\citep}{o o m}{%
    \textup{%
      \IfNoValueTF{#1}{%
        \IfNoValueTF{#2}{\origcitep{#3}}{\origcitep[][#2]{#3}}%
      }{%
        \IfNoValueTF{#2}{\origcitep[#1]{#3}}{\origcitep[#1][#2]{#3}}%
      }%
    }%
  }%

  % \citet[pre][post]{key}
  \LetLtxMacro{\origcitet}{\citet}
  \RenewDocumentCommand{\citet}{o o m}{%
    \textup{%
      \IfNoValueTF{#1}{%
        \IfNoValueTF{#2}{\origcitet{#3}}{\origcitet[][#2]{#3}}%
      }{%
        \IfNoValueTF{#2}{\origcitet[#1]{#3}}{\origcitet[#1][#2]{#3}}%
      }%
    }%
  }%

  % \cite
  \LetLtxMacro{\origcite}{\cite}
  \RenewDocumentCommand{\cite}{o o m}{%
    \textup{%
      \IfNoValueTF{#1}{%
        \IfNoValueTF{#2}{\origcite{#3}}{\origcite[][#2]{#3}}%
      }{%
        \IfNoValueTF{#2}{\origcite[#1]{#3}}{\origcite[#1][#2]{#3}}%
      }%
    }%
  }%
}

\makeatother

\providecommand{\algorithmname}{Algorithm}
\providecommand{\assumptionname}{Assumption}
\providecommand{\corollaryname}{Corollary}
\providecommand{\lemmaname}{Lemma}
\providecommand{\propositionname}{Proposition}
\providecommand{\remarkname}{Remark}
\providecommand{\theoremname}{Theorem}

\begin{document}
\title{Nickell Meets Stambaugh: A Tale of \\
Two Biases in Panel Predictive Regressions\thanks{\protect\setstretch{1.14}Chengwang Liao (\protect\url{lchw@link.cuhk.edu.hk});
Ziwei Mei (\protect\url{ziweimei@um.edu.mo}); Zhentao Shi (\protect\url{zhentao.shi@cuhk.edu.hk},
corresponding author). Address: 9/F Esther Lee Building, Department
of Economics, The Chinese University of Hong Kong, Shatin, New Territories,
Hong Kong SAR, China; Tel: (+852) 3943\,1432. Shi acknowledges the
partial financial support from the National Natural Science Foundation
of China (No.72425007) and the Direct Fund of the Social Science Faculty
of the Chinese University of Hong Kong (No.4052328). We thank James
Duffy, Erik Hjalmarsson, Anastasios Magdalinos, Peter C.B.~Phillips,
and Martin Weidner for helpful comments. We thank the Managing Editor,
Associate Editor, and four anonymous referees for their constructive
reviews that lead to substantial improvement and enrichment of this
paper. We thank Ji Pan and Menjie Shi for excellent research assistance.}}
\author{Chengwang Liao$^{a}$, Ziwei Mei$^{b}$, Zhentao Shi$^{a}$ \\
 $^{a}$The Chinese University of Hong Kong\\
$^{b}$University of Macau}

\maketitle
\bigskip{}

\begin{abstract}
In panel predictive regressions with persistent covariates, coexistence
of the Nickell bias and the Stambaugh bias imposes challenges for
estimation and hypothesis testing. This paper introduces an innovative
estimator, the Double IVX (DIVX), inspired by the IVX technique in
time series. DIVX effectively removes this composite Nickell-Stambaugh
bias and reinstates standard inferential procedures based on the $t$-statistic.
This new procedure achieves unified inference across a wide range
of modes of persistence in panel predictive regressions when the cross-sectional
dimension and the time dimension are comparably large. Such desirable
properties were unattainable by existing methods, including the popular
within-group estimator. Extensive Monte Carlo simulations demonstrate
the robustness of DIVX under a variety of settings. We apply DIVX
to panel data of financial markets in developed economies to examine
the predictability of stock returns.

\vspace{0.8cm}

\noindent\textbf{Key words}: bias correction, dynamic panel, local
projection, persistence, macro-finance

\noindent\textbf{JEL code}: C33, C53, E17
\end{abstract}
\clearpage{}

\section{Introduction\label{sec:Introduction}}

Prediction has been one of the fundamental tasks of econometrics.
In time series, the least squares (LS) for a linear predictive regression
context is known to incur the \emph{Stambaugh bias} \citep{stambaugh1999predictive}
in finite samples. The Stambaugh bias is particularly severe when
the regressor is persistent. For example, in the autoregression of
order one (AR(1)) form $x_{t+1}=\mu_{x}+\rho^{*}x_{t}+v_{t+1}$ with
the AR parameter $\rho^{*}$ close to 1, the bias becomes a first-order
issue in that it will substantially distort the size of the conventional
testing procedure based on the asymptotic standard normal distribution
$\mathcal{N}(0,1)$ of the $t$-statistic.

With the advent of rich economic datasets covering cross sections
of countries and states, the time series predictive regression has
been introduced into panel data. The empirical literature has applied
panel predictive regressions to infer the predictability of various
economic and financial indicators using cross-country panel data,
including global stock returns \citep{hjalmarsson2010predicting,westerlund2017testing,davis2022leverage},
financial crises \citep{greenwood2022predictable,krishnamurthy2025credit},
and country cash flows \citep{gala2023global}. Panel regression not
only improves efficiency of estimation and inference by increasing
the sample size, but it also allows for fixed effects (FE) to capture
unobservable individual-specific heterogeneity.

This paper highlights the bias of commonly used estimators in panel
predictive regressions with potentially persistent regressors and
proposes a novel solution. Despite widespread applications, standard
approaches for panel predictive regressions, including the within-group
(WG) estimator, are prone to a non-vanishing bias known as the \emph{Nickell
bias} \citep{nickell1981biases}. Under a stationary time dimension,
rigorous theoretical development is already challenging when the number
of cross-sectional units $n$ and the number of time periods $T$
are comparably large, not to mention the difficulties brought about
by highly persistent regressors.

To fix ideas, we consider a target variable $y_{i,t+1}$ generated
by the following linear model
\begin{align}
y_{i,t+1} & =\mu_{y,i}+\beta^{*}x_{i,t}+e_{i,t+1},\quad\text{for }i=1,\dots,n\text{ and }t=1,\dots,T-1,\label{eq:predictive}
\end{align}
where $x_{i,t}$ follows an AR(1) process
\begin{equation}
x_{i,t+1}=\mu_{x,i}+\rho^{*}x_{i,t}+v_{i,t+1}.\label{eq:AR1}
\end{equation}
The slope coefficient $\beta^{*}$ is the parameter of key interest,
for it measures the predictability of $y_{i,t+1}$ using $x_{i,t}$.
The AR(1) coefficient $\rho^{*}$ signifies the persistence of $x_{i,t}$.
When $\rho^{*}\in(-1,1)$ is bounded away from unity, the vector $\left(y_{i,t},x_{i,t}\right)$
is stationary over time if the innovation vector $\left(e_{i,t},v_{i,t}\right)$
is stationary. This is a two-equation panel VAR system studied by
\citet{holtz1988estimating}, and the presence of the Nickell bias
in the WG estimator and the analytical bias correction have been explored
by \citet{hahn2002asymptotically} under the ``large-$n$-large-$T$''
asymptotics.

A widely used approach to characterizing the effect of a close-to-unity
$\rho^{*}$ is modeling it as a deterministic function of the sample
size $T$ in the form
\begin{equation}
\rho^{*}=\rho_{T}^{*}=1+c^{*}/T^{\gamma},\label{eq:rate rho}
\end{equation}
where $c^{*}\in\mathbb{R}$ and $\gamma\in[0,1]$ are fixed constants,
and the subscript $T$ is suppressed in $\rho^{*}$ when there is
no ambiguity. Note that this setup also includes the familiar \emph{stationary}
case (Case I) when $\gamma=0$ and $c^{*}\in(-2,0)$, under which
$\rho^{*}\in(-1,1)$ is a constant. As an asymptotic device, the representation
in \eqref{eq:rate rho} accommodates a persistent $x_{i,t}$, where
``persistent'' means $\rho^{*}\to1$ as $T\to\infty$, which is
the source of non-trivial Stambaugh bias. The following modes of persistence
are covered by \eqref{eq:rate rho}: Case II---\emph{mildly integrated}
(MI, $c^{*}<0$ and $\gamma\in\left(0,1\right)$); Case III---\emph{locally
integrated} (LI, $c^{*}<0$ and $\gamma=1$); Case IV---\emph{unit
root} (UR, $c^{*}=0$ and $\gamma=1$); and Case V---\emph{locally
explosive }(LE, $c^{*}>0$ and $\gamma=1$). For convenience, we wrap
Cases III-V into the category of \emph{local unit root} (LUR, $c^{*}\in\mathbb{R}$
and $\gamma=1$).\footnote{The persistence in panel data is indexed by $T$, without loss of
generality. If we write the AR(1) coefficient by a double-index sequence
$\rho_{n,T}^{*}=1+c^{*}/R_{n,T}$, where $R_{n,T}\to\infty$ as $(n,T)\to\infty$,
only the divergence speed of $R_{n,T}$ affects the asymptotics. For
example, if $\sqrt{T}/R_{n,T}\to R\in(0,\infty)$, then $x_{i,t}$
behaves asymptotically the same as a mildly integrated regressor with
$\rho^{*}=1+c^{*}/T^{\gamma}$ where $\gamma=0.5$; if $T/R_{n,T}\to R\in[0,\infty)$,
$x_{i,t}$ behaves asymptotically the same as a local unit root with
$\gamma=1$. } We remain agnostic about the degrees of persistence, and adopt \citet{phillips1999linear}'s
\emph{joint asymptotics} to allow $n$ and $T$ to simultaneously
tend to infinity.

We aim for a unified inferential procedure that simultaneously achieves
the following three goals. First, it accommodates all modes of persistence
stated above and misspecification of the AR(1) model featured by weakly
dependent AR(1) errors $\{v_{i,t}\}$. Second, it covers a wide range
of asymptotic schemes, characterized as $n/T\to c\in[0,\infty)$.
In many applications, the number of cross-sectional units $n$ is
at least proportional to the time span $T$, and it is therefore essential
to accommodate $n/T\to c>0.$ We refer to this scheme as the \emph{leading
asymptotic case.} Third, it possesses a nearly optimal rate of convergence,
maintaining the super-consistency for persistent regressors and achieving
high power of hypothesis testing for predictability of the outcome.

Our analysis starts with the popular WG estimator. In panel data,
the Stambaugh bias will be carried over and fused with the Nickell
bias in WG, resulting in a composite \emph{Nickell-Stambaugh bias
}in the $t$-statistic with an order substantially enlarged from $1/\sqrt{T^{1-\gamma}}$
to $\sqrt{n/T^{1-\gamma}}$. The inflated bias shifts the center of
the $t$-statistic much further away from zero, invalidating the standard
statistical inference which refers to the critical values from $\mathcal{N}(0,1)$.
To reinstate the standard inference based on the $t$-statistic, one
strategy is to find the analytical formula of the bias and subtract
it from the estimator. The bias formula for WG depends primarily on
$\rho^{*}$. Were there an ``oracle'' to reveal the true $\rho^{*}$,
bias correction would be straightforward. Unfortunately, bias correction
for WG by plugging in a consistent estimator $\widehat{\rho}$ is
infeasible under $\gamma=1$, because correcting this excessively
large bias demands an impossibly fast rate of convergence of $\hat{\rho}$.

Faced with the intrinsic difficulty of WG, we turn to \emph{IVX} proposed
by \citet{phillips2009econometric}. IVX constructs a mildly integrated
instrumental variable (IV) using the original predictor and runs a
two-stage least squares estimation. In time series, IVX enjoys standard
asymptotic distributions with a nearly optimal convergence rate, allowing
use of the critical values based on $\mathcal{N}(0,1)$ or the $\chi^{2}$
distribution for hypothesis testing. Our baseline estimator $\bhativx$
is a panel analog of the time series counterpart. Panel IVX also incurs
the Nickell-Stambaugh bias, with the expression of the bias again
relying on $\rho^{*}$. The silver lining is that the mildly integrated
IV reduces the order of the bias to $o_{p}\bigl(\sqrt{n/T^{1-\gamma}}\bigr)$,
thereby providing a small niche for bias correction by plugging an
estimator $\widehat{\rho}$ with a sufficiently fast convergence rate
into IVX's analytical bias formula.

Which $\widehat{\rho}$ is qualified? We find that the WG estimator
for $\rho^{*}$ in the panel AR fails the task due to its own bias.
Alternatively, \citet{han2014x}'s X-differencing (XDiff) estimator
removes the bias in the stationary case (Case I) and the exact unit
root case (Case IV), but it does not cope with Cases II, III, and
V, where $\rho^{*}\to1$ but $\rho^{*}\neq1$. In addition, XDiff
requires that the AR(1) error $\{v_{i,t}\}$ must be martingale difference
sequences, meaning that the AR(1) model for $x_{i,t}$ must be correctly
specified. Other works covering nonstationary panels, including \citet{westerlund2017testing},
do not accommodate the asymptotic scheme $n/T\to c>0$.

To this end, we propose a new solution that delivers valid statistical
inference for all the types of regressor persistence under consideration.
With a slightly tailored parameter, the IVX estimator $\hat{\rho}^{\mkern3mu {\rm IVX}}$
eradicates the Nickell bias in the panel AR(1) model \eqref{eq:AR1}
while retaining signal strength; hence it enjoys the desirable rate
of convergence. Plugging $\hat{\rho}^{\mkern3mu {\rm IVX}}$ into
the bias formula of $\bhativx$, we produce a bias-corrected estimator
named\emph{ Double IVX} (DIVX). DIVX restores the asymptotic normality
centered at zero, and thus the standard inferential procedure follows.
Furthermore, it achieves all three goals stated above. The key theoretical
insight lies in the delicate interplay of distinctive biases, in particular
the bias of the plug-in estimator in the analytical formula. The lessons
we learned from the drawbacks of other potential estimators culminate
in the DIVX estimator as a proper solution. To the best of our knowledge,
DIVX is the first and only available estimator that unifies the estimation
and inferential procedure in Cases I-V from the stationary regime
to the near-unity regime.

To broaden its usability, in the Online Appendices we further generalize
DIVX from the simple predictive model (\ref{eq:predictive}) to two-way
fixed effects (Appendix \ref{sec:appdx two-way}), multivariate models
(Appendix \ref{sec:appdxMultivariate}), multiple-period-ahead predictive
regression (Appendix \ref{sec:Local-Projection}), and heterogeneous
panel models (Appendix \ref{sec:Grouped-Heterogeneity}).

\medskip{}

\textbf{Literature review}. This paper stands on strands of vast literature.
How to conduct valid hypothesis testing in predictive regressions
has spanned into a large literature; see \citet{phillips2015halbert}
for a survey. IVX has witnessed many applications and extensions,
for instance \citet{phillips2013predictive}, \citet{Kostakis2015},
\citet{Xu2020}, \citet{hjalmarsson2022long}, and \citet{Demetrescu2023},
to name a few. In time series, the finite sample bias in the estimation
of AR(1) is found by \citet{Hurwicz1950} and \citet{Kendall1954}.
Numerous methods have been developed to mitigate the bias in dynamic
models of stationary time series, for instance, the jackknife methods
\citep{quenouille1956notes,tukey1958bias,phillips2005jackknifing,chambers2013jackknife}.
For persistent time series, \citet{chan1987asymptotic} study the
asymptotics of LS when $\rho^{*}$ is local to unity, and \citet{phillips1987towards}
provides a comprehensive treatment. The bias in the LS estimator of
$\rho^{*}$ carries over into that of the predictive coefficient $\beta^{*}$
\citep{stambaugh1999predictive}. The impact of the persistent regressor
on statistical inference has been explored by \citet{cavanagh1995inference},
\citet{campbell2006efficient}, and \citet{jansson2006optimal}, and
the effects on shrinkage estimation have been investigated by \citet{lee2022lasso},
\citet{mei2022lasso}, and \citet{gao2026lasso}. To avoid nonstandard
inference, \citet{phillips2009econometric} propose IVX, which serves
as the baseline estimator of the procedure recommended by this paper.
\citet{magdalinosUniformDistributionfreeInference2024} develop a
uniform inference procedure for time series autoregressive processes
using IVX instrumentation.

The Nickell bias in dynamic panel data distorts the inference by the
WG estimator. One of the classical solutions is the GMM-based estimators
\citep{anderson1981estimation,arellano1991some,arellano1995another,blundell1998initial},
which is well known to suffer from the weak instrument problem when
data are highly persistent \citep{kruinigerGMMEstimationInference2009,phillipsDynamicPanelAndersonHsiao2018}
and will not serve our purpose. Anderson and Hsiao \citeyearpar{anderson1981estimation,anderson1982formulation}
consider the maximum likelihood under various orders of $n$ and $T$,
and the likelihood-based methods are extensively discussed by recent
literature \citep{hsiaoMaximumLikelihoodEstimation2002,kruinigerQuasiMLEstimation2013,dhaeneLikelihoodInferenceAutoregression2016}.
However, they are mostly developed under fixed $T$ and their behaviors
under persistent regressors are not fully explored. In the large-$n$-large-$T$
asymptotics, \citet{hahn2002asymptotically} and \citet{okui2010asymptotically}
investigate the analytical bias correction, mainly focused on the
stationary case. Alternative proposals include split-panel jackknife
\citep{dhaene2015split,chudik2018half}, forwards and backwards recursive
detrending \citep{westerlund2017testing}, indirect inference \citep{gourieroux2010indirect,bao2023indirect},
and X-differencing \citep{han2014x}. While studies of the Nickell
bias mainly focus on the panel AR, the bias is inherited by panel
predictive regressions \citeyearpar[Hjalmarsson][]{hjalmarsson2008stambaugh,hjalmarsson2010predicting}.
Besides predictive models, biases are ubiquitous in large panel data;
see \citet{hahn2004jackknife} and \citet{fernandez2016individual}
for nonlinear models.\footnote{Among these works, \citet{han2014x} and \citet{westerlund2017testing}
rigorously cover persistent regressors in their theoretical justifications
with joint asymptotics. As mentioned before, \citet{han2014x} only
cover exact unit roots with $\rho^{*}=1$ without misspecifying the
AR(1) model. \citet{westerlund2017testing} proposes an estimator
forwards and backwards recursive detrending, while it has substantial
power loss in finite samples when the regressor becomes highly persistent;
see \citet[Section 4]{westerlund2017testing} for details. Section
\ref{subsec:comparison} of the Online Appendices conducts additional
simulations to compare DIVX to the split-panel jackknife estimator
\citep{dhaene2015split,chudik2018half}, the X-differencing estimator
\citep{han2014x}, and the forwards and backwards recursive detrending
\citep{westerlund2017testing} to illustrate the superiority of our
proposed method. } None of the aforementioned works simultaneously achieves the three
goals stated above: allowing for all five types of persistence with
model misspecification characterized by a weakly dependent AR(1) innovation
$v_{i,t}$, accommodating general asymptotic schemes $n/T\to c\in[0,\infty)$,
and maintaining nearly optimal convergence rate to achieve high power
of hypothesis testing.

The extension to multiple predictive horizons in Section \ref{sec:Local-Projection}
of our Online Appendices connects our study with the recent advancements
of local projection \citep{jorda2005estimation}, whose convenience
in estimating the impulse response has drawn considerable research
interest \citep{barnichon2019impulse,montiel2021local,plagborg2021local,herbst2024bias}.
While most empirical applications use the WG estimator for panel local
projection, \citet{mei2023implicit} show that the Nickell bias sustains
asymptotically and invalidates the standard inference for panel local
projection when $n$ and $T$ are comparably large. We extend DIVX
to further handle panel local projection with persistent regressors;
see Appendix \ref{sec:Local-Projection} for details. Furthermore,
the extension for heterogeneous panels in our Appendix \ref{sec:Grouped-Heterogeneity}
utilizes the techniques of identifying latent structures in panel
data \citep{su2016identifying,su2018identifying,huang2021nonstationary,wang2021identifying,liu2023panel}.

This paper is positioned as a stepping stone toward a comprehensive
theory for predictive regressions in practical and realistic empirical
settings, in particular when the regressors are persistent. The current
paper has not yet covered all characteristics of panel data. For example,
beyond individual effects, the literature has applied the interactive
effects to characterize cross-sectional correlation \citep{greenaway2012asymptotic,moon2015linear,moon2017dynamic,westerlund2017testing}.
In view of the lengthy proofs of the current paper, we leave these
important extensions for future studies. Furthermore, as a starting
point for developing debiased inference for highly persistent panel
data that meets the three aforementioned goals, our analysis focuses
on analytical bias corrections with WG and IVX that are relatively
tractable. Inference for nonstationary panels with alternative methodologies
merit separate papers.

\medskip{}

\textbf{Layout}. The rest of the paper is organized as follows. Section
\ref{sec:framework} sets up the model, heuristically explains the
inconvenience of WG estimators, and elaborates the DIVX estimator.
Section \ref{sec:theory} formally develops the asymptotic properties
of DIVX with a univariate regressor. Section \ref{sec:simulation}
carries out Monte Carlo simulations to demonstrate the validity and
necessity of bias correction by DIVX. An empirical example that examines
the predictability of stock returns is carried out in Section \ref{sec:Empirical-Application}.
The Online Appendices collect extensions of the DIVX estimator to
two-way fixed effects, multivariate models, multiple-period-ahead
predictive regression, and heterogeneous panel models, together with
additional numerical studies. The Supplementary Materials collect
all technical proofs.

\medskip{}

\textbf{Notations}. The symbols ``$\to_{p}$'' and ``$\to_{d}$''
signify \emph{convergence in probability} and \emph{convergence in
distribution}, respectively. For a time series $\{x_{t}\}$, let $\Delta x_{t}=x_{t}-x_{t-1}$
be its first-order difference. For a generic panel data random variable
$x_{i,t}$ with $i=1,\dots,n$ and $t=1,\dots,T-1$, we use $\bar{x}_{i}=\frac{1}{T-1}\sum_{t=1}^{T-1}x_{i,t}$
to denote the within-group average for the individual $i$, and $\xtd_{i,t}=x_{i,t}-\bar{x}_{i}$
is the within-group demeaned data. Let $a\wedge b=\min\left\{ a,b\right\} $,
$a\vee b=\max\left\{ a,b\right\} $, and $a_{+}=a\vee0$. For a vector
$\bm{a}$, we use $\|\bm{a}\|$ to denote its Euclidean norm. Let
$\bm{1}(A)$ be the indicator function of an event $A$, i.e., $\bm{1}(A)$
takes value 1 if $A$ occurs and 0 otherwise.

\section{Panel Predictive Regression\label{sec:framework}}

\subsection{Setup\label{subsec:unimodel}}

We start with a simple predictive regression of the two-equation system
\eqref{eq:predictive} and \eqref{eq:AR1}. Following \citet{phillips2019uniform},
let the regressor $x_{i,t}$ follow a state space representation
\begin{equation}
\begin{split}x_{i,t} & =\alpha_{i}+\delta_{i,t},\\
\delta_{i,t+1} & =\rho^{*}\delta_{i,t}+v_{i,t+1}.
\end{split}
\label{eq:state space}
\end{equation}
It implies that $x_{i,t}$ admits the AR(1) form \eqref{eq:AR1} with
the FE
\begin{equation}
\mu_{x,i}=(1-\rho^{*})\alpha_{i},\label{eq:FE_AR}
\end{equation}
under which \eqref{eq:AR1} can be rewritten as $x_{i,t+1}-\alpha_{i}=\rho^{*}\left(x_{i,t}-\alpha_{i}\right)+v_{i,t+1}.$
Such a specification of FE is standard in the literature \citep{han2014x}.\footnote{This specification avoids an unrestricted nonzero intercept in a local-to-unity
process. Otherwise, the intercept would accumulate and become a drift
that dominates the stochastic trend, thus drastically complicating
the asymptotic orders.}

The following regularity conditions characterize the data generating
processes.
\begin{assumption}[Initial values and drifts]
\label{assump:initval}Uniformly across all $i\in[n]$, the conditional
mean $\mathbb{E}(\delta_{i,0}|\alpha_{i})=0$, the unconditional variance
$\mathbb{E}(\delta_{i,0}^{2})=O(|1-\rho^{*}|^{-1}\wedge T^{1-\xi})$
where $\xi>0$ is an arbitrarily small absolute constant, and $\mathbb{E}(\delta_{i,0}^{4}+\alpha_{i}^{4})=O(|1-\rho^{*}|^{-2}\wedge T^{2})$.
(The convention $1/0=\infty$ is invoked if $\rho^{*}=1$).\footnote{There is a parametric alternative to this assumption following \citet{phillipsUnitRootCointegrating2009}.
We can assume $\delta_{i,0}=\sum_{s=0}^{\kappa_{T}}\rho^{*s}v_{i,-s}$
where if $\gamma=1$, then $\kappa_{T}/T\to0$ as $T\to\infty$; if
$\gamma\in[0,1)$, then $\kappa_{T}$ is unrestricted. This assumption
mandates that the initialization should not include infinitely distant
past innovations if $x_{i,t}$ are LUR; otherwise the infinite series
almost surely diverges. Under this assumption (along with \prettyref{assump:innov}),
we have $\delta_{i,0}=o_{p}(\sqrt{T})$ if $\gamma=1$ and $\delta_{i,0}=O_{p}(\sqrt{T^{\gamma}})$
if $\gamma\in[0,1)$.}
\end{assumption}
In Assumption \ref{assump:initval}, the zero conditional mean of
the initial values will facilitate the derivation of the analytical
bias formulas. The restriction on the second moment of the initial
values implies that the unconditional variance of $\delta_{i,t}$
is $O(T^{\gamma})$ when $\gamma<1$, and the small constant $\xi$
takes effect only on the LUR case when $\gamma=1,$ so that $\delta_{i,0}=o_{p}(\sqrt{T})$.
These conditions ensure that the initial values would not impact the
asymptotics. The order of the fourth moments of the initial values
and drifts bounds the estimator errors of $\rho^{*}$ and the standard
error of the $t$-statistic for $\beta^{*}$.

Suppose the AR(1) errors $v_{i,t}$ in \eqref{eq:AR1} follow the
linear process
\begin{equation}
v_{i,t}=\sum_{s=0}^{\infty}g_{s}\varepsilon_{i,t-s},\label{eq:linearProcess}
\end{equation}
where the innovations $\{\varepsilon_{i,t}\}$ are defined in the
following Assumption \ref{assump:innov}.
\begin{assumption}[Innovations]
\label{assump:innov} \mbox{}
\begin{enumerate}
\item \label{enu:w_cumu}For each $i$, let $\bm{w}_{i,t}=(e_{i,t},\varepsilon_{i,t})^{\prime}$,
with $e_{i,t}$ as in \eqref{eq:predictive}, denote a two-dimensional
strictly stationary and ergodic martingale difference sequence (m.d.s.)
adaptive to the filtration $\{\mathcal{F}_{i,t}=\sigma(\delta_{i,0},\alpha_{i},\bm{w}_{i,t},\bm{w}_{i,t-1},\dots)\}$.
$\{\bm{w}_{i,t}\}$ are i.i.d.~across $i$. In addition, we assume
absolutely summable fourth order cumulants: $\sup_{a,b,c,d\in\{1,2\}}\sum_{t,s,r=-\infty}^{\infty}|\kappa_{abcd}(0,t,s,r)|<\infty$,
where \newpage
\begin{align*}
\kappa_{abcd}(t_{1},t_{2},t_{3},t_{4}) & =\mathbb{E}(w_{a,i,t_{1}}w_{b,i,t_{2}}w_{c,i,t_{3}}w_{d,i,t_{4}})-\mathbb{E}(w_{a,i,t_{1}}w_{b,i,t_{2}})\mathbb{E}(w_{c,i,t_{3}}w_{d,i,t_{4}})\\
 & \qquad-\mathbb{E}(w_{a,i,t_{1}}w_{c,i,t_{3}})\mathbb{E}(w_{b,i,t_{2}}w_{d,i,t_{4}})-\mathbb{E}(w_{a,i,t_{1}}w_{d,i,t_{4}})\mathbb{E}(w_{b,i,t_{2}}w_{c,i,t_{3}}),
\end{align*}
 with $w_{a,i,t}$ being the $a$-th element of $\bm{w}_{i,t}$.
\item \label{enu:hetero_e} The sequence $\{e_{i,t}\}$ admits a $\mathrm{GARCH}(q,r)$
representation:
\[
e_{i,t}=h_{i,t}^{1/2}u_{i,t},\qquad h_{i,t}=\phi+\sum_{k=1}^{q}a_{k}e_{i,t-k}^{2}+\sum_{\ell=1}^{r}b_{\ell}h_{i,t-\ell},
\]
where $u_{i,t}$ are i.i.d.\ random variables with $\mathbb{E}(u_{i,t})=0$,
$\mathbb{E}(u_{i,t}^{2})=1$ and $\mathbb{E}(u_{i,t}^{4})<\infty$,
and the constant coefficients satisfy $\phi>0,$ $a_{k},b_{\ell}\geq0$
and $0\leq\sum_{k=1}^{q}a_{k}+\sum_{\ell=1}^{r}b_{\ell}<1$.
\item \label{enu:LP}The coefficients in the linear process \eqref{eq:linearProcess}
satisfy $|g_{s}|\leq C_{0}\exp(-C_{g}s)$ for any $s$ with positive
constants $C_{0}$ and $C_{g}$. Moreover, $\sup_{t\leq0}|\mathbb{E}(\delta_{i,0}\varepsilon_{i,t})|<\infty$.
\end{enumerate}
\end{assumption}
Overall, our theoretical framework covers a wide range of data generating
processes. In Assumption \ref{assump:innov}, Condition \prettyref{enu:w_cumu}
bounds the fourth order cumulants, which excludes overly strong high-order
temporal dependence in the innovations. This is a standard theoretical
assumption to address heteroskedasticity in the literature \citep{andrewsHeteroskedasticityAutocorrelationConsistent1991,hahn2002asymptotically,stockHeteroskedasticityrobustStandardErrors2008,montiel2021local}.
Condition \prettyref{enu:hetero_e} assumes a GARCH structure for
the error term of the main predictive regression, allowing for conditional
heteroskedasticity over time for the martingale difference sequences
$\{e_{i,t}\}$. This condition follows \citet{Kostakis2015} and \citet{magdalinosLeastSquaresIvx2022}
in time series predictive regressions with conditional heteroskedasticity.
Condition \prettyref{enu:LP} assumes that the AR(1) errors $\{v_{i,t}\}$
follow linear processes with exponentially decaying coefficients,
thereby accommodating misspecification of the AR(1) model. In particular,
the exponentially decaying coefficients are widely used in the literature
of bias correction for dynamic panels like the split-panel jackknife
estimators \citep{dhaene2015split,chudik2018half}, and have been
sufficiently general to cover the commonly used stationary ARMA processes.
Relaxing the exponentially decaying rate is possible at the cost of
substantial complications in theoretical expositions and proofs. Condition
\prettyref{enu:LP} also requires that $\sup_{t\leq0}|\mathbb{E}(\delta_{i,0}\varepsilon_{i,t})|<\infty$
to avoid overly strong endogeneity.

Our assumptions for the innovations share similarities to the literature
of panel predictive regressions. For example, the m.d.s.~condition,
independence across $i$, and linear processes, are also assumed in
\citet{hjalmarsson2010predicting} and \citet{westerlund2017testing}.
In Assumption \ref{assump:innov}, the bounded cumulants in Condition
\prettyref{enu:w_cumu} and the GARCH representation in Condition
\prettyref{enu:hetero_e} are new for panel predictive regressions.
As mentioned in the previous paragraph, they are widely used in the
literature of time series \citep{andrewsHeteroskedasticityAutocorrelationConsistent1991,hahn2002asymptotically,stockHeteroskedasticityrobustStandardErrors2008,Kostakis2015,montiel2021local,magdalinosLeastSquaresIvx2022},
and indispensable for rigorously deriving asymptotic distributions
of the feasible test statistics under conditional heteroskedasticity
from low-level assumptions. These common and mild assumptions will
earn rigorous theoretical analysis and attractive asymptotic properties
for our estimator.

In what follows, Section \ref{subsec:Inconvenience-of-WG=0000DF}
demonstrates that the WG estimator is unsuitable for reliable inference
on $\beta^{*}$. Section \ref{subsec:IVX-as-the} explains the feasibility
of bias correction for the IVX estimator of $\beta^{*}$ using a tailored
IVX estimator of $\rho^{*}$. Section \ref{subsec:IVX-for-rho} introduces
this IVX estimator of $\rho^{*}$. Section \ref{subsec:Double IVX}
completes the bias-correction strategy by defining the DIVX estimator,
which enables valid inference for $\beta^{*}$. Theoretical justifications
are provided in Section \ref{sec:theory}.

\subsection{Inconvenience of WG\label{subsec:Inconvenience-of-WG=0000DF}}

We first focus on the WG estimator
\[
\bhatfe=\sum_{i}\sum_{t}\tilde{x}_{i,t}y_{i,t+1}\bigg/\sum_{i}\sum_{t}\td{x}_{i,t}^{2},
\]
 the default option of empirical studies. To simplify the illustrations,
in the discussion of the WG estimator $\bhatfe$ we assume the innovations
$\{e_{i,t}\}$ and $\{v_{i,t}\}$ are i.i.d.~across both $i$ and
$t$. If WG fails in this simplified scenario, it will not work in
more general cases.

Note that we have the following decomposition:
\begin{align*}
\bhatfe-\beta^{*} & =\sum_{i}\sum_{t}\tilde{x}_{i,t}e_{i,t+1}\bigg/\sum_{i}\sum_{t}\td{x}_{i,t}^{2}.
\end{align*}
Since we allow for correlation between the error term $\{e_{i,t}\}$
in the main regression and the AR(1) error $\{v_{i,t}\}$, the error
term $e_{i,t+1}$ and the demeaned regressor $\tilde{x}_{i,t}$ are
correlated even under the i.i.d.~condition. Therefore, WG is biased.
Specifically, when the regressor along the time dimension is stationary
($\gamma=0$), in the leading asymptotic case where $n/T\to c\in\left(0,\infty\right)$
we have
\begin{equation}
\sqrt{nT}\left[(\bhatfe-\beta^{*})+\omega_{ev}^{*}\cdot b_{n,T}^{\mathrm{WG}}\left(\rho^{*}\right)\right]\dto\ncal(0,\Sigma^{\text{WG}}),\label{eq:beta_sq_NT}
\end{equation}
where $\omega_{ev}^{*}=\mathbb{E}(e_{i,t}v_{i,t})$, and the asymptotic
variance $\Sigma^{\text{WG}}$ is a positive constant with a complicated
formula (\ref{eq:fe_var}). The bias formula has an analytical form
\begin{align}
b_{n,T}^{{\rm WG}}(\rho) & =\frac{n\sum_{t=2}^{T-1}\sum_{s=2}^{t}\rho{}^{t-s}}{(T-1)\sum_{i=1}^{n}\sum_{t=1}^{T-1}\td{x}_{i,t}^{2}}.\label{eq:fe_bias}
\end{align}
On the left-hand side of \eqref{eq:beta_sq_NT} the estimator is inflated
by $\sqrt{nT}$, which we refer to as the \emph{(standard) panel factor}.
Equation~\eqref{eq:beta_sq_NT} highlights the fact that $\sqrt{nT}(\bhatfe-\beta^{*})$
involves an asymptotic bias that shifts the center of the asymptotic
normal distribution away from 0.

When the regressor $x_{i,t}$ is persistent, that is, $\gamma\in(0,1]$,
the order of the bias depends on $\gamma$. This is the \emph{Nickell-Stambaugh
bias} to which the title of the paper alludes. We will show that the
panel factor $\sqrt{nT}$ will be further multiplied by the \emph{persistence
factor} $\sqrt{T^{\gamma}}$ to deliver asymptotic normality:
\[
\sqrt{nT^{1+\gamma}}\left[(\bhatfe-\beta^{*})+\omega_{ev}^{*}\cdot b_{n,T}^{\mathrm{WG}}\left(\rho^{*}\right)\right]\dto\ncal(0,\Sigma^{\text{WG}}).
\]
What is worse, even if $\omega_{ev}^{*}$ is known, it is difficult
to obtain a feasible estimator of $\rho^{*}$ to be plugged into the
bias expression. Again, we use $\hat{\rho}$ to denote a generic estimator
of $\rho^{*}$. When $\hat{\rho}-\rho^{*}\pto0$, a simple Taylor
expansion gives
\begin{equation}
\sqrt{nT^{1+\gamma}}\left[b_{n,T}^{\mathrm{WG}}\left(\hat{\rho}\right)-b_{n,T}^{\mathrm{WG}}\left(\rho^{*}\right)\right]=\frac{\mathrm{d}}{\mathrm{d}\rho}b_{n,T}^{\mathrm{WG}}\left(\rho^{*}\right)\cdot\sqrt{nT^{1+\gamma}}(\hat{\rho}-\rho^{*})+h.o.t,\label{eq:Stambaugh deriv-1}
\end{equation}
where ``$h.o.t$'' collects the higher order terms that we omit
here in heuristic discussions. We can deduce that $\frac{\mathrm{d}}{\mathrm{d}\rho}b_{n,T}^{\mathrm{WG}}\left(\rho^{*}\right)=O_{p}(1)$
as $(n,T)\to\infty$. Therefore, to make the feasible estimator of
the bias $b_{n,T}^{\mathrm{WG}}\left(\hat{\rho}\right)$ asymptotically
equivalent to $b_{n,T}^{\mathrm{WG}}\left(\rho^{*}\right)$, it is
required that
\[
\sqrt{nT^{1+\gamma}}(\hat{\rho}-\rho^{*})=o_{p}(1)
\]
 as $(n,T)\to\infty$; see Section~\ref{subsec:Drawback other} for
details. This is \emph{mission impossible} because a regular estimator
$\hat{\rho}$ can at most achieve $\sqrt{nT^{1+\gamma}}(\hat{\rho}-\rho^{*})=O_{p}(1)$
in the panel AR regression, but not $o_{p}(1)$; see \prettyref{rem:rho_opt_rate}
about the optimal rate. The simulations in Section~\ref{sec:simulation}
provide evidence of the conspicuous bias of the WG estimator and the
undesirable performance of bias-correction procedures when WG is used
as the baseline estimator.
\begin{rem}[Optimal Convergence Rate of $\hat{\rho}$]
\label{rem:rho_opt_rate} Under the extra assumption $\alpha_{i}=0$,
the pooled OLS estimator
\[
\hat{\rho}^{\,\mathrm{LS}}=\frac{\sum_{i=1}^{n}\sum_{t=1}^{T-1}x_{i,t}x_{i,t+1}}{\sum_{i=1}^{n}\sum_{t=1}^{T-1}x_{i,t}^{2}}
\]
is $\sqrt{nT^{1+\gamma}}$-consistent. In reality when $\alpha_{i}\neq0$,
this rate of convergence cannot be improved. Therefore, $\sqrt{nT^{1+\gamma}}(\hat{\rho}-\rho^{*})=o_{p}(1)$
is not achievable.
\end{rem}

\subsection{IVX as the main estimator \label{subsec:IVX-as-the}}

In this subsection we provide the estimation and inference procedure
for $\beta^{*}$ based on IVX. For the time series models many approaches
to inferring $\beta^{*}$ have been proposed, such as the Bonferroni
method \citep{cavanagh1995inference,campbell2006efficient} and the
conditional likelihood approach \citep{jansson2006optimal}. These
procedures are designed for a univariate regressor, and the respective
limit distributions are nonstandard. In contrast, the IVX method achieves
valid standard inference in time series multiple regressions and is
robust to degrees of persistence \citep{magdalinos2009limit,Kostakis2015,phillips2013predictive}.
In addition, it allows for weak temporal dependence of the AR(1) error
$v_{i,t}$, thereby accommodating misspecification of the AR(1) model
for the regressor $x_{i,t}$.

Our procedure is based on a panel version of IVX. First, we produce
a mildly integrated IV by filtering the regressor:
\begin{equation}
z_{i,t}=\sum_{s=1}^{t}\rho_{z}^{t-s}\Delta x_{i,s},\quad\rho_{z}=1+c_{z}/T^{\theta},\label{eq:def IV}
\end{equation}
where $c_{z}<0$ and $\theta\in(0,1)$ are constants. On the one hand,
$\rho_{z}\to1$ as $T\to\infty$, and therefore the instrument $z_{i,t}$
is still persistent and maintains a fast convergence rate (the third
goal listed in the Introduction). On the other hand, $\rho_{z}$'s
speed of convergence to $1$ is slower than $\rho^{*}$'s when $\gamma>\theta$,
which is always true when $x_{i,t}$ is LUR with $\gamma=1$. Therefore,
the IV mitigates the bias order and accommodates a slightly less accurate
estimator for $\rho^{*}$, thereby leaving small room for finding
an admissible $\hat{\rho}$. We suggest following the literature \citep{Kostakis2015,phillips2016robust}
to choose $c_{z}=-1$ and $\theta=0.95$, which performs well in all
our numerical and empirical examples. The relatively large choice
$\theta=0.95$ maintains high efficiency in estimating $\beta^{*}$
when $x_{i,t}$ is a highly persistent LUR.

We propose the following panel IVX estimator:
\begin{equation}
\bhativx=\frac{\sum_{i=1}^{n}\sum_{t=1}^{T-1}\tilde{z}_{i,t}y_{i,t+1}}{\sum_{i=1}^{n}\sum_{t=1}^{T-1}\tilde{z}_{i,t}x_{i,t}},\label{eq:beta_IVX}
\end{equation}
where $\tilde{z}_{i,t}$ is the within-group demeaned $z_{i,t}$.
Again, the bias comes from the correlation between $e_{i,t+1}$ and
the transformed instrument $\tilde{z}_{i,t}$. Define $\omega_{ev,h}^{*}=\mathbb{E}(e_{i,t}v_{i,t+h})$
as the $h$-period intertemporal covariance between the two error
terms, and denote $\omega_{ev}^{*}=\omega_{ev,0}^{*}$. In an oracle
setting where $\rho^{*}$ and $\{\omega_{ev,h}^{*}\}$ were known,
we would have
\begin{equation}
\sqrt{nT^{1+(\theta\wedge\gamma)}}\left[(\hat{\beta}^{\mathrm{IVX}}-\beta^{*})+b_{n,T}^{\mathrm{IVX}}(\{\omega_{ev,h}^{*}\},\rho^{*},\rho_{z})\right]\to_{d}\mathcal{N}(0,\Sigma^{\mathrm{IVX}}),\label{eq:beta_IVX_rate}
\end{equation}
where the asymptotic variance $\Sigma^{\mathrm{IVX}}$ is a positive
constant with a complicated formula (see \eqref{eq:SigmaIVX}), and
the bias formula is
\begin{equation}
b_{n,T}^{\mathrm{IVX}}(\{\omega_{ev,h}^{*}\},\rho^{*},\rho_{z})=\frac{n\sum_{h=0}^{T-3}\Psi_{h,T-1}(\rho^{*},\rho_{z})\omega_{ev,h}^{*}}{(T-1)\sum_{i=1}^{n}\sum_{t=1}^{T-1}\tilde{z}_{i,t}x_{i,t}},\label{eq:ivx_bias}
\end{equation}
where
\[
\Psi_{h,T-1}(\rho^{*},\rho_{z})=\sum_{k=h+2}^{T-1}\frac{\rho_{z}^{T-k}-\rho^{*T-k}}{\rho_{z}-\rho^{*}}.
\]
Shown in \prettyref{prop:bhativx} below, the order of the bias $b_{n,T}^{\mathrm{IVX}}(\{\omega_{ev,h}^{*}\},\rho^{*},\rho_{z})$
is $O_{p}(1/T^{2-(\theta\vee\gamma)})$, but the inflating factor
on the left-hand side of \eqref{eq:beta_IVX_rate} is $\sqrt{nT^{1+(\theta\wedge\gamma)}}$,
though slightly mitigated relative to WG's $\sqrt{nT^{1+\gamma}}$.
If $n/T\to c\in(0,\infty)$ and $\gamma+\theta+(\theta\vee\gamma)>2$
(e.g., $\gamma=1$ under which $x_{i,t}$ is LUR), then the inflating
factor $\sqrt{nT^{1+(\theta\wedge\gamma)}}$ dominates the bias order
$O_{p}(1/T^{2-(\theta\vee\gamma)})$, thereby shifting the center
of the asymptotic normal distribution of $\sqrt{nT^{1+(\theta\wedge\gamma)}}(\hat{\beta}^{\mathrm{IVX}}-\beta^{*})$
away from 0.

The bias formula (\ref{eq:ivx_bias}) incorporates unknown parameters
$\rho^{*}$ and $\{\omega_{ev,h}^{*}\}$. The discussion here focuses
on $\rho^{*}$, the essential trouble maker, by temporarily taking
$\{\omega_{ev,h}^{*}\}$ as given. The IVX keeps the standard panel
factor $\sqrt{nT}$ on the left-hand side of \eqref{eq:beta_IVX_rate}
and in the meantime its persistence factor is $\sqrt{T^{\theta\wedge\gamma}}$,
in contrast to WG's $\sqrt{T^{\gamma}}$. When $\hat{\rho}-\rho^{*}\to_{p}0$,
a Taylor expansion of the bias around a generic plug-in estimator
$\hat{\rho}$ becomes
\begin{align*}
 & \sqrt{nT^{1+(\theta\wedge\gamma)}}\left[b_{n,T}^{\mathrm{IVX}}(\{\omega_{ev,h}^{*}\},\hat{\rho},\rho_{z})-b_{n,T}^{\mathrm{IVX}}(\{\omega_{ev,h}^{*}\},\rho^{*},\rho_{z})\right]\\
 & \qquad=\frac{\mathrm{\partial}}{\mathrm{\partial}\rho}b_{n,T}^{\mathrm{IVX}}(\{\omega_{ev,h}^{*}\},\rho^{*},\rho_{z})\cdot\sqrt{nT^{1+(\theta\wedge\gamma)}}(\hat{\rho}-\rho^{*})+h.o.t.
\end{align*}
It can be verified that $\frac{\mathrm{\partial}}{\mathrm{\partial}\rho}b_{n,T}^{\mathrm{IVX}}(\{\omega_{ev,h}^{*}\},\rho^{*},\rho_{z})=O_{p}\bigl(T^{-[2-(\theta\vee\gamma)-\gamma]}\bigr)$,
and therefore a desirable order $\sqrt{nT^{1+(\theta\wedge\gamma)}}(\hat{\rho}-\rho^{*})=o_{p}\bigl(T^{2-(\theta\vee\gamma)-\gamma}\bigr)$,
or alternatively
\begin{equation}
\hat{\rho}-\rho^{*}=o_{p}\left(\left(nT^{\theta+3\gamma+(\theta\vee\gamma)-3}\right)^{-1/2}\right)\label{eq:cond IVX bias correct}
\end{equation}
would be sufficient to remove the bias in $\hat{\beta}^{\mathrm{IVX}}$.
In the LUR case where $\gamma=1$, this amounts to $\hat{\rho}-\rho^{*}=o_{p}\bigl((nT^{1+\theta})^{-1/2}\bigr)$.

Which estimator of $\rho^{*}$ satisfies \eqref{eq:cond IVX bias correct}
under any $\gamma\in[0,1]$? The WG estimator
\[
\hat{\rho}^{\mkern2mu {\rm WG}}=\frac{\sum_{i=1}^{n}\sum_{t=1}^{T-1}\tilde{x}_{i,t}x_{i,t+1}}{\sum_{i=1}^{n}\sum_{t=1}^{T-1}\tilde{x}_{i,t}x_{i,t}}
\]
is not qualified due to its inherent Nickell-Stambaugh bias. As mentioned
in the Introduction, there remain no asymptotic guarantees of the
widely used GMM or likelihood-based methods under highly persistent
panels with misspecification of the AR(1) model. In the literature,
the X-differencing estimator by \citet{han2014x} allows for nonstationary
panels, but only admits exact unit roots and rules out weak temporal
dependence of $v_{i,t}$. The forwards and backwards recursive detrending
by \citet{westerlund2017testing} has an excessively large variance
in finite samples when the regressor is highly persistent; see \citet[Section 4]{westerlund2017testing}
for details.

Is IVX again applicable? When $\gamma=1$, the inflation factor for
IVX is $\sqrt{nT^{1+\theta}}$ as in \eqref{eq:beta_IVX_rate}. Therefore,
the condition $\hat{\rho}-\rho^{*}=o_{p}\bigl((nT^{1+\theta})^{-1/2}\bigr)$
requires that $\hat{\rho}$ converges faster than the inflation factor
of IVX with the parameter $\theta$ as in (\ref{eq:def IV}). This
is achievable by enlarging the parameter $\theta$ for IVX instrumentation.
We formalize this idea in the following.

\subsection{IVX for $\rho^{*}$\label{subsec:IVX-for-rho}}

We generate the following instrumental variable:
\begin{equation}
z_{i,t}^{(1)}=\sum_{s=1}^{t}\left(1+\frac{c_{z}}{T^{\theta_{1}}}\right)^{t-s}\Delta x_{i,s},\label{eq:def IV (1)}
\end{equation}
where $c_{z}<0$ follows (\ref{eq:def IV}), and $\theta<\theta_{1}<1$.
The enlarged parameter $\theta_{1}$ accelerates the convergence of
IVX estimator of $\rho^{*}$ to meet the condition (\ref{eq:cond IVX bias correct}).
We suggest $\theta_{1}=\left(1+\theta\right)/2$ as a convenient choice,
which produces robust performance in all numerical exercises throughout
this paper.

With the tailored instrumental variable $z_{i,t}^{(1)}$, the IVX
estimator of $\rho^{*}$ is given by
\begin{equation}
\hat{\rho}^{\mkern3mu \mathrm{IVX}}=\frac{\sum_{i=1}^{n}\sum_{t=1}^{T-1}(z_{i,t}^{(1)}x_{i,t+1}-\hat{\varDelta}_{vv})}{\sum_{i=1}^{n}\sum_{t=1}^{T-1}z_{i,t}^{(1)}x_{i,t}},\label{eq:rhoIVX}
\end{equation}
where
\begin{equation}
\hat{\varDelta}_{vv}=\frac{1}{nT}\sum_{h=1}^{G}\sum_{i=1}^{n}\sum_{t=h+1}^{T}\hat{v}_{i,t}^{{\rm WG}}\hat{v}_{i,t-h}^{{\rm WG}},\label{eq:NW Deltavv}
\end{equation}
with $G=\lfloor T^{1/4}\rfloor$ and $\hat{v}_{i,t+1}^{{\rm WG}}=\tilde{x}_{i,t+1}-\hat{\rho}^{{\rm WG}}\tilde{x}_{i,t}$.
Here $\hat{\varDelta}_{vv}$ estimates the long-run covariance $\varDelta_{vv}^{*}=\mathbb{E}(\sum_{h=1}^{\infty}v_{i,t-h}v_{i,t})$.
Note that
\begin{equation}
\hat{\rho}^{\mkern3mu \mathrm{IVX}}-\rho^{*}=\frac{\sum_{i=1}^{n}\sum_{t=1}^{T-1}(z_{i,t}^{(1)}v_{i,t+1}-\hat{\varDelta}_{vv})}{\sum_{i=1}^{n}\sum_{t=1}^{T-1}z_{i,t}^{(1)}x_{i,t}},\label{eq:rho IVX error}
\end{equation}
and $\mathbb{E}(z_{i,t}^{(1)}v_{i,t+1})\neq0$ when $v_{i,t}$ is
weakly dependent over time. To accommodate misspecification of the
AR(1) process, we need to correct the bias from $\mathbb{E}(z_{i,t}^{(1)}v_{i,t+1})$
using the log-run covariance estimator $\hat{\varDelta}_{vv}$. Equation~\eqref{eq:NW Deltavv}
is a familiar kernel estimator of the long-run covariance, where the
bandwidth $G=\lfloor T^{1/4}\rfloor$ follows the common practice
\citep{driscoll1998consistent,greene} and proved asymptotically valid
in Section \ref{subsec:theoryDIVX}. In \prettyref{prop:rho_convergence},
we will show that $\hat{\rho}^{\mkern3mu \mathrm{IVX}}$ satisfies
the condition (\ref{eq:cond IVX bias correct}) and is thus applicable
for correcting the bias in $\bhativx.$
\begin{rem}
To the best of our knowledge, $\hat{\rho}^{\mkern3mu \mathrm{IVX}}$
is currently the only estimator that achieves the fast convergence
rate required by \eqref{eq:cond IVX bias correct} under the leading
asymptotic case $n/T\to c>0$ and weak temporal dependence of the
AR(1) error $v_{i,t}$. Detailed theoretical justifications of $\hat{\rho}^{\mkern3mu \mathrm{IVX}}$
refer to Proposition \ref{prop:rho_convergence} in Section \ref{subsec:theoryDIVX},
followed by Remarks \ref{rem:bcLUR} and \ref{rem:biasStationary}
discussing the special cases of LURs and stationary regressors. An
earlier version of this paper, \citet{liao2024nickellmeetsstambaughtale},
proposed an \emph{X-Jackknife }estimator for $\rho^{*}$. When there
is no serial correlation in $v_{i,t}$, the X-Jackknife estimator
is unbiased by leveraging a unique jackknife scheme that divides the
time dimension into the odd and even indices. The current setting
allows serial dependence in $v_{i,t}$, and thus DIVX supersedes X-Jackknife.
\end{rem}
With the aforementioned two IVX estimators $\hat{\beta}^{{\rm IVX}}$
and $\hat{\rho}^{\mkern3mu \mathrm{IVX}}$, we are ready to present
our core estimator DIVX.

\subsection{Double IVX \label{subsec:Double IVX}}

The oracle bias formula \eqref{eq:ivx_bias} involves $\{\omega_{ev,h}^{*}\}$
for $h=0,1,...,T-3$. The estimation of $\omega_{ev,h}^{*}$ is feasible
using the intertemporal sample covariances of the two-step residuals:
\begin{equation}
\hat{\omega}_{ev,h}=\frac{1}{nT}\sum_{i=1}^{n}\sum_{t=1}^{T-h}\hat{v}_{i,t+h}\hat{e}_{i,t},\label{eq:omega ev hat}
\end{equation}
where
\begin{equation}
\hat{v}_{i,t}=\tilde{x}_{i,t}-\hat{\rho}^{\mkern3mu \mathrm{IVX}}\tilde{x}_{i,t-1},\quad\hat{e}_{i,t}=\tilde{y}_{i,t}-\hat{\beta}^{{\rm IVX}}\tilde{x}_{i,t-1}.\label{eq:residuals}
\end{equation}
The plug-in estimator of the bias \eqref{eq:ivx_bias} with $\hat{\omega}_{ev,h}$
for all $h=0,1,...,T-3$ will induce overly large variances. We therefore
consider the truncated bias formula:
\begin{equation}
\hat{b}_{n,T}^{\mathrm{IVX}}(\boldsymbol{\omega}_{ev,G}^{*},\rho^{*})=\frac{n\sum_{h=0}^{G}\Psi_{h,T-1}(\rho^{*},\rho_{z})\omega_{ev,h}^{*}}{(T-1)\sum_{i=1}^{n}\sum_{t=1}^{T-1}\tilde{z}_{i,t}x_{i,t}},\quad G=\lfloor T^{1/4}\rfloor,\label{eq:truncated bias IVX}
\end{equation}
where $\boldsymbol{\omega}_{ev,G}^{*}=\{\omega_{ev,h}^{*}\}_{0\leq h\leq G}.$
Under weak dependence characterized by Assumption \ref{assump:innov}\ref{enu:LP},
the covariance sequence $\{\omega_{ev,h}^{*}\}$ exponentially decays
as $h$ increases, and therefore the truncation at a polynomial rate
of $T$ is sufficient to approximate the oracle bias formula. The
choice $G=\lfloor T^{1/4}\rfloor$ again follows the common practice
of long-run variance estimation.

Let $\hat{\boldsymbol{\omega}}_{ev,G}=\{\hat{\omega}_{ev,h}\}_{0\leq h\leq G}$
collect the estimated covariances in (\ref{eq:omega ev hat}). The
DIVX estimator is then given as
\begin{equation}
\hat{\beta}^{{\rm DIVX}}=\hat{\beta}^{\mathrm{IVX}}+\hat{b}_{n,T}^{\mathrm{IVX}}(\hat{\boldsymbol{\omega}}_{ev,G},\hat{\rho}^{\mkern3mu \mathrm{IVX}}).\label{eq:DIVX}
\end{equation}
To conduct statistical inference for $\beta^{*}$, we adopt the following
standard error that is robust to conditional heteroskedasticity:
\begin{equation}
\hat{\varsigma}^{\mkern3mu \mathrm{IVX}}=\left[\sum_{i=1}^{n}\left(\sum_{t=1}^{T-1}z_{i,t}^{2}\hat{e}_{i,t+1}^{\mkern3mu 2}-T\hat{\lambda}\bar{z}_{i}^{2}\hat{\omega}_{ee}\right)\right]^{1/2}\Bigg/\left|\sum_{i=1}^{n}\sum_{t=1}^{T-1}\tilde{z}_{i,t}x_{i,t}\right|,\label{eq:se IVX}
\end{equation}
where $\hat{\omega}_{ee}=(n(T-1))^{-1}\sum_{i=1}^{n}\sum_{t=1}^{T-1}\hat{e}_{i,t+1}^{\mkern3mu 2}$
estimates the variance of the error term $\omega_{ee}^{*}={\rm E}(e_{i,t}^{2})$,
and $\widehat{\lambda}=(1-n\hat{\varrho}_{ev}^{\mkern3mu 2}/T^{3/2})_{+}$
with $\hat{\varrho}_{ev}=\hat{\omega}_{ev}/\hat{\omega}_{ee}$ for
finite sample correction. The following remark elaborates the standard
error.

\begin{rem}[Standard Error]
\label{rem:finite sample correction} In time series, \citet[p.~1516]{Kostakis2015}
points out that IVX inference has finite-sample distortion caused
by the estimation of intercepts when the regressor is persistent.
In panel data, the individual-specific intercepts further worsen the
distortion. Equation~\eqref{eq:se IVX} mimics \citet{Kostakis2015}
for finite-sample correction. Thanks to IVX instrumentation, $\bar{z}_{i}$
appearing in the standard error is asymptotically negligible. Therefore,
it is asymptotically equivalent to calculating the standard error
either using the original IV $z_{i,t}$ or the within-group demeaned
$\tilde{z}_{i,t}$, and $\hat{\lambda}$ weights between these two
options. When the relative sample size $n/T^{3/2}$ or the ratio $\hat{\varrho}_{ev}^{\mkern3mu 2}$
is larger, the estimator $\hat{\beta}^{\mathrm{IVX}}$ suffers from
a larger first-order bias that is harder to remove, thereby requiring
a larger standard error with a smaller $\hat{\lambda}$ for robust
finite-sample performance. Compared to \citet{Kostakis2015}, our
finite-sample correction adopts an additional factor $n/T^{3/2}$
for panel data, motivated by the theoretical assumption $n/T^{3/2}\to0$
for persistent regressors; see Remark \ref{rem: n T1.5 LUR} below
for details. Most importantly, this finite-sample correction preserves
asymptotic guarantees in the leading case $n/T\to c>0$, under which
$\hat{\lambda}=O_{p}(1)$ ensures that the finite-sample correction
term is asymptotically negligible.
\end{rem}
Before diving into asymptotic theory, we summarize the procedures
of DIVX inference in the following Algorithm \ref{alg:ivx}.
\begin{lyxalgorithm}[Double IVX]
\label{alg:ivx}\mbox{}
\begin{description}[itemsep=0pt,leftmargin=*,parsep=0pt,topsep=0pt]
\item [{Step~1}]  (Slope coefficient estimation) Obtain $\bhativx$ in
\eqref{eq:beta_IVX}, and $\hat{\rho}^{\mkern3mu \mathrm{IVX}}$ in
\eqref{eq:rhoIVX}.
\item [{Step~2}] (Variance and covariance) Calculate $\hat{\omega}_{ee}=(n(T-1))^{-1}\sum_{i=1}^{n}\sum_{t=1}^{T-1}\hat{e}_{i,t+1}^{\mkern3mu 2}$
and $\hat{\omega}_{ev,h}=(nT)^{-1}\sum_{i=1}^{n}\sum_{t=1}^{T-h}\hat{v}_{i,t+h}\hat{e}_{i,t}$
for $h\geq0$, where the residuals $\hat{v}_{i,t}$ and $\hat{e}_{i,t}$
follow \eqref{eq:residuals}.
\item [{Step~3}] (Bias correction) Compute the Double IVX estimator $\hat{\beta}^{{\rm DIVX}}$
by \eqref{eq:DIVX} and its standard error by \eqref{eq:se IVX}.
\item [{Step~4}] (Confidence interval and hypothesis testing) Let $\Phi_{1-\alpha/2}$
be the $100(1-\alpha/2)$-th percentile of the standard normal distribution,
for example $\Phi_{0.975}=1.96$ for $\alpha=0.05$. The $100(1-\alpha)\%$
two-sided confidence interval is
\[
\Bigl(\hat{\beta}^{{\rm DIVX}}-\Phi_{1-\alpha/2}\cdot\hat{\varsigma}^{\mkern3mu \mathrm{IVX}},\ \hat{\beta}^{{\rm DIVX}}+\Phi_{1-\alpha/2}\cdot\hat{\varsigma}^{\mkern3mu \mathrm{IVX}}\Bigr).
\]
A null hypothesis $\mathbb{H}_{0}\colon\beta^{*}=\beta_{0}$ is rejected
under significance level $\alpha$ if $|t^{{\rm DIVX}}(\beta_{0})|>\Phi_{1-\alpha/2}$,
where the $t$-statistic is
\begin{equation}
t^{{\rm DIVX}}(\beta_{0})=\frac{\hat{\beta}^{{\rm DIVX}}-\beta_{0}}{\hat{\varsigma}^{\mkern3mu \mathrm{IVX}}}.\label{eq:t DIVX}
\end{equation}
\end{description}
\end{lyxalgorithm}

\section{Asymptotic Theory\label{sec:theory}}

We will compare the theoretical results of various estimators. In
Section \ref{sec:framework}, we have heuristically discussed the
WG and IVX estimators of the main regression coefficient $\beta^{*}$,
and the estimators of $\rho^{*}$. We refer to ``WG-IVX'' as the
process of the initial estimator $\hat{\beta}^{{\rm WG}}$ and subsequently
applying a bias correction based on $\hat{\rho}^{{\rm IVX}}$.\footnote{The bias formula of WG in (\ref{eq:fe_bias}) is deduced under the
simplified i.i.d.~assumption. The general bias formula with weakly
dependent AR(1) error $v_{i,t}$ in (\ref{eq:AR1}) is given as (\ref{eq:fe_bias_general})
in the simulation studies of Section \ref{sec:simulation}. } We also denote ``WG-WG'', ``IVX-WG'' in a parallel way, and ``IVX-IVX''
is our core method Double IVX. We will take \citet{phillips1999linear}'s
joint asymptotics that simultaneously sends both $n$ and $T$ to
infinity, with particular attention to the leading asymptotic case
of $n/T\to c>0$.

Figure \ref{fig:Diagram-of-applicability} uses traffic lights to
signify validity of the aforementioned four methods in the leading
asymptotic case. The amber lights under MI indicate that the validity
depends on the user's choice of $\theta$ relative to the degree of
persistence $\gamma$ in the DGP, but since $\gamma$ is unknown there
is no asymptotic guarantee. If either the main regression or the AR
regression is estimated by WG, red lights flash in all the LUR cases.
DIVX is the only procedure that secures green lights in all Cases
I--V.

\begin{figure}[t]
\begin{centering}
\includegraphics[width=0.7\columnwidth]{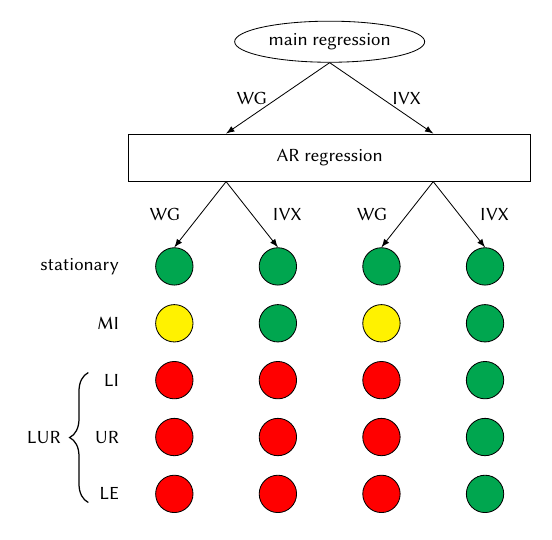}
\par\end{centering}
\centering{}\caption{\label{fig:Diagram-of-applicability} Diagram of asymptotic validity}
\end{figure}

In the following, Section \ref{subsec:theoryDIVX} establishes the
asymptotic guarantees of our proposed DIVX method, and Section \ref{subsec:Drawback other}
argues against the three alternatives.

\subsection{Asymptotic Guarantees of Double IVX\label{subsec:theoryDIVX}}

We first establish the asymptotic normality of panel IVX under infeasible
$\{\omega_{ev,h}^{*}\}$ and $\rho^{*}$.
\begin{prop}
\label{prop:bhativx} Under Assumptions~\ref{assump:initval} and
\ref{assump:innov}, as $(n,T)\to\infty$ we have
\begin{gather*}
\sqrt{nT^{1+(\theta\wedge\gamma)}}\left[\hat{\beta}^{\mathrm{IVX}}-\beta^{*}+b_{n,T}^{\mathrm{IVX}}(\{\omega_{ev,h}^{*}\},\rho^{*},\rho_{z})\right]\to_{d}\mathcal{N}\bigl(0,\Sigma^{\mathrm{IVX}}\bigr),\\
b_{n,T}^{\mathrm{IVX}}(\{\omega_{ev,h}^{*}\},\rho^{*},\rho_{z})=O_{p}\bigl(T^{-[2-(\theta\vee\gamma)]}\bigr),
\end{gather*}
where $b_{n,T}^{\mathrm{IVX}}(\{\omega_{ev,h}^{*}\},\rho^{*},\rho_{z})$
is defined in \eqref{eq:ivx_bias} and $\Sigma^{\mathrm{IVX}}$ in
\eqref{eq:SigmaIVX}.
\end{prop}
Proposition \ref{prop:bhativx} formulates the Nickell-Stambaugh bias
of $\hat{\beta}^{\mathrm{IVX}}$. Section \ref{subsec:IVX-as-the}
has highlighted the importance of an accurate estimator of $\rho^{*}$
with a sufficiently fast rate of convergence. The following Proposition
establishes the convergence rate of the IVX estimator $\hat{\rho}^{\mkern3mu \mathrm{IVX}}$.
Recall that $\theta_{1}$ is the essential parameter in the IVX estimator
$\hat{\rho}^{\mkern3mu \mathrm{IVX}}$ in (\ref{eq:rhoIVX}), and
we recommended $\theta_{1}=(1+\theta)/2$ as a convenient choice satisfying
$\theta<\theta_{1}<1$.
\begin{prop}
\label{prop:rho_convergence}Under Assumptions~\ref{assump:initval}
and \ref{assump:innov}, for any fixed $\gamma\in[0,1]$ we have
\[
\hat{\rho}^{\mkern3mu \mathrm{IVX}}-\rho^{*}=O_{p}\left(\frac{1}{\sqrt{nT^{1+(\theta_{1}\wedge\gamma)}}}+\frac{1}{T^{2(\theta_{1}\wedge\gamma)}}+\frac{G}{\sqrt{nT^{1+2(\theta_{1}\wedge\gamma)}}}+\frac{G}{T^{1+(\theta_{1}\wedge\gamma)}}\right)
\]
 as $(n,T)\to\infty$.
\end{prop}
\prettyref{prop:rho_convergence} conveys the key message: the IVX
estimator $\hat{\rho}^{\mkern3mu \mathrm{IVX}}$ enjoys a desirable
convergence rate, thereby enabling valid bias correction in the leading
asymptotic case. In Remark \ref{rem:verify_IVX_Cond} of the Supplementary
Materials, we derive that when $G=\lfloor T^{1/4}\rfloor$ and $\theta_{1}=(1+\theta)/2$,
the essential condition \eqref{eq:cond IVX bias correct} is satisfied.
Two remarks are in order to elaborate two special cases: LUR and stationary
panels.
\begin{rem}[Bias Correction for LUR]
\label{rem:bcLUR}In the LUR case where $\gamma=1$, the rate of
convergence amounts to $\hat{\rho}^{\mkern3mu \mathrm{IVX}}-\rho^{*}=O_{p}\bigl((nT^{1+\frac{1+\theta}{2}})^{-1/2}\bigr)$
and accommodates the requirement $\hat{\rho}-\rho^{*}=o_{p}\bigl((nT^{1+\theta})^{-1/2}\bigr)$
as discussed right after \eqref{eq:cond IVX bias correct}.
\end{rem}
\begin{rem}[Negligible Bias for Stationary Panels]
\label{rem:biasStationary} When $x_{i,t}$ is stationary with $\gamma=0$,
in the leading asymptotic case the condition \eqref{eq:cond IVX bias correct}
reduces to $\hat{\rho}-\rho^{*}=o_{p}(T^{(1-\theta)})$, and thus
$\hat{\rho}^{\mkern3mu \mathrm{IVX}}-\rho^{*}=O_{p}(1)$ is sufficient
for our purpose. Therefore, despite the fact that the AR(1) coefficient
$\rho^{*}$ cannot be consistently estimated due to endogeneity arising
from the misspecified AR(1) regression \eqref{eq:AR1} when $v_{i,t}$
is weakly dependent, the bias of $\hat{\beta}^{{\rm IVX}}$ is negligible
with stationary regressors and DIVX remains valid.
\end{rem}
Though the bias $b_{n,T}^{\mathrm{IVX}}(\{\omega_{ev,h}^{*}\},\rho^{*},\rho_{z})$
defined in (\ref{eq:ivx_bias}) involves $\{\omega_{ev,h}^{*}\}$,
its estimation in (\ref{eq:omega ev hat}) is straightforward using
the sample covariances of residuals, without affecting the asymptotic
guarantees of DIVX. Furthermore, the truncated bias formula (\ref{eq:truncated bias IVX})
well approximates the oracle bias formula due to the AR(1) error's
weak dependence depicted by the Assumption \ref{assump:innov}\ref{enu:LP}.
The following Theorem \ref{thm:DIVX} is our core theoretical result,
establishing the asymptotic normality of $\hat{\beta}^{{\rm DIVX}}$.
\begin{thm}
\label{thm:DIVX} Suppose that Assumptions~\ref{assump:initval}
and \ref{assump:innov} hold. Under $G=\lfloor T^{1/4}\rfloor$ and
$\theta_{1}=(1+\theta)/2$, if $(n,T)\to\infty$ and $n/T\to c\in[0,\infty)$,
we have
\[
\frac{\hat{\beta}^{{\rm DIVX}}-\beta^{*}}{\hat{\varsigma}^{\mkern3mu \mathrm{IVX}}}\to_{d}\mathcal{N}(0,1),
\]
where $\hat{\varsigma}^{\mkern3mu \mathrm{IVX}}$ is given by \eqref{eq:se IVX}.
In addition, the standard error $\hat{\varsigma}^{\mkern3mu \mathrm{IVX}}=O_{p}\bigl(1/\sqrt{nT^{1+(\theta\wedge\gamma)}}\bigr)$.
\end{thm}
\prettyref{thm:DIVX} unifies the inference procedure under the polynomial
rate $\rho^{*}=1+c^{*}/T^{\gamma}$. This result shows that DIVX delivers
valid inference with standard critical values, without requiring prior
knowledge of the regressor's persistence.
\begin{rem}[Local Power]
\label{rem:local power} The order of the standard error $O_{p}\bigl(1/\sqrt{nT^{1+(\theta\wedge\gamma)}}\bigr)$
suggests that DIVX has high power to detect local-to-zero violation
of the null hypothesis. In particular, the test with DIVX is consistent
whenever the true coefficient violates the null hypothesis $H_{0}\colon\beta^{*}=\beta_{0}$
in Algorithm \ref{alg:ivx} with $|\beta^{*}-\beta_{0}|\gg1/\sqrt{nT^{1+(\theta\wedge\gamma)}}$.
For LUR with $\gamma=1$, the rate $\sqrt{nT^{1+\theta}}$ is very
close to the \emph{optimal} $\sqrt{n}T$-consistency discussed in
Remark \ref{rem:rho_opt_rate} with the choice $\theta=0.95$. For
stationary panels with $\gamma=0$, DIVX achieves the standard $\sqrt{nT}$-consistency.
\end{rem}
In fact, we can further enhance it to achieve asymptotic normality
\emph{uniformly }in $\rho^{*}$ using the drifting parameter sequence
approach in \citet{andrews2020generic}.
\begin{cor}
\label{cor:DIVX_uniform} Fix three absolute constants $m_{1}^{*}\in(0,1)$,
$m_{2}^{*}\in(0,\infty)$, and $\alpha\in[0,1]$. Define $B_{T}=[-1+m_{1}^{*},1+m_{2}^{*}/T]$.
The conditions in Theorem \ref{thm:DIVX} yield
\[
\sup_{\rho^{*}\in B_{T}}\left|\Pr\left\{ t^{{\rm \mathrm{DIVX}}}(\beta^{*})<\Phi^{-1}\left(\alpha\right)\right\} -\alpha\right|\to0,
\]
where $\Phi\left(\cdot\right)$ is the cumulative distribution function
of \textup{$\mathcal{N}(0,1)$}\textup{\emph{, and the $t$-statistic
$t^{\mathrm{DIVX}}$ is defined in \eqref{eq:t DIVX}.}}
\end{cor}
\begin{rem}
The compact support $B_{T}=[-1+m_{1}^{*},1+m_{2}^{*}/T]$ for the
admissible $\rho^{*}$ is a sequence of closed sets. The left-end
is invariant and bounded away from $-1$, whereas the right-end exceeds
but converges to $1$ as in the LE case. Inside such sequence of closed
sets $\rho^{*}$ can be an arbitrary sequence. This uniform result
is more general and flexible than the convergent sequences specified
in \eqref{eq:rate rho}.
\end{rem}
\begin{rem}
\label{rem: n T1.5 LUR}When the regressor is LUR, DIVX admits a wider
range of asymptotic sequences satisfying $n/T^{3/2}\to0$, more general
than the relative rate specified in Theorem \ref{thm:DIVX}. Intuitively,
the large variance of the LUR regressor dominates the endogeneity
caused by the weak dependence of the AR(1) error $v_{i,t}$. It validates
the bias correction for the IVX estimator \eqref{eq:rhoIVX} by the
long-run covariance. The AR(1) coefficient $\rho^{*}$ becomes consistently
estimable by IVX with fast convergence rate displayed in Proposition
\ref{prop:rho_convergence}. This is in sharp contrast to the stationary
case, under which $\rho^{*}$ cannot be consistently estimated, as
discussed in Remark \ref{rem:biasStationary}. Interested readers
may refer to \eqref{eq:rIVX Case 2} in the proof of Theorem \ref{thm:DIVX}
in the supplementary materials for technical details.
\end{rem}
We have shown that, in the simple predictive regression system \eqref{eq:predictive}
and \eqref{eq:AR1} with a scalar $x_{i,t}$, DIVX achieves unified
and uniform inference for the parameter $\beta^{*}$ of interest.
Admittedly, this model is simplistic in order to illustrate the ideas.
Our Online Appendices extend the DIVX estimation and inference into
four empirically-oriented scenarios, including two-way fixed effects,
multiple panel predictive regressions, local projections with multiple
horizons, and cross-sectional heterogeneity with latent group structures.
Please refer to Sections \ref{sec:appdx two-way}- \ref{sec:Grouped-Heterogeneity}
in the Online Appendices for models and methodologies, and Section
\ref{sec:Additional-Simulations} for additional simulations of these
extensions.

\subsection{Failures of the Alternative Estimators\label{subsec:Drawback other}}

To convey the negative message about the other estimator, we use a
simplifying i.i.d.\ condition on the error terms $e_{i,t}$ and $v_{i,t}$,
stated in the following \prettyref{assump:iid} to replace \prettyref{assump:innov}.
Furthermore, we pretend that $\omega_{ev}^{*}$ is known, so we only
need to focus on the estimation of $\rho^{*}$. If an estimator does
not work under such a special case, it is expected to stay invalid
under general data generating processes (DGPs).
\begin{assumptionprimep}[Innovations]{\ref*{assump:innov}}
\label{assump:iid} The error terms $\{(e_{i,t},v_{i,t})^{\prime}\}$
are i.i.d.\ across both $i$ and $t$, and have finite fourth moment,
i.e., $\epct(e_{i,t}^{4}+v_{i,t}^{4})<C<\infty$ for some positive
constant $C$.
\end{assumptionprimep}
Assumption \ref{assump:iid} simplifies $\omega_{ev,h}^{*}=0$ for
all $h\geq1$, and the bias formula of $\hat{\beta}^{{\rm IVX}}$
reduces to $\omega_{ev}^{*}\cdot\tilde{b}_{n,T}^{\ivx}\bigl(\rho^{*}\bigr)$,
where
\[
\tilde{b}_{n,T}^{\ivx}\bigl(\rho\bigr)=\frac{n}{T-1}\cdot\frac{\sum_{t=2}^{T-1}\sum_{s=2}^{t}\rho_{z}^{t-s}\rho^{s-2}}{\sum_{i=1}^{n}\sum_{t=1}^{T-1}\tilde{z}_{i,t}x_{i,t}}.
\]
We then look at the scenario when WG is used in the panel AR regression
\eqref{eq:AR1} and the bias correction for $\hat{\beta}^{{\rm IVX}}$.
With a known $\omega_{ev}^{*}$, the IVX-WG estimator by plugging
$\rohatfe$ into the bias function is
\[
\hat{\beta}^{\,\textrm{IVX-WG}}=\bhativx+\omega_{ev}^{*}\cdot\tilde{b}_{n,T}^{\ivx}\bigl(\hat{\rho}^{{\rm WG}}\bigr).
\]
We immediately obtain the following corollary of Proposition \ref{prop:bhativx}.
\begin{cor}
\label{cor:ivx-wg}Under \prettyref{assump:initval} and \ref{assump:iid},
as $\joto$ we have
\begin{equation}
\rohatfe-\rho^{*}=O_{p}\biggl(\frac{1}{\sqrt{nT^{1+\gamma}}}+\frac{1}{T}\biggr).\label{eq:rho WG rate}
\end{equation}
Therefore, if $n/T^{5-(\theta\vee\gamma)-\theta-3\gamma}\to0$, then
\[
(\hat{\beta}^{\textup{IVX-WG}}-\beta^{*})/\widehat{\varsigma}^{{\rm IVX}}\dto\ncal(0,1).
\]
\end{cor}
\begin{rem}[IVX-WG excludes the leading case]
The $1/T$ term in the convergence rate of $\rohatfe$ arises from
the Nickell-Stambaugh bias in panel AR, and is slower than that of
$\hat{\rho}^{{\rm IVX}}$ in \prettyref{prop:rho_convergence}. It
leads to a much narrower range of $n$ and $T$ for asymptotic normality.
In particular, in the leading asymptotic case $n/T\to c>0$, when
$\gamma=1$ the expansion rate condition for $\hat{\beta}^{\,\textup{IVX-WG}}$
is violated since $n/T^{5-(\theta\vee\gamma)-\theta-3\gamma}=n/T^{1-\theta}\to\infty$.
\end{rem}
We then turn to the WG estimator $\hat{\beta}^{{\rm WG}}$ for the
main regression. As explained in Section \ref{subsec:Inconvenience-of-WG=0000DF},
in panel predictive regressions the Nickell-Stambaugh bias of WG is
severe.
\begin{prop}
\label{prop:bhatfe} Under \prettyref{assump:initval} and \ref{assump:iid},
as $(n,T)\to\infty$ we have
\begin{gather*}
\sqrt{nT^{1+\gamma}}\bigl[\bhatfe-\beta^{*}+\omega_{ev}^{*}\cdot b_{n,T}^{\fe}(\rho^{*})\bigr]\dto\ncal\bigl(0,\Sigma^{\fe}\bigr),\\
b_{n,T}^{\fe}(\rho^{*})=O_{p}\left(1/T\right),
\end{gather*}
 where the variance $\Sigma^{\fe}$ is laid out in \eqref{eq:fe_var}.
Furthermore,
\[
\left[\bhatfe-\beta^{*}+\omega_{ev}^{*}\cdot b_{n,T}^{\fe}(\rho^{*})\right]\Big/\varsigma^{\mathrm{WG}}\dto\ncal(0,1)
\]
 as $(n,T)\to\infty$, where
\begin{equation}
\varsigma^{{\rm WG}}=\frac{\sqrt{n\cdot\mathrm{var}\bigl(\sum_{t=1}^{T-1}\tilde{x}_{i,t}e_{i,t+1}\bigr)}}{\sum_{i=1}^{n}\sum_{t=1}^{T-1}\tilde{x}_{i,t}^{2}}.\label{eq:fe_se-1}
\end{equation}
\end{prop}
With i.i.d.~errors, \prettyref{prop:bhatfe} characterizes the stochastic
order of the bias, which is proportional to $b_{n,T}^{\fe}(\rho^{*})$.
From this proposition we have $\bhatfe-\beta^{*}=O_{p}\bigl(1/\sqrt{nT^{1+\gamma}}+1/T\bigr)$,
which means that the WG estimator is consistent when both $n$ and
$T$ pass to infinity. However, the main focus of predictive regressions
lies in the inference to determine whether the variable $x_{i,t}$
retains predictive power to the targeted dependent variable, and mere
consistency is insufficient for this purpose. The bias $\sqrt{nT^{1+\gamma}}b_{n,T}^{\fe}(\rho^{*})=O_{p}\bigl(\sqrt{n/T^{1-\gamma}}\bigr)$
can diverge to infinity and dominate the variance of $\bhatfe$ when
$\gamma=1$.

Eliminating the bias in WG is a challenging task, in particular when
the regressor is highly persistent. We try $\rohatfe$ and $\hat{\rho}^{\mkern3mu \mathrm{IVX}}$
for bias correction, and the two respective estimators are given as
\[
\hat{\beta}^{\textup{WG-WG}}=\bhatfe+\omega_{ev}^{*}\cdot b_{n,T}^{\fe}(\rohatfe)\quad\text{and}\quad\hat{\beta}^{\textup{WG-IVX}}=\bhatfe+\omega_{ev}^{*}\cdot b_{n,T}^{\fe}(\hat{\rho}^{\mkern3mu \mathrm{IVX}}).
\]
The following proposition summarizes the asymptotics of these two
estimators.
\begin{prop}
\label{prop:FE-fail} Suppose \prettyref{assump:initval} and \ref{assump:iid}
hold, and $\joto$.
\begin{enumerate}
\item If $n/T^{3(1-\gamma)}\to0$, then $(\hat{\beta}^{\,\textup{WG-WG}}-\beta^{*})/\varsigma^{{\rm WG}}\dto\ncal(0,1).$
\item Suppose that $\theta_{1}>3/4$. If $n/T\to c\in[0,\infty)$ and $1/T^{1-\gamma}\to0$,
then $(\hat{\beta}^{\,\textup{WG-IVX}}-\beta^{*})/\varsigma^{{\rm WG}}\dto\ncal(0,1).$
\end{enumerate}
\end{prop}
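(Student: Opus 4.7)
The plan is to isolate the infeasible, oracle bias-corrected statistic handled by \propref{bhatfe} and show that swapping the oracle $\rho^{*}$ for a feasible $\hat\rho$ contributes only an asymptotically negligible remainder. For either $\hat\rho\in\{\rohatfe,\rohatxj\}$, I would decompose
\[
\frac{\hat\beta-\beta^{*}}{\varsigma^{\mathrm{WG}}}=\frac{\bhatfe-\beta^{*}+\omega_{12}^{*}\,b_{n,T}^{\fe}(\rho^{*})}{\varsigma^{\mathrm{WG}}}+\frac{\omega_{12}^{*}\bigl[b_{n,T}^{\fe}(\hat\rho)-b_{n,T}^{\fe}(\rho^{*})\bigr]}{\varsigma^{\mathrm{WG}}},
\]
where $\hat\beta$ stands for $\hat\beta^{\,\textup{WG-WG}}$ in part~(i) and $\hat\beta^{\,\textup{WG-XJ}}$ in part~(ii). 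The first summand converges in distribution to $\ncal(0,1)$ by \propref{bhatfe}, so by Slutsky's theorem the proof reduces to showing that the second summand is $o_{p}(1)$ under the stated rate conditions.

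The main step will be a derivative bound on the rational function $b_{n,T}^{\fe}(\cdot)$. Only the numerator of (\ref{eq:fe_bias}) depends on $\rho$, and differentiation gives
\[
\frac{d}{d\rho}b_{n,T}^{\fe}(\rho)=\frac{n\sum_{t=2}^{T-1}\sum_{s=2}^{t}(t-s)\,\rho^{t-s-1}}{(T-1)\sum_{i=1}^{n}\sum_{t=1}^{T-1}\td{x}_{i,t}^{2}}.
\]
I would combine the arithmetic estimate $\sum_{t,s}(t-s)|\rho|^{t-s-1}=O\bigl(T/(1-|\rho|)^{2}\bigr)=O(T^{1+2\gamma})$, valid whenever $|\rho-\rho^{*}|=o(T^{-\gamma})$, with the standard stochastic order $\sum_{i,t}\td{x}_{i,t}^{2}=\Theta_{p}(nT^{1+\gamma})$ implied by \assumpref{initval} to conclude $|d b_{n,T}^{\fe}/d\rho|=O_{p}(T^{\gamma-1})$ on a shrinking neighborhood of $\rho^{*}$ of radius $o(T^{-\gamma})$; an analogous computation yields $|d^{2}b_{n,T}^{\fe}/d\rho^{2}|=O_{p}(T^{2\gamma-1})$ on the same neighborhood. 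Since in either part $\hat\rho-\rho^{*}\pto 0$ at a rate strictly faster than $T^{-\gamma}$, the mean-value theorem delivers $|b_{n,T}^{\fe}(\hat\rho)-b_{n,T}^{\fe}(\rho^{*})|=O_{p}(T^{\gamma-1}|\hat\rho-\rho^{*}|)$, with the quadratic remainder $O_{p}(T^{2\gamma-1}|\hat\rho-\rho^{*}|^{2})$ verified to be of strictly smaller order in each regime. The hard part will be making this derivative bound \emph{uniform} on a deterministic shrinking neighborhood that contains $\hat\rho$ with probability approaching one, rather than merely pointwise at $\rho^{*}$, since the factor $T^{\gamma-1}$ is sharper than the $O_p(1)$ heuristic of (\ref{eq:Stambaugh=000020deriv-1}) and drives the admissible rate region.

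To close out, from \propref{bhatfe} we have $1/\varsigma^{\mathrm{WG}}=\Theta_{p}(\sqrt{nT^{1+\gamma}})$, so the second summand is of order $\sqrt{nT^{1+\gamma}}\cdot T^{\gamma-1}\cdot|\hat\rho-\rho^{*}|$. For part~(i), \corref{ivx-wg} supplies $|\rohatfe-\rho^{*}|=O_{p}\bigl(1/\sqrt{nT^{1+\gamma}}+1/T\bigr)$, yielding a remainder of order $O_{p}\bigl(T^{\gamma-1}+\sqrt{n/T^{3(1-\gamma)}}\bigr)=o_{p}(1)$ under $n/T^{3(1-\gamma)}\to 0$; note that this hypothesis together with $n\to\infty$ forces $\gamma<1$, which also kills the $T^{\gamma-1}$ piece. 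For part~(ii), \propref{XJ} supplies $|\rohatxj-\rho^{*}|=O_{p}\bigl(1/\sqrt{nT^{1+\gamma}}+1/T^{2}\bigr)$, giving a remainder of order $O_{p}\bigl(T^{\gamma-1}+\sqrt{n/T^{5-3\gamma}}\bigr)=o_{p}(1)$ under the two hypotheses $1/T^{1-\gamma}\to 0$ (which kills $T^{\gamma-1}$) and $n/T^{5-3\gamma}\to 0$. Slutsky's theorem then delivers $(\hat\beta-\beta^{*})/\varsigma^{\mathrm{WG}}\dto\ncal(0,1)$ in each case.
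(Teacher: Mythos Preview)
Your proposal is correct and follows essentially the same route as the paper. The paper packages the derivative/mean-value-theorem step as \lemref{esterror}, which gives $r_{n,T}^{\fe}(\hat\rho)=[b_{n,T}^{\fe}(\hat\rho)-b_{n,T}^{\fe}(\rho^{*})]/\varsigma^{\mathrm{WG}}=O_{p}\bigl(\sqrt{n/T^{1-3\gamma}}\,|\hat\rho-\rho^{*}|\bigr)$---exactly your $\sqrt{nT^{1+\gamma}}\cdot T^{\gamma-1}\cdot|\hat\rho-\rho^{*}|$---and then plugs in the rates from \corref{ivx-wg} and \propref{XJ} to obtain the same $O_{p}\bigl(T^{\gamma-1}+\sqrt{n/T^{3(1-\gamma)}}\bigr)$ and $O_{p}\bigl(T^{\gamma-1}+\sqrt{n/T^{5-3\gamma}}\bigr)$ bounds you reach; the second-derivative check you add is harmless but unnecessary once the first-derivative bound is made uniform on the shrinking neighborhood, since the mean-value theorem already delivers the exact first-order expression.
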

We have stated in \eqref{eq:rho WG rate} that $\rohatfe$'s rate
of convergence is $O_{p}\bigl(1/\sqrt{nT^{1+\gamma}}+1/T\bigr)$,
which reflects the Nickell-Stambaugh bias in the panel AR. The asymptotic
bias vanishes when $n/T^{3(1-\gamma)}\to0$ for the $t$-statistics
based on WG-WG. However, this is not helpful for unified inference,
as it obviously rules out the LUR cases with the persistence index
$\gamma=1$, under which $n/T^{3(1-\gamma)}=n\to\infty$.

In \prettyref{prop:FE-fail}(ii), the addition condition $\theta_{1}>3/4$
is merely to simplify the exposition of the complex asymptotic regimes
without impacting the asymptotic validity. This condition is consistent
with our recommendation $\theta_{1}=0.975$ that is close to one to
maintain fast convergence of IVX. WG-IVX based on $\hat{\rho}^{\mkern3mu {\rm IVX}}$
tightens the valid asymptotic regime of the $t$-statistic from $n/T^{3(1-\gamma)}\to0$
in \prettyref{prop:FE-fail}(i) to allow for $n/T\to c\in[0,\infty)$
but under the restrictive condition $1/T^{1-\gamma}\to0$ in \prettyref{prop:FE-fail}(ii).
Though this is a substantial enhancement, it still rules out $\gamma=1$
in the leading asymptotic case, under which $1/T^{1-\gamma}=1\not\to0$.
It is therefore also undesirable for statistical inference.

This section makes it clear that if either the main regression \eqref{eq:predictive}
or the AR regression \eqref{eq:AR1} is estimated by WG, unified inference
is not achievable unless $n$ is much smaller than $T$. Since the
theory of WG does not cover the leading asymptotic case, we do not
recommend using WG for panel predictive regressions. The simulations
in Section \ref{sec:simulation} will illustrate the validity of DIVX
and unsatisfactory performance of WG in finite samples, which are
in line with our theoretical results.

\section{Simulations \label{sec:simulation}}

\subsection{Baseline Setup and Results}

In this section we conduct Monte Carlo simulations for the six estimators
covered in the theoretical section, including the vanilla WG, WG-WG,
WG-IVX, the vanilla IVX, IVX-WG, and finally, our recommended DIVX.
We consider panels with $n\in\{50,100,200\}$ and $T\in\{100,200,500\}$.
The relatively large time span $T$ demonstrates that the biases of
the alternative methods are not merely finite-sample issues.

For the DGP of the predictive regression \eqref{eq:predictive}, we
generate the dependent variable by setting the true coefficient $\beta^{*}=0$;
that is, $x_{i,t}$ has no predictive power for $y_{i,t+1}$. We set
the AR(1) coefficient in (\ref{eq:state space}) as $\rho^{*}\in\{0.6,1-1/T^{0.75},1-1/T,1,1+1/T\}$
to reflect various degrees of the regressor's persistence. They are
the finite sample embodiment of the stationary (ST), MI, LI, UR, and
LE regressors, respectively.

The fixed effects $\mu_{y,i}=T^{-1}\sum_{t}x_{i,t}$, is specified
to be correlated with the regressor. The drift $\alpha_{i}$ and the
initial value $\delta_{i,0}$ in (\ref{eq:state space}) are both
independently drawn from $\mathcal{N}(0,1)$. To showcase the validity
of DIVX when the AR(1) model (\ref{eq:AR1}) is misspecified, we generate
the AR(1) error $v_{i,t}=0.5v_{i,t-1}+\varepsilon_{i,t}+0.4\varepsilon_{i,t-1}$
from a stationary ARMA(1,1) process. The i.i.d.~innovations $\varepsilon_{i,t}$
and the error term $e_{i,t}$ in the main regression \eqref{eq:predictive}
are generated from a bivariate normal distribution
\begin{equation}
\begin{pmatrix}e_{i,t}\\
\varepsilon_{i,t}
\end{pmatrix}\sim{\rm i.i.d.}\ \mathcal{N}\left(\begin{pmatrix}0\\
0
\end{pmatrix},\begin{pmatrix}1 & \omega_{12}^{*}\\
\omega_{12}^{*} & 1
\end{pmatrix}\right),\label{eq:omega12 def}
\end{equation}
where $\omega_{12}^{*}$ measures the strength of correlation. In
this section, we specify the contemporaneous correlation as $\omega_{12}^{*}=-0.95$
to produce strong negative correlation between the two error terms,
which characterizes the typical case in stock-return predictive regressions
\citep{Kostakis2015,phillips2016robust}. Additional simulations with
a variety of $\omega_{12}^{*}$ values are relegated to Section \ref{sec:Additional-Simulations}
of the Online Appendices.

For the IVX-based estimators, we adopt \citet{Kostakis2015}'s choices
of $c_{z}=-1$ and $\theta=0.95$ in the user-specified persistence
index $\rho_{z}=1+c_{z}/T^{\theta}$ as in (\ref{eq:def IV}), and
$\theta_{1}=(1+\theta)/2=0.975$ for the IVX instrumentation (\ref{eq:def IV (1)});
moreover, in (\ref{eq:NW Deltavv}) and (\ref{eq:truncated bias IVX})
$G=\lfloor T^{1/4}\rfloor$ for the long-run variance. These choices
are fixed throughout the simulation studies here and the empirical
application in Section \ref{sec:Empirical-Application}.

For the alternative estimators with WG in either stage, we conduct
bias corrections parallel to DIVX. Though we assume i.i.d.~innovations
and a known covariance $\omega_{ev}^{*}$ for simplicity of theoretical
discussions in Section \ref{subsec:Drawback other}, in the numerical
studies we keep agnostic about the DGP and the parameters in the bias
formula. Specifically, we re-define the IVX-WG estimator as $\hat{\beta}^{\text{IVX-WG}}=\hat{\beta}^{\mathrm{IVX}}+\hat{b}_{n,T}^{\mathrm{IVX}}(\hat{\boldsymbol{\omega}}_{ev,G},\hat{\rho}^{{\rm WG}})$,
where the truncated bias $\hat{b}_{n,T}^{\mathrm{IVX}}$ follows (\ref{eq:truncated bias IVX})
as in DIVX. The covariance estimators $\hat{\boldsymbol{\omega}}_{ev,G}=\{\hat{\omega}_{ev,h}\}_{h=0}^{G}$
in (\ref{eq:omega ev hat}) are tailored with the AR(1) residual $\hat{v}_{i,t}$
in (\ref{eq:residuals}) replaced by $\hat{v}_{i,t}^{{\rm WG}}=\tilde{x}_{i,t}-\hat{\rho}^{{\rm WG}}\tilde{x}_{i,t-1}$.
The standard errors of the vanilla IVX, IVX-WG, and DIVX follow (\ref{eq:se IVX}).
In addition, the bias formula of $\hat{\beta}^{{\rm WG}}$ is
\begin{equation}
\hat{b}_{n,T}^{{\rm WG}}(\boldsymbol{\omega}_{ev,G}^{*},\rho^{*})=\frac{n}{T}\cdot\frac{\sum_{h=0}^{G}\Phi_{h}\omega_{ev,h}^{*}}{\sum_{i=1}^{n}\sum_{t=1}^{T-1}\tilde{x}_{i,t}^{2}},\label{eq:fe_bias_general}
\end{equation}
where
\[
\Phi_{h}=\frac{1}{2}(T-h-1)(T-h)\cdot\bm{1}(\rho^{*}=1)+\frac{1}{1-\rho^{*}}\left(T-h-1-\frac{\rho^{*}-\rho^{*T-h}}{1-\rho^{*}}\right)\bm{1}(\rho^{*}\neq1).
\]
Similarly, we re-define the WG-WG and WG-IVX estimators as $\hat{\beta}^{\text{WG-WG}}=\hat{\beta}^{\mathrm{WG}}+\hat{b}_{n,T}^{\mathrm{WG}}(\hat{\boldsymbol{\omega}}_{ev,G},\hat{\rho}^{{\rm WG}})$
and $\hat{\beta}^{\text{WG-IVX}}=\hat{\beta}^{\mathrm{WG}}+\hat{b}_{n,T}^{\mathrm{WG}}(\hat{\boldsymbol{\omega}}_{ev,G},\hat{\rho}^{{\rm IVX}}).$
We tailor the covariance estimators $\hat{\boldsymbol{\omega}}_{ev,G}$
with the residual of the main regression model $\hat{e}_{i,t}$ in
(\ref{eq:residuals}) replaced by $\hat{e}_{i,t}^{{\rm WG}}=\tilde{y}_{i,t}-\hat{\beta}^{{\rm WG}}\tilde{x}_{i,t-1}$
for WG-IVX, and with both residuals replaced by $\hat{e}_{i,t}^{{\rm WG}}$
and $\hat{v}_{i,t}^{{\rm WG}}$ for WG-WG. The standard error for
$\hat{\beta}^{{\rm WG}}$ and its bias-corrected variants is $\hat{\varsigma}^{\mkern2mu {\rm WG}}=\sqrt{n\cdot\sum_{t=1}^{T-1}\tilde{x}_{i,t}^{2}(\hat{e}_{i,t+1}^{{\rm WG}})^{2}}/\sum_{i=1}^{n}\sum_{t=1}^{T-1}\tilde{x}_{i,t}^{2}$.

\begin{figure}[ph]
\begin{centering}
\includegraphics[width=0.95\columnwidth]{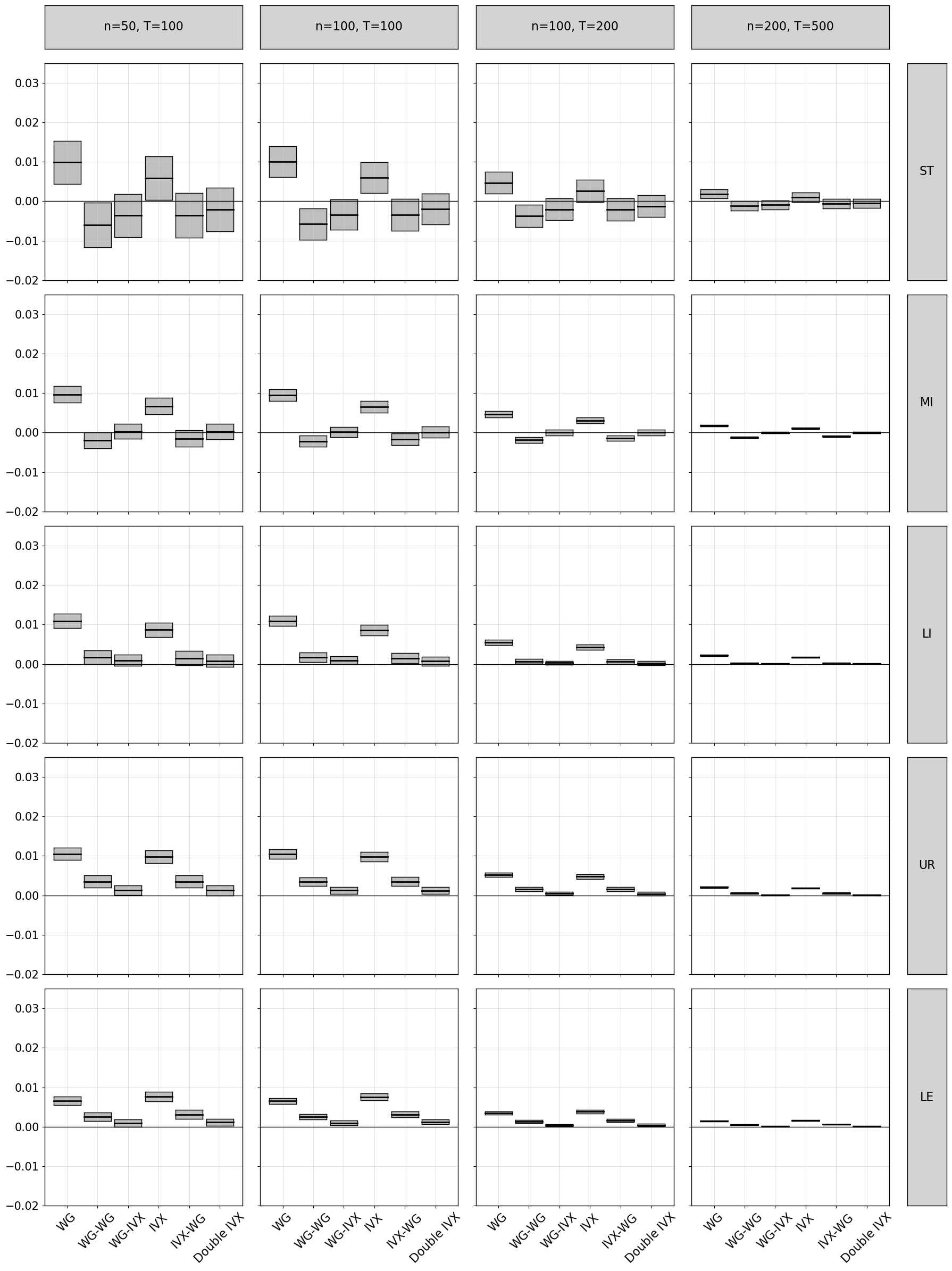}\smallskip{}
\noindent\begin{minipage}[t]{1\linewidth}%
{\footnotesize Notes: In each small box, the central line indicates
the empirical bias of $\hat{\beta}$, and the total height is twice
the empirical standard deviation, marking the lower and upper limits
$(\hat{\beta}-{\rm s.d.},\hat{\beta}+{\rm s.d.})$. To save space,
this figure only exhibits the results under $(n,T)\in\{(50,100),(100,100),(100,200),(200,500)\}$.}%
\end{minipage}
\par\end{centering}
\centering{}\caption{\label{fig:bias_sd} Bias and standard deviation}
\end{figure}

All simulations are repeated $1000$ times. Figure \ref{fig:bias_sd}
presents the relative point estimation performances. The center of
each bar is the empirical bias, with the height equal to twice the
empirical standard deviation. It is obvious that the vanilla WG and
IVX are severely biased, and the bias is more substantial when $n$
is large relative to $T$. All the bias correction methods are helpful
in mitigating the bias. DIVX is competitive in terms of bias correction,
and well centered around the true value under all five cases of $\rho^{*}$
as the sample size gets large.

\begin{figure}[tp]
\begin{centering}
\includegraphics[width=1\columnwidth]{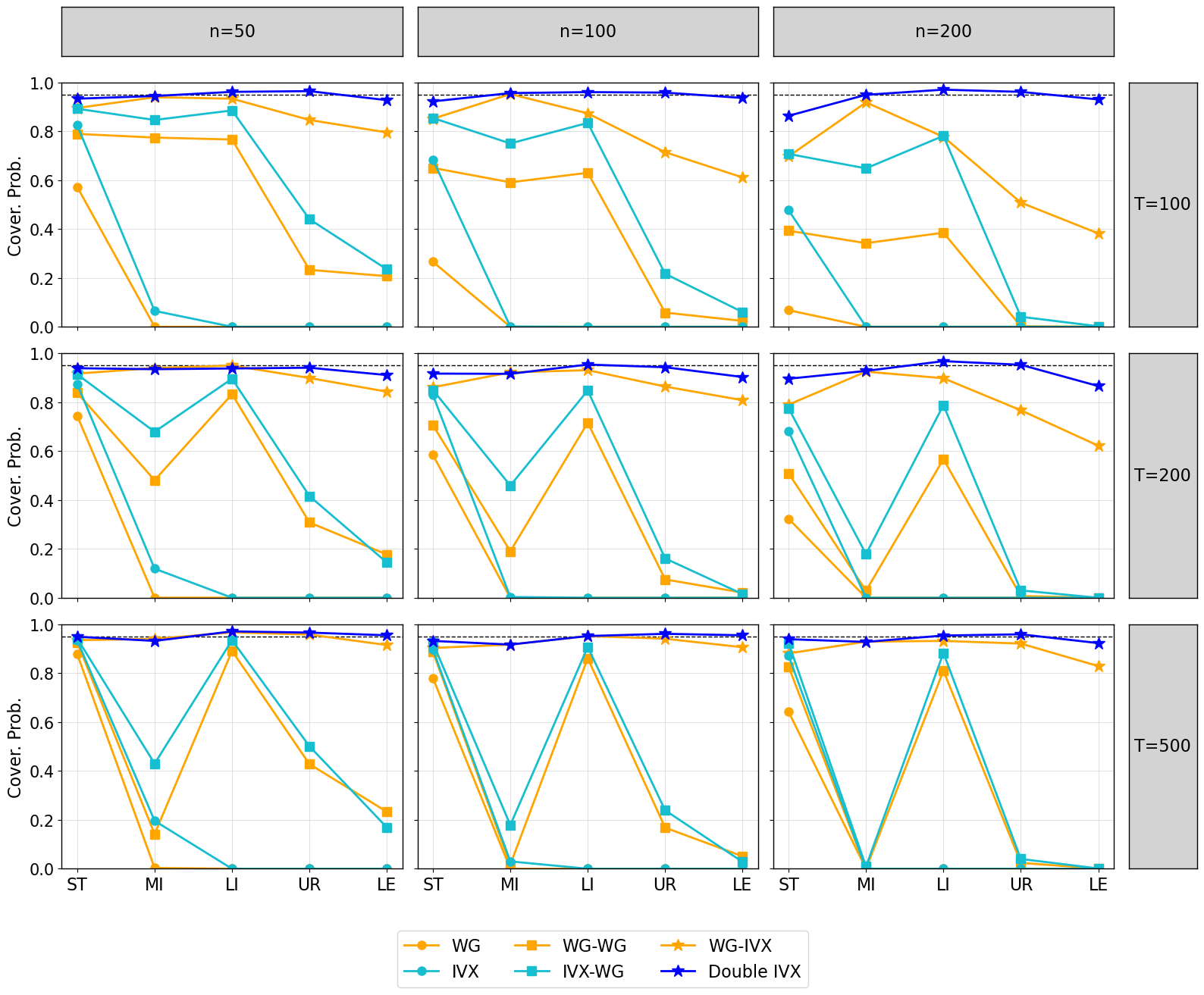}\smallskip{}
\par\end{centering}
\centering{}\caption{\label{fig:coverage} Coverage probabilities of 95\% confidence intervals
when $\omega_{12}^{*}=-0.95$}
\end{figure}

The necessity of bias correction is salient in statistical inference.
Figure \ref{fig:coverage} plots the empirical coverage probability
of the 95\% confidence intervals. We first focus on $T=100$. Obviously,
the vanilla WG and IVX confidence intervals fail to work, and the
distortion is more severe as the relative sample size $n/T$ gets
larger. When the regressor is highly persistent, such distortion in
WG cannot be fixed by either the WG or IVX estimator of $\rho^{*}$
in the bias correction formula. When $T$ gets larger, the bias does
not vanish---even at $T=500$, the distortions of coverage probabilities
of the WG-based estimators are still severe, especially in the UR
and LE cases where the regressor is highly persistent. DIVX inference
stands out with the empirical coverage probabilities close to the
nominal 95\% level in all scenarios.

\begin{figure}[tp]
\begin{centering}
\includegraphics[width=1\columnwidth]{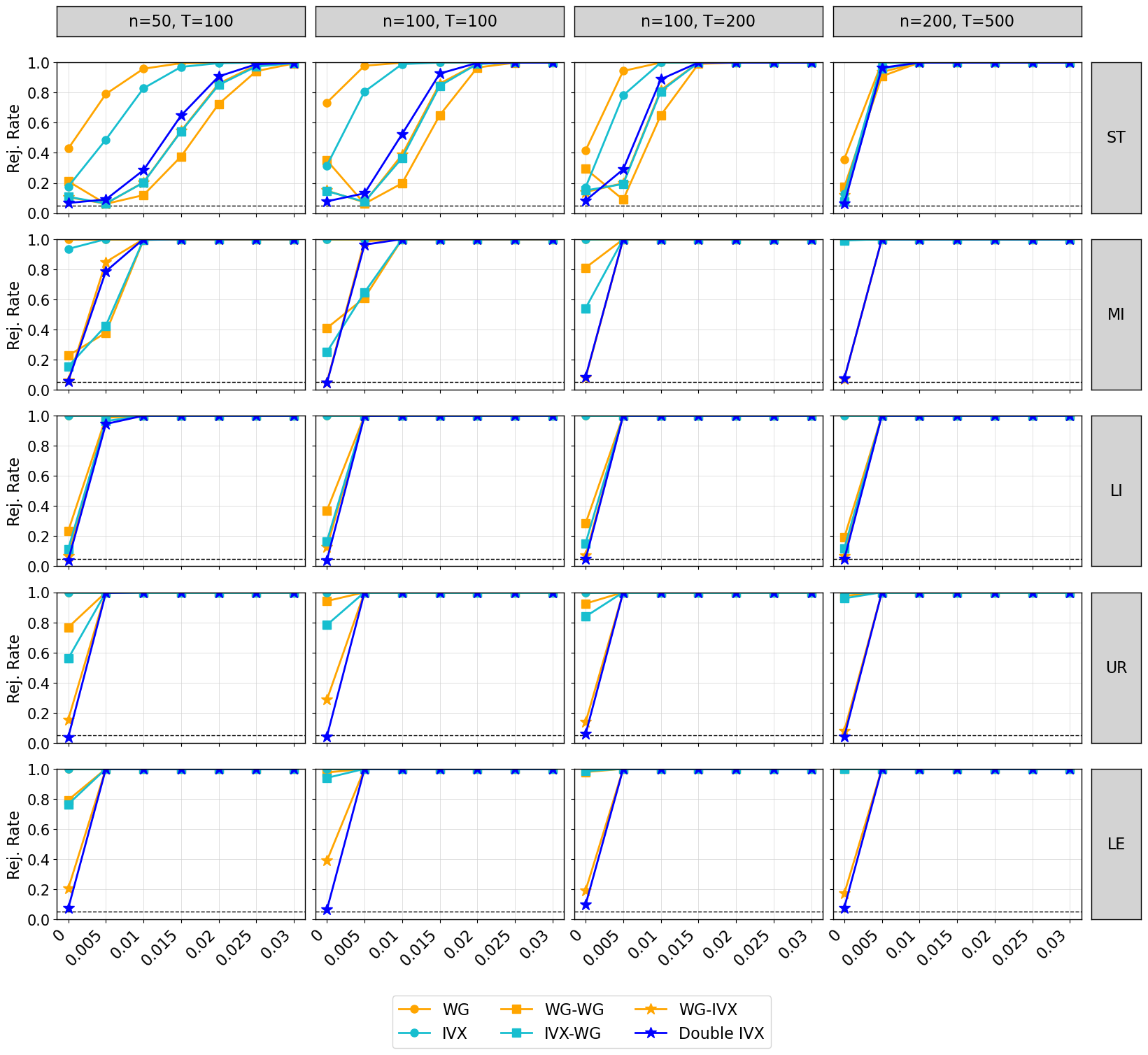}\smallskip{}
\par\end{centering}
\begin{centering}
\noindent\begin{minipage}[t]{1\linewidth}%
{\footnotesize Notes: To save space, this figure only exhibits the
rejection rates under the sample sizes $(n,T)\in\{(50,100),(100,100),(100,200),(200,500)\}$.}%
\end{minipage}
\par\end{centering}
\centering{}\caption{\label{fig:power} Rejection rates for $\mathbb{H}_{0}:\beta^{*}=0$
at the 5\% level when $\omega_{12}^{*}=-0.95$}
\end{figure}

We then turn to the test power. Figure \ref{fig:power} displays the
rejection rates for $\mathbb{H}_{0}:\beta^{*}=0$ at the 5\% level
when the true coefficient $\beta^{*}\in\{0,0.005,0.01,...,0.03\}$.
When $x_{i,t}$ is stationary with $\rho^{*}=0.6$, only WG-IVX and
DIVX inference exhibit correct empirical sizes when $\beta^{*}=0$.
These two estimators have similar empirical power that is competitive
even compared to the estimators with substantial biases. The rejection
rates achieve 100\% when $\beta^{*}$ reaches 0.025 with all sample
sizes under consideration. In the cases of MI, LI, UR, and LE where
the regressor is persistent, DIVX is the only method that achieves
accurate empirical sizes in all cases under $\beta^{*}=0$, with the
rejection rates achieving 100\% when $\beta^{*}\geq0.01$ in all scenarios.

In summary, DIVX boasts competitive performance in terms of point
estimation, and when it comes to coverage probabilities it is the
only estimator that demonstrates asymptotic validity in all settings,
with high empirical power to detect predictability of the outcome
$y_{i,t+1}$ using the regressor $x_{i,t}$ of different degrees of
persistence.

We carry out additional simulation studies in Section \ref{sec:Additional-Simulations}
of the Online Appendices to showcase the robustness of DIVX inference.
First, we evaluate DIVX under different degrees of endogeneity by
varying the $\omega_{12}^{*}$ in (\ref{eq:omega12 def}) in Section
\ref{subsec:various dgp error}. Second, we check the validity of
DIVX under conditional heteroskedasticity, and DIVX remains robust
(Section \ref{subsec:Conditional-Heteroskedasticity}). Third, we
also conduct simulation studies for the extensions of DIVX to address
two-way fixed effects (Section \ref{subsec:Two-way-Fixed-Effects}),
multiple regressions (Section \ref{subsec:Multivariate-Regressors}),
local projections (Section \ref{subsec:Local-Projections}), and latent
group structures (Section \ref{subsec:Latent-Group-Structures}).
Last, in addition to the WG- and IVX-based estimators, our Section
\ref{subsec:comparison} compares DIVX to other popular estimators
discussed in the Introduction. All these additional simulations consistently
showcase the excellent performance of DIVX.

\section{Predictability of Global Stock Returns \label{sec:Empirical-Application}}

The financial economics literature has long debated the predictive
power of valuation ratios for stock returns using time series \citep{campbell2006efficient,welch2008comprehensive,zhu2014predictive,goyal2024comprehensive}
and cross-country panel data \citep{hjalmarsson2008stambaugh,hjalmarsson2010predicting,westerlund2017testing}.
Most empirical applications of panel predictive regressions show that
valuation ratios, like dividend- and earnings-price ratios, exhibit
little predictive power for global stock returns. This section revisits
the return predictability using various valuation ratios, including
earnings-price ratio (EP), dividend-price ratio (DP), and sales-price
ratio (SP).

We collect monthly data of composite stock price indices, EP, DP,
and SP in 16 developed economies from September 2015 to July 2025.\footnote{Data source: Wind Information. The countries and regions include Australia,
Canada, Denmark, France, Germany, Hong Kong, Italy, Japan, Netherlands,
New Zealand, Singapore, Spain, Sweden, Switzerland, the United Kingdom,
and the United States.} First, we focus on the univariate panel predictive model
\[
r_{i,t}=\mu_{i}+\beta^{*}\log({\rm VR}_{i,t-1})+e_{i,t},
\]
 where $r_{i,t}$ is the log return of the composite stock price index,
and ${\rm VR}_{i,t}$ denotes the valuation ratio of either DP, EP,
or SP. Table \ref{tab:sum_stat} displays the summary statistics.
The sample correlation coefficients of the regression residuals $\hat{e}_{i,t}$
and $\hat{v}_{i,t}$ for all three valuation ratios are negative,
consistent with our baseline simulation setup in Section \ref{sec:simulation}.
The IVX estimates defined in \eqref{eq:rhoIVX} for $\rho^{*}$ are
0.998, 0.995, and 0.996 for log EP, log DP, and log SP, respectively.
In addition, we perform the panel unit root test using the $P_{m}$
test statistic \citep{choi2001unit}. The $p$-values are 0.935, 0.707,
0.504, suggesting no evidence to reject nonstationarity. The high
persistence calls for bias correction.

\begin{table}[htp]
\centering
\caption{Summary statistics}
\label{tab:sum_stat}
\begin{tabular}{lccccc}
\hline\hline
                 & Mean   & SD & $\widehat{r}_{ev}$ & $\widehat\rho^{\rm IVX}$ & $P_m$ test $p$-value \\
\hline
$\log({\rm EP}_{i,t-1})$ & -2.856 & 0.364              & -0.402  & 0.998                                               & 0.935                  \\
$\log({\rm DP}_{i,t-1})$ & 1.113  & 0.346              & -0.644  & 0.995                                               & 0.707                  \\
$\log({\rm SP}_{i,t-1})$ & -0.419 & 0.406              & -0.765  & 0.996                                               & 0.504              \\
\hline\hline
\end{tabular}
\begin{flushleft}
{\smaller Notes: ``Mean'' and ``SD'' represent the sample mean and standard deviation. $\hat{r}_{ev} = \frac{ \sum_{i=1}^n \sum_{t=1}^T  \widehat e_{i,t}\widehat v_{i,t}}{\sqrt{\sum_{i=1}^n \widehat e_{i,t}^2 \sum_{i=1}^n \sum_{t=1}^T  \widehat v_{i,t}^2}}$ is the sample correlation coefficient of the two residuals $\hat e_{i,t}$ and $\widehat{v}_{i,t}$ in (\ref{eq:residuals}). $\widehat\rho^{\rm IVX}$ is the IVX estimate defined in \eqref{eq:rhoIVX}  for $\rho^*$. ``$P_m$ test $p$-value'' reports the $p$-values of the $P_m$ test statistics in \citet{choi2001unit} for panel unit root test.}
\end{flushleft}
\end{table}

Figure \ref{fig:emp1} exhibits the core regression results. The vanilla
WG estimator suggests that all three valuation ratios have significant
predictive power for returns. According to our theoretical results,
such significance may be spurious due to the Nickell-Stambaugh bias.
The vanilla IVX also suffers from the bias. On the other hand, the
bias corrections are effective. Specifically, while bias corrections
through $\hat{\rho}^{{\rm WG}}$ nudge the point estimates of $\beta^{*}$
toward zero, the corrections by $\hat{\rho}^{{\rm IVX}}$ further
refine these point estimates, pushing them even closer to the origin.
DIVX showcases that all three valuation ratios display no significance
in predicting stock returns, echoing the literature of panel predictive
regressions \citep{hjalmarsson2008stambaugh,hjalmarsson2010predicting,westerlund2017testing}.
Our empirical results align with our theory: the WG-based estimators
can be misleading due to the bias, while DIVX corrects the bias effectively.

\begin{figure}[tp]
\begin{centering}
\includegraphics[width=1\columnwidth]{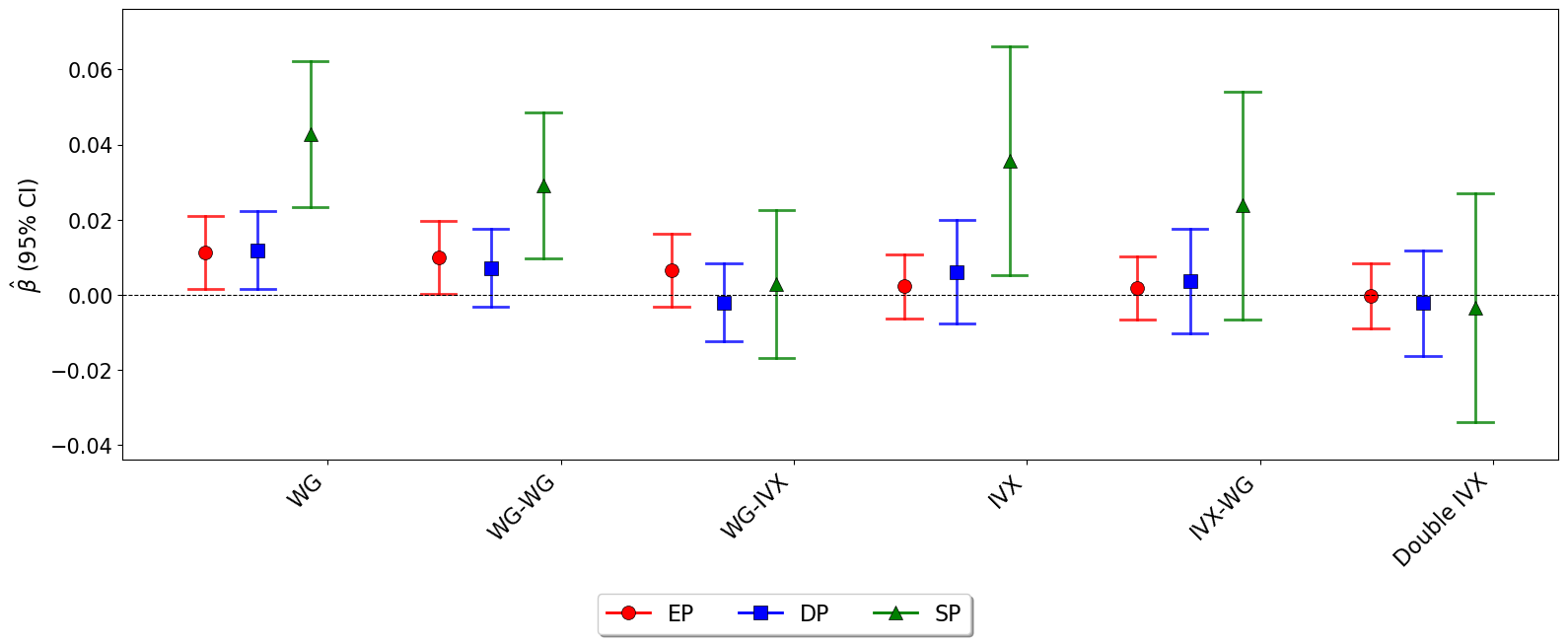}\smallskip{}
\par\end{centering}
\begin{centering}
\noindent\begin{minipage}[t]{1\linewidth}%
{\footnotesize Notes: The red circles, blue squares, and green triangles
represent the estimates of $\beta^{*}$ for EP, DP, and SP, respectively.
The bars represent the upper and lower bounds of the 95\% confidence
intervals.}%
\end{minipage}
\par\end{centering}
\centering{}\caption{\label{fig:emp1} Point estimates and 95\% confidence intervals of
return predictability using valuation ratios}
\end{figure}

Though the univariate regressions suggest that EP, DP, and SP \emph{individually}
lack significant predictive power, these results may stem from the
fact that each ratio captures only one dimension of market valuation.
It is possible that idiosyncratic noise in each individual metric
or the omitted variable bias obscures the underlying signal. To explore
whether these variables complement each other, our analysis turns
to a multivariate specification, incorporating all three valuation
ratios to examine their joint capacity in predicting future returns.

Specifically, we run the following multivariate regression:
\[
r_{i,t}=\mu_{i}+\beta_{1}^{*}\log({\rm EP}_{i,t-1})+\beta_{2}^{*}\log({\rm DP}_{i,t-1})+\beta_{3}^{*}\log({\rm SP}_{i,t-1})+e_{i,t}.
\]
Parallel to the univariate regressions, this multivariate regression
is also prone to Nickell-Stambaugh bias. Section \ref{sec:appdxMultivariate}
of the Online Appendices extends DIVX to multivariate predictive regressions,
supported with simulation evidence in Section \ref{subsec:Multivariate-Regressors}.
In this study, we will report the DIVX estimators, together with the
$t$-test for each coefficient, and the Wald test for the joint hypothesis
$\mathbb{H}_{0}:\beta_{1}^{*}=\beta_{2}^{*}=\beta_{3}^{*}=0$. This
Wald test has not been considered in existing empirical applications
of panel predictive regressions.

\begin{table}[htp]
\centering
\caption{Empirical results of univariate and multivariate regressions with DIVX}
\label{tab: multivariate}
\begin{tabular}{crrrr}
\hline\hline
                    & (1)        & (2)         & (3)         & (4)                 \\
\hline
\multirow{2}{*}{$\log(\mathrm{EP}_{i,t-1})$} & -0.0003   &           &           & -0.0179           \\
                    & (-0.0572) &           &           & (-0.9502)         \\
\multirow{2}{*}{$\log(\mathrm{DP}_{i,t-1})$} &           & -0.0022   &           & 0.0645            \\
                    &           & (-0.3142) &           & (1.3676)         \\
\multirow{2}{*}{$\log(\mathrm{SP}_{i,t-1})$} &           &           & -0.0035   & -0.0101           \\
                    &           &           & (-0.2232) & (-0.2177)         \\
\hline
                    &           &           &           & Wald = 7.7175     \\
                    &           &           &           & ($p$-value = 0.0522) \\
\hline\hline
\end{tabular}
\begin{flushleft}
{\smaller Notes: The table shows the DIVX point estimates of $\beta^*$, and the $t$-statistics in parentheses. ``Wald'' and ``$p$-value'' report the Wald statistic and its $p$-value. }
\end{flushleft}
\end{table}

Table \ref{tab: multivariate} displays the multivariate regression
results in Column (4), with Columns (1)-(3) showing the univariate
results for comparison. The multivariate results reveal more interesting
empirical findings. While the individual coefficients for EP, DP,
and SP in Column (4) remain statistically insignificant, the magnitudes
get larger, with the maximum absolute $t$-statistic reaching 1.3676
for DP, in contrast to 0.3142 in the univariate regressions. In addition,
the Wald test for their joint null has a $p$-value of 0.0522, significant
at the 10\% level. That is, although individual predictors appear
weak, together as an ensemble they boast predictive power marginally
significant at the usual levels. This discrepancy suggests that the
predictive signal may be diffused across various fundamental metrics,
rather than concentrated in a single valuation ratio. Therefore, looking
only at individual coefficients may overlook the joint signal from
multiple valuation indicators. DIVX fills the gap in the toolbox of
robust econometric inference for multivariate panel predictive regressions,
and produces novel empirical findings that complement the literature.

\section{Conclusion}

This paper investigates the problem of panel predictive regressions,
with a focus on valid inference based on the $t$-statistic. When
$n$ and $T$ are both large, WG and IVX incur the Nickell-Stambaugh
bias which distorts the size of the standard inferential procedure.
We propose to use IVX to estimate $\beta^{*}$ in the panel predictive
regression, and then plug in the tailored IVX estimator for $\rho^{*}$
to correct the bias. We show that this procedure provides unified
inference in various modes of dynamic regressors, including the stationary
case, the mildly integrated case, and the local unit root case. The
unified inference cannot be achieved if either the main regression
or the AR regression is estimated by WG.

\bigskip \bigskip

\bibliographystyle{chicagoa}
\bibliography{panel_predictive}

\clearpage{}

\setcounter{page}{1}
\renewcommand{\thesection}{O.\Alph{section}}
\setcounter{section}{0}
% for correct hyperref bookmarks
% https://tex.stackexchange.com/questions/6098
% Hyperref anchor names must also be unique
\renewcommand{\theHsection}{OA.\Alph{section}}
\renewcommand{\theHequation}{OA.\arabic{equation}}
\renewcommand{\theHfigure}{OA.\arabic{figure}}
\renewcommand{\theHtable}{OA.\arabic{table}}
\renewcommand{\theHfootnote}{OA.\arabic{footnote}}

\renewcommand{\theHthm}{OA.\arabic{thm}}
\renewcommand{\theHlem}{OA.\arabic{lem}}
\renewcommand{\theHrem}{OA.\arabic{rem}}
\renewcommand{\theHprop}{OA.\arabic{prop}}
\renewcommand{\theHassumption}{OA.\Alph{section}.\arabic{assumption}}
\onehalfspacing
\setcounter{footnote}{0}
\setcounter{table}{0}
\setcounter{figure}{0}
\setcounter{equation}{0}
\renewcommand{\thefootnote}{O.\arabic{footnote}}
\renewcommand{\theequation}{O.\arabic{equation}}
\renewcommand{\thefigure}{O.\arabic{figure}}
\renewcommand{\thetable}{O.\arabic{table}}
\setcounter{thm}{0}
\setcounter{lem}{0}
\setcounter{rem}{0}
\setcounter{prop}{0}
\renewcommand{\thethm}{O.\arabic{thm}}
\renewcommand{\thelem}{O.\arabic{lem}}
\renewcommand{\therem}{O.\arabic{rem}}
\renewcommand{\theprop}{O.\arabic{prop}}

\setcounter{assumption}{0}
\setcounter{lyxalgorithm}{0}
\renewcommand{\theassumption}{O.\arabic{assumption}}
\renewcommand{\thelyxalgorithm}{O.\arabic{lyxalgorithm}}
\begin{center}
{\LARGE Online Appendices to ``Nickell Meets Stambaugh: A Tale\\[5pt] of Two Biases in Panel Predictive Regression''}
\par\end{center}

\begin{center}
{\large Chengwang Liao$^a$, Ziwei Mei$^b$, Zhentao Shi$^a$ \\
$^a$The Chinese University of Hong Kong \\
$^b$University of Macau }\\
\par\end{center}

\noindent The Online Appendices include four extensions with additional
simulation results. Section \ref{sec:appdx two-way} extends DIVX
to two-way fixed effect models. Section \ref{sec:appdxMultivariate}
considers multivariate regressions. Section \ref{sec:Local-Projection}
discusses the extension to multiple-period--ahead predictive models,
which accommodates the local projections in panel data. Section \ref{sec:Grouped-Heterogeneity}
generalizes DIVX to cover cross-sectional heterogeneity with latent
group structures. Additional simulation results are collected in Section
\ref{sec:Additional-Simulations}.

\section{Two-way Fixed Effects\label{sec:appdx two-way} }

The main paper focuses on panel predictive regressions with individual
fixed effects only. In practice, two-way fixed effects (TWFE) ---
individual and time --- are widely adopted. In this section, we generalize
the predictive models (\ref{eq:predictive}) and (\ref{eq:state space})
to allow for TWFE:
\begin{align}
y_{i,t+1} & =\mu_{y,i}+f_{y,t}+\beta^{*}x_{i,t}+e_{i,t+1},\quad\text{for }i=1,\dots,n\text{ and }t=1,\dots,T-1,\label{eq:predictive-TW}
\end{align}
and the regressor $x_{i,t}$ follows a state space representation
\begin{equation}
\begin{split}x_{i,t} & =\alpha_{i}+\delta_{i,t},\\
\delta_{i,t+1} & =f_{x,t}+\rho^{*}\delta_{i,t}+v_{i,t+1}.
\end{split}
\label{eq:state space-TW}
\end{equation}
It implies that $x_{i,t}$ admits the AR(1) form $x_{i,t}=\mu_{x,i}+f_{x,t}+\rho^{*}x_{i,t-1}+v_{i,t}$
with $\mu_{x,i}=(1-\rho^{*})\alpha_{i}$. Similar to (\ref{eq:FE_AR}),
this specification of the individual fixed effect prevents an unrestricted
nonzero intercept in a local-to-unity process to avoid a drift that
dominates the stochastic trend and drastically complicates the asymptotics.

To remove the time effects, we follow the standard between-group (BG)
transformation in the literature. Specifically, for any generic panel
variable $w_{i,t}$, let $\check{w}_{i,t}=w_{i,t}-n^{-1}\sum_{j=1}^{n}w_{j,t}$
denote the BG transformed variant. Unlike the WG transformation that
causes the Nickell bias, the BG transformation removing time fixed
effects would not induce additional bias to the estimator. Therefore,
to address time fixed effects, our DIVX estimator replaces the original
$x_{i,t}$ and $y_{i,t}$ by the BG transformed variables $\check{x}_{i,t}$
and $\check{y}_{i,t}$.

Specifically, the instrument for IVX becomes
\begin{equation}
\check{z}_{i,t}=\sum_{s=1}^{t}\rho_{z}^{t-s}\Delta\check{x}_{i,s},\ \rho_{z}=1+c_{z}/T^{\theta},\label{eq:def IV-TW}
\end{equation}
where $c_{z}=-1$ and $\theta=0.95$ following the main paper. For
any generic $w_{i,t}$, let $\ddot{w}_{i,t}=\check{w}_{i,t}-\bar{\check{w}}_{i}$
denote the WG transformed $\check{w}_{i,t}$. Then we abuse the notation
$\bhativx$ to redefine IVX estimator of $\beta^{*}$ as
\begin{equation}
\bhativx=\frac{\sum_{i=1}^{n}\sum_{t=1}^{T-1}\ddot{z}_{i,t}\ddot{y}_{i,t+1}}{\sum_{i=1}^{n}\sum_{t=1}^{T-1}\ddot{z}_{i,t}\ddot{x}_{i,t}}.\label{eq:beta_IVX-TW}
\end{equation}

Since the BG transformation that addresses time effects does not induce
additional bias, the IVX bias formula still follows (\ref{eq:ivx_bias}).
To establish the IVX estimator of $\rho^{*}$, we generate the IV
as
\begin{equation}
\check{z}_{i,t}^{(1)}=\sum_{s=1}^{t}\left(1+\frac{c_{z}}{T^{\theta_{1}}}\right)^{t-s}\Delta\check{x}_{i,s},\label{eq:def IV (1)-1}
\end{equation}
where $c_{z}=-1$ and $\theta_{1}=0.975$. Define
\[
\hat{\rho}^{{\rm TW}}=\frac{\sum_{i=1}^{n}\sum_{t=1}^{T-1}\ddot{x}_{i,t}\ddot{x}_{i,t+1}}{\sum_{i=1}^{n}\sum_{t=1}^{T-1}\ddot{x}_{i,t}^{2}}
\]
as the estimator of $\rho^{*}$, and we abuse the notation $\hat{\rho}^{\mkern3mu \mathrm{IVX}}$
to re-define the IVX estimator of $\rho^{*}$ as
\begin{equation}
\hat{\rho}^{\mkern3mu \mathrm{IVX}}=\frac{\sum_{i=1}^{n}\sum_{t=1}^{T-1}\left(\check{z}_{i,t}^{(1)}\check{x}_{i,t+1}-\ddot{\varDelta}_{vv}\right)}{\sum_{i=1}^{n}\sum_{t=1}^{T-1}\check{z}_{i,t}^{(1)}\check{x}_{i,t}},\label{eq:rho IVX-1}
\end{equation}
where
\begin{equation}
\ddot{\varDelta}_{vv}=\frac{1}{nT}\sum_{h=1}^{G}\sum_{i=1}^{n}\sum_{t=h+1}^{T}\hat{v}_{i,t}^{{\rm TW}}\hat{v}_{i,t-h}^{{\rm TW}},\ G=\lfloor T^{1/4}\rfloor\label{eq:NW Deltavv-TW}
\end{equation}
and $\hat{v}_{i,t+1}^{{\rm TW}}=\ddot{x}_{i,t+1}-\hat{\rho}^{{\rm TW}}\ddot{x}_{i,t}$.
In addition, the intertemporal covariances are estimated by
\begin{equation}
\hat{\omega}_{ev,h}=\frac{1}{nT}\sum_{i=1}^{n}\sum_{t=1}^{T-h}\ddot{v}_{i,t+h}\ddot{e}_{i,t},\label{eq:omega ev hat-TW}
\end{equation}
where
\begin{equation}
\ddot{v}_{i,t}=\ddot{x}_{i,t}-\hat{\rho}^{{\rm IVX}}\ddot{x}_{i,t-1},\ \ \ddot{e}_{i,t}=\ddot{y}_{i,t}-\hat{\beta}^{{\rm IVX}}\ddot{x}_{i,t-1}.\label{eq:residuals-TW}
\end{equation}
With the newly defined estimators, the DIVX estimator $\hat{\beta}^{{\rm DIVX}}$
still follows (\ref{eq:DIVX}), where the truncated bias formula follows
(\ref{eq:truncated bias IVX}). The standard error is
\begin{equation}
\widehat{\varsigma}^{{\rm IVX}}=\frac{\sqrt{\sum_{i=1}^{n}\left(\sum_{t=1}^{T-1}\ddot{z}_{i,t}^{2}\ddot{e}_{i,t+1}^{2}-T\hat{\lambda}\bar{z}_{i}^{2}\hat{\omega}_{ee}\right)}}{\bigl|\sum_{i=1}^{n}\sum_{t=1}^{T-1}\ddot{z}_{i,t}\ddot{x}_{i,t}\bigr|},\label{eq:se IVX-1}
\end{equation}
where $\hat{\omega}_{ee}=(n(T-1))^{-1}\sum_{i=1}^{n}\sum_{t=1}^{T-1}\ddot{e}_{i,t+1}^{2}$
and $\widehat{\lambda}=(1-n\hat{\varrho}_{ev}^{\mkern3mu 2}/T^{3/2})_{+}$
with $\hat{\varrho}_{ev}=\hat{\omega}_{ev,0}/\hat{\omega}_{ee}$.
Theoretical justifications for DIVX with TWFE are essentially the
same as those in the main text with individual-specific intercepts
only, and therefore omitted to save space. In Section \ref{subsec:Two-way-Fixed-Effects}
of the Online Appendices, we conduct additional simulations to examine
the finite-sample performance of DIVX for panel predictive regressions
with TWFE. DIVX performs well.

\section{Multivariate Regression\label{sec:appdxMultivariate} }

Empirical research often includes multiple regressors as control variables.
Suppose that the target variable of interest $y_{i,t+1}$ is linked
with $k$ regressors $\boldsymbol{x}_{i,t}=(x_{j,i,t})_{j=1}^{k}$
in the linear form
\begin{equation}
y_{i,t+1}=\mu_{y,i}+\bm{x}_{i,t}^{\prime}\bm{\beta}^{*}+e_{i,t+1}\qquad\text{for }i=1,\dots,n\text{ and }t=1,\dots,T-1.\label{eq:pred_mul}
\end{equation}
The regressors are generated by a vector state space model:
\begin{align}
\bm{x}_{i,t} & =\bm{\alpha}_{i}+\bm{\delta}_{i,t},\label{eq:AR_mul}\\
\boldsymbol{\delta}_{i,t+1} & =\bm{R}_{T}^{*}\bm{\delta}_{i,t}+\bm{v}_{i,t+1},\nonumber
\end{align}
where
\[
\bm{R}^{*}=\bm{R}_{T}^{*}=\mathrm{diag}\bigl(\{\rho_{j}^{*}\}_{j=1}^{k}\bigr)\quad\text{with}\quad\rho_{j}^{*}=1+c_{j}^{*}/T^{\gamma_{j}},\quad c_{j}^{*}\in\mathbb{R}\text{ and }\gamma_{j}\in[0,1].
\]
The subscript $T$ in $\bm{R}_{T}^{*}$ is suppressed when there is
no ambiguity. We allow the degree of persistence measured by $\gamma_{j}$
to be heterogeneous across regressors. Let $\boldsymbol{w}_{i,t}=(e_{i,t},\bm{\varepsilon}_{i,t}^{\prime})^{\prime}$.
For a generic matrix $\bm{A}$, let $\|\bm{A}\|$ denote its Frobenius
norm. The following two assumptions generalize those in Section \ref{sec:framework}
to multivariate regressions.
\begin{assumption}[Initial values and drift]
\label{assump:mult-initval-1} The VAR(1) process $\boldsymbol{\delta}_{i,t}$
and the drift $\bm{\alpha}_{i}$ in \eqref{eq:AR_mul} satisfy the
following conditions uniformly across all $i$ and $j$:
\begin{enumerate}
\item $\mathbb{E}\left(\boldsymbol{\delta}_{i,0}|\boldsymbol{\alpha}_{i}\right)=\bm{0}_{k}$.
\item $\mathbb{E}(\delta_{j,i,0}^{2})=O(|1-\rho_{j}^{*}|^{-1}\wedge T^{1-\varepsilon})$
where $\varepsilon>0$ is an arbitrarily small constant.
\item $\mathbb{E}(\delta_{j,i,0}^{4}+\alpha_{i}^{4})=O(|1-\rho_{j}^{*}|^{-2}\wedge T^{2})$.
(The convention $1/\infty$ is invoked if $\rho_{j}^{*}=1$).
\item $\sup_{t\leq0}\|\mathbb{E}(\boldsymbol{\delta}_{i,0}\bm{\varepsilon}_{i,t})\|<\infty$.
\end{enumerate}
\end{assumption}
\begin{assumption}[Innovations]
\label{assump:mult-innov-1}\mbox{}
\begin{enumerate}
\item Define $\bm{w}_{i,t}=(e_{i,t},\bm{\varepsilon}_{i,t}^{\prime})^{\prime}$.
Suppose that $\{\bm{w}_{i,t}\}$ are i.i.d.\ across $i$. For each
$i$, the time series $\{\bm{w}_{i,t}\}$ is a strictly stationary
and ergodic m.d.s.\ with respect to the filtration $\{\mathcal{F}_{i,t}=\sigma(\bm{\alpha}_{i},\bm{\delta}_{i,0},\bm{w}_{i,t},\bm{w}_{i,t-1},\dots)\}$,
with absolutely summable fourth order cumulants: $\sup_{a,b,c,d\in\{1,\dots,k+1\}}\sum_{t,s,r=-\infty}^{\infty}|\kappa_{abcd}(0,t,s,r)|<\infty$.
\item \label{enu:garch-mul}Let $u_{i,t}$ be i.i.d.\ random variables
with $\mathbb{E}(u_{1,1})=0$, $\mathbb{E}(u_{1,1}^{2})=1$ and $\mathbb{E}(u_{1,1}^{4})<\infty$.
For each $i$, the sequence $\{e_{i,t}\}$ admits a $\mathrm{GARCH}(q,r)$
representation:
\[
e_{i,t}=h_{i,t}^{1/2}u_{i,t},\qquad h_{i,t}=\phi+\sum_{k=1}^{q}a_{k}e_{i,t-k}^{2}+\sum_{\ell=1}^{r}b_{\ell}h_{i,t-\ell},
\]
where $\phi>0$ is a constant, $a_{k},b_{\ell}\geq0$ and $0\leq\sum_{k=1}^{q}a_{k}+\sum_{\ell=1}^{r}b_{\ell}<1$.
\item \label{enu:LP-1}For each $i$, $\bm{v}_{i,t}$ is a strictly stationary
linear process
\[
\bm{v}_{i,t}=\sum_{s=0}^{\infty}\bm{G}_{s}\bm{\varepsilon}_{i,t-s},
\]
where $\|\bm{G}_{s}\|\leq C_{0}\exp(-C_{g}s)$ for some positive constants
$C_{0}$ and $C_{g}$ for each $s$.
\end{enumerate}
\end{assumption}
To implement IVX, we generate the instrumental variable as
\[
\bm{z}_{i,t}=\sum_{s=1}^{t}\rho_{z}^{t-s}\Delta\bm{x}_{i,s},\quad\text{for }t=1,\dots,T_{h},
\]
where $\rho_{z}=1+c_{z}/T^{\theta}$, and $\Delta\bm{x}_{i,s}=\bm{x}_{i,s}-\bm{x}_{i,s-1}$.
The IVX estimator for $\bm{\beta}^{*}$ is
\[
\hat{\bm{\beta}}^{\mathrm{IVX}}=\left(\sum_{i=1}^{n}\sum_{t=1}^{T-1}\tilde{\boldsymbol{z}}_{i,t}\bm{x}_{i,t}'\right)^{-1}\sum_{i=1}^{n}\sum_{t=1}^{T-1}\tilde{\boldsymbol{z}}_{i,t}y_{i,t+1}.
\]
The following proposition gives the bias formula induced from the
numerator.
\begin{prop}
\label{prop:ivx_multiple_bias}Under Assumptions \ref{assump:mult-initval-1}
and \ref{assump:mult-innov-1},
\[
\mathbb{E}\left(\sum_{i=1}^{n}\sum_{t=1}^{T-1}\tilde{\bm{z}}_{i,t}e_{i,t+1}\right)=-\bm{\xi}_{n,T}(\{\bm{\omega}_{ev,\ell}^{*}\},\bm{R}^{*},\rho_{z}),
\]
where $\bm{\xi}_{n,T}(\{\bm{\omega}_{ev,\ell}^{*}\},\bm{R}^{*},\rho_{z})$
is a $k$-dimensional vector with its $j$-th entry
\[
\xi_{j,n,T}(\{\bm{\omega}_{ev,\ell}^{*}\},\rho_{j}^{*},\rho_{z})=\frac{n}{T-1}\sum_{\ell=0}^{T-3}\Psi_{\ell,T-1}(\rho_{j}^{*},\rho_{z})\omega_{ev,j,\ell}^{*},
\]
where
\[
\Psi_{\ell,T-1}(\rho_{j}^{*},\rho_{z})=\sum_{k=\ell+2}^{T-1}\frac{\rho_{z}^{T-k}-\rho_{j}^{*T-k}}{\rho_{z}-\rho_{j}^{*}}\quad\text{and}\quad\omega_{ev,j,\ell}^{*}=\mathbb{E}(v_{j,i,t+\ell}e_{i,t}).
\]
\end{prop}
The bias formula is given by
\begin{equation}
\bm{b}_{n,T}^{\mathrm{IVX}}(\{\bm{\omega}_{ev,\ell}^{*}\},\bm{R}^{*},\rho_{z})=\left(\sum_{i=1}^{n}\sum_{t=1}^{T-1}\tilde{\boldsymbol{z}}_{i,t}\bm{x}_{i,t}'\right)^{-1}\bm{\xi}_{n,T}(\{\bm{\omega}_{ev,\ell}^{*}\},\bm{R}^{*},\rho_{z}),\label{eq:ivx_bias_multiple}
\end{equation}
which is parallel to \eqref{eq:ivx_bias} for the simple regression.

To convert \eqref{eq:ivx_bias_multiple} into a feasible bias formula,
we need to replace the parameters with their respective estimators.
Let $\hat{\bm{R}}^{\mathrm{IVX}}$ denote the IVX estimator for $\boldsymbol{R}^{*}$,
namely,
\begin{equation}
\hat{\bm{R}}^{\text{IVX}}={\rm diag}\bigl(\{\hat{\rho}_{j}^{\text{IVX}}\}_{j=1}^{k}\bigr),\label{eq:hat R XJ-1}
\end{equation}
where $\hat{\rho}_{j}^{\text{IVX}}$ follows \eqref{eq:rhoIVX} for
the simple regression where $x_{i,t}$ is replaced by $x_{j,i,t}$
for $j=1,2,\ldots,k$. Let $\hat{\bm{\omega}}_{ev,\ell}$ denote the
estimators for the covariances of innovations based on the IVX estimators
$\hat{\bm{\beta}}^{\mathrm{IVX}}$ and $\hat{\bm{R}}^{\text{IVX}}$:
\begin{equation}
\hat{\bm{\omega}}_{ev,\ell}=\dfrac{1}{n(T-\ell)}\sum_{i=1}^{n}\sum_{t=1}^{T-\ell}\hat{\boldsymbol{v}}_{i,t+\ell}\hat{e}_{i,t},\label{eq:hat_omg_ell_mixed}
\end{equation}
where the residuals are calculated as
\[
\hat{\boldsymbol{v}}_{i,t}=\tilde{\boldsymbol{x}}_{i,t}-\hat{\bm{R}}^{\text{IVX}}\tilde{\boldsymbol{x}}_{i,t-1},\ \hat{e}_{i,t}=\tilde{y}_{i,t}-\tilde{\boldsymbol{x}}_{i,t-1}^{\prime}\hat{\bm{\beta}}^{\mathrm{IVX}}
\]
Moreover, we need to truncate the long-run covariance to get the feasible
bias formula:
\begin{equation}
\hat{\bm{b}}_{n,T}^{\mathrm{IVX}}(\{\hat{\bm{\omega}}_{ev,\ell}\},\hat{\bm{R}}^{\text{IVX}},\rho_{z})=\left(\sum_{i=1}^{n}\sum_{t=1}^{T-1}\tilde{\boldsymbol{z}}_{i,t}\bm{x}_{i,t}^{\prime}\right)^{-1}\hat{\bm{\xi}}_{n,T}(\{\hat{\bm{\omega}}_{ev,\ell}\},\hat{\bm{R}}^{\text{IVX}},\rho_{z}),\label{eq:hat_bias_IVX}
\end{equation}
where the $j$th entry of $\hat{\bm{\xi}}_{n,T}(\{\hat{\bm{\omega}}_{ev,\ell}\},\hat{\bm{R}}^{\text{IVX}},\rho_{z})$
is given by
\[
\hat{\xi}_{j,n,T}(\{\hat{\omega}_{ev,j,\ell}\},\hat{\rho}_{j}^{\mathrm{IVX}},\rho_{z})=\frac{n}{T-1}\sum_{\ell=0}^{G}\Psi_{\ell,T-1}(\hat{\rho}_{j}^{\mathrm{IVX}},\rho_{z})\hat{\omega}_{ev,j,\ell}\quad\text{with }G=\lfloor T^{1/4}\rfloor.
\]
 The DIVX estimator for multivariate panel predictive regression is
constructed by
\[
\hat{\bm{\beta}}^{\mathrm{DIVX}}=\hat{\bm{\beta}}^{\mathrm{IVX}}+\hat{\bm{b}}_{n,T}^{\mathrm{IVX}}(\{\hat{\bm{\omega}}_{ev,\ell}\},\hat{\bm{R}}^{\text{IVX}},\rho_{z}).
\]
where $\hat{\bm{b}}_{n,T}^{\mathrm{IVX}}$ is defined by \eqref{eq:hat_bias_IVX}.

The next proposition is about the asymptotic variance of $\hat{\bm{\beta}}^{\mathrm{DIVX}}$.
Let $\boldsymbol{D}_{T}={\rm diag}\bigl(\{T^{1+(\theta\wedge\gamma_{j})}\}_{j=1}^{k}\bigr)$.
Note that the infeasible matrix $\boldsymbol{D}_{T}$ measures the
convergence rates and only facilitates theoretical analysis; it does
not appear in the practical implementation.
\begin{prop}
\label{prop:ivx_variance_multiple}Under \prettyref{assump:mult-initval-1}
and \ref{assump:mult-innov-1}, as $(n,T)\to\infty$, we have
\[
(n\boldsymbol{D}_{T})^{1/2}\left(\hat{\bm{\beta}}^{\mathrm{IVX}}-\bm{\beta}^{*}+\bm{b}_{n,T}^{\mathrm{IVX}}(\{\bm{\omega}_{ev,\ell}^{*}\},\bm{R}^{*},\rho_{z})\right)\to_{d}\mathcal{N}\bigl(\bm{0},\bm{\Sigma}^{\mathrm{IVX}}\bigr),
\]
where
\[
\bm{\Sigma}^{\mathrm{IVX}}=\omega_{ee}^{*}\cdot\lim_{T\to\infty}\boldsymbol{D}_{T}^{1/2}\left[\mathbb{E}\left(\sum_{t=1}^{T-1}\tilde{\boldsymbol{z}}_{i,t}\bm{x}_{i,t}'\right)\right]^{-1}\cdot\left[\mathbb{E}\left(\sum_{t=1}^{T-1}\bm{z}_{i,t}\bm{z}_{i,t}'\right)\right]\cdot\left[\mathbb{E}\left(\sum_{t=1}^{T-1}\tilde{\boldsymbol{z}}_{i,t}\bm{x}_{i,t}'\right)\right]^{-1}\boldsymbol{D}_{T}^{1/2}.
\]
\end{prop}
\bigskip
\begin{prop}
\label{prop:feasible_ivx_normal}Under \prettyref{assump:mult-initval-1}
and \ref{assump:mult-innov-1}, as $(n,T)\to\infty$ and $n/T\to c\in[0,\infty)$,
we have
\[
(n\boldsymbol{D}_{T})^{1/2}\left(\hat{\bm{\beta}}^{\mathrm{DIVX}}-\bm{\beta}^{*}\right)\to_{d}\mathcal{N}\bigl(\bm{0},\bm{\Sigma}^{\mathrm{IVX}}\bigr).
\]
\end{prop}
By the asymptotic normality shown in \prettyref{prop:feasible_ivx_normal},
we estimate the asymptotic variance of $\hat{\bm{\beta}}^{\mathrm{DIVX}}$
as
\begin{equation}
\hat{\bm{\Theta}}^{\mathrm{DIVX}}=\left(\sum_{i=1}^{n}\sum_{t=1}^{T_{h}}\tilde{\bm{z}}_{i,t}\bm{x}_{i,t}'\right)^{-1}\hat{\bm{\Sigma}}_{n,T}\left(\sum_{i=1}^{n}\sum_{t=1}^{T_{h}}\tilde{\bm{z}}_{i,t}\bm{x}_{i,t}^{\prime}\right)^{-1},\label{eq:Theta hat DIVX}
\end{equation}
where\textbf{\textcolor{red}{{} }}
\begin{align}
\hat{\bm{\Sigma}}_{n,T} & =\frac{1}{n}\sum_{i=1}^{n}\left(\sum_{t=1}^{T-1}\bm{z}_{i,t}\bm{z}_{i,t}^{\prime}\hat{e}_{i,t+1}^{\mkern3mu 2}-T\hat{\omega}_{ee}\cdot\bm{\Lambda}_{n,T}\bm{\bar{z}}_{i}\bm{\bar{z}}_{i}^{\prime}\bm{\Lambda}_{n,T}\right)\qquad\text{with}\label{eq:CovMatMul}\\
\bm{\Lambda}_{n,T} & =\mathrm{diag}\Bigl(\bigl\{\hat{\lambda}_{j}\bigr\}_{j=1}^{k}\Bigr),\ \hat{\lambda}_{j}=\left(1-\hat{\varrho}_{ev,j}\frac{n}{T^{3/2}}\right)_{+},\nonumber
\end{align}
with $\hat{\omega}_{ee}$ and $\hat{\varrho}_{ev,j}$ following the
definitions in (\ref{eq:se IVX}). Parallel to (\ref{eq:se IVX})
for the univariate regression, the second term \textbf{$T\hat{\omega}_{ee}\cdot\bm{\Lambda}_{n,T}\bm{\bar{z}}_{i}\bm{\bar{z}}_{i}^{\prime}\bm{\Lambda}_{n,T}$}
is merely a correction term under finite samples; it does not affect
the asymptotics as its stochastic order is dominated by the first
term $\sum_{t}\bm{z}_{i,t}\bm{z}_{i,t}^{\prime}\hat{e}_{i,t+1}^{\mkern3mu 2}$.
By the argument we have used in the proof of the simple regression,
we can show that, as $(n,T)\to\infty$,
\begin{equation}
(n\boldsymbol{D}_{T})^{1/2}\hat{\bm{\Theta}}^{\mathrm{DIVX}}(n\boldsymbol{D}_{T})^{1/2}\to_{p}\bm{\Sigma}^{\mathrm{IVX}}.\label{eq:DIVX_Theta_hat_converge}
\end{equation}

Suppose we are interested in testing a linear joint null hypothesis
$\mathbb{H}_{0}\colon\bm{A}\bm{\beta}^{*}=\bm{q}$, where $\boldsymbol{A}$
is an $m\times k$ constant matrix of full row rank accommodating
$m$ linear restrictions, and $\boldsymbol{q}$ is an $m\times1$
constant vector. We reject $\mathbb{H}_{0}$ at the significance level
$\alpha$ if the Wald statistic
\begin{equation}
\text{Wald}^{\mathrm{DIVX}}=\bigl(\bm{A}\hat{\bm{\beta}}^{\mathrm{DIVX}}-\bm{q}\bigr)'\bigl(\bm{A}\hat{\bm{\Theta}}^{\mathrm{DIVX}}\bm{A}'\bigr)^{-1}\bigl(\bm{A}\hat{\bm{\beta}}^{\mathrm{DIVX}}-\bm{q}\bigr)\label{eq:Wald_DIVX}
\end{equation}
is larger than the $(1-\alpha)$th quantile of a $\chi^{2}$ distribution
of degree of freedom $m$, where $\hat{\bm{\Theta}}^{\mathrm{DIVX}}$
is defined in \eqref{eq:Theta hat DIVX}.
\begin{thm}
\label{thm:wald-chi2}Suppose that \prettyref{assump:mult-initval-1}
and \ref{assump:mult-innov-1} hold. Under the null hypothesis $\mathbb{H}_{0}\colon\bm{A}\bm{\beta}^{*}=\bm{q}$,
the Wald statistic
\[
\mathrm{Wald}^{\mathrm{DIVX}}\to_{d}\chi^{2}(m)
\]
as $(n,T)\to\infty$ and $n/T\to c\in[0,\infty)$.
\end{thm}
We conduct simulation studies for multivariate regressions in Section
\ref{subsec:Multivariate-Regressors} of the Online Appendices, where
we allow the regressors to have various degrees of persistence. The
results show that DIVX inference remains robust for multiple regressions
with mixed roots.

\section{Local Projection\label{sec:Local-Projection}}

Since \citet{jorda2005estimation} proposed local projection in the
time series context, it has been widely used to estimate the impulse
response functions that characterize the dynamic relations in economic
and financial systems. In empirical studies, local projection is naturally
introduced to panel data applications, where the WG estimator is the
default estimator and suffers from potential Nickell bias \citep{mei2023implicit}.
As local projection applies a sequence of predictive regressions,
the Nickell-Stambaugh bias in panel predictive regressions discussed
in the main text naturally retains in panel local projections. This
section is the first discussion of \emph{panel} local projection with
persistent regressors, complementing \citet{mei2023implicit} who
cover panel with a stationary time dimension.

Let $H$ be the maximum horizon of interest specified by the user.
By Equations \eqref{eq:pred_mul} and \eqref{eq:AR_mul} we can deduce
the following $h$-period predictive model
\begin{equation}
y_{i,t+h}=\mu_{y,i}^{(h)}+\bm{x}_{i,t}^{\prime}\bm{\beta}^{(h)*}+e_{i,t+h}^{(h)},\quad\text{for }h\in\left\{ 1,2,\ldots,H\right\} ,\label{eq:pred_h}
\end{equation}
where
\begin{align}
\mu_{y,i}^{(h)} & =\mu_{y,i}+\bm{\beta}^{*\prime}\left(\sum_{\tau=0}^{h-2}\bm{R}^{*\tau}\right)\bm{\mu}_{x,i}\quad\text{with }\bm{\mu}_{x,i}=(\bm{I}-\bm{R}^{*})\bm{\alpha}_{i},\nonumber \\
\bm{\beta}^{(h)*} & =\bm{R}^{*h-1}\bm{\beta}^{*},\nonumber \\
e_{i,t+h}^{(h)} & =e_{i,t+h}+\bm{\beta}^{*\prime}\left(\sum_{\tau=1}^{h-1}\bm{R}^{*h-1-\tau}\bm{v}_{i,t+\tau}\right).\label{eq:error_h}
\end{align}
The coefficients $\bm{\beta}^{(h)*}$ are the impulse response functions
of central interest. Note that the effective number of time periods
is $T_{h}=T-h$. We follow the literature of local projections \citep{jorda2005estimation,montiel2021local,mei2023implicit}
to impose that $e_{i,t}$ is conditionally homoskedastic, and the
AR(1) innovations $\bm{v}_{i,t}$ in \eqref{eq:AR_mul} are m.d.s.,
formalized as the following assumption.\footnote{If $\bm{v}_{i,t}$ has nonzero autocorrelations, the regressors $\bm{x}_{i,t}$
and the error term $e_{i,t+h}^{(h)}$ will be correlated, which causes
endogeneity.}
\begin{assumption}[Innovations]
\label{assump:mult-innov-1-1}\mbox{}The conditions in \prettyref{assump:mult-innov-1}
hold except that $a_{k}=0$ for $k=1,2,...,q$ and $b_{\ell}=0$ for
$\ell=1,2,...,r$ in Condition \ref{enu:garch-mul}, and $\boldsymbol{G}_{0}=\boldsymbol{I}$,
$\boldsymbol{G}_{s}=\boldsymbol{O}$ for all $s\geq1$ in Condition
\ref{enu:LP-1}.
\end{assumption}
The IVX estimator for the $h$-period-ahead predictive regression
is
\[
\hat{\bm{\beta}}^{(h)\mathrm{IVX}}=\left(\sum_{i=1}^{n}\sum_{t=1}^{T_{h}}\tilde{\boldsymbol{z}}_{i,t}\bm{x}_{i,t}^{\prime}\right)^{-1}\sum_{i=1}^{n}\sum_{t=1}^{T_{h}}\tilde{\boldsymbol{z}}_{i,t}y_{i,t+h}.
\]
The following proposition shows the bias formula induced from the
numerator. Its \proofref[proof]{prop:ivx_bias_h} is relegated to
Section~\ref{subsec:Proofs-for-mult}.
\begin{prop}
\label{prop:ivx_bias_h}Under Assumptions~\ref{assump:mult-initval-1}
and \ref{assump:mult-innov-1-1}, we have
\[
\mathbb{E}\left(\sum_{i=1}^{n}\sum_{t=1}^{T_{h}}\tilde{\boldsymbol{z}}_{i,t}e_{i,t+h}^{(h)}\right)=-\bm{\xi}_{n,T}^{(h)}(\bm{R}^{*},\bm{\omega}_{ev}^{*},\bm{\Omega}_{vv}^{*},\bm{\beta}^{*}),
\]
where
\[
\bm{\xi}_{n,T}^{(h)}(\bm{R}^{*},\bm{\omega}_{ev}^{*},\bm{\Omega}_{vv}^{*},\bm{\beta}^{*})=\frac{n}{T_{h}}\Biggl[\begin{array}{c}
\sum_{t=h+1}^{T_{h}}\sum_{s=h+1}^{t}\rho_{z}^{t-s}\bm{R}^{*s-h-1}\bm{\omega}_{ev}^{*}\\
+\sum_{\tau=1}^{h-1}\sum_{t=\tau+1}^{T_{h}}\sum_{s=\tau+1}^{t}\rho_{z}^{t-s}\bm{R}^{*s-\tau-1}\bm{\Omega}_{vv}^{*}\bm{R}^{*h-1-\tau}\bm{\beta}^{*}
\end{array}\Biggr].
\]
\end{prop}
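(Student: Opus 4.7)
The plan is to reduce the expectation to a manageable double sum via the martingale structure and then collapse that sum into the target form by a one-step recursion in powers of $\bR^{*}$. First I would split the demeaning as $\sum_{t=1}^{T_{h}}\tilde{\bzta}_{i,t}e_{i,t+h}^{\hh} = \sum_{t=1}^{T_{h}}\bzta_{i,t}e_{i,t+h}^{\hh} - \bar{\bzta}_{i}\sum_{t=1}^{T_{h}}e_{i,t+h}^{\hh}$. Since $\bzta_{i,t}$ is $\mathcal{F}_{i,t}$-measurable while (\ref{eq:error_h}) expresses $e_{i,t+h}^{\hh}$ as a linear combination of the strictly future shocks $e_{i,t+h},\bv_{i,t+1},\ldots,\bv_{i,t+h-1}$, \assumpref{mult-innov-1}(ii) gives $\epct_{t}[e_{i,t+h}^{\hh}]=\bm{0}$, so the first sum contributes nothing in expectation. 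All of the bias is therefore carried by the second term, which equals $-T_{h}^{-1}\sum_{s,t=1}^{T_{h}}\epct[\bzta_{i,s}e_{i,t+h}^{\hh}]$.

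Next I would decompose $e_{i,t+h}^{\hh}$ via (\ref{eq:error_h}) into an $e_{i,t+h}$ piece plus $h-1$ pieces proportional to $\bv_{i,t+\tau}$. The m.d.s.\ property forces $\epct[\bzta_{i,s}e_{i,t+h}]\neq 0$ only if $s\geq t+h$, and $\epct[\bzta_{i,s}\bv_{i,t+\tau}']\neq\bm{0}$ only if $s\geq t+\tau$. After substituting $s'=t+h$ (resp.\ $s'=t+\tau$) for the error's time index and swapping the order of summation, the pure-$e$ contribution reduces to $\sum_{t=h+1}^{T_{h}}\sum_{s=h+1}^{t}\epct[\bzta_{i,t}e_{i,s}]$, and the $\tau$-th $\bv$ contribution to $\sum_{t=\tau+1}^{T_{h}}\sum_{s=\tau+1}^{t}\epct[\bzta_{i,t}\bv_{i,s}']\bR^{*h-1-\tau}\bm{\beta}^{*}$, where transposes are absorbed because $\bR^{*}$ is diagonal.

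The core algebraic hurdle is the identity $\sum_{s=h+1}^{t}\epct[\bzta_{i,t}e_{i,s}]=\sum_{s=h+1}^{t}\rho_{z}^{t-s}\bR^{*s-h-1}\bomg_{12}^{*}$, which does not hold term-by-term. To prove it I would first compute the atomic moment $\epct[\bx_{i,r}e_{i,s}]=\bR^{*r-s}\bomg_{12}^{*}\mathbb{I}(r\geq s)$ by decomposing $\bx_{i,r}=\bm{\alpha}_{i}+\bm{\delta}_{i,r}$ and noting $\epct[\bm{\alpha}_{i}e_{i,s}]=0$ for $s\geq 1$ since $\bm{\alpha}_{i}\in\mathcal{F}_{i,0}$. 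Applying summation by parts to $\bzta_{i,t}=\sum_{r=1}^{t}\rho_{z}^{t-r}(\bx_{i,r}-\bx_{i,r-1})$ (using $\epct[\bx_{i,0}e_{i,s}]=0$) then yields $\epct[\bzta_{i,t}e_{i,s}]=H(t-s)\bomg_{12}^{*}$ for the matrix sequence $H(\Delta):=\bR^{*\Delta}-(1-\rho_{z})\sum_{p=0}^{\Delta-1}\rho_{z}^{\Delta-1-p}\bR^{*p}$, which one checks satisfies the one-step recursion $H(\Delta)=\rho_{z}H(\Delta-1)+\bR^{*\Delta}-\bR^{*\Delta-1}$ with $H(0)=\bm{I}$. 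Summing the recursion delivers the telescoping identity $\sum_{\Delta=0}^{M}H(\Delta)=\sum_{k=0}^{M}\rho_{z}^{M-k}\bR^{*k}$; with $M=t-h-1$ and the reindexing $k\mapsto s-h-1$ this is exactly the claim. The $\bv$-pieces follow verbatim after replacing $\bomg_{12}^{*}$ by $\bOmg_{22}^{*}$ and $h$ by $\tau$, and multiplying by $n$ via the cross-sectional independence in \assumpref{mult-innov-1}(i) produces $-\bm{\xi}_{n,T}^{\hh}$.
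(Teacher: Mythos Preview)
Your argument is correct. Both you and the paper reduce the problem to computing $\epct\bigl[\bigl(\sum_{t}\bzta_{i,t}\bigr)\bigl(\sum_{s}e_{i,s+h}^{\hh}\bigr)\bigr]$ after noting that $\sum_{t}\bzta_{i,t}e_{i,t+h}^{\hh}$ has zero mean, but the collapse into the closed form is organized differently. The paper expands $\bzta_{i,t}$ explicitly as a linear combination of the shocks $\bv_{i,k}$ (cf.\ the derivation leading to (\ref{eq:Ezeta=000020e=000020deduce})), picks out the surviving terms via $\epct[\bv_{i,k}e_{i,s+h}]=\bomg_{12}^{*}\mathbb{I}(k=s+h)$, and then simplifies the resulting triple sum using the scalar identity $(\rho^{*}-1)\sum_{k=2}^{j-1}\rho^{*j-1-k}+1=\rho^{*j-2}$. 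You instead compute the pairwise cross-moment $\epct[\bzta_{i,t}e_{i,s}]=H(t-s)\bomg_{12}^{*}$ via summation by parts, establish the one-step recursion $H(\Delta)=\rho_{z}H(\Delta-1)+\bR^{*\Delta}-\bR^{*\Delta-1}$, and telescope to $\sum_{\Delta=0}^{M}H(\Delta)=\sum_{k=0}^{M}\rho_{z}^{M-k}\bR^{*k}$. The paper's route is more elementary and mirrors the univariate calculation directly; your recursion/telescoping device is a touch slicker in the matrix setting because it never unpacks $\bzta_{i,t}$ into its constituent $\bv$-shocks and handles all $\bR^{*}$ powers uniformly.
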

Given \prettyref{prop:ivx_bias_h}, we have the following bias formula
for $\hat{\bm{\beta}}^{(h){\rm IVX}}$:
\begin{equation}
\bm{b}_{n,T}^{(h)\mathrm{IVX}}(\bm{R}^{*},\bm{\omega}_{ev}^{*},\bm{\Omega}_{vv}^{*},\bm{\beta}^{*})=\left(\sum_{i=1}^{n}\sum_{t=1}^{T_{h}}\tilde{\boldsymbol{z}}_{i,t}\bm{x}_{i,t}'\right)^{-1}\bm{\xi}_{n,T}^{(h)}(\bm{R}^{*},\bm{\omega}_{ev}^{*},\bm{\Omega}_{vv}^{*},\bm{\beta}^{*}),\label{eq:ivx_bias_h}
\end{equation}
which is parallel to \eqref{eq:ivx_bias} for the simple regression.
To make \eqref{eq:ivx_bias_h} a feasible bias formula, we need to
replace the parameters with their respective estimators. The IVX estimator
for $\bm{R}^{*}$ follows \eqref{eq:hat R XJ-1}. Let $\hat{\bm{\omega}}_{ev}$
and $\hat{\bm{\Omega}}_{vv}$ denote the estimators for the covariances
of innovations based on the IVX estimators $\hat{\bm{\beta}}^{\mathrm{IVX}}$
and $\hat{\bm{R}}^{{\rm IVX}}$:
\begin{align}
\hat{\bm{\omega}}_{ev} & =\dfrac{1}{nT}\sum_{i=1}^{n}\sum_{t=1}^{T-1}(\tilde{\bm{x}}_{i,t+1}-\hat{\bm{R}}^{{\rm IVX}}\tilde{\bm{x}}_{i,t})(\tilde{y}_{i,t+1}-\tilde{\bm{x}}_{i,t}'\hat{\bm{\beta}}^{\mathrm{IVX}}),\label{eq:hat omg 12 IVXJ}\\
\hat{\bm{\Omega}}_{vv} & =\dfrac{1}{nT}\sum_{i=1}^{n}\sum_{t=1}^{T-1}(\tilde{\bm{x}}_{i,t+1}-\hat{\bm{R}}^{{\rm IVX}}\tilde{\bm{x}}_{i,t})(\tilde{\bm{x}}_{i,t+1}-\hat{\bm{R}}^{{\rm IVX}}\tilde{\bm{x}}_{i,t})^{\prime}.\label{eq:hat Omg 22 XJ}
\end{align}
The DIVX estimator for multivariate panel predictive regression is
constructed by
\[
\hat{\bm{\beta}}^{(h){\rm DIVX}}=\hat{\bm{\beta}}^{(h)\mathrm{IVX}}+\bm{b}_{n,T}^{(h)\mathrm{IVX}}(\hat{\bm{R}}^{{\rm IVX}},\hat{\boldsymbol{\omega}}_{ev},\hat{\bm{\Omega}}_{vv},\hat{\bm{\beta}}^{\mathrm{IVX}}),
\]
where $\bm{b}_{n,T}^{(h)\mathrm{IVX}}$ is defined by \eqref{eq:ivx_bias_h}.

The next proposition relates to the asymptotic variance of $\hat{\bm{\beta}}^{(h){\rm DIVX}}$.
Given a fixed horizon $h$, we pass $(n,T)\to\infty.$ Let $\boldsymbol{D}_{T}={\rm diag}\bigl(\{T^{1+(\theta\wedge\gamma_{j})}\}_{j=1}^{k}\bigr)$.
Note that the infeasible matrix $\boldsymbol{D}_{T}$ measures the
convergence rates and only facilitates theoretical analysis; it does
not appear in the practical implementation of DIVX. The \proofref[proof]{prop:ivx_variance_h}
is relegated to Section~\ref{subsec:Proofs-for-mult}.
\begin{prop}
\label{prop:ivx_variance_h}Under \prettyref{assump:mult-initval-1}
and \ref{assump:mult-innov-1-1}, as $(n,T)\to\infty$,
\[
{\rm var}\Biggl((n\boldsymbol{D}_{T})^{-1/2}\sum_{i=1}^{n}\sum_{t=1}^{T_{h}}\tilde{\bm{z}}_{i,t}e_{i,t+h}^{(h)}\Biggr)-\boldsymbol{D}_{T}^{-1/2}\boldsymbol{\Sigma}_{T}^{(h)}\boldsymbol{D}_{T}^{-1/2}\to\bm{0}_{k\times k},
\]
where
\begin{gather*}
\bm{\Sigma}_{T}^{(h)}=\bm{\Pi}_{T}(0)+\sum_{\ell=1}^{h-1}\bigl[\bm{\Pi}_{T}(\ell)+\bm{\Pi}_{T}(\ell)'\bigr]\qquad\text{with}\\
\bm{\Pi}_{T}(\ell)=\Gamma_{ee}^{(h)}(\ell)\cdot\mathbb{E}\left(\sum_{t=1}^{T_{h}}\bm{z}_{i,t}\bm{z}_{i,t}'\right)\bm{R}^{*\ell}\\
\quad\text{and}\quad\Gamma_{ee}^{(h)}(\ell)=\mathbb{E}\bigl(e_{i,t+h}^{(h)}e_{i,t+h-\ell}^{(h)}\bigr)\ \text{for }\ell=0,1,\dots,h-1.
\end{gather*}
\end{prop}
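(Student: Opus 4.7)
The plan is to reduce the variance computation to a tractable autocovariance expansion. First, by the cross-sectional independence in \assumpref{mult-innov-1}(i), the summands $\bm{\xi}_{i}:=\sum_{t=1}^{T_{h}}\tilde{\bm{z}}_{i,t}e_{i,t+h}^{(h)}$ are i.i.d.\ across $i$ with common variance, so $\mathrm{var}(\sum_{i=1}^{n}\bm{\xi}_{i})=n\cdot\mathrm{var}(\bm{\xi}_{1})$ and the $n^{-1}$ prefactor in $(n\bm{D}_{T})^{-1/2}$ cancels. The task thus reduces to matching $\bm{D}_{T}^{-1/2}\mathrm{var}(\bm{\xi}_{1})\bm{D}_{T}^{-1/2}$ with $\bm{D}_{T}^{-1/2}\bm{\Sigma}_{T}^{(h)}\bm{D}_{T}^{-1/2}$. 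I would next replace $\tilde{\bm{z}}_{i,t}$ by $\bm{z}_{i,t}$; an argument parallel to the scalar control of $\bar{z}_{i}$ used in the proof of \propref{bhativx} shows that $\mathbb{E}\|\bar{\bm{z}}_{i}\|^{2}$ is of strictly lower order than the diagonal entries of $\bm{D}_{T}$, so the demeaning adjustment vanishes after normalization.

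Second, I would expand $\mathrm{var}(\sum_{t=1}^{T_{h}}\bm{z}_{i,t}e_{i,t+h}^{(h)})$ as the double sum $\sum_{t,s=1}^{T_{h}}\mathbb{E}(\bm{z}_{i,t}\bm{z}_{i,s}'e_{i,t+h}^{(h)}e_{i,s+h}^{(h)})$, which is justified since $\mathbb{E}(e_{i,t+h}^{(h)})=0$ by the m.d.s.\ property of $\{\bm{w}_{i,t}\}$ applied to the representation $e_{i,t+h}^{(h)}=e_{i,t+h}+\bm{\beta}^{*\prime}\sum_{\tau=1}^{h-1}\bm{R}^{*h-1-\tau}\bm{v}_{i,t+\tau}$. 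This representation also exposes the MA$(h-1)$ structure: for $|s-t|\geq h$, the shock sets driving $e_{i,t+h}^{(h)}$ and $e_{i,s+h}^{(h)}$ are disjoint and orthogonal (by iterated expectations with respect to $\mathcal{F}_{i,t\wedge s}$) to the $\bm{z}$'s at the smaller index, so the cross-moments vanish identically. Only lags $\ell\in\{0,1,\dots,h-1\}$ contribute, producing the envelope $\bm{\Pi}_{T}(0)+\sum_{\ell=1}^{h-1}[\bm{\Pi}_{T}(\ell)+\bm{\Pi}_{T}(\ell)']$.

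Third, for each surviving lag $\ell$ I would iterate the recursion $\bm{x}_{i,t+\ell}=\bm{R}^{*\ell}\bm{x}_{i,t}+\sum_{j=1}^{\ell}\bm{R}^{*\ell-j}\bm{v}_{i,t+j}$ and propagate the analogous identity through the telescoping IV definition to obtain
\[
\mathbb{E}(\bm{z}_{i,t}\bm{z}_{i,t+\ell}')=\mathbb{E}(\bm{z}_{i,t}\bm{z}_{i,t}')\bm{R}^{*\ell}+\bm{\Delta}_{t,\ell},
\]
where the remainder $\bm{\Delta}_{t,\ell}$ is of strictly lower order than $T^{1+(\theta\wedge\gamma_{j})}$ entry-by-entry, so vanishes under the $\bm{D}_{T}^{-1/2}$-normalization. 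The scalar factor $\Gamma_{ee}^{(h)}(\ell)$ is then produced by conditioning on $\mathcal{F}_{i,t+\ell}$ and invoking strict stationarity of $\{\bm{w}_{i,t}\}$ in \assumpref{mult-innov-1}(ii). The main obstacle will be the simultaneous dependence of $\bm{z}_{i,t+\ell}$ and $e_{i,t+h}^{(h)}$ on the common shocks $\bm{v}_{i,t+1},\dots,\bm{v}_{i,t+\ell}$: a naive factorization into $\mathbb{E}(\bm{z}_{i,t}\bm{z}_{i,t+\ell}')\cdot\Gamma_{ee}^{(h)}(\ell)$ is not literally valid, so the fourth-moment bound in \assumpref{mult-innov-1}(ii) together with the initial-condition controls in \assumpref{mult-initval-1} will be invoked to verify that the induced cross-product remainders are $o(T^{1+(\theta\wedge\gamma_{j})\vee(\theta\wedge\gamma_{j'})})$ for each entry $(j,j')$, which is precisely what the claimed limit requires.
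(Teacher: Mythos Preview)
Your overall strategy matches the paper's proof closely: reduce to a single $i$ by independence, show the demeaning contribution is negligible (the paper packages this as $\mathbb{E}\|\bm{H}_{i,T}\|^{2}\to 0$ in its Step~I), bound the normalized $\bm{\Sigma}_{T}^{(h)}$ (Step~II), and then expand $\mathrm{var}(\bm{Z}_{i,T})$ into lagged cross-products truncated at $|\ell|<h$ via the MA$(h-1)$ structure of $e^{(h)}$ (Step~III). The replacement $\sum_{t}\mathbb{E}(\bm{z}_{i,t}\bm{z}_{i,t+\ell}')\approx\sum_{t}\mathbb{E}(\bm{z}_{i,t}\bm{z}_{i,t}')\bm{R}^{*\ell}$ is also the paper's final move, carried out entrywise through the identity $z_{m,i,t+\ell}=\rho_{m}^{*\ell}z_{m,i,t}+\text{(lower order)}$.

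The one place where you over-complicate matters: the factorization
\[
\mathbb{E}\bigl(\bm{z}_{i,t}\bm{z}_{i,t+\ell}'\,e_{i,t+h}^{(h)}e_{i,t+h+\ell}^{(h)}\bigr)=\Gamma_{ee}^{(h)}(\ell)\cdot\mathbb{E}(\bm{z}_{i,t}\bm{z}_{i,t+\ell}')
\]
\emph{is} literally valid, with no remainder. Although $\bm{z}_{i,t+\ell}$ and $e_{i,t+h}^{(h)}$ do share the shocks $\bm{v}_{i,t+1},\dots,\bm{v}_{i,t+\ell}$, the conditional homoskedasticity in \assumpref{mult-innov-1}(ii) makes $\mathbb{E}_{t+\ell}\bigl(e_{i,t+h}^{(h)}e_{i,t+h+\ell}^{(h)}\bigr)$ a \emph{deterministic constant}. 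Writing $e_{i,t+h}^{(h)}=\sum_{s=1}^{h}\bm{a}_{s}'\bm{w}_{i,t+s}$, the terms with $s\leq\ell$ contribute zero after conditioning (since $\mathbb{E}_{t+\ell}\bm{w}_{i,t+\ell+r}=\bm{0}$ for $r\geq 1$), while the surviving terms with $s>\ell$ contribute $\sum_{r=1}^{h-\ell}\bm{a}_{\ell+r}'\bm{\Omega}^{*}\bm{a}_{r}$, which is nonrandom precisely because $\mathbb{E}_{t+s-1}[\bm{w}_{i,t+s}\bm{w}_{i,t+s}']=\bm{\Omega}^{*}$ identically. Since $\bm{z}_{i,t}\bm{z}_{i,t+\ell}'$ is $\mathcal{F}_{t+\ell}$-measurable, iterated expectations give the exact product. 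Your proposed fourth-moment remainder control is therefore unnecessary; the paper simply invokes this exact identity and moves directly to the entrywise approximation $\mathbb{E}(z_{j,i,t}z_{m,i,t+\ell})-\rho_{m}^{*\ell}\mathbb{E}(z_{j,i,t}z_{m,i,t})\to 0$ after normalization.
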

\begin{rem}
If all regressors are mildly integrated or local to unity, then $\boldsymbol{R}^{*}\to\boldsymbol{I}_{k}$
as $T\to\infty$, so the formula of $\boldsymbol{\Sigma}_{T}^{(h)}$
can be simplified as
\[
\boldsymbol{\Sigma}_{T}^{(h)}=\left[\sum_{\ell=-(h-1)}^{h-1}\mathbb{E}\bigl(e_{i,t+h}^{(h)}e_{i,t+h+\ell}^{(h)}\bigr)\right]\mathbb{E}\left(\sum_{t=1}^{T_{h}}\bm{z}_{i,t}\bm{z}_{i,t}^{\prime}\right),
\]
where $\sum_{\ell=-(h-1)}^{h-1}\mathbb{E}(e_{i,t+h}^{(h)}e_{i,t+h+\ell}^{(h)})$
is the long-run variance of the error term $e_{i,t+h}^{(h)}.$
\end{rem}
Following the justification of \prettyref{prop:ivx_variance_h} and
the asymptotic normality \eqref{eq:IVXJ_normal}, we estimate the
asymptotic variance of $\hat{\bm{\beta}}^{(h){\rm DIVX}}$ by
\begin{equation}
\hat{\bm{\Theta}}^{(h)}=\left(\sum_{i=1}^{n}\sum_{t=1}^{T_{h}}\tilde{\bm{z}}_{i,t}\tilde{\bm{x}}_{i,t}^{\prime}\right)^{-1}\hat{\bm{\Sigma}}^{(h)}\left(\sum_{i=1}^{n}\sum_{t=1}^{T_{h}}\tilde{\bm{x}}_{i,t}\tilde{\bm{z}}_{i,t}^{\prime}\right)^{-1},\label{eq:Theta hat IVXJ}
\end{equation}
with
\begin{align*}
\hat{\bm{\Sigma}}^{(h)} & =n^{-1}\sum_{i=1}^{n}\left(\hat{\bm{\Sigma}}_{i}^{(h)}-T\hat{\omega}_{ee}\cdot\bm{\Lambda}_{n,T}\bm{\bar{z}}_{i}\bm{\bar{z}}_{i}^{\prime}\bm{\Lambda}_{n,T}\right),
\end{align*}
where
\[
\hat{\bm{\Sigma}}_{i}^{(h)}=\hat{\bm{\Pi}}_{i,T}(0)+\sum_{\ell=1}^{h-1}\bigl[\hat{\bm{\Pi}}_{i,T}(\ell)+\hat{\bm{\Pi}}_{i,T}(\ell)^{\prime}\bigr],\ \hat{\bm{\Pi}}_{i,T}(\ell)=\hat{\Gamma}_{ee}^{(h)}(\ell)\cdot\sum_{t=1}^{T_{h}}\bm{z}_{i,t}\bm{z}_{i,t}'(\hat{\bm{R}}^{{\rm IVX}})^{\ell}
\]
with $\hat{\Gamma}_{ee}^{(h)}(\ell)=\frac{1}{nT_{h}}\sum_{i=1}^{n}\sum_{t=1}^{T_{h}}\hat{e}_{i,t+h}^{(h)}\hat{e}_{i,t+h-\ell}^{(h)}$,
and the second term is the finite-sample correction term following
\eqref{eq:CovMatMul}. By the argument we have used in the proof of
the simple regression, we can show that, as $(n,T)\to\infty$,
\begin{align}
\boldsymbol{D}_{T}^{-1/2}\hat{\bm{\Sigma}}_{n,T}^{(h)}\boldsymbol{D}_{T}^{-1/2} & \to_{p}\lim_{T\to\infty}\bigl(\boldsymbol{D}_{T}^{-1/2}\bm{\Sigma}_{T}^{(h)}\boldsymbol{D}_{T}^{-1/2}\bigr)=:\boldsymbol{\Lambda}^{(h)},\label{eq:Sigma pto}\\
\boldsymbol{D}_{T}^{-1/2}\sum_{i=1}^{n}\sum_{t=1}^{T_{h}}\tilde{\bm{z}}_{i,t}\tilde{\bm{x}}_{i,t}^{\prime}\boldsymbol{D}_{T}^{-1/2} & \to_{p}\boldsymbol{Q}^{*},\quad\text{say}.\label{eq:zeta x pto}
\end{align}
It follows that
\begin{equation}
(n\boldsymbol{D}_{T})^{1/2}\hat{\bm{\Theta}}^{(h)}(n\boldsymbol{D}_{T})^{1/2}\to_{p}(\boldsymbol{Q}^{*})^{-1}\boldsymbol{\Lambda}^{(h)}(\boldsymbol{Q}^{*\prime})^{-1}.\label{eq:IVXJ var est pto}
\end{equation}

Adapting the proofs in Section~\ref{sec:Proofs} for the simple regression,
we deduce
\[
(n\boldsymbol{D}_{T})^{-1/2}\left(\sum_{i=1}^{n}\sum_{t=1}^{T_{h}}\tilde{\boldsymbol{z}}_{i,t}e_{i,t+h}^{(h)}+\bm{\xi}_{n,T}^{(h)}(\boldsymbol{R}^{*},\boldsymbol{\omega}_{ev}^{*},\boldsymbol{\Omega}_{vv}^{*},\boldsymbol{\beta}^{*})\right)\to_{d}\mathcal{N}\bigl(\boldsymbol{0}_{k},\boldsymbol{\Lambda}^{(h)}\bigr),
\]
where $\boldsymbol{\Lambda}^{(h)}$ is defined in \eqref{eq:Sigma pto}.
By virtue of the argument used in the proof of \prettyref{thm:DIVX},
the estimators $\hat{\bm{R}}^{{\rm IVX}}$, $\hat{\bm{\omega}}_{ev}$,
$\hat{\bm{\Omega}}_{vv}$, and $\hat{\bm{\beta}}^{\mathrm{IVX}}$
produce a consistent estimator of the bias as $(n,T)\to\infty$ with
$n/T\to c\in[0,\infty)$, and thus
\begin{equation}
(n\boldsymbol{D}_{T})^{-1/2}\left(\sum_{i=1}^{n}\sum_{t=1}^{T_{h}}\tilde{\boldsymbol{z}}_{i,t}e_{i,t+h}^{(h)}+\bm{\xi}_{n,T}^{(h)}(\hat{\bm{R}}^{{\rm IVX}},\hat{\bm{\omega}}_{ev},\hat{\bm{\Omega}}_{vv},\hat{\bm{\beta}}^{\mathrm{IVX}})\right)\to_{d}\mathcal{N}\bigl(\boldsymbol{0}_{k},\boldsymbol{\Lambda}^{(h)}\bigr).\label{eq:zeta e dto}
\end{equation}
Consequently, \eqref{eq:zeta x pto} and \eqref{eq:zeta e dto} jointly
imply that, as $(n,T)\to\infty$ with $n/T\to c\in[0,\infty)$, we
have
\begin{align}
 & (n\boldsymbol{D}_{T})^{1/2}\left(\hat{\bm{\beta}}^{(h){\rm DIVX}}-\bm{\beta}^{(h)*}\right)\nonumber \\
 & \quad=\Biggl[(n\boldsymbol{D}_{T})^{-1/2}\sum_{i=1}^{n}\sum_{t=1}^{T_{h}}\tilde{\bm{z}}_{i,t}\tilde{\bm{x}}_{i,t}^{\prime}(n\boldsymbol{D}_{T})^{-1/2}\Biggr]^{-1}\nonumber \\
 & \qquad\quad(n\boldsymbol{D}_{T})^{-1/2}\Biggl[\sum_{i=1}^{n}\sum_{t=1}^{T_{h}}\tilde{\boldsymbol{z}}_{i,t}e_{i,t+h}^{(h)}+\bm{\xi}_{n,T}^{(h)}(\hat{\bm{R}}^{{\rm IVX}},\hat{\bm{\omega}}_{ev},\hat{\bm{\Omega}}_{vv},\hat{\bm{\beta}}^{\mathrm{IVX}})\Biggr]\nonumber \\
 & \quad\to_{d}\mathcal{N}\bigl(\boldsymbol{0}_{k},(\boldsymbol{Q}^{*})^{-1}\boldsymbol{\Lambda}^{(h)}(\boldsymbol{Q}^{*\prime})^{-1}\bigr).\label{eq:IVXJ_normal}
\end{align}
Suppose we are interested in testing a linear joint null hypothesis
$\mathbb{H}_{0}\colon\bm{A}\bm{\beta}^{(h)*}=\bm{q}$, where $\boldsymbol{A}$
is an $m\times k$ constant matrix of full row rank accommodating
$m$ linear restrictions, and $\boldsymbol{q}$ is an $m\times1$
constant vector. We reject $\mathbb{H}_{0}$ under the significance
level $\alpha$ if the Wald statistic
\begin{equation}
\text{Wald}^{(h){\rm DIVX}}=\bigl(\bm{A}\hat{\bm{\beta}}^{(h){\rm DIVX}}-\bm{q}\bigr)'\bigl(\bm{A}\hat{\bm{\Theta}}^{(h)}\bm{A}'\bigr)^{-1}\bigl(\bm{A}\hat{\bm{\beta}}^{(h){\rm DIVX}}-\bm{q}\bigr)\label{eq:Wald_IVXJ_h}
\end{equation}
is greater than the $(1-\alpha)$-th quantile of a $\chi^{2}$ distribution
of degree of freedom $m$, where $\hat{\bm{\Theta}}^{(h)}$ is defined
in \eqref{eq:Theta hat IVXJ}. We can then deduce from \eqref{eq:IVXJ_normal}
that the Wald statistic constructed by \eqref{eq:Wald_IVXJ_h} is
asymptotically $\chi^{2}(m)$ distributed under the null hypothesis
$\mathbb{H}_{0}\colon\bm{A}\bm{\beta}^{(h)}=\bm{q}$.

Numerical studies for local projections with DIVX are available in
Section \ref{subsec:Local-Projections} of the Online Appendices.

\section{Latent Group Structure \label{sec:Grouped-Heterogeneity}}

\subsection{Setup}

Our framework can also be extended to allow for heterogenous coefficients
across individuals. Allowing for full parameter heterogeneity across
all individuals in a panel data model would be too challenging, as
it significantly inflates the variance of the estimators. To balance
generality and feasibility, we follow the literature \citep{su2016identifying,su2018identifying,wang2021identifying}
to allow for latent group structures, where the coefficients are different
across groups while remain homogeneous in the same group.

We start with the univariate heterogeneous panel predictive models:
\begin{align}
y_{i,t+1} & =\mu_{y,i}+\beta_{i}^{*}x_{i,t}+e_{i,t+1},\quad\text{for }i=1,\dots,n\text{ and }t=1,\dots,T-1,\label{eq:predictive-group}
\end{align}
and
\begin{equation}
\begin{split}x_{i,t} & =\alpha_{i}+\delta_{i,t},\\
\delta_{i,t+1} & =\rho_{i}^{*}\delta_{i,t}+v_{i,t+1}.
\end{split}
\label{eq:state space-group}
\end{equation}
Different from \eqref{eq:predictive} and \eqref{eq:AR1}, the coefficients
$\beta_{i}^{*}$ and $\rho_{i}^{*}$ are heterogeneous across individuals
with latent group structures. Specifically, assume that $\{1,\dots,n\}$
is partitioned into $K$ disjoint groups $\mathcal{G}_{1},\mathcal{G}_{2},\dots,\mathcal{G}_{K}$,
so that
\[
\beta_{i}^{*}=\sum_{k=1}^{K}\beta_{[k]}^{*}\textbf{1}\{i\in\mathcal{G}_{k}\},\quad\rho_{i}^{*}=\sum_{k=1}^{K}\rho_{[k]}^{*}\textbf{1}\{i\in\mathcal{G}_{k}\}.
\]
To simplify the analysis, we assume that there are $K-1$ distinct
stationary groups $\mathcal{G}_{1},\mathcal{G}_{2},\dots,\mathcal{G}_{K-1}$
and one unit-root group $\mathcal{G}_{K}$. For $k=1,\dots,K-1$,
the AR coefficients $\rho_{[k]}^{*}\in(-1,1)$, while for $k=K$,
$\rho_{[k]}^{*}=1$. When there exist multiple local-to-unity groups,
the true AR(1) coefficients $\rho_{i}^{*}$ are too close to each
other and thus hard to separate in finite samples. Let $n_{k}=|\mathcal{G}_{k}|$
be the number of individuals belonging to group $\mathcal{G}_{k}$.
Let $\mathcal{G}^{*}=\{\mathcal{G}_{1},\dots,\mathcal{G}_{K}\}$ collect
all groups.

Define $\boldsymbol{\theta}_{[k]}^{*}=(\beta_{[k]}^{*},\rho_{[k]}^{*})^{\prime}$.
The following assumption characterizes the group pattern.
\begin{assumption}[Group structure]
\mbox{}\label{assump:group}
\begin{enumerate}[label=(\alph*)]
\item \label{enu:gap}There exists an absolute constant $\underline{c}$
such that $\min_{1\leq k_{1}\neq k_{2}\leq K}\|\boldsymbol{\theta}_{[k_{1}]}^{*}-\boldsymbol{\theta}_{[k_{2}]}^{*}\|>\underline{c}$.
\item \label{enu:proportion}$K$ is fixed and $n_{k}/n\to\tau_{k}\in(0,1)$
as $n\to\infty$ for each $k=1,\dots,K$.
\end{enumerate}
\end{assumption}
This assumption essentially follows the literature on grouped heterogeneity
in panel data models; see, e.g., \citet[Assumption~2]{bonhommeGroupedPatternsHeterogeneity2015}
and \citet[Assumption~A2]{wang2021identifying}. Condition~\ref{enu:gap}
ensures that groups are well-separated. Condition \ref{enu:proportion}
means that each group accounts for a nontrivially large proportion.

The general idea to handle heterogeneity with latent group structures
includes two steps. First, we identify the latent group structure
using the sequential binary segmentation algorithm (SBSA) proposed
by \citet{wang2021identifying}. Second, we perform DIVX for each
estimated group for inference of the group-wise coefficients $\beta_{[k]}^{*}$.

\subsection{Grouping with SBSA}

Let $\hat{\beta}_{i}=\sum_{t=1}^{T-1}\tilde{x}_{i,t}y_{i,t+1}/\sum_{t=1}^{T-1}\tilde{x}_{i,t}^{2}$
and $\hat{\rho}_{i}=\sum_{t=1}^{T-1}\tilde{x}_{i,t}x_{i,t+1}/\sum_{t=1}^{T-1}\tilde{x}_{i,t}^{2}$
be the time-series estimates for each cross-section unit $i=1,\dots,n$.
Denote $\theta_{i,1}^{*}=\beta_{i}^{*}$, $\theta_{i,2}^{*}=\rho_{i}^{*}$,
and similarly $\hat{\theta}_{i,1}=\hat{\beta}_{i}$ and $\hat{\theta}_{i,2}=\hat{\rho}_{i}$.
We sort the coefficients in ascending order and denote the order statistics
by
\[
\hat{\theta}_{\pi_{p}(1),p}\leq\hat{\theta}_{\pi_{p}(2),p}\leq\dots\leq\hat{\theta}_{\pi_{p}(n),p},\ p=1,2
\]
where $\{\pi_{p}(1),\dots,\pi_{p}(n)\}$ is a permutation of $\{1,\dots,n\}$
determined by the order relation. For any $1\leq i\leq j\leq n$,
let
\[
\mathcal{S}_{[i:j]}(p)=\{\hat{\theta}_{\pi_{p}(i),p},\dots,\hat{\theta}_{\pi_{p}(j),p}\},\ p=1,2
\]
be the set of $i$-th to $j$-th ordered estimates. SBSA leverages
the variance of $\mathcal{S}_{[i:j]}(p)$ to identify break points.
Intuitively, if $\theta_{\pi_{p}(i),p}^{*}=\dots=\theta_{\pi_{p}(j),p}^{*}$
and the estimates are $\sqrt{T}$ consistent, then we should observe
that the variance of $\mathcal{S}_{[i:j]}(p)$ is proportional to
$T^{-1}$ in large samples. On the other hand, if there is any break
point between $i$ and $j$, then the variance of $\mathcal{S}_{[i:j]}(p)$
should be bounded away from zero.

In view of this intuition, let
\[
\bar{\theta}_{[i:j],p}=\frac{1}{j-i+1}\sum_{\ell=i}^{j}\hat{\theta}_{\pi_{p}(\ell),p}\qquad\text{and}\qquad\hat{V}_{[i:j]}^{0}(p)=\frac{1}{j-i}\sum_{\ell=i}^{j}\left[\hat{\theta}_{\pi_{p}(\ell),p}-\bar{\theta}_{[i:j],p}\right]^{2}
\]
be the sample mean and variance of $\mathcal{S}_{[i:j]}(p)$. Let
$\hat{\sigma}_{i}^{2}(p)$ denote a consistent estimator of the asymptotic
variance of $\hat{\theta}_{\pi_{p}(i),p}$. Let
\[
\bar{\sigma}_{[i:j]}^{2}(\theta)=(j-i+1)^{-1}\sum_{\ell=i}^{j}\hat{\sigma}_{\ell}^{2}(\theta)
\]
 be the sample average of these variance estimates. Divide $\hat{V}_{[i:j]}^{0}(p)$
by $\bar{\sigma}_{[i:j]}^{2}(p)$ to get a standardized version:
\[
\hat{V}_{[i:j]}(p)=\hat{V}_{[i:j]}^{0}(p)\big/\bar{\sigma}_{[i:j]}^{2}(p).
\]
Define
\[
\hat{S}_{[i:j]}(p,m)=\frac{1}{j-i+1}\left(\sum_{\ell=i}^{m}\left[\hat{\theta}_{\pi_{p}(\ell),p}-\bar{\theta}_{[i:m],p}\right]^{2}+\sum_{\ell=m+1}^{j}\left[\hat{\theta}_{\pi_{p}(\ell),p}-\bar{\theta}_{[m+1:j],p}\right]^{2}\right),
\]
which measures the \emph{within-segment} variation in $\mathcal{S}_{[i:j]}(p)$
when a conjectured break happens at $m$ ($i\leq m\leq j$). Intuitively,
if $m$ is the true break point, the ordered estimates should exhibit
little variation over both segments $[i:m]$ and $[m+1:j]$. We therefor
estimate $m$ by minimizing $\hat{S}_{[i:j]}(p,m)$. This is how SBSA
works, as formalized by the following algorithm.
\begin{lyxalgorithm}[SBSA]
\mbox{}
\begin{enumerate}[label={\textnormal{Step \arabic*.}},leftmargin=*,itemsep=0pt,topsep=0pt,parsep=0pt]
\item Starting from $K=1$, there is no break point to determine.
\item When $K=2$, let $\hat{p}_{1}=\arg\max_{p\in\{1,2\}}\hat{V}_{[1:n]}(p)$.
We estimate the break point by
\[
\hat{m}_{1}=\arg\min_{1\leq m\leq n}\hat{S}_{[1:n]}(\hat{p}_{1},m).
\]
This gives rise to two segments: $\mathcal{S}_{[1:\hat{m}_{1}]}(\hat{p}_{1})$
and $\mathcal{S}_{[\hat{m}_{1}+1:n]}(\hat{p}_{1})$.
\item When $K\geq3$, we use $\hat{m}_{1}<\dots<\hat{m}_{K-2}$ to denote
the break points (perhaps relabeled) detected in previous steps. Let
\[
\hat{p}_{K-1}=\arg\max_{p\in\{1,2\}}\sum_{k=1}^{K}\hat{V}_{[(\hat{m}_{k-1}+1):\hat{m}_{k}]}(p).
\]
Define
\[
\hat{m}_{K-1}(k)=\arg\min_{\hat{m}_{k-1}+1\leq m\leq\hat{m}_{k}}\hat{S}_{[\hat{m}_{k-1}+1:\hat{m}_{k}]}(\hat{p}_{K-1},m)\qquad\text{for }k=1,\dots,K-1,
\]
where we prescribe $\hat{m}_{0}=0$ and $\hat{m}_{K-1}=n$. Then $\hat{m}_{K-1}(k)$
divides $\mathcal{S}_{[\hat{m}_{k-1}+1:\hat{m}_{k}]}$ into two segments.
For $k=1,\dots,K-1$, calculate
\begin{align*}
\hat{S}_{K-1}(k) & =\sum_{\ell=\hat{m}_{k-1}+1}^{\hat{m}_{K-1}(k)}\left[\hat{\theta}_{\pi_{\hat{p}_{K-1}}(\ell),\hat{p}_{K-1}}-\bar{\theta}_{[\hat{m}_{k-1}+1:\hat{m}_{K-1}(k)],\hat{p}_{K-1}}\right]^{2}\\
 & \qquad+\sum_{\ell=\hat{m}_{K-1}(k)+1}^{\hat{m}_{k}}\left[\hat{\theta}_{\pi_{\hat{p}_{K-1}}(\ell),\hat{p}_{K-1}}-\bar{\theta}_{[\hat{m}_{K-1}(k)+1:\hat{m}_{k}],\hat{p}_{K-1}}\right]^{2}\\
 & \qquad+\sum_{\substack{1\leq\tau\leq K-1\\
\tau\neq k
}
}\sum_{\ell=\hat{m}_{\tau-1}+1}^{\hat{m}_{\tau}}\left[\hat{\theta}_{\pi_{\hat{p}_{K-1}}(\ell),\hat{p}_{K-1}}-\bar{\theta}_{[\hat{m}_{\tau-1}+1:\hat{m}_{\tau}],\hat{p}_{K-1}}\right]^{2}.
\end{align*}
It measures the (within-segment) variation in all $\hat{\theta}_{i,\hat{p}_{K-1}}$
when an additional break point $\hat{m}_{K-1}(k)$ is detected. Let
\[
\hat{k}=\arg\min_{1\leq k\leq K-1}\hat{S}_{K-1}(k),
\]
which is the \emph{estimated segment number} based on which the new
break point $\hat{m}_{K-1}(\hat{k})$ is found. We now have $K-1$
break points $\{\hat{m}_{1},\dots,\hat{m}_{K-2},\hat{m}_{K-1}(\hat{k})\}$.
Relabel them to obtain $\hat{m}_{1}<\dots<\hat{m}_{K-1}$.
\item Repeat the last step until $K$ reaches the specified number of groups.
\end{enumerate}
\end{lyxalgorithm}

\subsection{Classification Consistency}

We impose the following additional assumptions. Suppose that the AR(1)
errors $\{v_{i,t}\}$ follow the heterogeneous linear processes
\begin{equation}
v_{i,t}=\sum_{s=0}^{\infty}g_{i,s}\varepsilon_{i,t-s}.\label{eq:linearProcesshetero}
\end{equation}

\begin{assumption}[Innovations]
\label{assump:innov-group} \mbox{}
\begin{enumerate}
\item \label{enu:w_cumu-1}For each $i$, let $\bm{w}_{i,t}=(e_{i,t},\varepsilon_{i,t})^{\prime}$,
with $e_{i,t}$ as in \eqref{eq:predictive}, denote a two-dimensional
strictly stationary and ergodic martingale difference sequence (m.d.s.)
adaptive to the filtration $\{\mathcal{F}_{i,t}=\sigma(\delta_{i,0},\alpha_{i},\bm{w}_{i,t},\bm{w}_{i,t-1},\dots)\}$.
$\{\bm{w}_{i,t}\}$ are independent~across $i$. In addition, we
assume absolutely summable fourth order cumulants: $\sup_{a,b,c,d\in\{1,2\}}\sum_{t,s,r=-\infty}^{\infty}|\kappa_{abcd}(0,t,s,r)|<\infty$,
where
\begin{align*}
\kappa_{abcd}(t_{1},t_{2},t_{3},t_{4}) & =\mathbb{E}(w_{a,i,t_{1}}w_{b,i,t_{2}}w_{c,i,t_{3}}w_{d,i,t_{4}})-\mathbb{E}(w_{a,i,t_{1}}w_{b,i,t_{2}})\mathbb{E}(w_{c,i,t_{3}}w_{d,i,t_{4}})\\
 & \qquad-\mathbb{E}(w_{a,i,t_{1}}w_{c,i,t_{3}})\mathbb{E}(w_{b,i,t_{2}}w_{d,i,t_{4}})-\mathbb{E}(w_{a,i,t_{1}}w_{d,i,t_{4}})\mathbb{E}(w_{b,i,t_{2}}w_{c,i,t_{3}}),
\end{align*}
 with $w_{a,i,t}$ being the $a$-th element of $\bm{w}_{i,t}$.
\item \label{enu:hetero_e-1}Let $u_{i,t}$ be i.i.d.\ random variables
with $\mathbb{E}(u_{i,t})=0$, $\mathbb{E}(u_{i,t}^{2})=1$ and $\mathbb{E}(u_{i,t}^{4})<\infty$.
For each $i$, the sequence $\{e_{i,t}\}$ admits $\mathrm{GARCH}(q,r)$
representation:
\[
e_{i,t}=h_{i,t}^{1/2}u_{i,t},\qquad h_{i,t}=\phi_{i}+\sum_{m=1}^{q}a_{i,m}e_{i,t-m}^{2}+\sum_{\ell=1}^{r}b_{i,\ell}h_{i,t-\ell},
\]
where the constant coefficients satisfy $\phi_{i}>0,$ $a_{i,m},b_{i,\ell}\geq0$
and $0\leq\sum_{m=1}^{q}a_{i,m}+\sum_{\ell=1}^{r}b_{i,\ell}<1$. In
addition,
\[
\phi_{i}=\sum_{k=1}^{K}\phi_{[k]}{\rm {\bf 1}}\{i\in\mathcal{G}_{k}\},\ a_{i,m}=\sum_{k=1}^{K}a_{[k],m}{\rm {\bf 1}}\{i\in\mathcal{G}_{k}\},\ b_{i,\ell}=\sum_{k=1}^{K}b_{[k],\ell}{\rm {\bf 1}}\{i\in\mathcal{G}_{k}\}.
\]
\item \label{enu:LP-2} The innovations $\{\varepsilon_{i,t}\}$ in the
linear process \eqref{eq:linearProcesshetero} are i.i.d.~across
$i$. The coefficients in the linear process \eqref{eq:linearProcesshetero}
satisfy $|g_{i,s}|\leq C_{0}\exp(-C_{g}s)$ for any $s$ with positive
constants $C_{0}$ and $C_{g}$. Besides,
\[
g_{i,s}=\sum_{k=1}^{K}g_{[k],s}{\rm {\bf 1}}\{i\in\mathcal{G}_{k}\}.
\]
\end{enumerate}
\end{assumption}
Assumption \ref{assump:innov-group} follows Assumption \ref{assump:innov}
for homogeneous models in the main text. The only difference is that
Assumption \ref{assump:innov-group} allows the coefficients in the
GARCH processes of $e_{i,t}$ and the linear processes $v_{i,t}$
to be heterogeneous with latent group structures.
\begin{assumption}
\mbox{}\label{assump:epsilon}
\begin{enumerate}
\item \label{enu:mixing} For each $i$, the error processes $\{e_{i,t}\}$
in \eqref{eq:predictive-group} and $\{v_{i,t}\}$ in \eqref{eq:state space-group}
are $\alpha$-mixing with geometric rates.
\item \label{enu:sub-exp}For all $i,t$ and each $b>0$, there exist absolute
constants $C_{e}$ and $K_{e}$ such that
\[
\mathbb{P}(|e_{i,t}|>b)+\mathbb{P}(|v_{i,t}|>b)\leq C_{e}\exp(-b/K_{e}).
\]
\end{enumerate}
\end{assumption}
Condition~\ref{enu:mixing} imposes the $\alpha$-mixing assumption
for the error terms $e_{i,t}$ which follows a GARCH process and $v_{i,t}$
which follows a stationary linear process. The literature has studied
the conditions under which a stationary GARCH model is strongly mixing
with geometric rates; see \citet{lindner2009stationarity} for a survey
and Theorem 8 in the same reference for a formal theoretical result.
The low-level sufficient conditions for a linear process to be strong-mixing
can be found in \citet{gorodetskiiStrongMixingProperty1978}. Condition~\ref{enu:sub-exp}
assumes the error terms to have sub-exponential tails, which is a
standard assumption in high-dimension settings. These two conditions
lead to concentration inequalities on weakly dependent processes with
thin tails \citep{merlevedeBernsteinTypeInequality2011}, which is
critical for establishing the probabilistic bounds of $\max_{1\leq i\leq n}|\sum_{t=1}^{T}x_{i,t}e_{i,t+1}|$
and $\max_{1\leq i\leq n}|\sum_{t=1}^{T}x_{i,t}v_{i,t+1}|$; see \citet{mei2022lasso}.

The following two lemmas demonstrate the uniform consistency property
of the time series least-squares estimators when $x_{i,t}$ is stationary
and unit root, respectively. The proofs are relegated to Section~\ref{subsec:Proofs-for-SBSA}.
\begin{lem}
\label{lem:sup_rho_hat_bound}Suppose Assumptions~\ref{assump:innov-group}
and \ref{assump:epsilon} hold. Then
\begin{equation}
\max_{1\leq i\leq n}|\hat{\beta}_{i}-\beta_{i}^{*}|=O_{p}\biggl(\sqrt{\frac{\log n}{T}}\biggr),\label{eq:beta uniform}
\end{equation}
In addition, assume that $\rho^{*}\in(-1,1)$ is a constant. Then,
there exists a function $R(\cdot)$ that is continuous, strictly increasing
on $(-1,1)$ and $R(\rho)<1$ for all $\rho\in(-1,1)$ such that
\begin{equation}
\max_{1\leq i\leq n}\left|\hat{\rho}_{i}-R(\rho^{*})\right|=O_{p}\biggl(\sqrt{\frac{\log n}{T}}\biggr),\label{eq:rho uniform}
\end{equation}
where $\hat{\rho}_{i}$ is the WG estimator for $\rho^{*}$ using
only the time series for individual $i$.
\end{lem}
\begin{lem}
\label{lem:sup_rho_ur}Suppose Assumptions~\ref{assump:innov-group}
and \ref{assump:epsilon} hold and $\rho^{*}=1$. Then
\[
\max_{1\leq i\leq n}\left|\hat{\rho}_{i}-1\right|=O_{p}\biggl(\frac{(\log n)^{5/2}}{T}\biggr).
\]
\end{lem}
In Lemma \ref{lem:sup_rho_hat_bound}, (\ref{eq:beta uniform}) establishes
the uniform convergence of $\hat{\beta}_{i}$. Due to weak dependence
of the AR(1) error $v_{i,t}$, the individual-specific estimator $\hat{\rho}_{i}$
is not consistent for $\rho^{*}$ when $x_{i,t}$ is stationary. Interestingly,
(\ref{eq:rho uniform}) indicates that the probability limit of $\hat{\rho}_{i}$
falls within the unit circle strictly, which is sufficient for SBSA
to separate groups. Lemma \ref{lem:sup_rho_ur} shows that when $x_{i,t}$
is unit root, the estimator $\hat{\rho}_{i}$ enjoys super consistency
with a convergence rate faster than $T^{-1/2}$.

The classification consistency is established in the following proposition.

\begin{prop}
\label{prop:classify}Suppose Assumptions~\ref{assump:group}, \ref{assump:innov-group}
and \ref{assump:epsilon} hold. Let $\hat{\mathcal{G}}=\{\hat{\mathcal{G}}_{1},\dots,\hat{\mathcal{G}}_{K}\}$
be the estimated group structure by the SBSA procedure. Then,
\[
\mathbb{P}(\hat{\mathcal{G}}=\mathcal{G}^{*})\to1
\]
 as $(n,T)\to\infty$.
\end{prop}

\subsection{Post-Classification DIVX Inference}

After classification by SBSA, we conduct group-wise DIVX inference.
The individual-specific instrument still follows (\ref{eq:def IV}).
The group-wise IVX estimator of $\beta_{[k]}^{*}$ is
\begin{equation}
\hat{\beta}_{[k]}^{{\rm IVX}}=\frac{\sum_{i\in\hat{\mathcal{G}}_{k}}\sum_{t=1}^{T-1}\tilde{z}_{i,t}y_{i,t+1}}{\sum_{i\in\hat{\mathcal{G}}_{k}}\sum_{t=1}^{T-1}\tilde{z}_{i,t}x_{i,t}},\label{eq:beta_IVX-group}
\end{equation}
with the (truncated) bias formula
\begin{equation}
\hat{b}_{[k],n,T}^{\mathrm{IVX}}(\boldsymbol{\omega}_{[k],ev,G}^{*},\rho^{*})=\frac{|\hat{\mathcal{G}}_{k}|\cdot\sum_{h=0}^{G}\Psi_{h,T-1}(\rho_{[k]}^{*},\rho_{z})\omega_{[k],ev,h}^{*}}{(T-1)\sum_{i\in\hat{\mathcal{G}}_{k}}\sum_{t=1}^{T-1}\tilde{z}_{i,t}x_{i,t}},\ G=\lfloor T^{1/4}\rfloor,\label{eq:truncated bias IVX-group}
\end{equation}
where $\Psi_{h,T-1}(\cdot,\cdot)$ follows (\ref{eq:ivx_bias}), and
\[
\omega_{[k],ev,h}^{*}=\sum_{k=1}^{K}\mathbb{E}(e_{i,t}v_{i,t+h})\textbf{1}\{i\in\hat{\mathcal{G}}_{k}\}.
\]
 To establish the IVX estimator of $\rho^{*}$, we generate the IV
$z_{i,t}^{(1)}$ following (\ref{eq:def IV (1)}), and define the
group-wise estimator of $\rho^{*}$ as
\[
\hat{\rho}_{[k]}^{{\rm WG}}=\frac{\sum_{i\in\hat{\mathcal{G}}_{k}}\sum_{t=1}^{T-1}\tilde{x}_{i,t}x_{i,t+1}}{\sum_{i\in\hat{\mathcal{G}}_{k}}\sum_{t=1}^{T-1}\tilde{x}_{i,t}^{2}}.
\]
Then the group-wise IVX estimator of $\rho^{*}$ is
\[
\hat{\rho}_{[k]}^{\mathrm{IVX}}=\frac{\sum_{i\in\hat{\mathcal{G}}_{k}}\sum_{t=1}^{T-1}\left(z_{i,t}^{(1)}x_{i,t+1}-\hat{\varDelta}_{[k],vv}\right)}{\sum_{i\in\hat{\mathcal{G}}_{k}}\sum_{t=1}^{T-1}z_{i,t}^{(1)}x_{i,t}},
\]
where $\hat{\omega}_{ee}=(n(T-1))^{-1}\sum_{i=1}^{n}\sum_{t=1}^{T-1}\hat{e}_{i,t+1}^{\mkern3mu 2}$,
\[
\hat{\omega}_{ev,h}=\frac{1}{n(T-h)}\sum_{i=1}^{n}\sum_{t=1}^{T-h}\hat{v}_{i,t+h}\hat{e}_{i,t},
\]
 and
\begin{equation}
\hat{\varDelta}_{[k],vv}=\frac{1}{|\hat{\mathcal{G}}_{k}|T}\sum_{h=1}^{G}\sum_{i\in\hat{\mathcal{G}}_{k}}\sum_{t=h+1}^{T}\hat{v}_{i,t}^{{\rm WG}}\hat{v}_{i,t-h}^{{\rm WG}},\ G=\lfloor T^{1/4}\rfloor.\label{eq:NW Deltavv-group}
\end{equation}
In (\ref{eq:NW Deltavv-group}), we have $\hat{v}_{i,t+1}^{{\rm WG}}=\tilde{x}_{i,t+1}-\hat{\rho}_{[k]}^{{\rm WG}}\tilde{x}_{i,t}$
for $i\in\hat{\mathcal{G}}_{k}$. In addition, the intertemporal covariances
are estimated by
\[
\hat{\omega}_{[k],ev,h}=\frac{1}{|\hat{\mathcal{G}}_{k}|T}\sum_{i\in\hat{\mathcal{G}}_{k}}\sum_{t=1}^{T-h}\hat{v}_{i,t+h}\hat{e}_{i,t},
\]
where for $i\in\hat{\mathcal{G}}_{k}$
\[
\hat{v}_{i,t+h}=\tilde{x}_{i,t+1}-\hat{\rho}_{[k]}^{{\rm IVX}}\tilde{x}_{i,t},\ \ \hat{e}_{i,t}=\tilde{y}_{i,t+1}-\hat{\beta}_{[k]}^{{\rm IVX}}\tilde{x}_{i,t}.
\]
With the newly defined estimators, the group-wise DIVX estimator is
\[
\hat{\beta}_{[k]}^{{\rm DIVX}}=\hat{\beta}_{[k]}^{{\rm IVX}}+\hat{b}_{[k],n,T}^{\mathrm{IVX}}(\hat{\boldsymbol{\omega}}_{[k],ev,G}^{*},\hat{\rho}_{[k]}^{\mathrm{IVX}})
\]
with its standard error
\[
\widehat{\varsigma}_{[k]}^{{\rm IVX}}=\frac{\left(\sum_{i\in\hat{\mathcal{G}}_{k}}\left(\sum_{t=1}^{T-1}z_{i,t}^{2}\hat{e}_{i,t+1}^{{\rm 2}}-T\hat{\lambda}_{[k]}(\bar{z}_{i})^{2}\hat{\omega}_{[k],ee}\right)\right)^{1/2}}{|\sum_{i\in\hat{\mathcal{G}}_{k}}\sum_{t=1}^{T-1}\tilde{z}_{i,t}x_{i,t}|},\quad\hat{\lambda}_{[k]}=\left(1-\dfrac{n}{T^{3/2}}\hat{\varrho}_{[k],ev}^{2}\right)_{+},
\]
where $\hat{\omega}_{[k],ee}=(|\hat{\mathcal{G}}_{k}|(T-1))^{-1}\sum_{i\in\hat{\mathcal{G}}_{k}}\sum_{t=1}^{T-1}\hat{e}_{i,t+1}^{\mkern3mu 2}$
and $\hat{\varrho}_{[k],ev}^{2}=\hat{\omega}_{[k],ev,0}^{2}/\hat{\omega}_{[k],ee}^{2}$.
Accordingly, the $t$-statistic is
\[
t_{[k]}^{\mathrm{DIVX}}=\frac{\hat{\beta}_{[k]}^{\mathrm{DIVX}}-\beta_{[k]}^{*}}{\hat{\varsigma}_{[k]}^{{\rm IVX}}}.
\]
We call this procedure for panel predictive regressions with latent
group structures SBSA-DIVX.

Parallel to \prettyref{thm:DIVX}, we establish the following theorem.
The \proofref[proof]{thm:divx_under_group} is relegated to Section~\ref{subsec:Proofs-for-SBSA}.
\begin{thm}
\label{thm:divx_under_group}Under \prettyref{assump:initval}, \ref{assump:innov},
\ref{assump:group} and \ref{assump:epsilon}, if $(n,T)\to\infty$
and $n/T\to c\in[0,\infty)$, we have
\[
t_{[k]}^{\mathrm{DIVX}}\to_{d}\mathcal{N}(0,1)
\]
 for each $k=1,2,...,K$.
\end{thm}
Simulation studies for heterogeneous panel predictive regressions
with group structures are available in Section \ref{subsec:Latent-Group-Structures}
of the Online Appendices.

\section{Additional Simulations\label{sec:Additional-Simulations}}

This section collects additional simulation studies. Section \ref{subsec:various dgp error}
enriches the settings in Section \ref{sec:simulation} of the main
text by considering various degrees of endogeneity and AR(1) coefficients.
Section \ref{subsec:Conditional-Heteroskedasticity} examines the
finite sample results under conditional heteroskedasticity. Section
\ref{subsec:Two-way-Fixed-Effects} conducts simulations with TWFE.
Section \ref{subsec:Multivariate-Regressors} considers multivariate
panel predictive regressions, evaluating both the $t$-test for an
individual coefficient and the Wald test for significance of multiple
coefficients. Section \ref{subsec:Local-Projections} studies panel
local projection, which is represented by a sequence of panel predictive
regressions. Section \ref{subsec:Latent-Group-Structures} simulates
heterogeneous panel predictive models with latent group structures,
showing the usefulness of our proposed SBSA-DIVX in Section \ref{sec:Grouped-Heterogeneity}.
Section \eqref{subsec:comparison} compares DIVX to alternative methodologies,
including the split-panel jackknife estimator, the X-differencing
estimator, and forwards and backwards recursive detrending.

\subsection{Robustness Check with Various Degrees of Endogeneity and AR coefficients\label{subsec:various dgp error}}

In Section \ref{sec:simulation} of the main text, we set $\omega_{12}^{*}=-0.95$
following the literature \citep{Kostakis2015,phillips2016robust}
to capture the typical case in stock-return predictive regressions.
To further evaluate the validity of DIVX in general DGPs, we vary
$\omega_{12}^{*}$ in $\{\pm0.95,\pm0.7,\pm0.5,\pm0.3,0\}$. In addition,
we add $\rho^{*}=0.2$ and 0.4 to examine the performance of DIVX
when the autocorrelation of $x_{i,t}$ is relatively weak. All other
settings follow Section \ref{sec:simulation}.

Figure \ref{fig:cover-omega12} exhibits the coverage probabilities
of the 95\% confidence intervals under various sample sizes and $\omega_{12}^{*}$.
The coverage probabilities are close to the nominal 95\% level. These
results witness the robustness of DIVX under different endogeneity
levels and AR(1) coefficients.

\begin{figure}[t]
\begin{centering}
\includegraphics[width=1\columnwidth]{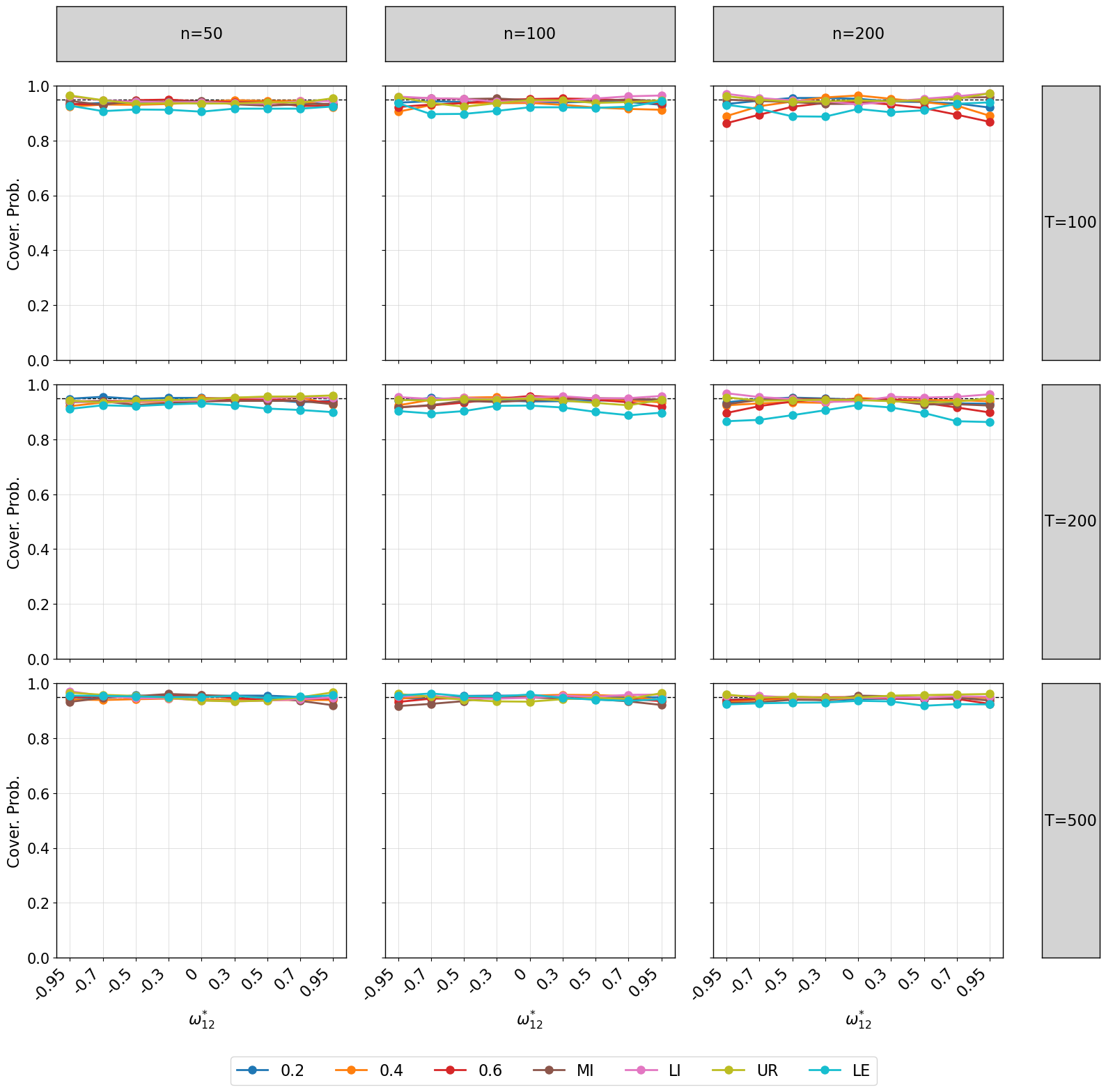}\smallskip{}
\noindent\begin{minipage}[t]{1\linewidth}%
{\footnotesize Notes: In the line graphs, each color represents a value
of $\rho^{*}$. In the legend, ``0.2'', ``0.4'', and ``0.6''
represents the stationary case when $\rho^{*}=0.2,0.4,0.6$. Other
abbreviations ``MI'', ``LI'', ``UR'', and ``LE'' mean that
$x_{i,t}$ is mildly integrated with $\rho^{*}=1-1/T^{0.75}$, locally
integrated with $\rho^{*}=1-1/T$, unit root with $\rho^{*}=1$, and
locally explosive with $\rho^{*}=1+1/T$, respectively.}%
\end{minipage}
\par\end{centering}
\centering{}\caption{\label{fig:cover-omega12} Coverage probabilities under various $\omega_{12}^{*}$}
\end{figure}

\subsection{Conditional Heteroskedasticity\label{subsec:Conditional-Heteroskedasticity} }

DIVX allows for conditional heteroskedasticity characterized in Assumption
\ref{assump:innov}\ref{enu:hetero_e}. To examine the robustness
of DIVX under conditional heteroskedasticity, we conduct additional
simulations with the error term $e_{i,t}$ in the main predictive
regression (\ref{eq:predictive}) following a GARCH(1,1) model
\[
e_{i,t}=h_{i,t}^{1/2}u_{i,t},\quad h_{i,t}=\phi+a_{1}h{}_{i,t-1}+b_{1}u_{i,t-1}^{2}.
\]
Following (\ref{eq:omega12 def}), the i.i.d.~innovations $\varepsilon_{i,t}$
for the ARMA(1,1) process $v_{i,t}$ and the error term in the GARCH(1,1)
process $u_{i,t}$ are generated from the bivariate normal distribution
\begin{equation}
(u_{i,t},\varepsilon_{i,t})^{\prime}\sim{\rm i.i.d.}\ \mathcal{N}\left(\begin{pmatrix}0\\
0
\end{pmatrix},\begin{pmatrix}1 & \omega_{12}^{*}\\
\omega_{12}^{*} & 1
\end{pmatrix}\right).\label{eq:omega12 def-hetero}
\end{equation}
We follow \citet{Kostakis2015} to set $a_{1}=0.13$, $b_{1}=0.85$,
and $\phi=1-a_{1}-b_{1}$. All other simulations settings follow Section
\ref{sec:simulation} in the main text.

Figures \ref{fig:bias_sd-hetero} to \ref{fig:power-hetero} show
the bias, size, and power of DIVX and the other five competitors under
conditional heteroskedasticity, parallel to Figures \ref{fig:bias_sd}
to \ref{fig:power} in Section \ref{sec:simulation} of the main text.
The simulation results share similarities to those in Section \ref{sec:simulation},
indicating that DIVX is robust to conditional heteroskedasticity.

\begin{figure}[h]
\begin{centering}
\includegraphics[width=0.95\columnwidth]{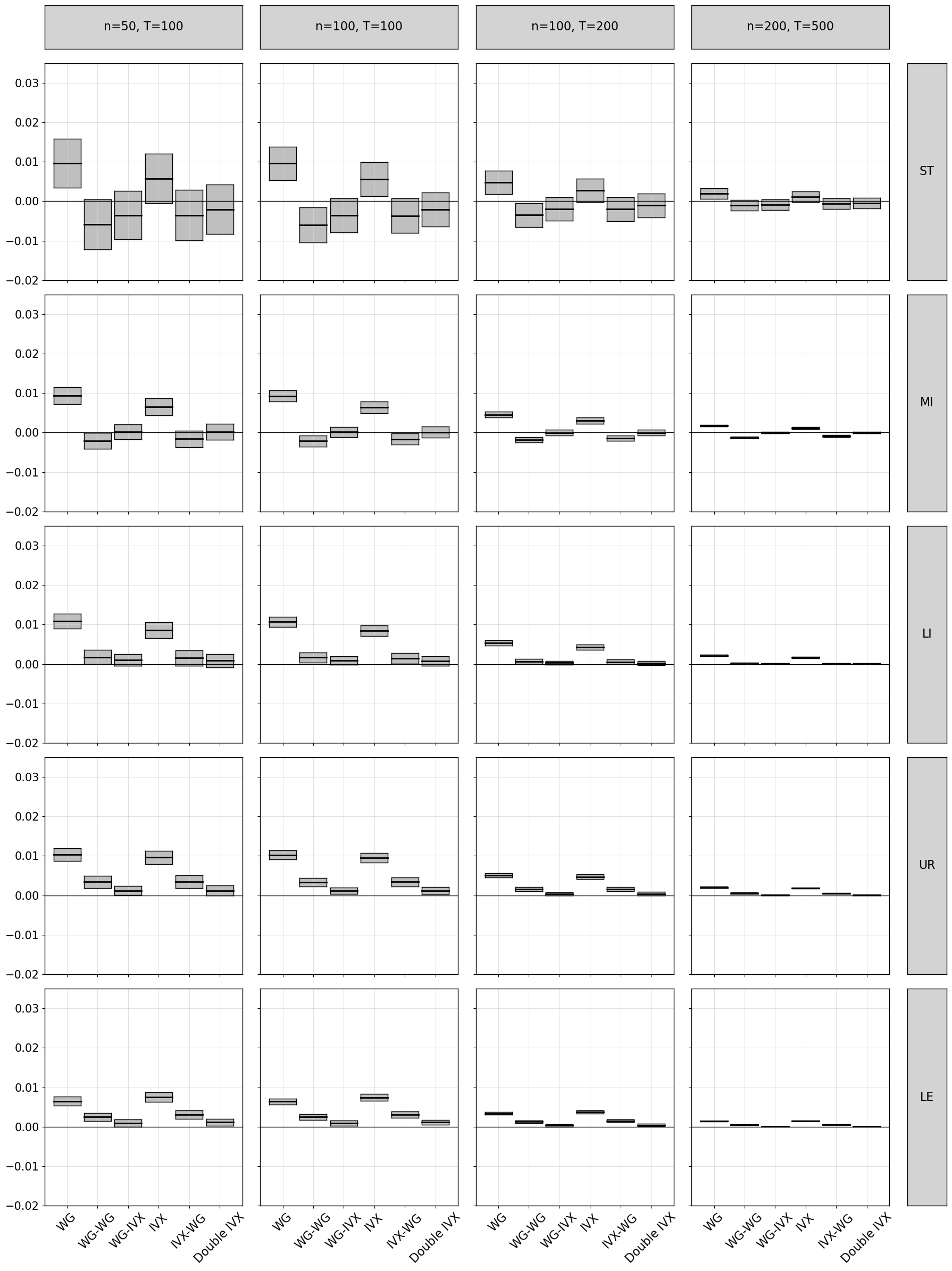}\smallskip{}
\noindent\begin{minipage}[t]{1\linewidth}%
{\footnotesize Notes: In each small box, the central line indicates
the empirical bias of $\hat{\beta}$, and the total height is twice
the empirical standard deviation, marking the lower and upper limits
$(\hat{\beta}-{\rm s.d.},\hat{\beta}+{\rm s.d.})$. To save space,
this figure only exhibits the results under $(n,T)\in\{(50,100),(100,100),(100,200),(200,500)\}$.}%
\end{minipage}
\par\end{centering}
\centering{}\caption{\label{fig:bias_sd-hetero} Bias and standard deviation under conditional
heteroskedasticity}
\end{figure}

\begin{figure}[h]
\begin{centering}
\includegraphics[width=1\columnwidth]{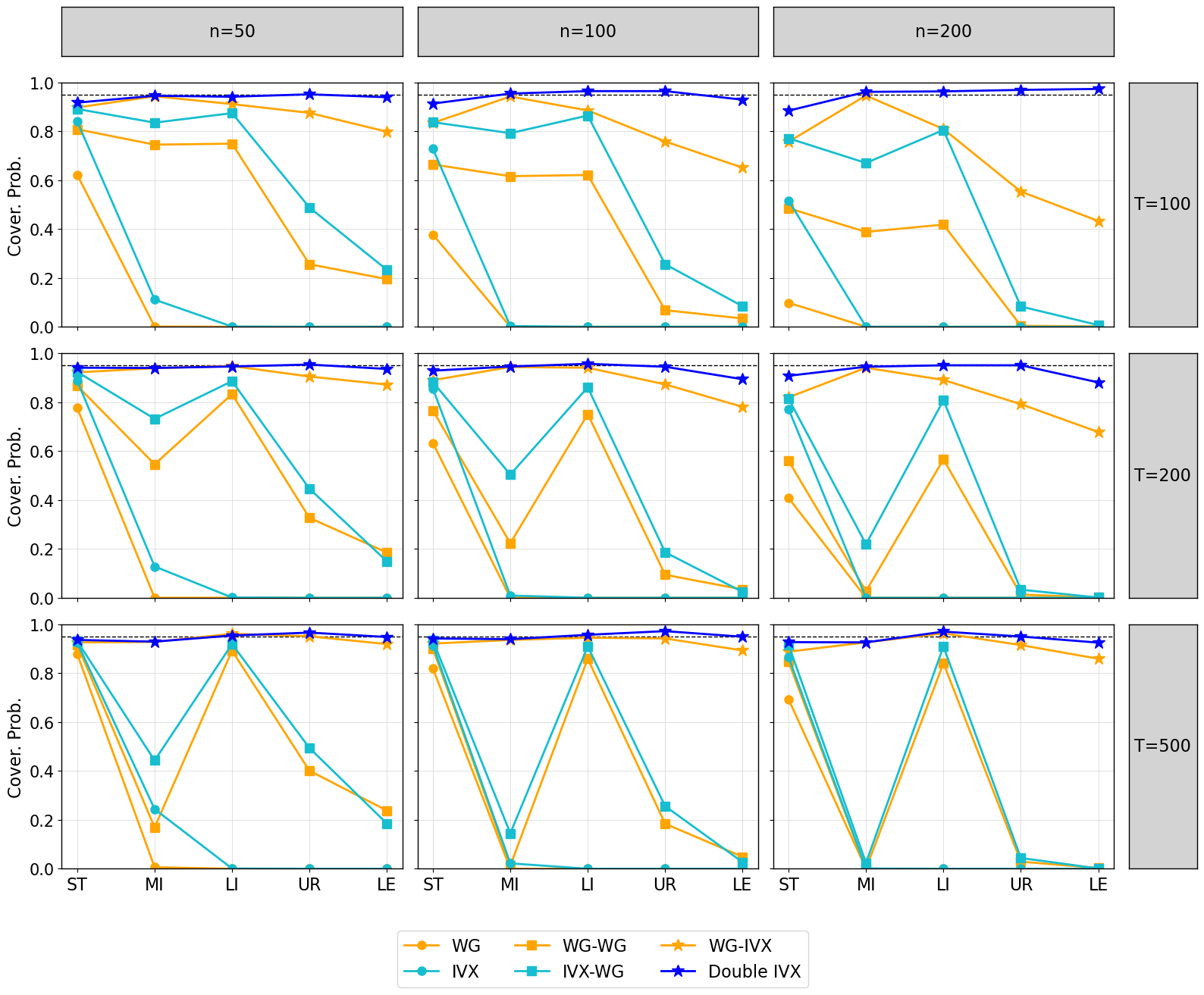}\smallskip{}
\par\end{centering}
\centering{}\caption{\label{fig:coverage-hetero} Coverage Probabilities of 95\% confidence
intervals when $\omega_{12}^{*}=-0.95$ with conditional heteroskedasticity}
\end{figure}

\begin{figure}[h]
\begin{centering}
\includegraphics[width=1\columnwidth]{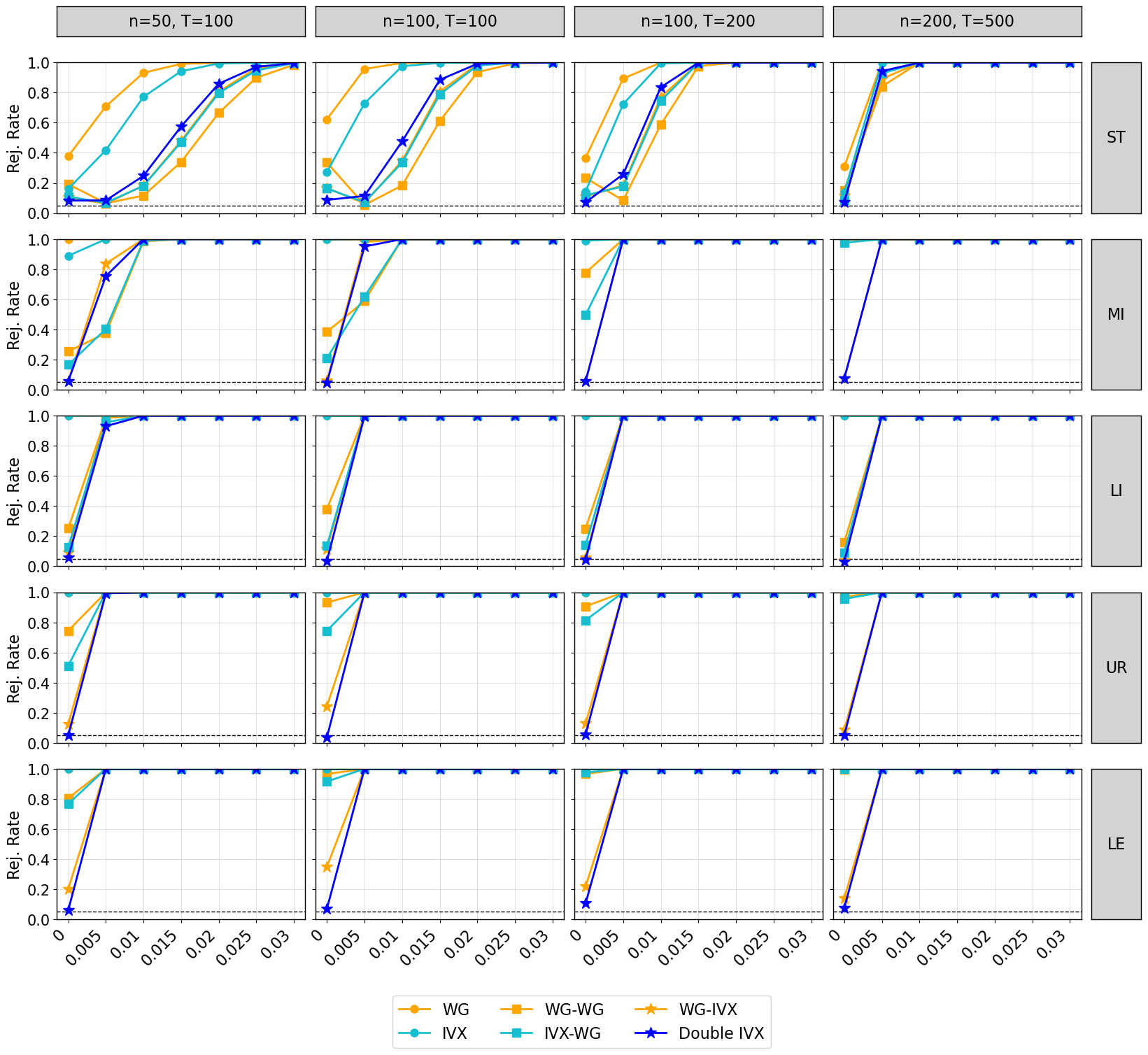}\smallskip{}
\par\end{centering}
\begin{centering}
\noindent\begin{minipage}[t]{1\linewidth}%
{\footnotesize Notes: To save space, this figure only exhibits the
rejection rates under the sample sizes $(n,T)\in\{(50,100),(100,100),(100,200),(200,500)\}$.}%
\end{minipage}
\par\end{centering}
\centering{}\caption{\label{fig:power-hetero} Rejection rates for $\mathbb{H}_{0}:\beta^{*}=0$
at the 5\% level when $\omega_{12}^{*}=-0.95$ with conditional heteroskedasticity}
\end{figure}

\subsection{Two-way Fixed Effects\label{subsec:Two-way-Fixed-Effects}}

Section \ref{sec:appdx two-way} discusses DIVX estimation and inference
for panel predictive regressions with two-way fixed effects. For simulation
studies in this scenario, we consider the predictive model (\ref{eq:predictive-TW})
with a quadratic time trend $f_{y,t}=0.025t+0.001t^{2}$. All other
simulation settings follow Section \ref{sec:simulation} of the main
text. The DIVX estimator and standard error follow Section \ref{sec:appdx two-way}.
Similar to DIVX, all other five methodologies WG, WG-WG, WG-IVX, IVX,
and IVX-WG follow the same path as in Section \ref{sec:appdx two-way},
which replaces the raw data $y_{i,t}$ and $x_{i,t}$ by the between-group
transformed data $\check{y}_{i,t}=y_{i,t}-n^{-1}\sum_{j=1}^{n}y_{j,t}$
and $\check{x}_{i,t}=x_{i,t}-n^{-1}\sum_{j=1}^{n}x_{j,t}$ from the
very beginning before conducting the estimation and inference.

Figures \ref{fig:bias_sd-twoway} and \ref{fig:coverage-twoway} show
the bias and size of DIVX and the other five competitors under conditional
heteroskedasticity, parallel to Figures \ref{fig:bias_sd} and \ref{fig:coverage}
in Section \ref{sec:simulation} of the main text. They are of similar
patterns to those in Section \ref{sec:simulation}, which showcases
DIVX's robustness.

\begin{figure}[h]
\begin{centering}
\includegraphics[width=0.95\columnwidth]{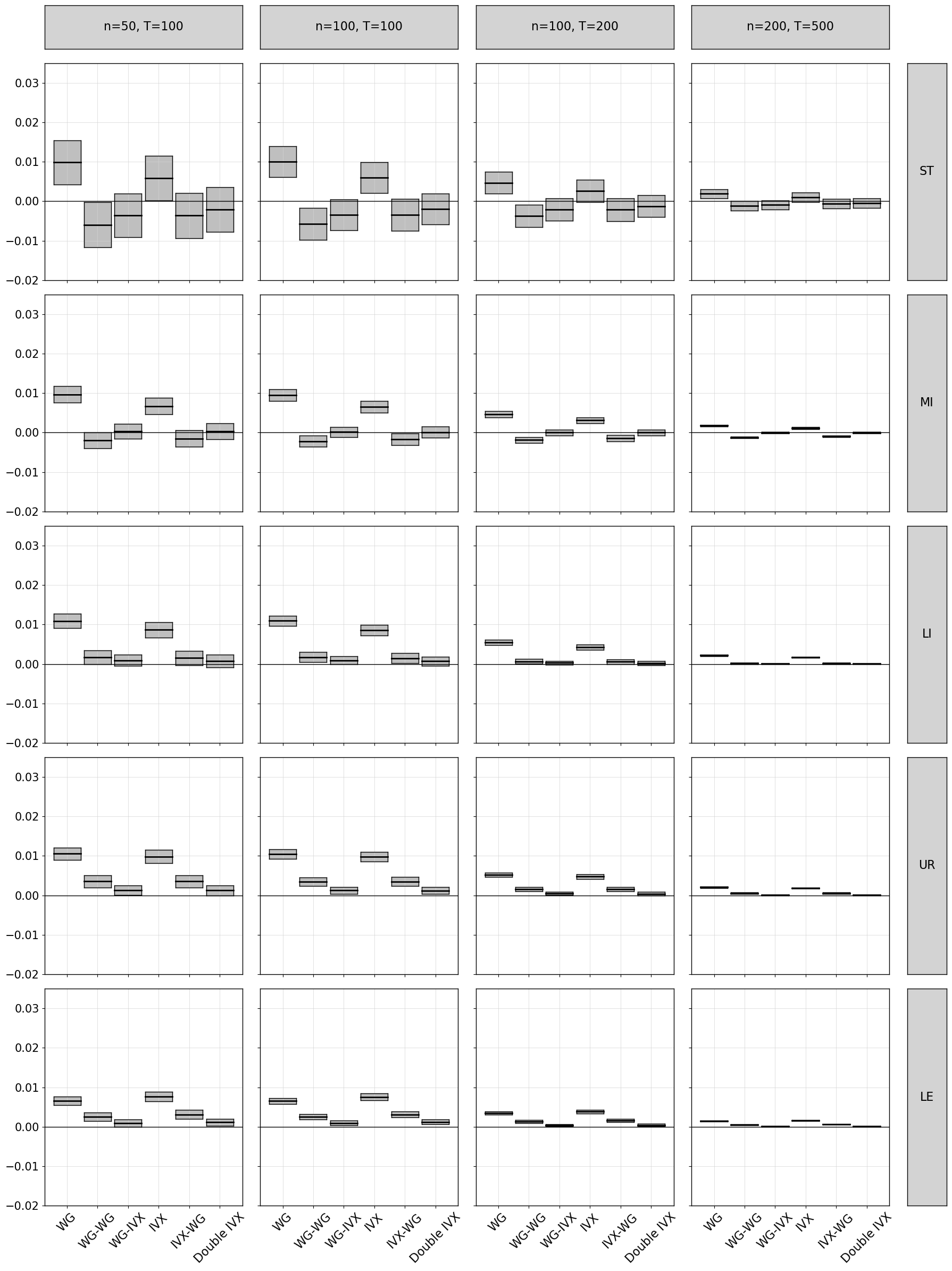}\smallskip{}
\noindent\begin{minipage}[t]{1\linewidth}%
{\footnotesize Notes: In each small box, the central line indicates
the empirical bias of $\hat{\beta}$, and the total height is twice
the empirical standard deviation, marking the lower and upper limits
$(\hat{\beta}-{\rm s.d.},\hat{\beta}+{\rm s.d.})$. To save space,
this figure only exhibits the results under $(n,T)\in\{(50,100),(100,100),(100,200),(200,500)\}$.}%
\end{minipage}
\par\end{centering}
\centering{}\caption{\label{fig:bias_sd-twoway} Bias and standard deviation under two-way
fixed effects}
\end{figure}

\begin{figure}[h]
\begin{centering}
\includegraphics[width=1\columnwidth]{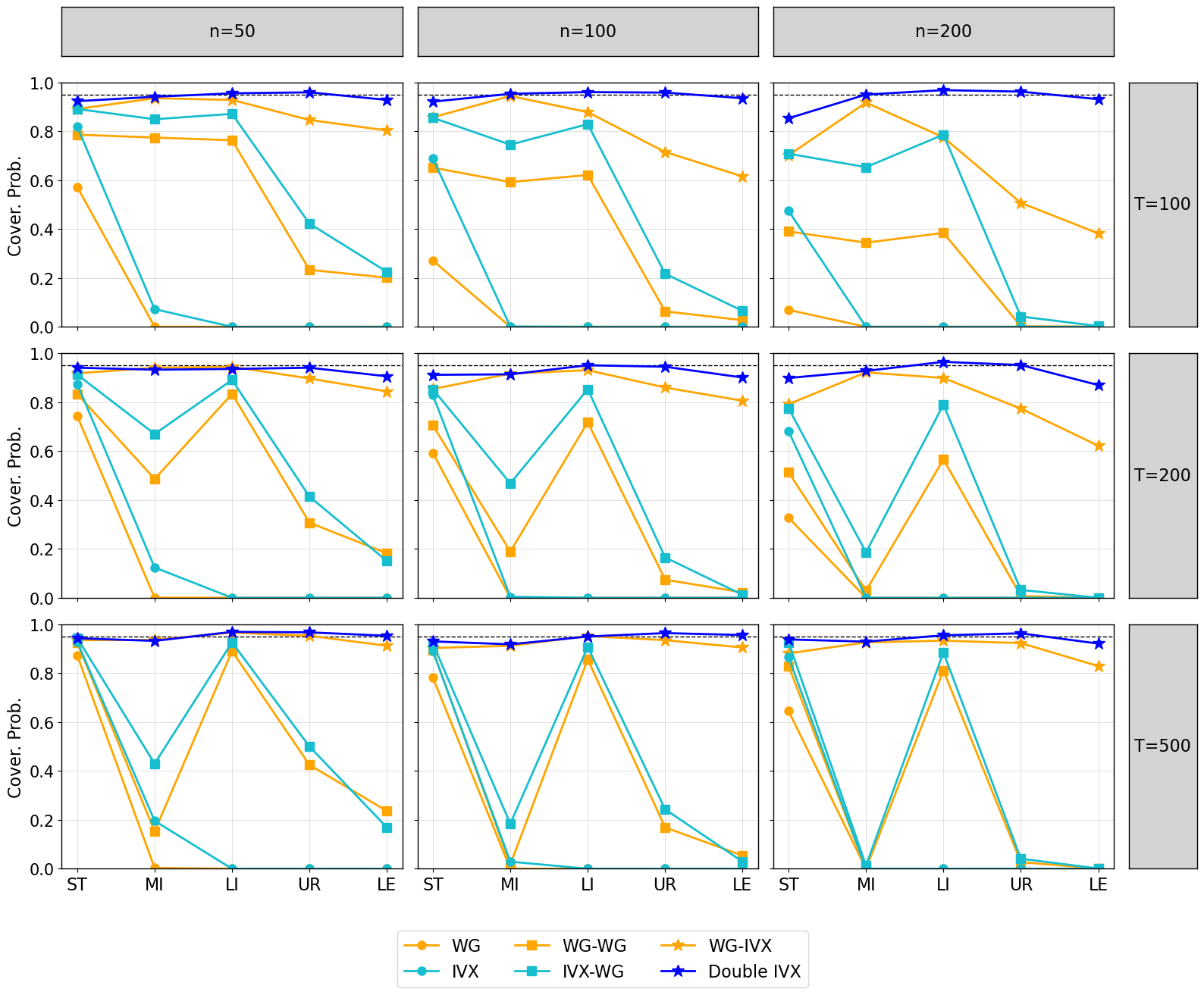}\smallskip{}
\par\end{centering}
\centering{}\caption{\label{fig:coverage-twoway} Coverage Probabilities of 95\% confidence
intervals when $\omega_{12}^{*}=-0.95$ with two-way fixed effects}
\end{figure}

\subsection{Multivariate Regressors\label{subsec:Multivariate-Regressors}}

Section \ref{sec:appdxMultivariate} formally establishes DIVX inference
for multivariate regressions. To examine its finite sample performance,
we conduct simulation studies with the data generating process (\ref{eq:pred_mul})
and (\ref{eq:AR_mul}). For the regressors (\ref{eq:AR_mul}) we set
$\boldsymbol{R}_{T}^{*}=\text{diag}\{(0.6,1-1/T^{0.75},1-1/T,1,1+1/T)^{\prime}\}$,
which includes all five categories of persistence including stationary,
MI, LI, UR, and LE. We set the coefficients of (\ref{eq:pred_mul})
as $\boldsymbol{\beta}^{*}=(b_{0},0,0,b_{0},0)^{\prime}$. We examine
the size and power of the null hypotheses $\mathbb{H}_{0}:\beta_{1}^{*}=0$
for the stationary regressor, $\mathbb{H}_{0}:\beta_{4}^{*}=0$ for
the unit root regressor, and the joint hypothesis $\mathbb{H}_{0}:\beta_{1}^{*}=\beta_{4}^{*}=0$
with the Wald test for mixed roots. We set $b_{0}=0$ when we analyze
the empirical size, and very $b_{0}$ in $\{0.005,0.01,0.015\}$ for
the empirical power. To construct the error terms, we generate the
innovations from the following i.i.d.~multivariate normal distributions
\begin{equation}
(e_{i,t},\boldsymbol{\varepsilon}_{i,t}^{\prime})^{\prime}\sim{\rm i.i.d.}\ \mathcal{N}\left(\boldsymbol{0}_{6},\boldsymbol{\Sigma}\right),\ \boldsymbol{\Sigma}=((-0.5)^{|i-j|})_{1\leq i,j\leq6}.\label{eq:omega12 def-1}
\end{equation}
The AR(1) errors are generated as the ARMA(1,1) processes $\boldsymbol{v}_{i,t}=0.5\boldsymbol{v}_{i,t-1}+\boldsymbol{\varepsilon}_{i,t}+0.4\boldsymbol{\varepsilon}_{i,t-1}$.
The fixed effects follow the settings in Section \ref{sec:simulation}
of the main paper.

Figures \ref{fig:bias-multiple} and \ref{fig:rmse_multiple} exhibit
the estimation biases and RMSEs of DIVX. These two indicators shrink
toward zero as the sample size gets larger, and the regressor gets
more persistent. These results not only reflect the consistency of
the DIVX estimator, but also echo the analysis in Remark \ref{rem:local power}
that highly persistent regressors enjoy faster convergence than stationary
regressors.

In terms of inference, Figure \ref{fig:rej-multiple} displays the
empirical rejection probabilities of the 5\% $t$-tests using DIVX
over the 1000 replications. When $b_{0}=0$, the $t$-tests for both
the stationary and the unit root regressors reject the null hypotheses
with probability close to the nominal level 5\%, suggesting that DIVX
produces correct empirical sizes for multivariate panel predictive
regressions. The empirical power increases if $b_{0}$ or the sample
sizes are larger, and the rejection rate of both $t$-tests achieves
100\% in all cases when $b_{0}$ reaches 0.015. In addition, the empirical
power for the unit root regressor is higher than that for the stationary
regressor, which again verifies that unit roots enjoy super-consistency
and therefore produce higher power against the alternative.

Figure \ref{fig:rej-Wald} shows the empirical rejection probabilities
of 5\% Wald test for the joint null $\mathbb{H}_{0}:\beta_{1}^{*}=\beta_{4}^{*}=0$.
Similar to the $t$-tests, the Wald test exhibits correct empirical
sizes when $b_{0}=0$, and high empirical power when $b_{0}>0$. These
results demonstrate the usefulness of DIVX in testing the significance
of an individual coefficient as well as the joint test for multiple
coefficients.

\begin{figure}[h]
\begin{centering}
\includegraphics[width=1\columnwidth]{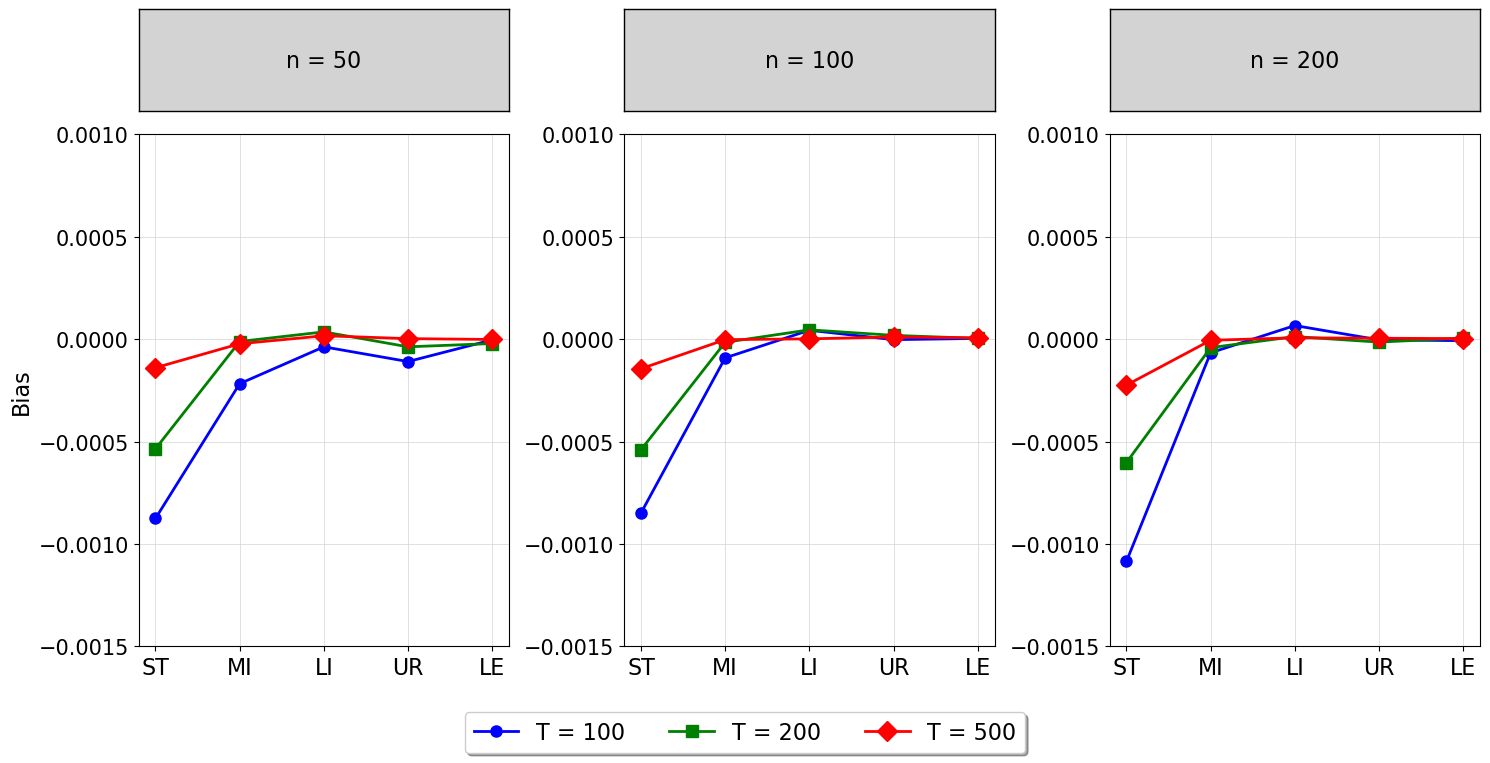}\smallskip{}
\par\end{centering}
\centering{}\caption{\label{fig:bias-multiple} Biases of estimation with multiple regressors}
\end{figure}

\begin{figure}[h]
\begin{centering}
\includegraphics[width=1\columnwidth]{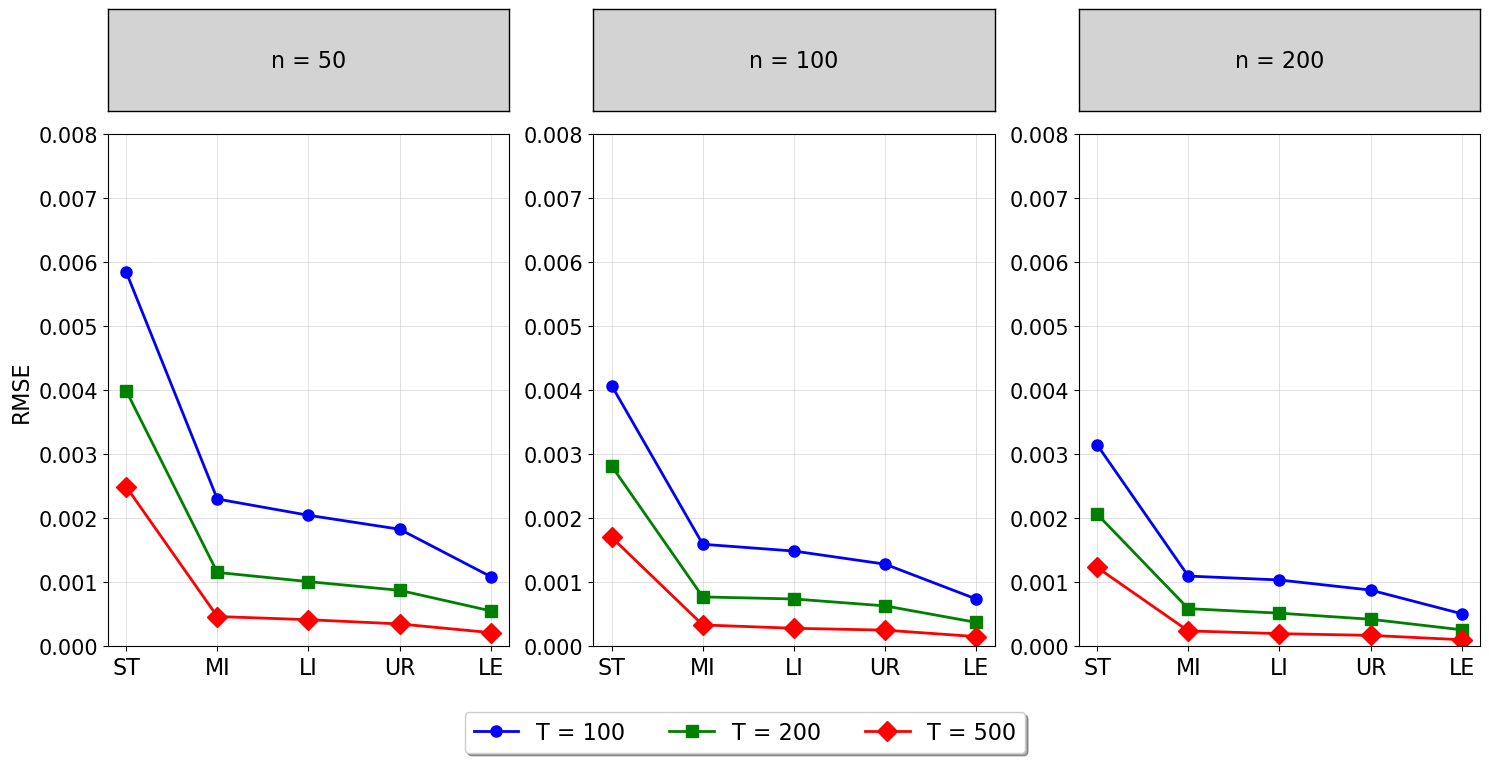}\smallskip{}
\par\end{centering}
\centering{}\caption{\label{fig:rmse_multiple} RMSEs with multiple regressors}
\end{figure}

\begin{figure}[h]
\begin{centering}
\includegraphics[width=1\columnwidth]{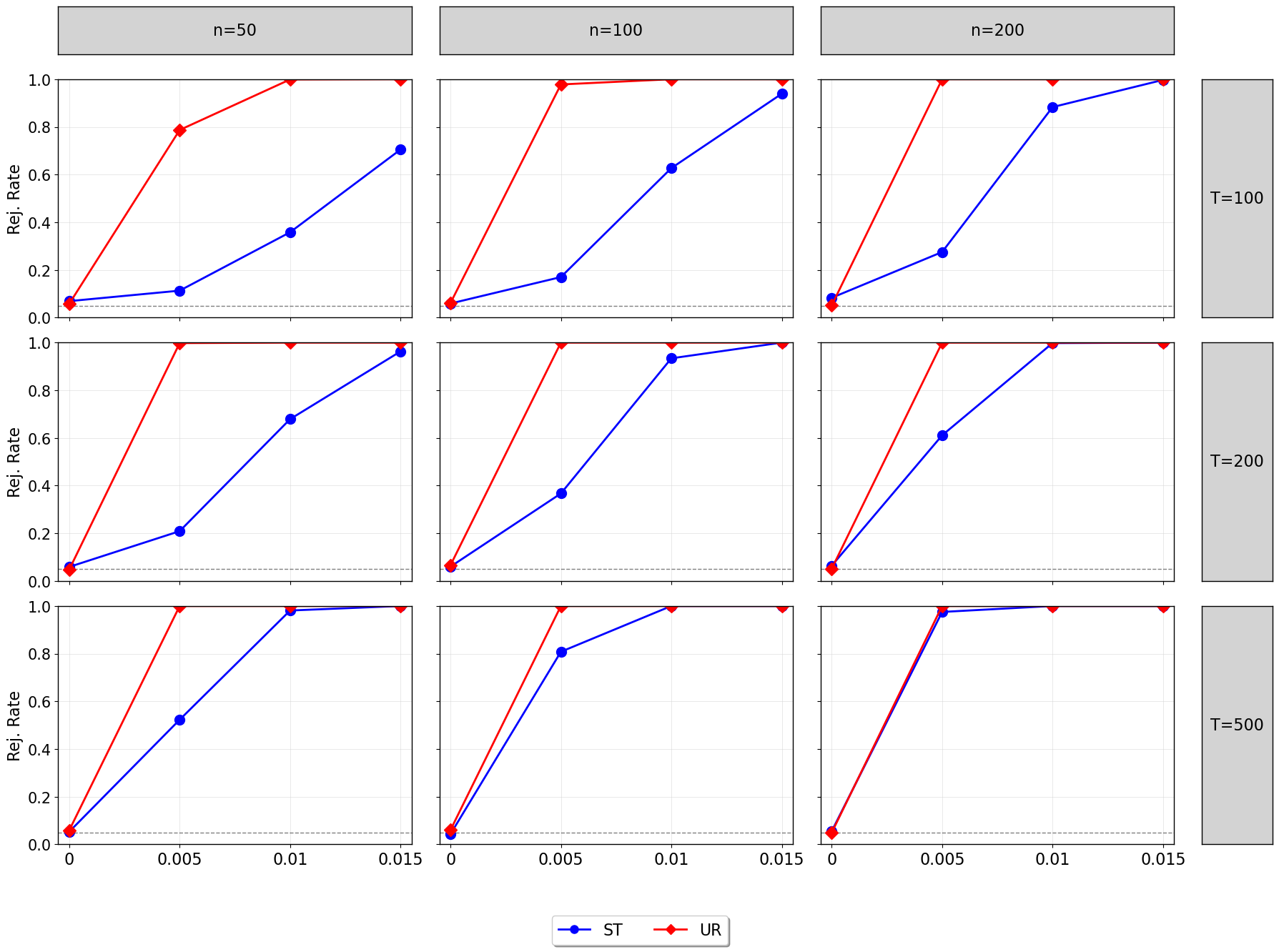}\smallskip{}
\par\end{centering}
\begin{centering}
\noindent\begin{minipage}[t]{1\linewidth}%
{\footnotesize Notes: ``ST'' and ``UR'' represent the rejection
rate of 5\% $t$-tests for} {\footnotesize$\mathbb{H}_{0}:\beta_{1}^{*}=0$
for the stationary regressor, and $\mathbb{H}_{0}:\beta_{4}^{*}=0$
for the unit root regressor, respectively. The x-axis in each graph
represents the value of $b_{0}$.}%
\end{minipage}
\par\end{centering}
\centering{}\caption{\label{fig:rej-multiple} Rejection rate of 5\% $t$-test under multiple
regressors}
\end{figure}

\begin{figure}[h]
\begin{centering}
\includegraphics[width=1\columnwidth]{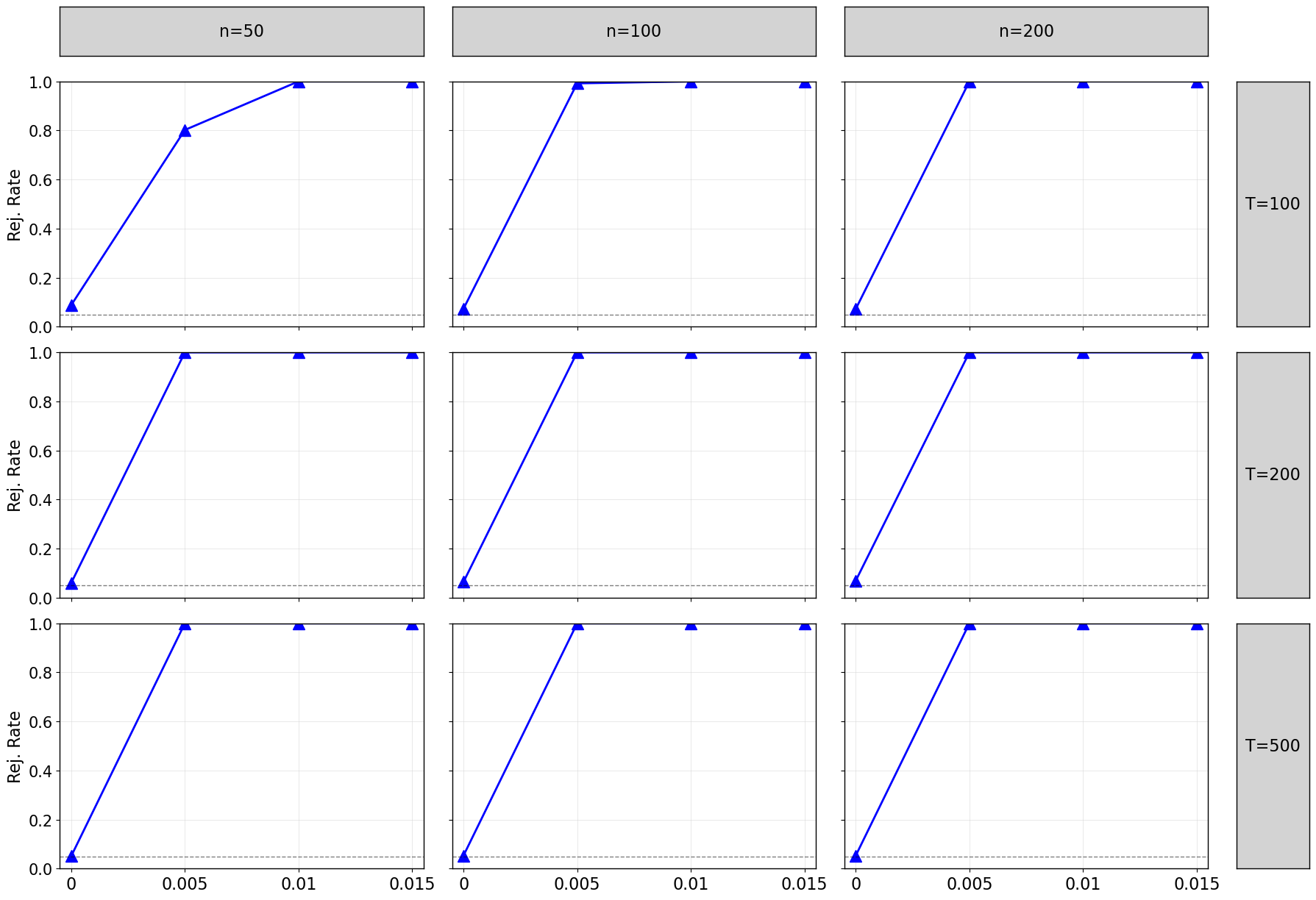}\smallskip{}
\par\end{centering}
\centering{}%
\noindent\begin{minipage}[t]{1\linewidth}%
{\footnotesize Notes: The x-axis in each graph represents the value
of $b_{0}$.}%
\end{minipage}\caption{\label{fig:rej-Wald} Rejection rate of 5\% Wald test with multiple
regressors}
\end{figure}

\subsection{Local Projections\label{subsec:Local-Projections}}

Section \ref{sec:Local-Projection} discusses the applications of
DIVX in panel local projections. Here we conduct simulation studies
to examine the performance of DIVX for panel local projections with
highly persistent regressors. For the data generating process, we
still consider the univariate models Recall that local projections
require an m.d.s.~error of the AR(1) model. Therefore, different
from the ARMA(1,1) process as in Section \ref{sec:simulation} of
the main text, in this section we set $v_{i,t}=\varepsilon_{i,t}$
as i.i.d.~normal variables. In addition, we set $\beta^{*}=-0.01$
and $\omega_{12}^{*}=-0.3$, which mimics the negative impact of financial
crises on economic growth that is widely discussed in empirical studies
\citep{mei2023implicit}. All other settings follow Section \ref{sec:simulation}.
We focus on the inference of the slope coefficient in the $h$-period
ahead predictive model $y_{i,t+h}=\mu_{y,i}^{(h)}+\beta^{(h)*}x_{i,t}+e_{i,t+h}^{(h)}$
for $h=1,2,3$.

Figure \ref{fig:lp-coverage} exhibits the coverage probabilities
of the DIVX 95\% confidence intervals for the impulse response function
$\{\beta^{(h)*}\}_{h=1,2,3}$. In all cases, DIVX confidence intervals
cover the truth with probabilities close to the nominal level.

\begin{figure}[h]
\begin{centering}
\includegraphics[width=1\columnwidth]{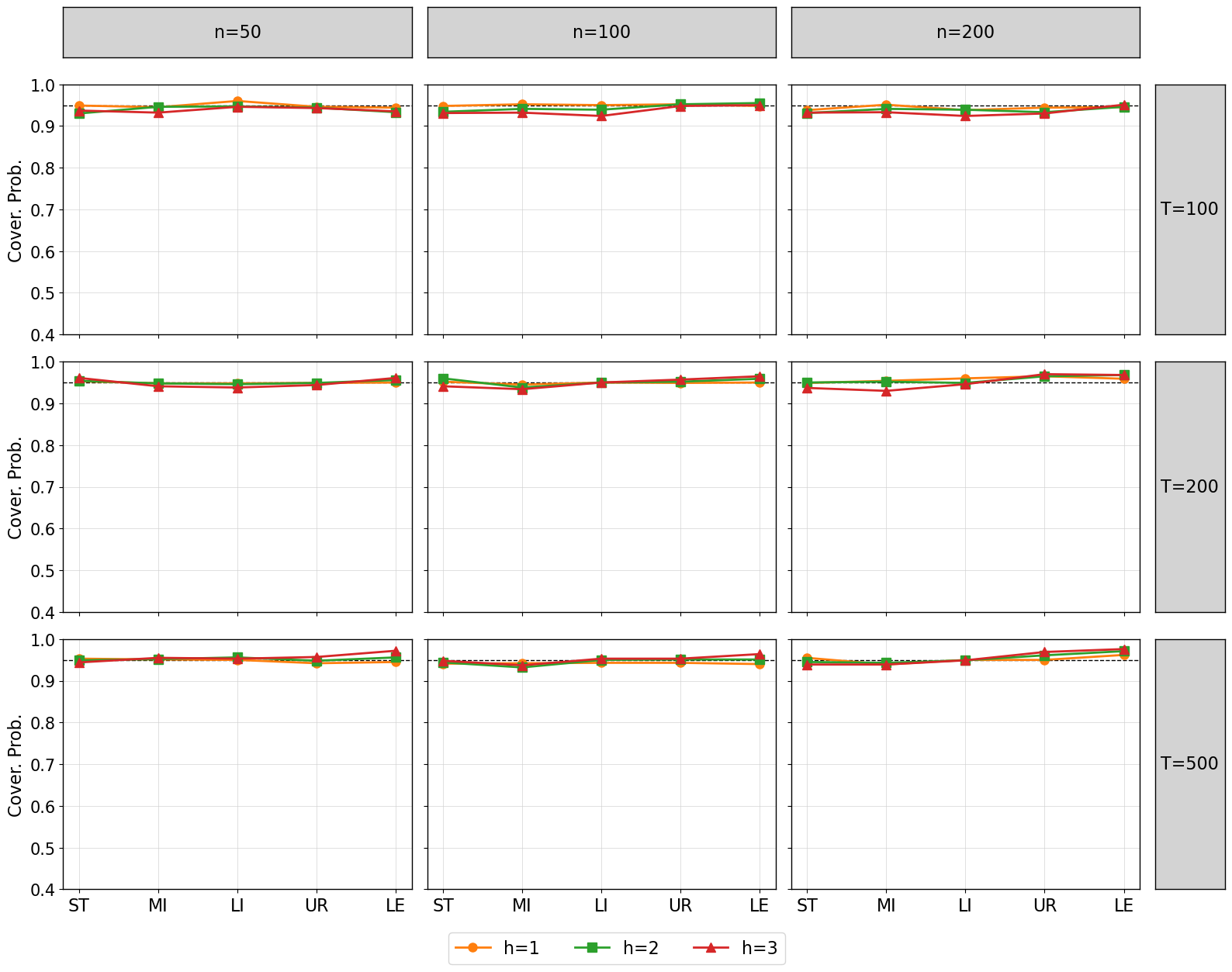}\smallskip{}
\par\end{centering}
\centering{}%
\noindent\begin{minipage}[t]{1\linewidth}%
{\footnotesize Notes: The x-axis represents various categories of persistence.
``ST'', ``MI'', ``LI'', ``UR'', and ``LE'' abbreviate stationary,
mildly integrated, locally integrated, unit roots, and locally explosive,
respectively.}%
\end{minipage}\caption{\label{fig:lp-coverage} Coverage probabilities of DIVX's 95\% confidence
intervals for local projections}
\end{figure}

\subsection{Latent Group Structures\label{subsec:Latent-Group-Structures}}

Section \ref{sec:Grouped-Heterogeneity} studies panel predictive
regressions with latent group structures and proposes SBSA-DIVX. Here
we perform simulations under data generating processes with latent
group structures. We consider the data generating processes (\ref{eq:predictive-group})
and (\ref{eq:state space-group}). We consider $K=3$ groups, including
two groups of stationary panels and one group of unit roots. Specifically,
we set $(\rho_{[1]}^{*},\rho_{[2]}^{*},\rho_{[3]}^{*})=(0.2,0.7,1)$
and $(\beta_{[1]}^{*},\beta_{[2]}^{*},\beta_{[3]}^{*})=(-0.1,-0.05,0)$,
with the group sizes $|\mathcal{G}_{1}|:|\mathcal{G}_{2}|:|\mathcal{G}_{3}|=3:3:4$.
For example, if $n=100$, we have $|\mathcal{G}_{1}|=|\mathcal{G}_{2}|=30$
and $|\mathcal{G}_{3}|=40$. The settings for the fixed effects and
error terms follow Section \ref{sec:simulation} in the main text.

Table \ref{tab: group} displays the simulation results. The column
``Accuracy'' lists the grouping accuracy by SBSA, measured by the
proportion of individuals in each group that are correctly classified
into the its original group. When $T$ is sufficiently large ($T=200$
or 500), the proportions of individuals that are correctly classified
exceed 98\% in all three groups, suggesting that SBSA effectively
identifies the latent group structures in panel predictive regressions
even at the presence of highly persistent regressors. In terms of
estimation error and coverage probabilities, we compare the results
of SBSA-DIVX to the ``Oracle'' estimator, where the group membership
is known a priori and the DIVX estimation and inference are performed
on each group respectively. In Table \ref{tab: group}, ``RMSE''
represents the root mean squared errors of the DIVX point estimate
for the true coefficient in each group, while the column ``Coverage
Probabilities'' displays the coverage of the 95\% confidence interval
of each group. The RMSEs of the Oracle DIVX approach towards zero
as the sample sizes grow, and the coverage probabilities of the 95\%
confidence intervals are close to the nominal level in all scenarios.
Also, the RMSE is the smallest in Group 3 with a unit root regressor,
which echoes the super-consistency of highly persistent regressors.
When $T=200$ or 500, with high classification accuracy, the RMSEs
of SBSA-DIVX are comparable to the Oracle DIVX, with correct empirical
coverage probabilities as well.

\subsection{Comparison to Alternative Estimators\label{subsec:comparison}}

To demonstrate our central idea, Section \ref{sec:Grouped-Heterogeneity}
involves the WG- and IVX-based estimators only. As mentioned in our
introduction, the vast literature of dynamic panel regressions has
proposed alternative estimators. Among recently developed methods,
the split-panel jackknife estimator \citep[SPJ,][]{dhaene2015split,chudik2018half},
the X-differencing estimator \citep[XDiff,][]{han2014x}, and forwards
and backwards recursive detrending \citep[FB,][]{westerlund2017testing},
are most relevant to our setting. SPJ removes the Nickell bias by
splitting the panel over the time dimension with an easy formula
\[
\hat{\beta}^{{\rm SPJ}}=2\hat{\beta}^{{\rm WG}}-0.5(\hat{\beta}_{a}^{{\rm WG}}+\hat{\beta}_{b}^{{\rm WG}}),
\]
 where $\hat{\beta}_{a}^{{\rm WG}}$ and $\hat{\beta}_{b}^{{\rm WG}}$
denote the WG estimator using the time periods $\mathcal{T}_{a}=\{1,2,...,\lfloor T/2\rfloor\}$
and $\mathcal{T}_{b}=\{\lfloor T/2\rfloor+1,\lfloor T/2\rfloor+2,...,T\}$,
respectively. Nevertheless, SPJ does not cover nonstationary panels.
The other two (XDiff and FB) rigorously accommodate nonstationary
panels with theoretical justifications. However, the former works
for stationary and pure unit root regressors only, and requires correct
specification of the AR(1) model. The latter exhibits power loss when
the regressor becomes highly persistent; see \citet[Section 4]{westerlund2017testing}
for details.

This section focuses on the settings of Section \ref{sec:simulation}
of the main text with $(n,T)\in\{(100,100),(100,200)\}$, and compare
the size and power of DIVX to SPJ, IVX-XDiff, and FB. We vary the
true $\beta^{*}$ in $\{0,0.005,0.01,0.015,0.02,0.025,0.3\}$. Similar
to IVX-WG discussed in the main text, IVX-XDiff adopts IVX to estimate
$\beta^{*}$ in the first step, and applies XDiff to estimate $\rho^{*}$
for bias correction. Figure \ref{fig:comparison} plots the empirical
rejection probabilities of the 5\% $t$-test for $\mathbb{H}_{0}:\beta^{*}=0$
under different settings. When $x_{i,t}$ is stationary, all methods
produce accurate empirical sizes when $\beta^{*}=0$ except the IVX-XDiff
estimator that requires correctly specified AR(1) regressors. The
power of DIVX is competitive among all estimators under the alternatives.
In the scenarios with $\rho^{*}=1$ where $x_{i,t}$ is highly persistent,
both SPJ and IVX-XDiff inferences are evidently biased, with the empirical
size far from the nominal level 5\% when $\beta^{*}=0$. Both FB and
DIVX exhibit correct empirical sizes. Under the alternative, the power
of DIVX dominates FB. The power loss of FB is the most severe in the
UR case where $x_{i,t}$ is a pure unit root. These findings are consistent
with \citet[Section 4]{westerlund2017testing}.

\medskip

To summarize, the additional simulations in this section verify the
excellent performance of DIVX for panel predictive regressions in
a variety of scenarios, including various degrees of endogeneity,
relatively small AR(1) coefficients (Section \ref{subsec:various dgp error}),
and conditional heteroskedasticity (\prettyref{subsec:Conditional-Heteroskedasticity}).
With necessary refinements, DIVX can be extended to TWFE, multivariate
regressions, local projections, and heterogeneous panels with latent
groups. The simulation studies in Sections \ref{subsec:Two-way-Fixed-Effects}
to \ref{subsec:Latent-Group-Structures} witness the robustness of
DIVX in diverse DGPs. Lastly, Simulation results in Section \ref{subsec:comparison}
demonstrates that DIVX outperforms alternative methods.

\begin{table}[t]
\centering
\caption{Simulation results for panel predictive regressions with latent group structures}
\label{tab: group}
\begin{tabular}{c|c|c|c|cc|cc}
\hline \hline
\multirow{2}{*}{$n$}   & \multirow{2}{*}{$T$}   & \multirow{2}{*}{Group} & \multirow{2}{*}{Accuracy} & \multicolumn{2}{c|}{RMSE}     & \multicolumn{2}{c}{Coverage Probabilities} \\
\cline{5-8}
                     &                      &                        &                           & SBSA-DIVX & Oracle & SBSA-DIVX        & Oracle        \\
                     \hline
\multirow{9}{*}{50}  & \multirow{3}{*}{100} & 1                      & 0.932                     & 0.037               & 0.017  & 0.802                      & 0.937         \\
                     &                      & 2                      & 0.825                     & 0.020               & 0.009  & 0.808                      & 0.946         \\
                     &                      & 3                      & 0.962                     & 0.003               & 0.002  & 0.884                      & 0.938         \\
\cline{2-8}
                     & \multirow{3}{*}{200} & 1                      & 1.000                     & 0.012               & 0.011  & 0.932                      & 0.943         \\
                     &                      & 2                      & 0.986                     & 0.006               & 0.006  & 0.931                      & 0.94          \\
                     &                      & 3                      & 0.997                     & 0.001               & 0.001  & 0.938                      & 0.939         \\
\cline{2-8}
                     & \multirow{3}{*}{500} & 1                      & 1.000                     & 0.007               & 0.007  & 0.954                      & 0.954         \\
                     &                      & 2                      & 1.000                     & 0.004               & 0.004  & 0.95                       & 0.95          \\
                     &                      & 3                      & 1.000                     & 0.000               & 0.000  & 0.956                      & 0.956         \\
\cline{1-8}
\multirow{9}{*}{100} & \multirow{3}{*}{100} & 1                      & 0.971                     & 0.024               & 0.011  & 0.781                      & 0.944         \\
                     &                      & 2                      & 0.874                     & 0.013               & 0.007  & 0.825                      & 0.922         \\
                     &                      & 3                      & 0.959                     & 0.002               & 0.002  & 0.904                      & 0.938         \\
\cline{2-8}
                     & \multirow{3}{*}{200} & 1                      & 1.000                     & 0.008               & 0.008  & 0.941                      & 0.945         \\
                     &                      & 2                      & 0.988                     & 0.005               & 0.004  & 0.935                      & 0.936         \\
                     &                      & 3                      & 0.998                     & 0.001               & 0.001  & 0.939                      & 0.94          \\
\cline{2-8}
                     & \multirow{3}{*}{500} & 1                      & 1.000                     & 0.005               & 0.005  & 0.948                      & 0.947         \\
                     &                      & 2                      & 1.000                     & 0.003               & 0.003  & 0.944                      & 0.944         \\
                     &                      & 3                      & 1.000                     & 0.000               & 0.000  & 0.945                      & 0.945         \\
\cline{1-8}
\multirow{9}{*}{200} & \multirow{3}{*}{100} & 1                      & 0.990                     & 0.014               & 0.009  & 0.703                      & 0.924         \\
                     &                      & 2                      & 0.900                     & 0.007               & 0.004  & 0.841                      & 0.943         \\
                     &                      & 3                      & 0.959                     & 0.001               & 0.001  & 0.92                       & 0.924         \\
\cline{2-8}
                     & \multirow{3}{*}{200} & 1                      & 1.000                     & 0.006               & 0.006  & 0.933                      & 0.936         \\
                     &                      & 2                      & 0.988                     & 0.003               & 0.003  & 0.922                      & 0.919         \\
                     &                      & 3                      & 0.998                     & 0.001               & 0.001  & 0.928                      & 0.922         \\
\cline{2-8}
                     & \multirow{3}{*}{500} & 1                      & 1.000                     & 0.004               & 0.004  & 0.941                      & 0.941         \\
                     &                      & 2                      & 1.000                     & 0.002               & 0.002  & 0.941                      & 0.941         \\
                     &                      & 3                      & 1.000                     & 0.000               & 0.000  & 0.937                      & 0.937 \\
\hline   \hline
\end{tabular}
\end{table}

\begin{figure}[h]
\begin{centering}
\includegraphics[scale=0.5]{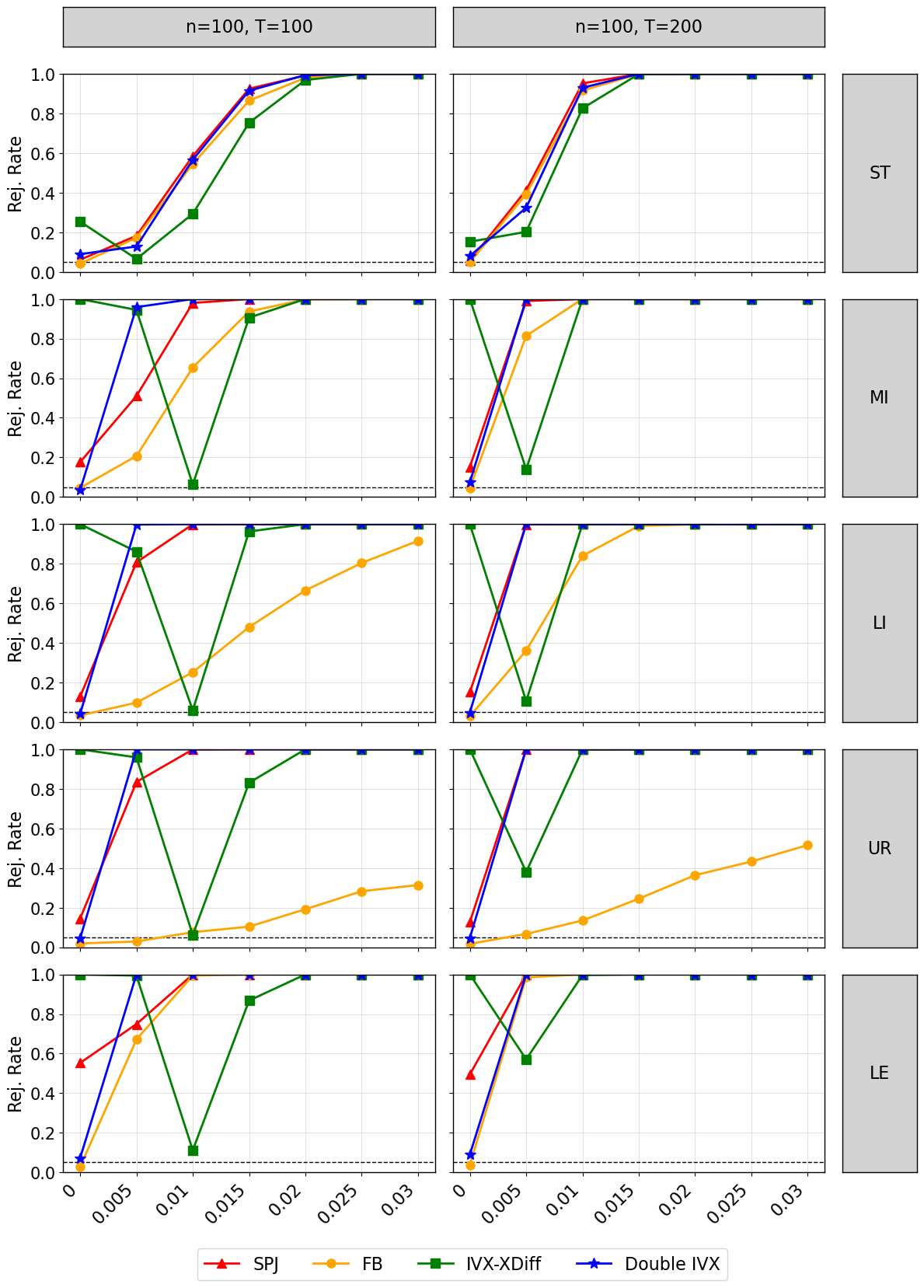}\smallskip{}
\par\end{centering}
\centering{}%
\noindent\begin{minipage}[t]{1\linewidth}%
{\footnotesize Notes: The x-axis in each graph represents the value
of true $\beta^{*}$. ``ST'', ``MI'', ``LI'', ``UR'', and
``LE'' abbreviate stationary, mildly integrated, locally integrated,
unit roots, and locally explosive, respectively.}%
\end{minipage}\caption{\label{fig:comparison} Rejection Rates of 5\% $t$-test by SPJ, FB,
IVX-XDiff, and DIVX}
\end{figure}

\clearpage{}

\setcounter{page}{1}
\renewcommand{\thesection}{S.\Alph{section}}
\setcounter{section}{0}
% for correct hyperref bookmarks
% https://tex.stackexchange.com/questions/6098
\renewcommand{\theHsection}{Supplement.\thesection}
\renewcommand{\theHequation}{Supplement.\arabic{equation}}
\setcounter{footnote}{0}
\setcounter{table}{0}
\setcounter{figure}{0}
\setcounter{equation}{0}
\renewcommand{\thefootnote}{S.\arabic{footnote}}
\renewcommand{\theequation}{S.\arabic{equation}}
\renewcommand{\thefigure}{S.\arabic{figure}}
\renewcommand{\thetable}{S.\arabic{table}}
\setcounter{thm}{0}
\setcounter{lem}{0}
\setcounter{rem}{0}
\setcounter{prop}{0}
\renewcommand{\thethm}{S.\arabic{thm}}
\renewcommand{\thelem}{S.\arabic{lem}}
\renewcommand{\therem}{S.\arabic{rem}}
\renewcommand{\theprop}{S.\arabic{prop}}
\begin{center}
{\LARGE Supplementary Materials to ``Nickell Meets Stambaugh:\\[5pt] A Tale of Two Biases in Panel Predictive Regression''}
\par\end{center}

\begin{center}
{\large Chengwang Liao$^a$, Ziwei Mei$^b$, Zhentao Shi$^a$ \\
$^a$The Chinese University of Hong Kong \\
$^b$University of Macau }\\
\par\end{center}

The Supplementary Materials collect all theoretical proofs. Section
\ref{sec:Proofs-of-Main} includes the proofs of main results. Section
\ref{sec:Proofs} shows the proofs of extensive theoretical results
about multivariate regressions, multiple-period-ahead prediction,
and heterogeneity with a group structure. Section \ref{sec:Supporting-lemmas}
collects technical lemmas used in the proofs. Section~\ref{sec:Proof-Pre-Lemmas}
proves the preliminary lemmas in Section \ref{subsec:Pre-Lemmas}.
Section~\ref{sec:Proofs-for-WG-1} proves the lemmas for WG in Section
\ref{subsec:lemmas-for-WG}. Section~\ref{sec:Proofs-for-IVX-1}
proves the lemmas for IVX in Section \ref{subsec:lemmas-for-IVX}.
Section~\ref{sec:Proof-of-Moments} derives the analytic formulae
for moments of stochastic integrals in Section \ref{subsec:stoch_integral}.
We use $\mathbb{E}_{s}(\cdot):=\mathbb{E}(\cdot|\mathcal{F}_{s})$
to denote conditional expectation with respect to a sigma-field $\mathcal{F}_{s}$
where $\{\mathcal{F}_{s}\}_{s}$ is a filtration. For two nonnegative
sequences $a_{T}$ and $b_{T}$, we write $a_{T}\lesssim b_{T}$ if
$a_{T}\le Cb_{T}$ for some constant $C>0$.

\section{Proofs of Main Theoretical Results\label{sec:Proofs-of-Main}\protect
}
\subsection{Proofs for IVX Estimator}
\begin{proof}[Proof of \prettyref{prop:rho_convergence}]
Note that $\hat{\rho}^{\mathrm{IVX}}-\rho^{*}=U_{n,T}/D_{n,T}$ where
(letting $\theta_{1}=(1+\theta)/2$)
\begin{align*}
U_{n,T} & :=\frac{1}{nT^{1+(\theta_{1}\wedge\gamma)}}(1-\rho^{*})\sum_{i=1}^{n}\sum_{t=1}^{T-1}z_{i,t}^{(1)}\alpha_{i}+\frac{1}{nT^{1+(\theta_{1}\wedge\gamma)}}\sum_{i=1}^{n}\sum_{t=1}^{T-1}(z_{i,t}^{(1)}v_{i,t+1}-\hat{\Delta}_{vv})\\
 & =:U_{1,n,T}+U_{2,n,T},\\
D_{n,T} & :=\frac{1}{nT^{1+(\theta_{1}\wedge\gamma)}}\sum_{i=1}^{n}\sum_{t=1}^{T-1}z_{i,t}^{(1)}\alpha_{i}+\frac{1}{nT^{1+(\theta_{1}\wedge\gamma)}}\sum_{i=1}^{n}\sum_{t=1}^{T-1}z_{i,t}^{(1)}\delta_{i,t}\\
 & =:D_{1,n,T}+D_{2,n,T}.
\end{align*}
The proof consists of two steps. \textbf{Step I:} show that $D_{n,T}$
converges in probability to a positive constant; \textbf{Step~II:}
show that $U_{n,T}=O_{p}\bigl(1/\sqrt{nT^{2\gamma}}+1/T^{1+\gamma}\bigr)$.
Then the stochastic order of $\hat{\rho}^{\mathrm{IVX}}-\rho^{*}$
is the same as that of $U_{n,T}$.

\textbf{Step I.} For $D_{1,n,T}$, first by \prettyref{lem:ivxepct}\ref{enu:z4}
we have
\begin{align*}
\mathbb{E}\left[\left(\sum_{t=1}^{T-1}z_{i,t}^{(1)}\alpha_{i}\right)^{2}\right] & \leq\left\{ \mathbb{E}(\alpha_{i}^{4})\mathbb{E}\left[\left(\sum_{t=1}^{T-1}z_{i,t}^{(1)}\right)^{4}\right]\right\} ^{1/2}\\
 & =O(T^{\gamma}\cdot T^{1+\theta_{1}+(\theta_{1}\wedge\gamma)})=O(T^{1+\theta_{1}+\gamma+(\theta_{1}\wedge\gamma)}).
\end{align*}
It follows by independence across $i$ that
\[
\mathrm{var}\left(\frac{1}{nT^{1+(\theta_{1}\wedge\gamma)}}\sum_{i=1}^{n}\sum_{t=1}^{T-1}z_{i,t}^{(1)}\alpha_{i}\right)=\frac{1}{nT^{2[1+(\theta_{1}\wedge\gamma)]}}\mathbb{E}\left[\left(\sum_{t=1}^{T-1}z_{i,t}^{(1)}\alpha_{i}\right)^{2}\right]=\frac{1}{nT^{1-(\theta_{1}\vee\gamma)}}.
\]
Note that $z_{i,t}$ is a linear transformation of $\{\Delta x_{i,t}\}=\{\delta_{i,t}\}$.
Since $\alpha_{i}$ is uncorrelated with $\delta_{i,t}$ for any $t$
(\prettyref{assump:initval}), it is also uncorrelated with $z_{i,t}^{(1)}$.
Then by Markov's inequality,
\begin{equation}
U_{1,n,T}=O\left(\frac{1}{nT^{1+(\theta_{1}\wedge\gamma)}}\sum_{i=1}^{n}\sum_{t=1}^{T-1}z_{i,t}^{(1)}\alpha_{i}\right)=O_{p}\left(\frac{1}{\sqrt{nT^{1-(\theta_{1}\vee\gamma)}}}\right).\label{eq:order_x_alpha}
\end{equation}

For $D_{2,n,T}$, using \prettyref{lem:ivxepct}\ref{enu:zx}, we
have
\[
\mathbb{E}\left[\left(\frac{1}{T^{1+(\theta_{1}\wedge\gamma)}}\sum_{t=1}^{T-1}z_{i,t}^{(1)}\delta_{i,t}\right)^{2}\right]=O(1).
\]
This shows the uniform integrability in $T$ of $T^{-[1+(\theta_{1}\wedge\gamma)]}\sum_{t=1}^{T-1}z_{i,t}^{(1)}\delta_{i,t}$.
By the same argument as in \prettyref{lem:ivxjoint}\ref{enu:Q_pto},
there is some $Q_{zx}^{(1)}>0$ such that
\begin{equation}
\frac{1}{nT^{1+(\theta_{1}\wedge\gamma)}}\sum_{i=1}^{n}\sum_{t=1}^{T-1}z_{i,t}^{(1)}\delta_{i,t}\to_{p}Q_{zx}^{(1)}\quad\text{as }(n,T)\to\infty.\label{eq:order_zx}
\end{equation}
Combining \eqref{eq:order_x_alpha} and \eqref{eq:order_zx} yields
\[
D_{n,T}\to Q_{zx}^{(1)}\quad\text{as }(n,T)\to\infty.
\]

\textbf{Step II.} By \eqref{eq:order_x_alpha} we have
\[
U_{1,n,T}=O_{p}\left(\frac{1}{\sqrt{nT^{1+2\gamma-(\theta_{1}\vee\gamma)}}}\right).
\]
By the Beveridge-Nelson decomposition $v_{i,t}=G(1)\varepsilon_{i,t}-\Delta\tilde{\varepsilon}_{i,t}$
where $G(1):=\sum_{\ell=0}^{\infty}g_{\ell}$, $\tilde{\varepsilon}_{i,t}:=\sum_{s=0}^{\infty}\tilde{g}_{s}\varepsilon_{i,t-s}\text{ and }\tilde{g}_{s}:=\sum_{\tau=s+1}^{\infty}g_{\tau}$,
we use summation by parts to deduce
\begin{align*}
\sum_{t=1}^{T-1}z_{i,t}^{(1)}v_{i,t+1} & =G(1)\sum_{t=1}^{T-1}z_{i,t}^{(1)}\varepsilon_{i,t+1}-\sum_{t=1}^{T-1}z_{i,t}^{(1)}\Delta\tilde{\varepsilon}_{i,t+1}\\
 & =G(1)\sum_{t=1}^{T-1}z_{i,t}^{(1)}\varepsilon_{i,t+1}+\sum_{t=1}^{T-1}\Delta z_{i,t}^{(1)}\tilde{\varepsilon}_{i,t+1}-z_{i,t}^{(1)}\tilde{\varepsilon}_{i,T}.
\end{align*}
For the first term, the stationary m.d.s.\ condition and \prettyref{lem:generic_AR}\ref{enu:E_sum_eta2}
lead to
\begin{equation}
\mathbb{E}\left[\left(\sum_{t=1}^{T-1}z_{i,t}^{(1)}\varepsilon_{i,t+1}\right)^{2}\right]=\mathbb{E}(\varepsilon_{i,1}^{2})\sum_{t=1}^{T-1}\mathbb{E}\left[(z_{i,t}^{(1)})^{2}\right]=O\bigl(T^{1+(\theta_{1}\wedge\gamma)}\bigr).\label{eq:sum_z_eps}
\end{equation}
The third term can be bounded as
\begin{equation}
\mathbb{E}\left[(z_{i,t}^{(1)}\tilde{\varepsilon}_{i,T})^{2}\right]\leq\sqrt{\mathbb{E}\left[(z_{i,t}^{(1)})^{4}\right]\mathbb{E}(\tilde{\varepsilon}_{i,T}^{4})}=O(T^{\theta_{1}\wedge\gamma}).\label{eq:z_epsT}
\end{equation}
For the second term, let $X_{i,t}:=\Delta z_{i,t}^{(1)}\tilde{\varepsilon}_{i,t+1}-\Delta_{vv}$,
then
\[
X_{i,t}=\underbrace{v_{i,t}\tilde{\varepsilon}_{i,t+1}-\Delta_{vv}}_{A_{i,t}}+\underbrace{\frac{c_{z}}{T^{\theta_{1}}}\zeta_{i,t-1}^{(1)}\tilde{\varepsilon}_{i,t+1}}_{B_{i,t}}+\underbrace{\frac{c^{*}c_{z}}{T^{\theta_{1}+\gamma}}\psi_{i,t-1}^{(1)}\tilde{\varepsilon}_{i,t+1}}_{C_{i,t}}+\underbrace{\frac{c^{*}}{T^{\gamma}}\delta_{i,t}\tilde{\varepsilon}_{i,t+1}}_{D_{i,t}}.
\]
Clearly, $A_{i,t}$ has zero mean and thus
\[
\mathbb{E}\left[\left(\sum_{t=1}^{T-1}A_{i,t}\right)^{2}\right]=\sum_{t,s}^{T-1}\mathbb{E}(A_{i,t}A_{i,s})\lesssim T\sum_{h\geq0}|\Gamma_{A}(h)|
\]
where
\[
\Gamma_{A}(h)=\mathrm{cov}(A_{i,0},A_{i,h})=\sum_{j,s}\sum_{j',s'}g_{j}\tilde{g}_{s}g_{j'}\tilde{g}_{s'}\mathrm{cov}(\varepsilon_{i,-j}\varepsilon_{i,1-s},\varepsilon_{i,h-j'}\varepsilon_{i,h+1-s'}).
\]
By the relationship between fourth cumulant and covariance and using
the absolutely summable cumulant condition we can deduce that $\sum_{h\geq0}|\Gamma_{A}(h)|=O(1)$,
which gives
\[
\mathbb{E}\left[\left(\sum_{t=1}^{T-1}A_{i,t}\right)^{2}\right]=O(T).
\]
For the other three terms, by the same argument as in the proof of
\prettyref{lem:generic_two_ARs}\ref{enu:Esum2}, we can get
\[
\mathbb{E}\left[\left(\sum_{t=1}^{T-1}B_{i,t}\right)^{2}\right]=O(T^{2-2\theta_{1}}),\quad\mathbb{E}\left[\left(\sum_{t=1}^{T-1}C_{i,t}\right)^{2}\right]=O(T^{2-2(\theta_{1}\wedge\gamma)}),\quad\mathbb{E}\left[\left(\sum_{t=1}^{T-1}D_{i,t}\right)^{2}\right]=O(T^{2-2\gamma}).
\]
It follows that
\begin{equation}
\mathbb{E}\left[\left(\sum_{t=1}^{T-1}X_{i,t}\right)^{2}\right]=\mathbb{E}\left[\left(\sum_{t=1}^{T-1}(\Delta z_{i,t}^{(1)}\tilde{\varepsilon}_{i,t+1}-\Delta_{vv})\right)^{2}\right]=O(T)+O(T^{2-2(\theta_{1}\wedge\gamma)}).\label{eq:sum_X}
\end{equation}
By \eqref{eq:sum_z_eps}, \eqref{eq:z_epsT} and \eqref{eq:sum_X},
we conclude
\[
\mathbb{E}\left[\left(\sum_{t=1}^{T-1}[z_{i,t}^{(1)}v_{i,t+1}-\Delta_{vv}]\right)^{2}\right]=O\bigl(T^{1+(\theta_{1}\wedge\gamma)}\bigr)+O\bigl(T^{2[1-(\theta_{1}\wedge\gamma)]}\bigr).
\]
Thus, by independence across $i$ we have
\begin{align}
\mathrm{var}\left(\frac{1}{nT^{1+(\theta_{1}\wedge\gamma)}}\sum_{i=1}^{n}\sum_{t=1}^{T-1}(z_{i,t}^{(1)}v_{i,t+1}-\Delta_{vv})\right) & \leq\frac{1}{nT^{2[1+(\theta_{1}\wedge\gamma)]}}\mathbb{E}\left[\left(\sum_{t=1}^{T-1}[z_{i,t}^{(1)}v_{i,t+1}-\Delta_{vv}]\right)^{2}\right]\nonumber \\
 & =O\left(\frac{1}{nT^{1+(\theta_{1}\wedge\gamma)}}\right)+O\left(\frac{1}{nT^{4(\theta_{1}\wedge\gamma)}}\right).\label{eq:var_zv_center}
\end{align}
Since by \prettyref{assump:innov}\ref{enu:LP}, $|\tilde{g}_{s}|\leq\tilde{C}q_{\nu}^{s}$
for some constant $\tilde{C}>0$ (where $q_{\nu}:=\exp(-C_{g})$),
then
\[
|\mathbb{E}(v_{i,t-m}\tilde{\varepsilon}_{i,t+1})|\lesssim\Biggl|\sum_{j\geq0}g_{j}\tilde{g}_{j+m+1}\Biggr|\lesssim\sum_{j\geq0}q_{\nu}^{j}\cdot q_{\nu}^{j+m+1}\lesssim q_{\nu}^{m+1}.
\]
Let $\rho_{z}^{(1)}:=1+c_{z}/T^{\theta_{1}}$. It then follows that
uniformly for all $t\leq T$,
\[
\left|\mathbb{E}(\zeta_{i,t-1}^{(1)}\tilde{\varepsilon}_{i,t+1})\right|\leq\sum_{k=0}^{t-1}|\rho_{z}^{(1)}|^{k}|\mathbb{E}(v_{i,t-1-k}\tilde{\varepsilon}_{i,t+1})|\lesssim\sum_{k\geq0}|\rho_{z}^{(1)}|^{k}q_{\nu}^{k+2}=O(1).
\]
Likewise, we can show that uniformly for all $t\leq T$,
\begin{align*}
\left|\mathbb{E}(\psi_{i,t-1}^{(1)}\tilde{\varepsilon}_{i,t+1})\right| & \lesssim\sum_{a,k\geq0}|\rho_{z}^{(1)}|^{a}|\rho^{*}|^{k}q_{\nu}^{a+k+2}=O(1)\quad\text{and}\\
\left|\mathbb{E}(\delta_{i,t}\tilde{\varepsilon}_{i,t+1})\right|\lesssim & \sum_{k\geq0}|\rho^{*}|^{k}q_{\nu}^{k+1}=O(1).
\end{align*}
The expectation of $\sum_{t}B_{i,t}$, $\sum_{t}C_{i,t}$ and $\sum_{t}D_{i,t}$
can be bounded as
\begin{align*}
\left|\sum_{t=1}^{T-1}\mathbb{E}(B_{i,t})\right|=\left|\frac{c_{z}}{T^{\theta_{1}}}\sum_{t=1}^{T-1}\mathbb{E}\left[\zeta_{i,t-1}^{(1)}\tilde{\varepsilon}_{i,t+1}\right]\right| & =O(T^{1-\theta_{1}}),\\
\left|\sum_{t=1}^{T-1}\mathbb{E}(C_{i,t})\right|=\left|\frac{c^{*}c_{z}}{T^{\theta_{1}+\gamma}}\sum_{t=1}^{T-1}\mathbb{E}\left[\psi_{i,t-1}^{(1)}\tilde{\varepsilon}_{i,t+1}\right]\right| & =O(T^{1-(\theta+\gamma)}),\\
\left|\sum_{t=1}^{T-1}\mathbb{E}(D_{i,t})\right|=\left|\frac{c^{*}}{T^{\gamma}}\sum_{t=1}^{T-1}\mathbb{E}\left(\delta_{i,t}\tilde{\varepsilon}_{i,t+1}\right)\right| & =O(T^{1-\gamma}).
\end{align*}
It follows that
\begin{equation}
\sum_{t=1}^{T-1}\mathbb{E}\bigl[\Delta z_{i,t}^{(1)}\tilde{\varepsilon}_{i,t+1}-\Delta_{vv}\bigr]=O(T^{1-\theta_{1}})+O(T^{1-\gamma})+O(T^{1-(\theta+\gamma)})=O(T^{1-(\theta_{1}\wedge\gamma)}).\label{eq:E_z_eps}
\end{equation}
By \eqref{eq:z_epsT} and \eqref{eq:E_z_eps} we have
\begin{equation}
\mathbb{E}\left[\frac{1}{nT^{1+(\theta_{1}\wedge\gamma)}}\sum_{i=1}^{n}\sum_{t=1}^{T-1}(z_{i,t}^{(1)}v_{i,t+1}-\Delta_{vv})\right]=O\left(\frac{1}{T^{2(\theta_{1}\wedge\gamma)}}\right).\label{eq:E_zv_center}
\end{equation}
It follows by \eqref{eq:var_zv_center} and \eqref{eq:E_zv_center}
that
\begin{equation}
\frac{1}{nT^{1+(\theta_{1}\wedge\gamma)}}\sum_{i=1}^{n}\sum_{t=1}^{T-1}(z_{i,t}^{(1)}v_{i,t+1}-\Delta_{vv})=O_{p}\left(\frac{1}{\sqrt{nT^{1+(\theta_{1}\wedge\gamma)}}}+\frac{1}{T^{2(\theta_{1}\wedge\gamma)}}\right).\label{eq:zv_center_total}
\end{equation}
The next step is to show that
\[
\hat{\Delta}_{vv}-\Delta_{vv}=O_{p}\left(\frac{G}{\sqrt{nT}}+\frac{G}{T}\right).
\]
Using the fact that $\hat{v}_{i,t}=v_{i,t}-\delta_{\rho}^{\mathrm{WG}}\tilde{x}_{i,t-1}$
where $\delta_{\rho}^{\mathrm{WG}}:=\hat{\rho}^{\mathrm{WG}}-\rho^{*}$,
we can write
\[
\hat{v}_{i,t}\hat{v}_{i,t-h}=v_{i,t}v_{i,t-h}-\delta_{\rho}^{\mathrm{WG}}(\tilde{x}_{i,t-1}v_{i,t-h}+\tilde{x}_{i,t-h}v_{i,t})+(\delta_{\rho}^{\mathrm{WG}})^{2}\tilde{x}_{i,t-1}\tilde{x}_{i,t-h-1}.
\]
We thus have (since the linear process coefficient is exponentially
decaying, the truncation error is exponential $O(q^{G})$ so that
the bandwidth $G$ can be log rate)
\begin{align*}
\hat{\Delta}_{vv}-\Delta_{vv} & =\sum_{h=1}^{G}[\hat{\Gamma}_{vv}(h)-\Gamma_{vv}^{*}(h)]-\sum_{h=G+1}^{\infty}\Gamma_{vv}^{*}(h)\\
 & =\sum_{h=1}^{G}\frac{1}{nT}\sum_{i=1}^{n}\sum_{t=h+1}^{T}(\hat{v}_{i,t}\hat{v}_{i,t-h}-\Gamma_{vv}^{*}(h))+O(q^{G})\\
 & =\frac{1}{nT}\sum_{i=1}^{n}\sum_{t=2}^{T}\sum_{h=1}^{G\wedge(t-1)}[v_{i,t}v_{i,t-h}-\Gamma_{vv}^{*}(h)]\\
 & \qquad-\frac{1}{nT}\sum_{i=1}^{n}\sum_{t=2}^{T}\sum_{h=1}^{G\wedge(t-1)}\delta_{\rho}^{\mathrm{WG}}(\tilde{x}_{i,t-1}v_{i,t-h}+\tilde{x}_{i,t-h}v_{i,t})\\
 & \qquad+\frac{1}{nT}\sum_{i=1}^{n}\sum_{t=2}^{T}\sum_{h=1}^{G\wedge(t-1)}(\delta_{\rho}^{\mathrm{WG}})^{2}\tilde{x}_{i,t-1}\tilde{x}_{i,t-h-1}+O(q^{G}).
\end{align*}
The first term can be shown, by standard argument, to be of $O_{p}(G/\sqrt{nT})$.
For the second term, by the same argument as in the proof of \prettyref{lem:generic_two_ARs}\ref{enu:Esum2}
we can get
\[
\sup_{h\leq T}\mathbb{E}\left[\left(\sum_{t=1}^{T}x_{i,t}v_{i,t-h}\right)^{2}\right]=O(T^{2}),
\]
which, together with $\delta_{\rho}^{\mathrm{WG}}=O_{p}\bigl((nT^{1+\gamma})^{1/2}+T^{-1}\bigr)$,
leads to
\[
\frac{1}{nT}\sum_{i=1}^{n}\sum_{t=2}^{T}\sum_{h=1}^{G\wedge(t-1)}\delta_{\rho}^{\mathrm{WG}}(\tilde{x}_{i,t-1}v_{i,t-h}+\tilde{x}_{i,t-h}v_{i,t})=O_{p}(G\delta_{\rho}^{\mathrm{WG}})=O_{p}\left(\frac{G}{\sqrt{nT^{1+\gamma}}}+\frac{G}{T}\right).
\]
Likewise we have
\[
\frac{1}{nT}\sum_{i=1}^{n}\sum_{t=2}^{T}\sum_{h=1}^{G\wedge(t-1)}(\delta_{\rho}^{\mathrm{WG}})^{2}\tilde{x}_{i,t-1}\tilde{x}_{i,t-h-1}=O_{p}\bigl(GT^{\gamma}(\delta_{\rho}^{\mathrm{WG}})^{2}\bigr)=O_{p}\left(\frac{G}{nT}+\frac{G}{T^{2-\gamma}}\right).
\]
It then follows that
\begin{equation}
\hat{\Delta}_{vv}-\Delta_{vv}=O_{p}\left(\frac{G}{\sqrt{nT}}+\frac{G}{T}\right).\label{eq:delta_vv_rate}
\end{equation}

Hence, by \eqref{eq:zv_center_total} and \eqref{eq:delta_vv_rate}
we obtain
\begin{align*}
U_{2,n,T} & =\frac{1}{nT^{1+(\theta_{1}\wedge\gamma)}}\sum_{i=1}^{n}\sum_{t=1}^{T-1}(z_{i,t}^{(1)}v_{i,t+1}-\Delta_{vv})+\frac{T-1}{T^{1+(\theta_{1}\wedge\gamma)}}(\hat{\Delta}_{vv}-\Delta_{vv})\\
 & =O_{p}\left(\frac{1}{\sqrt{nT^{1+(\theta_{1}\wedge\gamma)}}}+\frac{1}{T^{2(\theta_{1}\wedge\gamma)}}\right)+O_{p}\left(\frac{G}{\sqrt{nT^{1+2(\theta_{1}\wedge\gamma)}}}+\frac{G}{T^{1+(\theta_{1}\wedge\gamma)}}\right).
\end{align*}
We then conclude
\[
U_{n,T}=O_{p}\left(\frac{1}{\sqrt{nT^{1+(\theta_{1}\wedge\gamma)}}}+\frac{1}{T^{2(\theta_{1}\wedge\gamma)}}+\frac{G}{\sqrt{nT^{1+2(\theta_{1}\wedge\gamma)}}}+\frac{G}{T^{1+(\theta_{1}\wedge\gamma)}}\right).
\]
The orders of $U_{n,T}$ and $D_{n,T}$ imply that
\[
\hat{\rho}^{{\rm IVX}}-\rho^{*}=O_{p}\left(\frac{1}{\sqrt{nT^{1+(\theta_{1}\wedge\gamma)}}}+\frac{1}{T^{2(\theta_{1}\wedge\gamma)}}+\frac{G}{\sqrt{nT^{1+2(\theta_{1}\wedge\gamma)}}}+\frac{G}{T^{1+(\theta_{1}\wedge\gamma)}}\right).
\]
\end{proof}
\begin{rem}
\label{rem:verify_IVX_Cond}This remark verifies condition (\ref{eq:cond IVX bias correct})
in the leading asymptotic case $n/T\to c\in(0,\infty)$ when $G\lfloor T^{1/4}\rfloor$
and $\theta_{1}=(1+\theta)/2\in(\theta,1)$. Then
\[
\hat{\rho}^{{\rm IVX}}-\rho^{*}=O_{p}\left(a_{T}\right),\ a_{T}:=\frac{1}{T^{1+(\theta_{1}\wedge\gamma)/2}}+\frac{1}{T^{2(\theta_{1}\wedge\gamma)}}+\frac{1}{T^{3/4+(\theta_{1}\wedge\gamma)}}.
\]
Also, condition (\ref{eq:cond IVX bias correct}) becomes
\[
\hat{\rho}-\rho^{*}=o_{p}\left(b_{T}\right),\ b_{T}:=\frac{1}{\sqrt{T^{\theta+3\gamma+(\theta\vee\gamma)-2}}}.
\]
It thus suffices to show that that $a_{T}/b_{T}\to0$, which will
be proved in the following.
\end{rem}
\begin{proof}[Proof of $a_{T}/b_{T}\to0$ in Remark \ref{rem:verify_IVX_Cond}]
Note that
\[
\dfrac{a_{T}}{b_{T}}=\sqrt{\frac{1}{T^{4+(\theta_{1}\wedge\gamma)-\theta-3\gamma-(\theta\vee\gamma)}}}+\sqrt{\dfrac{1}{T^{2+4(\theta_{1}\wedge\gamma)-\theta-3\gamma-(\theta\vee\gamma)}}}+\sqrt{\dfrac{1}{T^{7/2+2(\theta_{1}\wedge\gamma)-\theta-3\gamma-(\theta\vee\gamma)}}}.
\]

\textbf{Case I}. $\gamma\leq\theta_{1}$. Then $(\theta_{1}\wedge\gamma)=\gamma$
and $(\theta\vee\gamma)\leq\theta_{1}$. Therefore,
\begin{align*}
\dfrac{a_{T}}{b_{T}} & \leq\sqrt{\frac{1}{T^{4+\gamma-\theta-3\gamma-\theta_{1}}}}+\sqrt{\dfrac{1}{T^{2+4\gamma-\theta-3\gamma-\theta_{1}}}}+\sqrt{\dfrac{1}{T^{7/2+2\gamma-\theta-3\gamma-\theta_{1}}}}.\\
 & =\sqrt{\frac{1}{T^{4-\theta-2\gamma-\theta_{1}}}}+\sqrt{\dfrac{1}{T^{2+\gamma-\theta-\theta_{1}}}}+\sqrt{\dfrac{1}{T^{7/2-\theta-\gamma-\theta_{1}}}}\to0,
\end{align*}
given that $\gamma\leq\theta_{1}<1$ and $\theta<\theta_{1}<1$.

\textbf{Case II}. $\gamma>\theta_{1}$. Then $(\theta_{1}\wedge\gamma)=\theta_{1}=(1+\theta)/2$
and $(\theta\vee\gamma)=\gamma$. Therefore,
\begin{align*}
\dfrac{a_{T}}{b_{T}} & \leq\sqrt{\frac{1}{T^{4+\theta_{1}-\theta-3\gamma-\gamma}}}+\sqrt{\dfrac{1}{T^{2+4(\theta_{1}\wedge\gamma)-\theta-3\gamma-\gamma}}}+\sqrt{\dfrac{1}{T^{7/2+2\theta_{1}-\theta-3\gamma-\gamma}}}\\
 & =\sqrt{\frac{1}{T^{4(1-\gamma)+(1+\theta)/2-\theta}}}+\sqrt{\dfrac{1}{T^{2+2(1+\theta)-\theta-3\gamma-\gamma}}}+\sqrt{\dfrac{1}{T^{7/2+1+\theta-\theta-3\gamma-\gamma}}}\\
 & =\sqrt{\frac{1}{T^{4(1-\gamma)+(1-\theta)/2}}}+\sqrt{\dfrac{1}{T^{4(1-\gamma)+\theta}}}+\sqrt{\dfrac{1}{T^{9/2-4\gamma}}}\to0,
\end{align*}
given that $\gamma\leq1$ and $\theta<1$. We complete the proof of
$a_{T}/b_{T}\to0$ in Remark \ref{rem:verify_IVX_Cond}.
\end{proof}
\medskip{}

\begin{delayedproof}{prop:bhativx}
We decompose
\[
\sqrt{nT^{1+(\theta\wedge\gamma)}}\bigl[\hat{\beta}^{\mathrm{IVX}}-\beta^{*}+b_{n,T}^{\mathrm{IVX}}(\{\omega_{ev,h}^{*}\},\rho^{*},\rho_{z})\bigr]=\sqrt{T^{\theta\wedge\gamma}[1-(\rho^{*}\rho_{z})^{2}]}\frac{n^{-1/2}\sum_{i=1}^{n}L_{i,T}}{n^{-1}\sum_{i=1}^{n}(Q_{i,T}-R_{i,T})},
\]
where $Q_{i,T}$, $R_{i,T}$, and $L_{i,T}$ are defined by \eqref{eq:def Q IVX},
\eqref{eq:def R IVX}, and \eqref{eq:def L IVX}, respectively. It
follows by \prettyref{lem:ivxjoint} that, as $(n,T)\to\infty$,
\[
\frac{n^{-1/2}\sum_{i=1}^{n}L_{i,T}}{n^{-1}\sum_{i=1}^{n}(Q_{i,T}-R_{i,T})}\to_{d}\mathcal{N}\left(0,\frac{{\rm S}_{xe}}{\left[\mathbb{E}(Q_{zx}-R_{zx})\right]^{2}}\right),
\]
where $Q_{zx}$, $R_{zx}$, and ${\rm S}_{xe}$ are defined in \prettyref{lem:ivx_marginal}.
In addition,
\begin{equation}
T^{\theta\wedge\gamma}[1-(\rho^{*}\rho_{z})^{2}]\to c^{{\rm IVX}}:=\begin{cases}
-2c_{z}, & \theta<\gamma\leq1,\\
-2(c_{z}+c^{*}), & \gamma=\theta,\\
-2c^{*}, & 0<\gamma<\theta,\\
1-\rho^{*2}, & \gamma=0.
\end{cases}\label{eq:Civx}
\end{equation}
As a result,
\begin{equation}
\sqrt{nT^{1+(\theta\wedge\gamma)}}\bigl[\hat{\beta}^{\mathrm{IVX}}-\beta^{*}+b_{n,T}^{\mathrm{IVX}}(\{\omega_{ev,h}^{*}\},\rho^{*},\rho_{z})\bigr]\to_{d}\mathcal{N}\left(0,\Sigma^{{\rm IVX}}\right),\label{eq:bhat ivx asym dist}
\end{equation}
where
\begin{equation}
\Sigma^{{\rm IVX}}:=\dfrac{c^{{\rm IVX}}{\rm S}_{xe}}{\left[\mathbb{E}(Q_{zx}-R_{zx})\right]^{2}}.\label{eq:SigmaIVX}
\end{equation}

By \eqref{eq:1 over zeta x} and \eqref{eq:E sum zeta sum e}, the
order of $b_{n,T}^{\mathrm{IVX}}(\{\omega_{ev,h}^{*}\},\rho^{*},\rho_{z})$
is given by
\begin{align*}
b_{n,T}^{\mathrm{IVX}}(\{\omega_{ev,h}^{*}\},\rho^{*},\rho_{z}) & =\dfrac{\sum_{h=0}^{T-2}\Psi_{h,T}(\rho^{*},\rho_{z})\omega_{ev,h}^{*}}{n^{-1}T\sum_{i=1}^{n}\sum_{i=1}^{T}\tilde{z}_{i,t}x_{i,t}}\\
 & =O_{p}(T^{\theta+\gamma})\cdot O_{p}\left(\dfrac{1}{T^{2+(\theta\wedge\gamma)}}\right)=O_{p}\left(\dfrac{1}{T^{2-(\theta\vee\gamma)}}\right).
\end{align*}
We have completed the proof of \prettyref{prop:bhativx}.
\end{delayedproof}
\medskip{}

\begin{delayedproof}{thm:DIVX}
We first show $nT^{1+(\theta\wedge\gamma)}\bigl(\hat{\varsigma}^{\mathrm{IVX}}\bigr)^{2}\to_{p}\Sigma^{{\rm IVX}}$.
Define
\[
S_{i,T}^{e}=\frac{1-(\rho^{*}\rho_{z})^{2}}{T}\sum_{t=1}^{T}z_{i,t}^{2}e_{i,t+1}^{2}.
\]
By \eqref{eq:E_Z2_converge} it holds that
\begin{equation}
\frac{1}{n}\sum_{i=1}^{n}S_{i,T}^{e}\to_{p}{\rm S}_{xe}.\label{eq:S_iT_e_plim}
\end{equation}

Let $\Delta_{\beta}=\hat{\beta}^{\mathrm{IVX}}-\beta^{*}$. Note that
$\hat{e}_{i,t+1}=\tilde{y}_{i,t+1}-\hat{\beta}^{\mathrm{IVX}}\tilde{x}_{i,t}=(e_{i,t+1}-\bar{e}_{i})-\Delta_{\beta}\tilde{x}_{i,t}$,
then
\[
\hat{e}_{i,t+1}^{2}-e_{i,t+1}^{2}=-2e_{i,t+1}\bar{e}_{i}+\bar{e}_{i}^{2}-2\Delta_{\beta}\tilde{x}_{i,t}(e_{i,t+1}-\bar{e}_{i})+\Delta_{\beta}^{2}\tilde{x}_{i,t}^{2}.
\]
We now show
\begin{align}
 & \frac{1}{n}\frac{1-(\rho^{*}\rho_{z})^{2}}{T}\sum_{i=1}^{n}\sum_{t=1}^{T}z_{i,t}^{2}(\hat{e}_{i,t+1}^{2}-e_{i,t+1}^{2})\nonumber \\
={} & \frac{1}{n}\frac{1-(\rho^{*}\rho_{z})^{2}}{T}\sum_{i=1}^{n}\sum_{t=1}^{T}z_{i,t}^{2}\Bigl[-2e_{i,t+1}\bar{e}_{i}+\bar{e}_{i}^{2}-2\Delta_{\beta}\tilde{x}_{i,t}(e_{i,t+1}-\bar{e}_{i})+\Delta_{\beta}^{2}\tilde{x}_{i,t}^{2}\Bigr]=o_{p}(1).\label{eq:hat_e_replace_e}
\end{align}
By the Cauchy-Schwarz inequality,
\begin{align}
\ensuremath{} & \left|\frac{1}{n}\frac{1-(\rho^{*}\rho_{z})^{2}}{T}\sum_{i}\sum_{t}z_{i,t}^{2}e_{i,t+1}\bar{e}_{i}\right|\nonumber \\
\leq{} & \left(\frac{1}{n}\frac{1-(\rho^{*}\rho_{z})^{2}}{T}\sum_{i}\sum_{t}z_{i,t}^{2}e_{i,t+1}^{2}\right)^{1/2}\left(\frac{1}{n}\frac{1-(\rho^{*}\rho_{z})^{2}}{T}\sum_{i}\sum_{t}z_{i,t}^{2}\bar{e}_{i}^{2}\right)^{1/2}=o_{p}(1),\label{eq:z2eebar}
\end{align}
where the first factor is $O_{p}(1)$ due to \eqref{eq:S_iT_e_plim}
and the second factor is $o_{p}(1)$ since by the proof of \prettyref{lem:ivxjoint}\ref{enu:S_pto_joint}
we can deduce
\[
\mathbb{E}\left|\frac{1}{n}\frac{1-(\rho^{*}\rho_{z})^{2}}{T}\sum_{i}\sum_{t}z_{i,t}^{2}\bar{e}_{i}^{2}\right|\le\frac{1}{n}\sum_{i=1}^{n}\mathbb{E}|S_{i,T}\bar{e}_{i}^{2}|\leq\{\mathbb{E}(S_{i,T}^{2})\}^{1/2}\{\mathbb{E}(\bar{e}_{i}^{4})\}^{1/2}=O\left(\frac{1}{T}\right).
\]
Using the same argument we can show
\begin{equation}
\frac{1}{n}\frac{1-(\rho^{*}\rho_{z})^{2}}{T}\sum_{i}\sum_{t}z_{i,t}^{2}\bar{e}_{i}^{2}=o_{p}(1).\label{eq:z2_bar_e2}
\end{equation}
For the term containing $\Delta_{\beta}$, by the Cauchy-Schwarz we
have
\begin{align}
 & \left|\Delta_{\beta}\frac{1}{n}\frac{1-(\rho^{*}\rho_{z})^{2}}{T}\sum_{i}\sum_{t}z_{i,t}^{2}\tilde{x}_{i,t}(e_{i,t+1}-\bar{e}_{i})\right|\nonumber \\
\leq{} & |\Delta_{\beta}|\left(\frac{1}{n}\frac{1-(\rho^{*}\rho_{z})^{2}}{T}\sum_{i}\sum_{t}z_{i,t}^{2}\tilde{x}_{i,t}^{2}\right)^{1/2}\left(\frac{1}{n}\frac{1-(\rho^{*}\rho_{z})^{2}}{T}\sum_{i}\sum_{t}z_{i,t}^{2}(e_{i,t+1}-\bar{e}_{i})^{2}\right)^{1/2}=o_{p}(1),\label{eq:delta_long}
\end{align}
where the second factor is $O_{p}(1)$ by \eqref{eq:S_iT_e_plim}
and the first factor is $o_{p}(1)$ by \prettyref{prop:bhativx} and
\prettyref{lem:ivxepct}\ref{enu:zx}:
\[
|\Delta_{\beta}|\left(\frac{1}{n}\frac{1-(\rho^{*}\rho_{z})^{2}}{T}\sum_{i}\sum_{t}z_{i,t}^{2}\tilde{x}_{i,t}^{2}\right)^{1/2}=O_{p}\left(\frac{1}{\sqrt{nT^{1+(\theta\wedge\gamma)}}}+\frac{1}{T^{2-(\theta\vee\gamma)}}\right)\cdot O_{p}(\sqrt{T^{\theta\vee\gamma}})=o_{p}(1).
\]
Similarly,
\begin{equation}
\Delta_{\beta}^{2}\cdot\frac{1}{n}\frac{1-(\rho^{*}\rho_{z})^{2}}{T}\sum_{i}\sum_{t}z_{i,t}^{2}\tilde{x}_{i,t}^{2}=o_{p}(1).\label{eq:delta_short}
\end{equation}
By \eqref{eq:z2eebar}\eqref{eq:z2_bar_e2}\eqref{eq:delta_long}\eqref{eq:delta_short},
we conclude that \eqref{eq:hat_e_replace_e} holds.

Let
\[
A_{n,T}=\frac{1-(\rho^{*}\rho_{z})^{2}}{n}\sum_{i=1}^{n}\bar{z}_{i}^{2}.
\]
By \prettyref{lem:ivxepct}\ref{enu:z4},
\begin{align*}
\mathbb{E}|A_{n,T}| & =|1-(\rho^{*}\rho_{z})^{2}|\cdot\mathbb{E}\left(\bar{z}_{i}^{2}\right)\leq|1-(\rho^{*}\rho_{z})^{2}|\cdot\frac{1}{T^{2}}\left\{ \mathbb{E}\left[\left(\sum_{t=1}^{T}z_{i,t}\right)^{4}\right]\right\} ^{1/2}\\
 & =O\bigl(T^{-(\theta\wedge\gamma)}\bigr)\cdot\frac{1}{T^{2}}\cdot O\bigl(T^{1+\theta+(\theta\wedge\gamma)}\bigr)=O\left(\frac{1}{T^{1-\theta}}\right)\to0,
\end{align*}
which implies $A_{n,T}\to_{p}0.$ By \prettyref{lem:Omega_hat}, $\hat{\omega}_{ee}\to_{p}\omega_{ee}^{*}$.
Together with the fact that $\hat{\lambda}\in[0,1]$, it follows that
\begin{equation}
\hat{\lambda}\hat{\omega}_{ee}A_{n,T}=o_{p}(1).\label{eq:lambda_w_A}
\end{equation}
As a result of \eqref{eq:S_iT_e_plim}, \eqref{eq:hat_e_replace_e}
and \eqref{eq:lambda_w_A}, the numerator of $\hat{\varsigma}^{\mathrm{IVX}}$
satisfy
\begin{align*}
 & \frac{1-(\rho^{*}\rho_{z})^{2}}{nT}\sum_{i=1}^{n}\left[\sum_{t=1}^{T}z_{i,t}^{2}\hat{e}_{i,t+1}^{{\rm 2}}-T\hat{\lambda}\bar{z}_{i}^{2}\hat{\omega}_{ee}\right]\\
={} & \frac{1}{n}\sum_{i=1}^{n}S_{i,T}^{e}-\hat{\lambda}\hat{\omega}_{ee}A_{n,T}+o_{p}(1)\to_{p}{\rm S}_{xe}.
\end{align*}
Thus, as $(n,T)\to\infty$,
\begin{align}
 & nT^{1+(\theta\wedge\gamma)}\bigl(\hat{\varsigma}^{\mathrm{IVX}}\bigr)^{2}\nonumber \\
= & T^{\theta\wedge\gamma}[1-(\rho^{*}\rho_{z})^{2}]\cdot\frac{(nT)^{-1}[1-(\rho^{*}\rho_{z})^{2}]\sum_{i=1}^{n}\bigl[\sum_{t=1}^{T}z_{i,t}^{2}\hat{e}_{i,t+1}^{{\rm 2}}-T\hat{\lambda}\bar{z}_{i}^{2}\hat{\omega}_{ee}\bigr]}{\bigl((nT)^{-1}[1-(\rho^{*}\rho_{z})^{2}]\sum_{i=1}^{n}\sum_{t=1}^{T}\tilde{z}_{i,t}x_{i,t}\bigr)^{2}}\nonumber \\
\to_{p} & c^{{\rm IVX}}\cdot\dfrac{{\rm S}_{xe}}{\left[\mathbb{E}(Q_{zx}-R_{zx})\right]^{2}}=\Sigma^{{\rm IVX}}.\label{eq:rate sigmaIVX}
\end{align}
By \prettyref{lem:ivxerror} and \eqref{eq:rate sigmaIVX} we have
\begin{align}
r_{n,T}^{\mathrm{IVX}}(\hat{\rho}) & :=\dfrac{\hat{b}_{n,T}^{\mathrm{IVX}}(\{\hat{\omega}_{ev,h}\},\hat{\rho},\rho_{z})-b_{n,T}^{\mathrm{IVX}}(\{\omega_{ev,h}^{*}\},\rho^{*},\rho_{z})}{\hat{\varsigma}^{\mathrm{IVX}}}\nonumber \\
 & =O_{p}\Bigl(\sqrt{nT^{1+(\theta\wedge\gamma)}}\Bigr)\cdot O_{p}\left(\dfrac{G}{\sqrt{nT^{5-2(\theta\vee\gamma)}}}+\frac{G}{T^{3-(\theta\vee\gamma)}}+\frac{|\hat{\rho}-\rho^{*}|}{T^{2-(\theta\vee\gamma)-\gamma}}+\frac{q_{\nu}^{G}}{T^{2-(\theta\vee\gamma)}}\right)\nonumber \\
 & =O_{p}\left(\dfrac{G}{\sqrt{T^{4-2(\theta\vee\gamma)-(\theta\wedge\gamma)}}}+\dfrac{\sqrt{n}G}{\sqrt{T^{5-2(\theta\vee\gamma)-(\theta\wedge\gamma)}}}+\dfrac{\sqrt{n}|\hat{\rho}-\rho^{*}|}{\sqrt{T^{3-2(\theta\vee\gamma)-(\theta\wedge\gamma)-2\gamma}}}\right),\label{eq:r IVX rate}
\end{align}
where the last line uses the fact that the exponential rate $q_{\nu}^{G}$
grows faster than any polynomial rate of $n$ and $T$. By \eqref{eq:r IVX rate}
and \prettyref{prop:rho_convergence}, when using $\hat{\rho}^{\text{IVX}}$
for $\rho$ we have
\begin{align*}
r_{n,T}^{\mathrm{IVX}}(\hat{\rho}^{\text{IVX}}) & =O_{p}\left(\dfrac{G}{\sqrt{T^{4-2(\theta\vee\gamma)-(\theta\wedge\gamma)}}}+\dfrac{\sqrt{n}G}{\sqrt{T^{5-2(\theta\vee\gamma)-(\theta\wedge\gamma)}}}\right)\\
 & \quad\,+O_{p}\left(\dfrac{1}{\sqrt{T^{4+(\theta_{1}\wedge\gamma)-(\theta\vee\gamma)-\theta-3\gamma}}}+\dfrac{\sqrt{n}}{\sqrt{T^{3+4(\theta_{1}\wedge\gamma)-2(\theta\vee\gamma)-(\theta\wedge\gamma)-2\gamma}}}\right)\\
 & \quad\,+O_{p}\left(\dfrac{G}{\sqrt{T^{4+2(\theta_{1}\wedge\gamma)-(\theta\vee\gamma)-\theta-3\gamma}}}+\dfrac{\sqrt{n}G}{\sqrt{T^{5+2(\theta_{1}\wedge\gamma)-2(\theta\vee\gamma)-(\theta\wedge\gamma)-2\gamma}}}\right).
\end{align*}
The following shows that $r_{n,T}^{\mathrm{IVX}}(\hat{\rho}^{\text{IVX}})=o_{p}(1)$
under the conditions of Theorem \ref{thm:DIVX}.

\textbf{Case I}. $\gamma\leq\theta_{1}$. Then $(\theta_{1}\wedge\gamma)=\gamma$,
$(\theta\wedge\gamma)\leq\gamma$, and $(\theta\vee\gamma)\leq\theta_{1}=(1+\theta)/2$.
Therefore, when $G=O(T^{1/4})$ and $n=o(T^{2-\theta}),$
\begin{align*}
r_{n,T}^{\mathrm{IVX}}(\hat{\rho}^{\text{IVX}}) & =O_{p}\left(\dfrac{G}{\sqrt{T^{4-2\gamma-\theta_{1}}}}+\dfrac{\sqrt{n}G}{\sqrt{T^{5-1-\theta-\gamma}}}+\dfrac{1}{\sqrt{T^{4+\gamma-\theta_{1}-\theta-3\gamma}}}+\dfrac{\sqrt{n}}{\sqrt{T^{3+4\gamma-1-\theta-\gamma-2\gamma}}}\right)\\
 & \quad\,+O_{p}\left(\dfrac{G}{\sqrt{T^{4+2\gamma-\theta_{1}-\theta-3\gamma}}}+\dfrac{\sqrt{n}G}{\sqrt{T^{5+2\gamma-1-\theta-\gamma-2\gamma}}}\right)\\
 & =O_{p}\left(\dfrac{1}{\sqrt{T^{7/2-2\gamma-\theta_{1}}}}+\dfrac{\sqrt{n}}{\sqrt{T^{7/2-\theta-\gamma}}}+\dfrac{1}{\sqrt{T^{4-2\gamma-\theta_{1}-\theta}}}+\dfrac{\sqrt{n}}{\sqrt{T^{2-\theta}}}\right)\\
 & \quad\,+O_{p}\left(\dfrac{1}{\sqrt{T^{7/2-\gamma-\theta_{1}-\theta}}}+\dfrac{\sqrt{n}}{\sqrt{T^{7/2-\theta-\gamma}}}\right)=o_{p}(1),
\end{align*}
given that $\gamma\leq\theta_{1}<1$ and $\theta<\theta_{1}<1$.

\textbf{Case II}. $\gamma>\theta_{1}$. Then $(\theta_{1}\wedge\gamma)=\theta_{1}=(1+\theta)/2$,
$(\theta\wedge\gamma)=\theta$, $(\theta\vee\gamma)=\gamma$. Therefore,
when $G=O(T^{1/4})$ and $n=o(T^{3/2}),$
\begin{align}
r_{n,T}^{\mathrm{IVX}}(\hat{\rho}^{\text{IVX}}) & =O_{p}\left(\dfrac{G}{\sqrt{T^{4-2\gamma-\theta}}}+\dfrac{\sqrt{n}G}{\sqrt{T^{5-2\gamma-\theta}}}+\dfrac{1}{\sqrt{T^{4+(1+\theta)/2-\gamma-\theta-3\gamma}}}+\dfrac{\sqrt{n}}{\sqrt{T^{3+2(1+\theta)-2\gamma-\theta-2\gamma}}}\right)\nonumber \\
 & \quad\,+O_{p}\left(\dfrac{G}{\sqrt{T^{4+1+\theta-\gamma-\theta-3\gamma}}}+\dfrac{\sqrt{n}G}{\sqrt{T^{5+(1+\theta)-2\gamma-\theta-2\gamma}}}\right)\nonumber \\
 & =O_{p}\left(\dfrac{1}{\sqrt{T^{7/2-2\gamma-\theta}}}+\dfrac{\sqrt{n}}{\sqrt{T^{3/2+(1-\theta)+2(1-\gamma)}}}+\dfrac{1}{\sqrt{T^{11/2-4\gamma-\theta/2}}}+\dfrac{\sqrt{n}}{\sqrt{T^{1+\theta+4(1-\gamma)}}}\right)\nonumber \\
 & \quad\,+O_{p}\left(\dfrac{1}{\sqrt{T^{9/2-4\gamma}}}+\dfrac{\sqrt{n}}{\sqrt{T^{3/2+4(1-\gamma)}}}\right)=o_{p}(1),\label{eq:rIVX Case 2}
\end{align}
given that $\gamma\leq1$ and $\theta<1$.

By \prettyref{prop:bhativx} and \eqref{eq:rate sigmaIVX} we have
\begin{equation}
\frac{\hat{\beta}^{\mathrm{IVX}}-\beta^{*}+b_{n,T}^{\mathrm{IVX}}(\{\omega_{ev,h}^{*}\},\rho^{*},\rho_{z})}{\hat{\varsigma}^{\mathrm{IVX}}}\to_{d}\mathcal{N}(0,1).\label{eq:oracle-t}
\end{equation}
Therefore
\begin{align*}
t^{{\rm DIVX}} & =\frac{\hat{\beta}^{\mathrm{DIVX}}-\beta^{*}}{\hat{\varsigma}^{\mathrm{IVX}}}\\
 & =\frac{\hat{\beta}^{\mathrm{IVX}}-\beta^{*}+b_{n,T}^{\mathrm{IVX}}(\{\omega_{ev,h}^{*}\},\rho^{*},\rho_{z})}{\hat{\varsigma}^{\mathrm{IVX}}}+r_{n,T}^{\mathrm{IVX}}(\hat{\rho}^{\text{IVX}})\to_{d}\mathcal{N}(0,1).
\end{align*}
We complete the proof of Theorem \ref{thm:DIVX}.
\end{delayedproof}
\medskip{}

\begin{proof}[Proof of \prettyref{cor:DIVX_uniform}]
 The polynomial rate $\rho_{T}^{*}=1+c^{*}/T^{\gamma}$ is devised
for simplicity of exposition and categorization. As pointed out in
the third paragraph of \citet[p.~19]{phillips2009econometric}, the
asymptotic results for IVX still hold with a general (convergent)
$\rho_{T}^{*}\to\rho\in(-1,1]$; see also \citet{magdalinosUniformDistributionfreeInference2024}
for a uniform inference using time series IVX.

We proceed with this proof of uniform asymptotic normality by verifying
\citet[p.~504]{andrews2020generic}'s Assumption~B\textsuperscript{*}.
Notice that all earlier proofs in this section depend on the following
two limits.
\end{proof}
\begin{enumerate}[label=(\arabic*)]
\item  $c_{0}^{*}=\lim_{T\to\infty}T(1-\rho_{T}^{*})$ determines whether
$x_{i,t}$ is LUR or MI. Specifically, $|c_{0}^{*}|<\infty$ corresponds
to $\gamma=1$, and $|c_{0}^{*}|=\infty$ corresponds to $\gamma<1$.
\begin{enumerate}
\item $\vartheta_{0}^{*}=\lim_{T\to\infty}T^{\theta}(1-\rho_{T}^{*})$ determines
the relative persistence of $x_{i,t}$ and the IV. Specifically, $\vartheta_{0}^{*}=0$,
$\vartheta_{0}^{*}\in(0,\infty)$, and $\vartheta_{0}^{*}=\infty$
are associated with $\theta<\gamma$, $\theta=\gamma$, and $\theta>\gamma$,
respectively.
\end{enumerate}
\end{enumerate}
\begin{proof}
The above two indices govern the family of distributions to which
the test statistic belong. Following \citet{andrews2020generic}'s
notations in Assumption~B\textsuperscript{*}, we set $\lambda_{T}=\rho_{T}^{*}$
and $h_{T}(\lambda_{T})=(T(1-\rho_{T}^{*}),T^{\theta}(1-\rho_{T}^{*}))$
as in our context.\footnote{In \prettyref{cor:DIVX_uniform}, the parameter space for $\lambda_{T}$
is $\Lambda_{T}=[-1+m_{1}^{*},1+m_{2}^{*}/T]$, where the right endpoint
$1+m_{2}^{*}/T$ allows for locally explosive regressors. Although
\citet{andrews2020generic} specify a fixed parameter space, it is
straightforward to adapt their results to our context where the right
endpoint is a convergent sequence.} Under the conditions of \prettyref{thm:DIVX} where $h_{T}(\lambda_{T})\to h_{0}\in[-m_{2}^{*},\infty]\times[0,\infty]$
and $n/T\to c\in[0,\infty)$, we invoke \citet{andrews2020generic}'s
Corollary~2.1(c) to conclude that for any $\alpha\in[0,1]$ we have
\[
\left|\Pr\left\{ t^{{\rm DIVX}}<\Phi^{-1}\left(\alpha\right)\right\} -\alpha\right|\to0
\]
 uniformly in the specified regime of $\rho_{T}^{*}$.
\end{proof}
\medskip{}

\begin{proof}[Proof of \prettyref{cor:ivx-wg}]
To show \eqref{eq:rho WG rate}, note that the denominator of $\rohatfe$
is the same as that in $\bhatfe$, while the numerator changes from
$\sum_{i=1}^{n}\sum_{t=1}^{T}\tilde{x}_{i,t}e_{i,t+1}$ to $\sum_{i=1}^{n}\sum_{t=1}^{T}\tilde{x}_{i,t}v_{i,t+1}$;
both are cross-products between the regressor $x_{i,t}$ and an m.d.s.
Thus, $\rohatfe$ behaves asymptotically the same way as $\bhatfe$.
By \prettyref{prop:bhatfe} we have $\bhatfe-\beta^{*}=O_{p}\bigl((nT^{1+\gamma})^{-1/2}+T^{-1}\bigr)$,
so that \eqref{eq:rho WG rate} holds for $\rohatfe$.

\eqref{eq:r IVX rate} and \eqref{eq:rho WG rate} yield
\[
r_{n,T}^{\ivx}(\rohatfe)=\dfrac{b_{n,T}^{{\rm IVX}}(\rohatfe)-b_{n,T}^{{\rm IVX}}(\rho^{*})}{\hat{\varsigma}^{{\rm IVX}}}=O_{p}\left(\dfrac{1}{\sqrt{T^{4-(\theta\vee\gamma)-\theta-2\gamma}}}+\sqrt{\dfrac{n}{T^{5-(\theta\vee\gamma)-\theta-3\gamma}}}\right).
\]
As $\joto$ and $n/T^{5-(\theta\vee\gamma)-\theta-3\gamma}\to0$,
it follows that $r_{n,T}^{\ivx}(\rohatfe)\pto0$, and hence by \prettyref{prop:bhativx}
the (infeasible) $t$-statistic using $\rohatfe$ is asymptotically
normal:
\[
\frac{\hat{\beta}^{\,\textup{IVX-WG}}-\beta^{*}}{\hat{\varsigma}^{{\rm IVX}}}=\frac{\bhativx-\beta^{*}+\omega_{ev}^{*}b_{n,T}^{\ivx}(\rho^{*})}{\hat{\varsigma}^{{\rm IVX}}}+\omega_{ev}^{*}r_{n,T}^{\ivx}(\rohatfe)\dto\mathcal{N}(0,1).\qedhere
\]
\end{proof}

\subsection{Proofs for WG Estimator}
\begin{proof}[Proof of \prettyref{prop:bhatfe}]
 By definition of $\bhatfe$ and $b_{n,T}^{\fe}(\rho^{*})$, we have
\[
\sqrt{nT^{1+\gamma}}\bigl[\bhatfe-\beta^{*}+\omega_{ev}^{*}b_{n,T}^{\fe}(\rho^{*})\bigr]=\frac{n^{-1/2}\sum_{i=1}^{n}L_{i,T}^{\fe}}{n^{-1}\sum_{i=1}^{n}\left(Q_{i,T}^{\fe}-R_{i,T}^{\fe}\right)},
\]
where $Q_{i,T}^{\fe}$, $R_{i,T}^{\fe}$ and $L_{i,T}^{\fe}$ are
respectively defined in \eqref{eq:def_Q}, \eqref{eq:def_R}, and
\eqref{eq:def_L}. By \prettyref{lem:joint}(iv) and \ref{lem:lurjoint}(iii),
the numerator
\begin{equation}
\frac{1}{\sqrt{n}}\sum_{i=1}^{n}L_{i,T}^{{\rm WG}}\dto\mathcal{N}(0,\Sigma_{\xtd e})\qquad\text{as }\joto,\label{eq:lim L general}
\end{equation}
where
\[
\Sigma_{\xtd e}=\begin{cases}
\omega_{ee}^{*}\omega_{vv}^{*}/(1-\rho^{*2}), & \text{if \ensuremath{\gamma=0} (stationary)},\\
\omega_{ee}^{*}\omega_{vv}^{*}/(-2c^{*}), & \text{if \ensuremath{0<\gamma<1} (MI)},\\
\omega_{ee}^{*}\mathrm{var}\left[\int_{0}^{1}\bigl(J_{2,c^{*}}(r)-\int_{0}^{1}J_{2,c^{*}}(\tau)\,d\tau\bigr)\,dB_{1}(r)\right], & \text{if \ensuremath{\gamma=1} (LUR)}.
\end{cases}
\]
In addition, \prettyref{lem:joint}(i)(ii) and \ref{lem:lurjoint}(i)(ii)
imply
\begin{equation}
\frac{1}{n}\sum_{i=1}^{n}\left(Q_{i,T}^{{\rm WG}}-R_{i,T}^{{\rm WG}}\right)\pto\lim_{T\to\infty}\mathbb{E}\left(Q_{i,T}^{{\rm WG}}-R_{i,T}^{{\rm WG}}\right)=Q_{\xtd\xtd}\qquad\text{as }\joto,\label{eq:limit Q m R}
\end{equation}
where
\[
Q_{\xtd\xtd}=\begin{cases}
\omega_{22}^{*}/(1-\rho^{*2}), & \text{if \ensuremath{\gamma=0} (stationary)},\\
\omega_{22}^{*}/(-2c^{*}), & \text{if \ensuremath{0<\gamma<1} (MI)},\\
\mathbb{E}\left[\int_{0}^{1}\bigl(J_{2,c^{*}}(r)-\int_{0}^{1}J_{2,c^{*}}(\tau)\,d\tau\bigr)^{2}\,dr\right], & \text{if \ensuremath{\gamma=1} (LUR)}.
\end{cases}
\]
Then by the Slutsky's theorem:
\begin{equation}
\sqrt{nT^{1+\gamma}}\bigl[\bhatfe-\beta^{*}+b_{n,T}^{\fe}(\rho^{*},\omega_{ev}^{*})\bigr]\dto\mathcal{N}(0,\Sigma^{\fe}\text{)}\qquad\text{as }\joto,\label{eq:bhatfe dto}
\end{equation}
where
\begin{equation}
\Sigma^{\fe}:=\Sigma_{\xtd e}/Q_{\xtd\xtd}^{2}.\label{eq:fe_var}
\end{equation}

Next we show $nT^{1+\gamma}\left(\varsigma^{{\rm WG}}\right)^{2}\pto\Sigma^{\fe}.$
By \eqref{eq:bhatfe dto} it follows that
\[
\left[\bhatfe-\beta^{*}+\omega_{ev}^{*}\cdot b_{n,T}^{\fe}(\rho^{*})\right]\Big/\varsigma^{\mathrm{WG}}\dto\ncal(0,1).
\]
Note that by the definition in \eqref{eq:fe_se-1}
\begin{equation}
nT^{1+\gamma}\left(\varsigma^{{\rm WG}}\right)^{2}=\frac{T^{-(1+\gamma)}\mathrm{var}\bigl(\sum_{t=1}^{T}\tilde{x}_{i,t}e_{i,t+1}\bigr)}{\bigl[(nT)^{-(1+\gamma)}\sum_{i=1}^{n}\sum_{t=1}^{T}\td{x}_{i,t}^{2}\bigr]^{2}}=\frac{\mathrm{var}\bigl(L_{i,T}^{{\rm WG}}\bigr)}{\bigl[(nT)^{-(1+\gamma)}\sum_{i=1}^{n}\sum_{t=1}^{T}\td{x}_{i,t}^{2}\bigr]^{2}}.\label{eq:se sq WG}
\end{equation}
By \eqref{eq:limit Q m R} and the continuous mapping theorem, the
denominator of the right hand side of \eqref{eq:se sq WG}
\[
\left(\frac{1}{nT^{1+\gamma}}\sum_{i=1}^{n}\sum_{t=1}^{T}\td{x}_{i,t}^{2}\right)^{2}\pto Q_{\xtd\xtd}^{2}.
\]
By the limiting distribution in \eqref{eq:limit dist L} and the uniform
integrability of $\bigl(L_{i,T}^{{\rm WG}}\bigr)^{2}$ deduced in
the proof of \prettyref{lem:lurjoint}(iii), we have by \eqref{eq:limit EXT}
in \prettyref{lem:u.i.} that, as $T\to\infty$,
\begin{align*}
\mathrm{var}\bigl(L_{i,T}^{{\rm WG}}\bigr)=\epct\left[(L_{i,T}^{{\rm WG}})^{2}\right] & \to\epct\left[\left\{ \int_{0}^{1}\biggl(J_{2,c^{*}}(r)-\int_{0}^{1}J_{2,c^{*}}(\tau)\,d\tau\biggr)\,dB_{1}(r)\right\} ^{2}\right]-\mathbb{E}\bigl[(H_{i,T}^{{\rm WG}})^{2}\bigr]\\
 & ={\rm var}\left[\int_{0}^{1}\biggl(J_{2,c^{*}}(r)-\int_{0}^{1}J_{2,c^{*}}(\tau)\,d\tau\biggr)\,dB_{1}(r)\right]=\Sigma_{\xtd e}.
\end{align*}
Then
\begin{equation}
nT^{1+\gamma}\left(\varsigma^{{\rm WG}}\right)^{2}\pto\Sigma_{\xtd e}/Q_{\xtd\xtd}^{2}=\Sigma^{\fe}\qquad\text{as }\joto.\label{eq:sigmaWG lim}
\end{equation}

Finally, note that
\[
b_{n,T}^{\fe}(\rho^{*})=\frac{T^{-(1+\gamma)/2}\mathbb{E}(H_{i,T}^{\fe})}{n^{-1}\sum_{i=1}^{n}(Q_{i,T}^{\fe}-R_{i,T}^{\fe})},
\]
where $H_{i,T}^{\fe}$ is defined in \eqref{eq:def_H}. By \prettyref{lem:epct}\ref{enu:E_sum_x_sum_e_2}
and \eqref{eq:limit Q m R},
\[
b_{n,T}^{\fe}(\rho^{*})=O\bigl(T^{-\frac{1}{2}(1+\gamma)}\bigr)\cdot O\bigl(T^{-\frac{1}{2}(1-\gamma)}\bigr)=O(T^{-1}).\qedhere
\]
\end{proof}
\begin{proof}[Proof of \prettyref{prop:FE-fail}]
By \prettyref{lem:esterror}, \eqref{eq:rho WG rate} in \prettyref{cor:ivx-wg},
and \prettyref{prop:rho_convergence}, we have
\begin{align*}
r_{n,T}^{\mathrm{WG}}(\rohatfe) & =O_{p}\left(\sqrt{\frac{n}{T^{1-3\gamma}}}|\rohatfe-\rho^{*}|\right)=O_{p}\left(\dfrac{1}{T^{1-\gamma}}+\sqrt{\dfrac{n}{T^{3(1-\gamma)}}}\right),\\
r_{n,T}^{\mathrm{WG}}(\hat{\rho}^{\mathrm{IVX}}) & =O_{p}\left(\sqrt{\frac{n}{T^{1-3\gamma}}}|\hat{\rho}^{\mathrm{IVX}}-\rho^{*}|\right)\\
 & =O_{p}\left(\dfrac{1}{\sqrt{T^{2-3\gamma+(\theta_{1}\wedge\gamma)}}}+\sqrt{\dfrac{n}{T^{1-3\gamma+4(\theta_{1}\wedge\gamma)}}}+\frac{G}{\sqrt{T^{2-3\gamma+2(\theta_{1}\wedge\gamma)}}}+\frac{\sqrt{n}G}{\sqrt{T^{3-3\gamma+2(\theta_{1}\wedge\gamma)}}}\right).
\end{align*}
Therefore, if $n/T^{3(1-\gamma)}\to0$, then $r_{n,T}^{\mathrm{WG}}(\rohatfe)\to_{p}0$;
if on the other hand, $\theta_{1}>3/4$, $n/T\to c\in[0,\infty)$
and $1/T^{1-\gamma}\to0$, then $r_{n,T}^{\mathrm{WG}}(\hat{\rho}^{\mathrm{IVX}})\to_{p}0$.
By \prettyref{prop:bhatfe} we conclude that
\[
\frac{\hat{\beta}^{\,\textup{WG-WG}}-\beta^{*}}{\varsigma^{\mathrm{WG}}}=\frac{\bhatfe-\beta^{*}+\omega_{ev}^{*}b_{n,T}^{\mathrm{WG}}(\rho^{*})}{\varsigma^{\mathrm{WG}}}+\omega_{ev}^{*}r_{n,T}^{\mathrm{WG}}(\rohatfe)\to_{d}\mathcal{N}(0,1)
\]
as $n/T^{3(1-\gamma)}\to0$, and
\[
\frac{\hat{\beta}^{\,\textup{WG-IVX}}-\beta^{*}}{\varsigma^{\mathrm{WG}}}=\frac{\bhatfe-\beta^{*}+\omega_{ev}^{*}b_{n,T}^{\mathrm{WG}}(\rho^{*})}{\varsigma^{\mathrm{WG}}}+\omega_{ev}^{*}r_{n,T}^{\mathrm{WG}}(\hat{\rho}^{\mathrm{IVX}})\to_{d}\mathcal{N}(0,1)
\]
as $n/T\to c\in[0,\infty)$ and $1/T^{1-\gamma}\to0$ provided $\theta_{1}>3/4$.
\end{proof}

\section{Proofs of Extensions of Theory \label{sec:Proofs}}

\subsection{Proofs for Multivariate Regression and Local Projection\label{subsec:Proofs-for-mult}}

\begin{delayedproof}{prop:ivx_multiple_bias}
This directly follows by \eqref{eq:E sum zeta sum e}.
\end{delayedproof}
\begin{delayedproof}{prop:ivx_variance_multiple}
 The proof is essentially the same as the \proofref{prop:bhativx}.
\end{delayedproof}
\begin{delayedproof}{prop:feasible_ivx_normal}
 The argument used in the \proofref{thm:DIVX} applies here.
\end{delayedproof}
\begin{proof}[Proof of \prettyref{thm:wald-chi2}]
Without loss of generality, assume that $\bm{A}$ is in reduced row
echelon form (since otherwise for the linear restriction $\bm{A}\bm{\beta}^{*}=\bm{q}$
we can left-multiply both sides elementary matrix to convert $\bm{A}$
into educed row echelon form) and he diagonal entries of $\bm{D}_{T}$
are arranged in ascending order. Let $\bm{L}_{T}$ be the $m\times m$
principal submatrix of $\bm{D}_{T}^{1/2}$ that corresponds to the
$m$ pivot columns of $\bm{A}$. For example, if $\bm{A}=[0,1]$,
then $L_{T}$ is the second column of $\bm{D}_{T}^{1/2}$. Consider
$\bm{B}_{T}:=\bm{L}_{T}\bm{A}\bm{D}_{T}^{-1/2}$. The $i$-th row
of $\bm{A}$ is scaled by $L_{T,i}$ where $L_{T,i}$ denotes of $i$-th
diagonal entry of $\bm{A}$ and the $j$-th column of $\bm{A}$ is
scaled by $D_{T,j}^{-1/2}$. Thus, for each row, the pivot entry is
unscaled while the others are either zero of has zero as limit so
that $\bm{B}_{T}$ is convergent and its limit has full row rank;
that is,
\begin{equation}
\lim_{T\to\infty}\underset{(m\times m)}{\bm{L}_{T}}\underset{(m\times k)}{\bm{A}}\underset{(k\times k)}{\bm{D}_{T}^{-1/2}}=\underset{(m\times k)}{\bm{B}},\label{eq:GAD}
\end{equation}
where $\bm{B}$ has full row rank. Then, under $\mathbb{H}_{0}$,
by \eqref{eq:IVXJ_normal} we have, as $(n,T)\to\infty$ with $n/T\to c\in[0,\infty)$,
\begin{align}
\sqrt{n}\bm{L}_{T}\bigl(\bm{A}\hat{\bm{\beta}}^{\mathrm{DIVX}}-\bm{q}\bigr) & =\bm{L}_{T}\bm{A}\bm{D}_{T}^{-1/2}(n\boldsymbol{D}_{T})^{1/2}\bigl(\hat{\bm{\beta}}^{\mathrm{DIVX}}-\bm{\beta}^{*}\bigr)\nonumber \\
 & \to_{d}\mathcal{N}\bigl(\boldsymbol{0}_{k},\bm{B}\bm{\Sigma}^{\mathrm{IVX}}\bm{B}^{\prime}\bigr).\label{eq:GAbeta}
\end{align}
By \ref{eq:DIVX_Theta_hat_converge}, \eqref{eq:GAD}, and \eqref{eq:GAbeta},
it follows that, as $(n,T)\to\infty$ with $n/T\to c\in[0,\infty)$,
\begin{align*}
 & \text{Wald}^{{\rm DIVX}}=\bigl(\bm{A}\hat{\bm{\beta}}^{\mathrm{DIVX}}-\bm{q}\bigr)'\bigl(\bm{A}\hat{\bm{\Theta}}^{\mathrm{DIVX}}\bm{A}'\bigr)^{-1}\bigl(\bm{A}\hat{\bm{\beta}}^{\mathrm{DIVX}}-\bm{q}\bigr)\\
={} & \left\Vert \left[\bm{L}_{T}\bm{A}\bm{D}_{T}^{-1/2}\left((n\boldsymbol{D}_{T})^{1/2}\hat{\bm{\Theta}}^{\mathrm{DIVX}}(n\boldsymbol{D}_{T})^{1/2}\right)\bm{D}_{T}^{-1/2}\bm{A}^{\prime}\bm{L}_{T}^{\prime}\right]^{-1/2}\left[\sqrt{n}\bm{L}_{T}\bigl(\bm{A}\hat{\bm{\beta}}^{\mathrm{DIVX}}-\bm{q}\bigr)\right]\right\Vert \\
\to_{d}{} & \chi^{2}(m).
\end{align*}
This justifies our DIVX estimator in the hypothesis testing for panel
predictive regression allowing for multivariate regressors.
\end{proof}
\begin{delayedproof}{prop:ivx_bias_h}
Let $\bm{v}_{i,0}:=\bm{x}_{i,0}$. Note that by construction, $\bm{z}_{i,t}$
can be written as
\begin{align*}
\boldsymbol{z}_{i,t} & =\sum_{j=1}^{t}\rho_{z}^{t-j}(\boldsymbol{x}_{i,j}-\boldsymbol{x}_{i,j-1})=\sum_{j=1}^{t}\rho_{z}^{t-j}[(\boldsymbol{R}^{*}-\boldsymbol{I}_{k})\boldsymbol{x}_{i,j-1}+\boldsymbol{v}_{i,j}]\\
 & =\sum_{j=1}^{t}\rho_{z}^{t-j}\left[(\boldsymbol{R}^{*}-\boldsymbol{I}_{k})\sum_{k=0}^{j-1}\boldsymbol{R}^{*j-1-k}\boldsymbol{v}_{i,k}+\boldsymbol{v}_{i,j}\right]\\
 & =\sum_{k=0}^{t-1}\Biggl[\sum_{j=k+1}^{t}(\boldsymbol{R}^{*}-\boldsymbol{I}_{k})\rho_{z}^{t-j}\boldsymbol{R}^{*j-1-k}\Biggr]\boldsymbol{v}_{i,k}+\sum_{j=1}^{t}\rho_{z}^{t-j}\boldsymbol{v}_{i,j}.
\end{align*}
It follows that
\begin{align*}
\sum_{t=1}^{T_{h}}\boldsymbol{z}_{i,t} & =\sum_{t=1}^{T_{h}}\left\{ \sum_{k=0}^{t-1}\Biggl[\sum_{j=k+1}^{t}(\boldsymbol{R}^{*}-\boldsymbol{I}_{k})\rho_{z}^{t-j}\boldsymbol{R}^{*j-1-k}\Biggr]\boldsymbol{v}_{i,k}+\sum_{j=1}^{t}\rho_{z}^{t-j}\boldsymbol{v}_{i,j}\right\} \\
 & =\sum_{k=0}^{T_{h}-1}\Biggl[\sum_{t=k+1}^{T_{h}}\sum_{j=k+1}^{t}(\boldsymbol{R}^{*}-\boldsymbol{I}_{k})\rho_{z}^{t-j}\boldsymbol{R}^{*j-1-k}\Biggr]\boldsymbol{v}_{i,k}+\sum_{j=1}^{T_{h}}\Biggl(\sum_{t=j}^{T_{h}}\rho_{z}^{t-j}\Biggr)\boldsymbol{v}_{i,j}.
\end{align*}
Since $\mathbb{E}(\boldsymbol{v}_{i,k}e_{i,s+h})=\boldsymbol{\omega}_{ev}^{*}$
only if $k=s+h$ and equals 0 otherwise, following the same argument
in \eqref{eq:E sum zeta sum e}, we deduce that
\begin{equation}
\mathbb{E}\left(\sum_{t=1}^{T_{h}}\boldsymbol{z}_{i,t}\sum_{s=1}^{T_{h}}e_{i,s+h}\right)=\sum_{t=h+1}^{T_{h}}\sum_{j=h+1}^{t}\rho_{z}^{t-j}\bm{R}^{*j-h-1}\bm{\omega}_{ev}^{*}.\label{eq:E zeta e 1}
\end{equation}
Likewise, since $\mathbb{E}(\boldsymbol{v}_{i,k}\bm{v}_{i,s+\tau}^{\prime})=\bm{\Omega}_{vv}^{*}$
only if $k=s+\tau$, we have for $\tau=1,\dots,h-1$ that
\begin{equation}
\mathbb{E}\left(\sum_{t=1}^{T_{h}}\boldsymbol{z}_{i,t}\sum_{s=1}^{T_{h}}\boldsymbol{v}_{i,s+\tau}^{\prime}\right)=\sum_{t=\tau+1}^{T_{h}}\sum_{j=\tau+1}^{t}\rho_{z}^{t-j}\bm{R}^{*j-h-1}\bm{\Omega}_{vv}^{*}.\label{eq:E zeta v vec}
\end{equation}
By \eqref{eq:error_h}, \eqref{eq:E zeta e 1}, and \eqref{eq:E zeta v vec},
we have
\begin{align*}
 & \mathbb{E}\left(\sum_{t=1}^{T_{h}}\boldsymbol{z}_{i,t}\sum_{s=1}^{T_{h}}e_{i,s+h}^{(h)}\right)=\mathbb{E}\left(\sum_{t=1}^{T_{h}}\boldsymbol{z}_{i,t}\sum_{s=1}^{T_{h}}e_{i,s+h}\right)+\sum_{\tau=1}^{h-1}\mathbb{E}\left(\sum_{t=1}^{T_{h}}\boldsymbol{z}_{i,t}\sum_{s=1}^{T_{h}}\boldsymbol{v}_{i,s+\tau}^{\prime}\bm{R}^{*h-1-\tau}\boldsymbol{\beta}^{*}\right)\\
 & \qquad=\sum_{t=h+1}^{T_{h}}\sum_{j=h+1}^{t}\rho_{z}^{t-j}\bm{R}^{*j-h-1}\bm{\omega}_{ev}^{*}+\sum_{\tau=1}^{h-1}\sum_{t=\tau+1}^{T_{h}}\sum_{j=\tau+1}^{t}\rho_{z}^{t-j}\bm{R}^{*j-h-1}\bm{\Omega}_{vv}^{*}\bm{R}^{*h-1-\tau}\boldsymbol{\beta}^{*}\\
 & \qquad=\dfrac{T_{h}}{n}\bm{\xi}_{n,T}(\boldsymbol{R}^{*},\boldsymbol{\omega}_{ev}^{*},\boldsymbol{\Omega}_{vv}^{*},\boldsymbol{\beta}^{*}).
\end{align*}
This completes the proof of \prettyref{prop:ivx_bias_h}.
\end{delayedproof}
\begin{proof}[Proof of \prettyref{prop:ivx_variance_h}]
For convenience, define \textbf{
\begin{align*}
\boldsymbol{Z}_{i,T} & :=\boldsymbol{D}_{T}^{-1/2}\sum_{t=1}^{T_{h}}\bm{z}_{i,t}e_{i,t+h}^{(h)},\ \ \ \boldsymbol{H}_{i,T}:=\boldsymbol{D}_{T}^{-1/2}T_{h}^{-1}\sum_{t=1}^{T_{h}}\bm{z}_{i,t}\sum_{t=1}^{T_{h}}e_{i,t+h}^{(h)}.
\end{align*}
}Then we have
\[
(n\boldsymbol{D}_{T})^{-1/2}\sum_{i=1}^{n}\sum_{t=1}^{T_{h}}\tilde{\bm{z}}_{i,t}e_{i,t+h}^{(h)}=\frac{1}{\sqrt{n}}\sum_{i=1}^{n}(\boldsymbol{Z}_{i,T}-\boldsymbol{H}_{i,T}).
\]
By the independence across $i$, we only need to show
\[
\mathrm{var}(\boldsymbol{Z}_{i,T}-\boldsymbol{H}_{i,T})-\boldsymbol{D}_{T}^{-1/2}\bm{\Sigma}_{T}^{(h)}\boldsymbol{D}_{T}^{-1/2}\to\bm{0}_{k\times k}\qquad\text{as }T\to\infty.
\]

This will follow as soon as we show: (i) $\mathbb{E}\bigl(\|\boldsymbol{H}_{i,T}\|^{2}\bigr)\to0$;
(ii) $\bigl\|\boldsymbol{D}_{T}^{-1/2}\boldsymbol{\Sigma}_{T}^{(h)}\boldsymbol{D}_{T}^{-1/2}\bigr\|=O(1)$;
and (iii) ${\rm var}(\boldsymbol{Z}_{i,T})-\boldsymbol{D}_{T}^{-1/2}\boldsymbol{\Sigma}_{T}^{(h)}\boldsymbol{D}_{T}^{-1/2}\to\bm{0}_{k\times k}$.
Note that (i) implies\footnote{We use the following fact about the Frobenius norm. For any two random
vectors $\bm{a}$ and $\bm{b}$ of the same length, by the Jensen's
inequality $\|\mathbb{E}(\bm{a}\bm{b}^{\prime})\|\leq\mathbb{E}(\|\bm{a}\bm{b}^{\prime}\|)=\mathbb{E}(\|\bm{a}\|\cdot\|\bm{b}\|)$
because $\|\bm{a}\bm{b}^{\prime}\|=\sqrt{\mathrm{tr}(\bm{b}\bm{a}^{\prime}\bm{a}\bm{b}^{\prime})}=\sqrt{(\bm{a}^{\prime}\bm{a})\cdot(\bm{b}^{\prime}\bm{b})}=\|\bm{a}\|\cdot\|\bm{b}\|$.}
\[
\left\Vert {\rm var}\left(\boldsymbol{H}_{i,T}\right)\right\Vert \leq\mathbb{E}\bigl(\left\Vert \boldsymbol{H}_{i,T}\right\Vert ^{2}\bigr)+\left\Vert \mathbb{E}\left(\boldsymbol{H}_{i,T}\right)\right\Vert ^{2}\leq\mathbb{E}\bigl(\left\Vert \boldsymbol{H}_{i,T}\right\Vert ^{2}\bigr)\to0.
\]
In addition, (ii) and (iii) imply $\mathbb{E}(Z_{j,i,T}^{2})=O(1)$
for any of the entry of $\bm{Z}_{i,T}$, and thus $\mathbb{E}\bigl(\left\Vert \boldsymbol{Z}_{i,T}\right\Vert ^{2}\bigr)=\sum_{j=1}^{k}\mathbb{E}(Z_{j,i,T}^{2})=O(1).$
It then follows that
\begin{align*}
 & \ \left\Vert {\rm var}(\boldsymbol{Z}_{i,T}-\boldsymbol{H}_{i,T})-{\rm var}(\boldsymbol{Z}_{i,T})\right\Vert \\
= & \ \left\Vert {\rm var}\left(\boldsymbol{H}_{i,T}\right)-\mathbb{E}\left(\boldsymbol{H}_{i,T}\boldsymbol{Z}_{i,T}^{\prime}\right)-\mathbb{E}\left(\boldsymbol{Z}_{i,T}\boldsymbol{H}_{i,T}^{\prime}\right)\right\Vert \leq\left\Vert {\rm var}\left(\boldsymbol{H}_{i,T}\right)\right\Vert +2\left\Vert \mathbb{E}\left(\boldsymbol{H}_{i,T}\boldsymbol{Z}_{i,T}^{\prime}\right)\right\Vert \\
\leq & \ \left\Vert {\rm var}\left(\boldsymbol{H}_{i,T}\right)\right\Vert +2\mathbb{E}\left(\left\Vert \boldsymbol{H}_{i,T}\right\Vert \left\Vert \boldsymbol{Z}_{i,T}\right\Vert \right)\leq\left\Vert {\rm var}\left(\boldsymbol{H}_{i,T}\right)\right\Vert +2\sqrt{\mathbb{E}\bigl(\left\Vert \boldsymbol{H}_{i,T}\right\Vert ^{2}\bigr)\cdot\mathbb{E}\bigl(\left\Vert \boldsymbol{Z}_{i,T}\right\Vert ^{2}\bigr)}\\
\to & \ 0\qquad\text{as }T\to\infty.
\end{align*}
This together with (iii) implies $\bigl\|\mathrm{var}(\boldsymbol{Z}_{i,T}-\boldsymbol{H}_{i,T})-\boldsymbol{D}_{T}^{-1/2}\bm{\Sigma}_{T}^{(h)}\boldsymbol{D}_{T}^{-1/2}\bigr\|\to0$
as $T\to\infty.$ \prettyref{prop:ivx_variance_h} is thus established.

Now we elaborate the proof of each step. \textbf{Step I.} Showing
$\mathbb{E}\bigl(\|\boldsymbol{H}_{i,T}\|^{2}\bigr)\to0$. By the
same argument in \eqref{eq: var H o(1)} for any $j$-th entry of
$\boldsymbol{H}_{i,T}$, we have $\mathbb{E}(H_{j,i,t}^{2})=o(1)$.
Thus $\mathbb{E}\bigl(\|\boldsymbol{H}_{i,T}\|^{2}\bigr)=o(1)$.

\textbf{Step II. }Showing $\bigl\|\boldsymbol{D}_{T}^{-1/2}\boldsymbol{\Sigma}_{T}^{(h)}\boldsymbol{D}_{T}^{-1/2}\bigr\|=O(1)$.
Since $\sup_{\ell}|\Gamma_{ee}^{(h)}(\ell)|$ and $\sup_{\ell}\|\bm{R}^{*\ell}\|$
are all bounded, it suffices to show $\bigl\|\boldsymbol{D}_{T}^{-1/2}\sum_{t=1}^{T_{h}}\mathbb{E}(\bm{z}_{i,t}\bm{z}_{i,t}^{\prime})\boldsymbol{D}_{T}^{-1/2}\bigr\|=O(1)$.
By the same argument used in the proof of \prettyref{lem:ivxjoint}\ref{enu:S_pto_joint},
for the $j$-th entry of $\bm{z}_{i,t}$ we have $\sum_{t=1}^{T_{h}}\mathbb{E}(z_{j,i,t}^{2})=O\bigl(T^{1+(\theta\wedge\gamma_{j})}\bigr),$
which further implies that for $z_{j,i,t}$ and $z_{m,i,t}$:
\begin{align*}
\sum_{t=1}^{T_{h}}\mathbb{E}(z_{j,i,t}z_{m,i,t}) & \leq\mathbb{E}\left[\left(\sum_{t=1}^{T_{h}}z_{j,i,t}^{2}\right)^{1/2}\left(\sum_{t=1}^{T_{h}}z_{m,i,t}^{2}\right)^{1/2}\right]\\
 & \leq\left[\mathbb{E}\left(\sum_{t=1}^{T_{h}}z_{j,i,t}^{2}\right)\cdot\mathbb{E}\left(\sum_{t=1}^{T_{h}}z_{m,i,t}^{2}\right)\right]^{1/2}=O\bigl(T^{\frac{1}{2}[1+(\theta\wedge\gamma_{j})+(\theta\wedge\gamma_{m})]}\bigr).
\end{align*}
Thus we have $\bigl\|\boldsymbol{D}_{T}^{-1/2}\sum_{t=1}^{T_{h}}\mathbb{E}(\bm{z}_{i,t}\bm{z}_{i,t}^{\prime})\boldsymbol{D}_{T}^{-1/2}\bigr\|=O(1)$.

\textbf{Step III.} Showing ${\rm var}\left(\boldsymbol{Z}_{i,T}\right)-\boldsymbol{D}_{T}^{-1/2}\boldsymbol{\Sigma}_{T}^{(h)}\boldsymbol{D}_{T}^{-1/2}\to\bm{0}_{k\times k}$.
Since $e_{i,t+h+\ell}^{(h)}$ is $\mathcal{F}_{t+\ell+1}$-measurable,
for $\ell\geq h$ we have
\[
\mathbb{E}\left(\bm{z}_{i,t}e_{i,t+h}^{(h)}\bm{z}_{i,t+\ell}e_{i,t+h+\ell}^{(h)}\right)=\mathbb{E}\left[\bm{z}_{i,t}e_{i,t+h}^{(h)}\bm{z}_{i,t+\ell}\mathbb{E}_{t+\ell}\bigl(e_{i,t+h+\ell}^{(h)}\bigr)\right]=0,
\]
and for $0\leq\ell<h$:
\[
\mathbb{E}\left(\bm{z}_{i,t}\bm{z}_{i,t+\ell}^{\prime}e_{i,t+h}^{(h)}e_{i,t+h+\ell}^{(h)}\right)=\mathbb{E}\left[\bm{z}_{i,t}\bm{z}_{i,t+\ell}^{\prime}\mathbb{E}_{t+\ell}\bigl(e_{i,t+h}^{(h)}e_{i,t+h+\ell}^{(h)}\bigr)\right]=\Gamma_{ee}^{(h)}(\ell)\cdot\mathbb{E}\left(\bm{z}_{i,t}\bm{z}_{i,t+\ell}^{\prime}\right).
\]
For the symmetric case where $\ell$ is negative, we can also get
for $\ell\leq-h$,
\[
\mathbb{E}\left(\bm{z}_{i,t}e_{i,t+h}^{(h)}\bm{z}_{i,t+\ell}e_{i,t+h+\ell}^{(h)}\right)=\mathbb{E}\left[\bm{z}_{i,t}e_{i,t+h+\ell}^{(h)}\bm{z}_{i,t+\ell}\mathbb{E}_{t}\bigl(e_{i,t+h}^{(h)}\bigr)\right]=0,
\]
and for $-h<\ell\leq0$:
\[
\mathbb{E}\left(\bm{z}_{i,t}\bm{z}_{i,t+\ell}^{\prime}e_{i,t+h}^{(h)}e_{i,t+h+\ell}^{(h)}\right)=\mathbb{E}\left[\bm{z}_{i,t}\bm{z}_{i,t+\ell}^{\prime}\mathbb{E}_{t}\left(e_{i,t+h}^{(h)}e_{i,t+h+\ell}^{(h)}\right)\right]=\Gamma_{ee}^{(h)}(\ell)\cdot\mathbb{E}\left(\bm{z}_{i,t}\bm{z}_{i,t+\ell}^{\prime}\right).
\]
These results give
\begin{align}
{\rm var}\left(\boldsymbol{Z}_{i,T}\right) & =\boldsymbol{D}_{T}^{-1/2}\mathbb{E}\left(\sum_{t=1}^{T_{h}}\bm{z}_{i,t}e_{i,t+h}^{(h)}\sum_{t=1}^{T_{h}}\bm{z}_{i,t}^{\prime}e_{i,t+h}^{(h)}\right)\boldsymbol{D}_{T}^{-1/2}\nonumber \\
 & =\boldsymbol{D}_{T}^{-1/2}\sum_{\ell=-(h-1)}^{h-1}\sum_{t=1}^{T_{h}}\mathbb{E}\left(\bm{z}_{i,t}\bm{z}_{i,t+\ell}^{\prime}e_{i,t+h}^{(h)}e_{i,t+h+\ell}^{(h)}\right)\boldsymbol{D}_{T}^{-1/2}\nonumber \\
 & =\boldsymbol{D}_{T}^{-1/2}\sum_{\ell=-(h-1)}^{h-1}\left[\Gamma_{ee}^{(h)}(\ell)\sum_{t=1}^{T_{h}}\mathbb{E}\left(\bm{z}_{i,t}\bm{z}_{i,t+\ell}^{\prime}\right)\right]\boldsymbol{D}_{T}^{-1/2}.\label{eq:var Z mult start}
\end{align}
Let $(\bm{A})_{j,m}$ be the $(j,m)$-entry of matrix $\bm{A}$. We
have
\[
\left({\rm var}\left(\boldsymbol{Z}_{i,T}\right)-\boldsymbol{D}_{T}^{-1/2}\boldsymbol{\Sigma}^{(h)}\boldsymbol{D}_{T}^{-1/2}\right)_{j,m}=\sum_{\ell=-(h-1)}^{h-1}\left[T^{-\frac{1}{2}[2+(\theta\wedge\gamma_{j})+(\theta\wedge\gamma_{m})]}\sum_{t=1}^{T_{h}}\mathbb{E}\left(z_{j,i,t}z_{m,i,t+\ell}-z_{j,i,t}z_{m,i,t}\rho_{m}^{*\ell}\right)\right].
\]
It thus suffices to show that for any $j,m\in\{1,\dots,k\}$ and $-(h-1)\leq\ell\leq h-1$,
as $T\to\infty$,
\begin{equation}
T^{-\frac{1}{2}[2+(\theta\wedge\gamma_{j})+(\theta\wedge\gamma_{m})]}\sum_{t=1}^{T_{h}}\mathbb{E}\left(z_{j,i,t}z_{m,i,t+\ell}-z_{j,i,t}z_{m,i,t}\rho_{m}^{*\ell}\right)\to0.\label{eq:target mult var}
\end{equation}

For any $j\in\{1,\dots,k\}$, define $\psi_{j,i,t}:=\sum_{\tau=1}^{t}\rho_{z}^{t-\tau}x_{j,i,\tau-1}$
as in \eqref{eq:psi}. Following \eqref{eq:psi eq Pv}, we have
\[
\psi_{j,i,t}=\sum_{s=0}^{t-1}P_{j,t,s}v_{j,i,s}\qquad\text{where}\qquad P_{j,t,s}:=\dfrac{\rho_{z}^{t-s}-\rho_{j}^{*t-s}}{\rho_{z}-\rho_{j}^{*}}\text{ and }v_{j,i,0}:=x_{j,i,0}.
\]
In addition, \prettyref{assump:mult-initval-1} gives $\mathbb{E}(v_{j,i,0}^{2})=O(T^{\gamma_{j}})$.
Therefore,
\begin{equation}
\sup_{t\le T}\mathbb{E}(\psi_{j,i,t}^{2})=\sup_{t\leq T}{}\sum_{s=0}^{t-1}P_{j,t,s}^{2}\mathbb{E}(v_{j,i,s}^{2})=O\bigl(T^{2(\theta\wedge\gamma_{j})+(\theta\vee\gamma_{j})}\bigr)=O\bigl(T^{\theta+\gamma_{j}+(\theta\wedge\gamma_{j})}\bigr).\label{eq:E psi j 2}
\end{equation}
For a fixed $\ell$, we have
\begin{align}
 & \sup_{t\le T}{}\mathbb{E}\left[(\psi_{j,i,t+\ell}-\psi_{j,i,t})^{2}\right]=\sup_{t\leq T}{}\mathbb{E}\left[\left(\sum_{s=0}^{t-1}(P_{j,t+\ell,s}-P_{j,t,s})v_{j,i,s}+\sum_{s=t}^{t+\ell}P_{j,t+\ell,s}v_{j,i,s}\right)^{2}\right]\nonumber \\
\leq{} & 2\sup_{t\leq T}{}\left[\sum_{s=0}^{t-1}(P_{j,t+\ell,s}-P_{j,t,s})^{2}\mathbb{E}(v_{j,i,s}^{2})+\sum_{s=t}^{t+\ell}P_{j,t+\ell,s}^{2}\mathbb{E}(v_{j,i,s}^{2})\right]\nonumber \\
={} & O\bigl(T^{2(\theta\wedge\gamma_{j})}\bigr)+O\bigl(T^{2(\theta\wedge\gamma_{j})}\bigr)=O\bigl(T^{2(\theta\wedge\gamma_{j})}\bigr).\label{eq:E d_psi_j2}
\end{align}
Let $\zeta_{j,i,t}$ be an AR(1) process such that $\zeta_{j,i,0}=0$
and
\[
\zeta_{j,i,t}=\rho_{z}\zeta_{j,i,t-1}+v_{j,i,t},\qquad t=1,\dots,T.
\]
Then $\zeta_{j,i,t}$ can be decomposed as indicated by \eqref{eq:zeta decom 1}
and \eqref{eq:zeta decom 2}:
\begin{equation}
z_{j,i,t}=\zeta_{j,i,t}-(1-\rho_{j}^{*})\psi_{j,i,t},\label{eq:zeta j decom 1}
\end{equation}
and
\begin{equation}
z_{j,i,t}=x_{j,i,t}-\rho_{z}^{t}x_{j,i,0}-(1-\rho_{z})\psi_{j,i,t},\label{eq:zeta j decom 2}
\end{equation}

By \prettyref{lem:generic_two_ARs} we have
\begin{equation}
\mathbb{E}(\zeta_{j,i,t}^{2})=O(T^{\theta})\qquad\text{and}\qquad\mathbb{E}(x_{j,i,t}^{2})=O(T^{\gamma_{j}}).\label{eq:E z2 and E x2}
\end{equation}
If $\theta\leq\gamma_{j}$, we employ the decomposition \eqref{eq:zeta j decom 1},
then \eqref{eq:E z2 and E x2} and \eqref{eq:E psi j 2} are combined
to yield
\[
\sup_{t\leq T}\mathbb{E}(z_{j,i,t}^{2})\leq\sup_{t\leq T}2\left[\mathbb{E}(\zeta_{j,i,t}^{2})+(1-\rho_{j}^{*})^{2}\mathbb{E}(\psi_{j,i,t}^{2})\right]=O(T^{\theta})+O\bigl(T^{\theta+(\theta\wedge\gamma)-\gamma}\bigr)=O(T^{\theta}).
\]
If $\gamma_{j}<\theta$, we use the other decomposition \eqref{eq:zeta j decom 2},
under which \eqref{eq:E z2 and E x2}, \eqref{eq:E psi j 2}, and
\prettyref{assump:mult-initval-1} leads to
\begin{align*}
\sup_{t\leq T}\mathbb{E}(z_{j,i,t}^{2}) & =\sup_{t\leq T}2\left[\mathbb{E}(x_{j,i,t}^{2})+\rho_{z}^{2t}x_{j,i,0}^{2}+(1-\rho_{z})^{2}\mathbb{E}(\psi_{j,i,t}^{2})\right]\\
 & =O(T^{\gamma_{j}})+O(T^{\gamma_{j}})+O\bigl(T^{\gamma+(\theta\wedge\gamma)-\theta}\bigr)=O(T^{\gamma_{j}}).
\end{align*}
To sum up, we have
\begin{equation}
\sup_{t\leq T}\mathbb{E}(z_{j,i,t}^{2})=O(T^{\theta\wedge\gamma_{j}}).\label{eq:E zeta j 2}
\end{equation}
We now are ready to prove \eqref{eq:target mult var}.

By \eqref{eq:zeta j decom 2} we have
\begin{align*}
z_{m,i,t+\ell} & =x_{m,i,t+\ell}-\rho_{z}^{t+\ell}x_{m,i,0}-(1-\rho_{z})\psi_{m,i,t+\ell}\\
 & =\rho_{m}^{*\ell}x_{m,i,t}+\sum_{s=t+1}^{t+\ell}\rho_{m}^{*t+\ell-s}v_{m,i,s}-\rho_{z}^{t+\ell}x_{m,i,0}-(1-\rho_{z})\psi_{m,i,t+\ell}\\
 & =\rho_{m}^{*\ell}z_{m,i,t}+\sum_{s=t+1}^{t+\ell}\rho_{m}^{*t+\ell-s}v_{m,i,s}+\rho_{z}^{t}(1-\rho_{z}^{\ell})x_{m,i,0}-(1-\rho_{z})(\psi_{m,i,t+\ell}-\psi_{m,i,t}).
\end{align*}
It follows that
\begin{align}
 & \mathbb{E}(z_{j,i,t}z_{m,i,t+\ell})-\mathbb{E}(z_{j,i,t}z_{m,i,t})\rho_{m}^{*\ell}\nonumber \\
 & \qquad=\rho_{z}^{t}(1-\rho_{z}^{\ell})\mathbb{E}(z_{j,i,t}x_{m,i,0})+(1-\rho_{z})\mathbb{E}[z_{j,i,t}(\psi_{m,i,t+\ell}-\psi_{m,i,t})].\label{eq:mult var 2-1}
\end{align}
By \eqref{eq:E zeta j 2} and \prettyref{assump:mult-initval-1},
we have
\begin{align}
 & T^{-\frac{1}{2}[2+(\theta\wedge\gamma_{j})+(\theta\wedge\gamma_{m})]}\sum_{t=1}^{T_{h}}\rho_{z}^{t}(1-\rho_{z}^{\ell})\mathbb{E}(z_{j,i,t}x_{m,i,0})\nonumber \\
\leq{} & T^{-\frac{1}{2}[2+(\theta\wedge\gamma_{j})+(\theta\wedge\gamma_{m})]}(1-\rho_{z}^{\ell})\left(\sum_{t=1}^{T_{h}}\rho_{z}^{t}\right)\sup_{t\leq T}\sqrt{\mathbb{E}(z_{j,i,t}^{2})\cdot\mathbb{E}(x_{m,i,0}^{2})}\nonumber \\
={} & T^{-\frac{1}{2}[2+(\theta\wedge\gamma_{j})+(\theta\wedge\gamma_{m})]}\cdot O(T^{-\theta})\cdot O(T^{\theta})\cdot O\bigl(T^{\frac{1}{2}[(\theta\wedge\gamma_{j})+\gamma_{m}]}\bigr)\nonumber \\
={} & O\bigl(T^{-\frac{1}{2}[2+(\theta\wedge\gamma_{m})-\gamma_{m}]}\bigr)\to0.\label{eq:mult var 2-2}
\end{align}
In addition, by \eqref{eq:E d_psi_j2} and \eqref{eq:E zeta j 2},
\begin{align}
 & T^{-\frac{1}{2}[2+(\theta\wedge\gamma_{j})+(\theta\wedge\gamma_{m})]}(1-\rho_{z})\sum_{t=1}^{T_{h}}\mathbb{E}[z_{j,i,t}(\psi_{m,i,t+\ell}-\psi_{m,i,t})]\nonumber \\
\leq{} & T^{-\frac{1}{2}[2+(\theta\wedge\gamma_{j})+(\theta\wedge\gamma_{m})]}(1-\rho_{z})T_{h}\sup_{t\leq T}\sqrt{\mathbb{E}(z_{j,i,t}^{2})\cdot\mathbb{E}\bigl[(\psi_{m,i,t+\ell}-\psi_{m,i,t})^{2}\bigr]}\nonumber \\
={} & O\bigl(T^{-\frac{1}{2}[2\theta-(\theta\wedge\gamma_{m})]}\bigr)\to0.\label{eq:mult var 2-3}
\end{align}
Equations \eqref{eq:mult var 2-1}, \eqref{eq:mult var 2-2}, and
\eqref{eq:mult var 2-3} establish \eqref{eq:target mult var} and
hence complete the proof of \prettyref{prop:ivx_variance_h}.
\end{proof}

\subsection{Proofs for SBSA\label{subsec:Proofs-for-SBSA}}
\begin{proof}[Proof of \prettyref{lem:sup_rho_hat_bound}]
We first show the part of $\hat{\rho}_{i}$. Without loss of generality,
assume $\alpha_{i}=0$. Letting $\bar{x}_{i}:=T^{-1}\sum_{t=1}^{T}x_{i,t}$,
we have
\begin{align*}
\hat{\rho}_{i} & =\frac{\sum_{t=1}^{T}(x_{i,t}-\bar{x}_{i})x_{i,t+1}}{\sum_{t=1}^{T}(x_{i,t}-\bar{x}_{i})^{2}}\\
 & =\rho^{*}+\frac{\sum_{t=1}^{T}\mathbb{E}(x_{i,t}v_{i,t+1})}{\sum_{t=1}^{T}(x_{i,t}-\bar{x}_{i})^{2}}+\frac{\sum_{t=1}^{T}[x_{i,t}v_{i,t+1}-\mathbb{E}(x_{i,t}v_{i,t+1})]}{\sum_{t=1}^{T}(x_{i,t}-\bar{x}_{i})^{2}}+\frac{\bar{x}_{i}\sum_{t=1}^{T}x_{i,t+1}}{\sum_{t=1}^{T}(x_{i,t}-\bar{x}_{i})^{2}}.
\end{align*}
The last two terms are asymptotically negligible while the second
is not. Let us first investigate the limit of $T^{-1}\sum_{t=1}^{T}\mathbb{E}(x_{i,t}v_{i,t+1})$
as $T\to\infty$. Using the fact that $x_{i,t}=\sum_{j=1}^{t}\rho^{*t-j}v_{i,j}+\rho^{*t}x_{i,0}$,
we have
\begin{align*}
\frac{1}{T}\sum_{t=1}^{T}\mathbb{E}(x_{i,t}v_{i,t+1}) & =\frac{1}{T}\sum_{t=1}^{T}\sum_{j=1}^{t}\rho^{*t-j}\Gamma_{vv}(t+1-j)+\frac{1}{T}\sum_{t=1}^{T}\rho^{*t}\mathbb{E}(x_{i,0}v_{i,t+1})\\
 & =\frac{1}{T}\sum_{t=1}^{T}\sum_{j=1}^{t}\rho^{*j-1}\Gamma_{vv}(j)+\frac{1}{T}\sum_{t=1}^{T}\rho^{*t}\mathbb{E}(x_{i,0}v_{i,t+1}).
\end{align*}
The second term
\[
\left|\frac{1}{T}\sum_{t=1}^{T}\rho^{*t}\mathbb{E}(x_{i,0}v_{i,t+1})\right|\lesssim\frac{1}{T}\sum_{t=1}^{T}|\rho^{*}|^{t}=\frac{|\rho^{*}|(1-|\rho^{*}|^{T})}{T(1-|\rho^{*}|)}\to0
\]
would vanish as $T\to\infty$ by the fact that $\sup_{t}|\mathbb{E}(x_{i,0}v_{i,t+1})|\leq\sup_{t}\sum_{s=-\infty}^{0}|g_{t-s}||\mathbb{E}(x_{i,0}\varepsilon_{i,s})|<\infty$.
For the first term, since $|\rho^{*j-1}\Gamma_{vv}(j)|\leq|\Gamma_{vv}(j)|$
and $\sum_{j=1}^{\infty}\Gamma_{vv}(j)<\infty$, $\sum_{j=1}^{\infty}\rho^{*j-1}\Gamma_{vv}(j)$
must exist, and then by the theorem of Cesàro mean, we have
\[
\lim_{T\to\infty}\frac{1}{T}\sum_{t=1}^{T}\sum_{j=1}^{t}\rho^{*j-1}\Gamma_{vv}(j)=\sum_{j=1}^{\infty}\rho^{*j-1}\Gamma_{vv}(j).
\]
Hence,
\begin{equation}
\lim_{T\to\infty}\frac{1}{T}\sum_{t=1}^{T}\mathbb{E}(x_{i,t}v_{i,t+1})=\sum_{j=1}^{\infty}\rho^{*j-1}\Gamma_{vv}(j)=:\sigma_{xv}.\label{eq:sigma_xv}
\end{equation}
Next we calculate the limit of $T^{-1}\sum_{t=1}^{T}\mathbb{E}(x_{i,t}^{2})$.
We can show that the term involving initial values $x_{i,0}$ must
be negligible so we simply assume $x_{i,0}=0$ here. Then we can write
\begin{align*}
\frac{1}{T}\sum_{t=1}^{T}\mathbb{E}(x_{i,t}^{2}) & =\frac{1}{T}\sum_{t=1}^{T}\sum_{j=1}^{t}\rho^{*2(t-j)}\Gamma_{vv}(0)+\frac{2}{T}\sum_{t=1}^{T}\sum_{j=1}^{t}\sum_{k=1}^{j-1}\rho^{*t-j}\rho^{*t-k}\Gamma_{vv}(j-k)\\
 & =\left[\frac{1}{1-\rho^{*2}}-\frac{\rho^{*2}(1-\rho^{*2T})}{T(1-\rho^{*2})}\right]\Gamma_{vv}(0)+\frac{2}{T}\sum_{k=1}^{T-1}\sum_{t=k+1}^{T}\sum_{j=k+1}^{t}\rho^{*2(t-j)}\rho^{*k}\Gamma_{vv}(k).
\end{align*}
The first term has limit
\[
\lim_{T\to\infty}\left[\frac{1}{1-\rho^{*2}}-\frac{\rho^{*2}(1-\rho^{*2T})}{T(1-\rho^{*2})}\right]\Gamma_{vv}(0)=\frac{1}{1-\rho^{*2}}\Gamma_{vv}(0).
\]
For the second term, since for each fixed $k$,
\[
\frac{1}{T}\sum_{t=k+1}^{T}\sum_{j=k+1}^{t}\rho^{*2(t-j)}\rho^{*k}=\frac{T-k}{T(1-\rho^{*2})}-\frac{\rho^{*2}\big(1-\rho^{*2(T-k)}\big)}{T(1-\rho^{*2})^{2}}\to\frac{1}{1-\rho^{*2}}\qquad\text{as }T\to\infty,
\]
then by \prettyref{lem:series} we have
\[
\lim_{T\to\infty}\frac{2}{T}\sum_{k=1}^{T-1}\sum_{t=k+1}^{T}\sum_{j=k+1}^{t}\rho^{*2(t-j)}\rho^{*k}\Gamma_{vv}(k)=\frac{2}{1-\rho^{*2}}\sum_{k=1}^{\infty}\rho^{*k}\Gamma_{vv}(k).
\]
It follows that
\begin{equation}
\lim_{T\to\infty}\frac{1}{T}\sum_{t=1}^{T}\mathbb{E}(x_{i,t}^{2})=\frac{1}{1-\rho^{*2}}\left[\Gamma_{vv}(0)+2\sum_{k=1}^{\infty}\rho^{*k}\Gamma_{vv}(k)\right]=:\sigma_{xx}.\label{eq:sigma_xx}
\end{equation}
By \prettyref{lem:generic_AR}\ref{enu:E_sum_eta_4}, we have
\[
\mathbb{E}(\bar{x}_{i}^{2})=\frac{1}{T^{2}}\mathbb{E}\left[\left(\sum_{t=1}^{T}x_{i,t}\right)^{2}\right]=O\left(\frac{1}{T}\right)\to0.
\]
Thus,
\[
\lim_{T\to\infty}\frac{1}{T}\mathbb{E}\left[\sum_{t=1}^{T}(x_{i,t}-\bar{x}_{i})^{2}\right]=\sigma_{xx}.
\]
Now, we show that the following map
\[
R(\rho):=\frac{\sum_{j=1}^{\infty}\rho^{j-1}\Gamma_{vv}(j)}{\frac{1}{1-\rho^{2}}\left[\Gamma_{vv}(0)+2\sum_{k=1}^{\infty}\rho^{k}\Gamma_{vv}(k)\right]}+\rho=\frac{\sigma_{xv}}{\sigma_{xx}}+\rho
\]
is strictly increasing on $(-1,1)$ and $R(\rho)<1$ for each $\rho\in(-1,1)$.

By the spectral representation $\Gamma_{vv}(k)=\frac{1}{2\pi}\int_{-\pi}^{\pi}\exp(\mathrm{i}k\theta)f_{v}(\theta)\,d\theta$
where $f_{v}(\theta)\geq0$ is the spectral density supported on $[-\pi,\pi]$
and the geometric-series identities
\[
1+2\sum_{k=1}^{\infty}r^{k}\cos(k\theta)=\frac{1-r^{2}}{1-2r\cos\theta+r^{2}},\qquad\sum_{j=1}^{\infty}r^{j-1}\cos(j\theta)=\frac{\cos\theta-r}{1-2r\cos\theta+r^{2}},
\]
valid for $|r|<1$, we obtain for $\rho^{*}\in(-1,1)$:
\[
\Gamma_{vv}(0)+2\sum_{k=1}^{\infty}\rho^{k}\Gamma_{vv}(k)=\frac{1}{2\pi}\int_{-\pi}^{\pi}\frac{1-\rho^{2}}{(\rho-\cos\theta)^{2}+(\sin\theta)^{2}}f_{v}(\theta)\,d\theta,
\]
and
\[
\sum_{j=1}^{\infty}\rho^{j-1}\Gamma_{vv}(j)=\frac{1}{2\pi}\int_{-\pi}^{\pi}\frac{\cos\theta-\rho}{(\rho-\cos\theta)^{2}+(\sin\theta)^{2}}f_{v}(\theta)\,d\theta.
\]
Therefore $R(\rho^{*})$ can be written as
\[
R(\rho)=\frac{\int_{-\pi}^{\pi}(\cos\theta)w_{\rho}(\theta)\,d\theta}{\int_{-\pi}^{\pi}w_{\rho}(\theta)\,d\theta}\qquad\text{where }w_{\rho^{*}}(\theta)=\frac{f_{v}(\theta)}{(\rho-\cos\theta)^{2}+(\sin\theta)^{2}}\geq0.
\]
Clearly, $\sup_{\theta\in[-\pi,\pi]}\cos\theta=1$ and the supremum
is attained only at $\theta=0$. For linear process $v_{i,t}$, the
spectral density is given by $f_{v}(\theta)=\frac{1}{2\pi}\mathbb{E}(\varepsilon_{i,0}^{2})|G(\exp(-\mathrm{i}\theta))|^{2}$
where $G(z)=\sum_{j=0}^{\infty}g_{j}z^{j}$, which is continuous and
strictly positive on a set of positive Lebesgue measure. This indicates
that $w_{\rho}(\theta)$ is not concentrated on $\theta=0$. It follows
that $\int_{-\pi}^{\pi}(\cos\theta)w_{\rho}(\theta)\,d\theta<\int_{-\pi}^{\pi}w_{\rho}(\theta)\,d\theta$
and thus $R(\rho)<1$ for all $\rho\in(-1,1)$.

Next we show that $R(\rho^{*})$ is strictly increasing (and hence
injective). Put $L(\theta;\rho):=(\rho^{*}-\cos\theta)^{2}+(\sin\theta)^{2}$
and define $A(\rho):=\int(\cos\theta)L(\theta;\rho)^{-1}f_{v}(\theta)\,d\theta$
and $B(\rho):=\int L(\theta;\rho)^{-1}f_{v}(\theta)\,d\theta$, so
that $R(\rho)=A(\rho)/B(\rho)$. Differentiating under the integral
sign,
\[
\frac{d}{d\rho}\left(\frac{1}{L}\right)=\frac{2(\cos\theta-\rho)}{L^{2}},\quad A'(\rho)=2\int\frac{(\cos\theta)(\cos\theta-\rho)}{L^{2}}f_{v}\,d\theta,\quad B'(\rho)=2\int\frac{\cos\theta-\rho}{L^{2}}f_{v}\,d\theta.
\]
Thus,
\begin{align*}
R'(\rho) & =\frac{A'(\rho)B(\rho)-A(\rho)B'(\rho)}{B(\rho)^{2}}\\
 & =\frac{2}{B(\rho)^{2}}\left[\left(\int\frac{(\cos\theta)(\cos\theta-\rho)}{L^{2}}f_{v}\,d\theta\right)B(\rho)-\left(\int\frac{\cos\theta}{L}f_{v}\,d\theta\right)\left(\int\frac{\cos\theta-\rho}{L^{2}}f_{v}\,d\theta\right)\right]\\
 & =2\left[\int(\cos\theta)(\cos\theta-\rho)\ell_{\rho}(\theta)\frac{\ell_{\rho}(\theta)f_{v}(\theta)}{B(\rho)}\,d\theta\right.\\
 & \qquad-\left.\left(\int(\cos\theta)\frac{\ell_{\rho}(\theta)f_{v}(\theta)}{B(\rho)}\,d\theta\right)\left(\int\ell_{\rho}(\theta)(\cos\theta-\rho)\frac{\ell_{\rho}(\theta)f_{v}(\theta)}{B(\rho)}\,d\theta\right)\right],
\end{align*}
where we let $\ell_{\rho}(\theta):=L(\theta;\rho)^{-1}$. Define the
probability measure $\mu_{\rho}$ on $[-\pi,\pi]$ by $d\mu_{\rho}(\theta)=\frac{\ell_{\rho}(\theta)f_{v}(\theta)}{B(\rho)}\,d\theta$.
Then the identity above rewrites as
\[
R'(\rho)=2\mathrm{Cov}_{\mu_{\rho}}\bigl(\cos\theta,\ell_{\rho}(\theta)(\cos\theta-\rho)\bigr).
\]
Consider the function $h_{\rho}(x)=\frac{x-\rho}{1-2x\rho+\rho^{2}}.$
It is easy to show that $dh_{\rho}(x)/dx>0$ for all $x\in[-1,1]$
and $\rho\in(-1,1)$. By Chebyshev's association inequality, we have
\[
\mathrm{Cov}_{\mu_{\rho}}\bigl(\cos\theta,\ell_{\rho}(\theta)(\cos\theta-\rho)\bigr)=\mathrm{Cov}_{\mu_{\rho}}\bigl(\cos\theta,h_{\rho}(\cos\theta)\bigr)\geq0,
\]
with strict inequality unless $\cos\theta$ is $\mu_{\rho}$-a.s.\ constant.
However, the latter cannot occur because the density of $\mu_{\rho}$,
$\frac{\ell_{\rho}(\theta)f_{v}(\theta)}{B(\rho)}$, inheriting properties
from $f_{v}(\theta)$, must assign positive mass to sets where $\cos\theta$
varies. Therefore, $F'(\rho)>0$ for all $\rho\in(-1,1)$, proving
monotonicity and injectivity.

Therefore,
\begin{align}
\hat{\rho}_{i}-R(\rho^{*}) & =\frac{T^{-1}\sum_{t=1}^{T}[x_{i,t}v_{i,t+1}-\mathbb{E}(x_{i,t}v_{i,t+1})]}{T^{-1}\sum_{t=1}^{T}(x_{i,t}-\bar{x}_{i})^{2}}\nonumber \\
 & \qquad+\frac{T^{-1}\sum_{t=1}^{T}\mathbb{E}(x_{i,t}v_{i,t+1})-\sigma_{xv}\sigma_{xx}^{-1}T^{-1}\sum_{t=1}^{T}(x_{i,t}-\bar{x}_{i})^{2}}{T^{-1}\sum_{t=1}^{T}(x_{i,t}-\bar{x}_{i})^{2}}\nonumber \\
 & \qquad+\frac{\bar{x}_{i}T^{-1}\sum_{t=1}^{T}x_{i,t+1}}{T^{-1}\sum_{t=1}^{T}(x_{i,t}-\bar{x}_{i})^{2}},\label{eq:rho_hat}
\end{align}
where $\sigma_{xv}$ and $\sigma_{xx}$ are defined in \eqref{eq:sigma_xv}
and \eqref{eq:sigma_xx}, respectively. First, from \eqref{eq:sigma_xx}
we see that $\frac{1}{T}\sum_{t=1}^{T}\mathbb{E}(x_{i,t}^{2})$ converges
to $\sigma_{xx}$ in an exponential rate and is thus faster than any
polynomial rate. We have
\begin{align}
\min_{1\leq i\leq n}\frac{1}{T}\sum_{t=1}^{T}(x_{i,t}-\bar{x}_{i})^{2} & \geq\mathbb{E}\left[\frac{1}{T}\sum_{t=1}^{T}x_{i,t}^{2}\right]-\max_{1\leq i\leq n}{}\left|\frac{1}{T}\sum_{t=1}^{T}\left[x_{i,t}^{2}-\mathbb{E}(x_{i,t}^{2})\right]\right|-\left(\max_{1\leq i\leq n}|\bar{x}_{i}|\right)^{2}\nonumber \\
 & =\sigma_{xx}+o(T^{-\eta})-O_{p}\biggl(\sqrt{\frac{\log n}{T}}\biggr)-O_{p}\biggl(\frac{\log n}{T}\biggr)=\sigma_{xx}-O_{p}\biggl(\sqrt{\frac{\log n}{T}}\biggr).\label{eq:min_x2}
\end{align}
for arbitrary $\eta>0$, where we use Lemma~B.2 and Proposition~B.3
of \citet{mei2022lasso} to obtain
\[
\max_{1\leq i\leq n}{}\left|\frac{1}{T}\sum_{t=1}^{T}\left[x_{i,t}^{2}-\mathbb{E}(x_{i,t}^{2})\right]\right|=O_{p}\biggl(\sqrt{\frac{\log n}{T}}\biggr)\quad\text{and}\quad\max_{1\leq i\leq n}|\bar{x}_{i}|=O_{p}\biggl(\sqrt{\frac{\log n}{T}}\biggr).
\]
Similarly, we can deduce
\begin{align}
\max_{1\leq i\leq n}\frac{1}{T}\sum_{t=1}^{T}(x_{i,t}-\bar{x}_{i})^{2} & \leq\mathbb{E}\left[\frac{1}{T}\sum_{t=1}^{T}x_{i,t}^{2}\right]+\max_{1\leq i\leq n}{}\left|\frac{1}{T}\sum_{t=1}^{T}\left[x_{i,t}^{2}-\mathbb{E}(x_{i,t}^{2})\right]\right|\label{eq:max_x2}\\
 & =\sigma_{xx}+O_{p}\biggl(\sqrt{\frac{\log n}{T}}\biggr).\nonumber
\end{align}
Equations~\eqref{eq:min_x2} and \eqref{eq:max_x2} yields
\begin{equation}
\max_{1\leq i\leq n}{}\left|\frac{1}{T}\sum_{t=1}^{T}(x_{i,t}-\bar{x}_{i})^{2}-\sigma_{xx}\right|=O_{p}\biggl(\sqrt{\frac{\log n}{T}}\biggr).\label{eq:max_x2_abs}
\end{equation}
Moreover, by Proposition~B.3 of \citet{mei2022lasso} we also have
\begin{equation}
\max_{1\leq i\leq n}{}\left|\frac{1}{T}\sum_{t=1}^{T}[x_{i,t}v_{i,t+1}-\mathbb{E}(x_{i,t}v_{i,t+1})]\right|=O_{p}\biggl(\sqrt{\frac{\log n}{T}}\biggr).\label{eq:max_xv}
\end{equation}
Now we are ready to analyze the three terms in \eqref{eq:rho_hat}.
For the first term, by \eqref{eq:min_x2} and \eqref{eq:max_xv},
we have
\begin{align}
 & \max_{1\leq i\leq n}{}\left|\frac{T^{-1}\sum_{t=1}^{T}[x_{i,t}v_{i,t+1}-\mathbb{E}(x_{i,t}v_{i,t+1})]}{T^{-1}\sum_{t=1}^{T}(x_{i,t}-\bar{x}_{i})^{2}}\right|\nonumber \\
 & \leq\frac{{\displaystyle \max_{1\leq i\leq n}{}\left|\frac{1}{T}\sum_{t=1}^{T}[x_{i,t}v_{i,t+1}-\mathbb{E}(x_{i,t}v_{i,t+1})]\right|}}{{\displaystyle \min_{1\leq i\leq n}{}\frac{1}{T}\sum_{t=1}^{T}(x_{i,t}-\bar{x}_{i})^{2}}}=O_{p}\biggl(\sqrt{\frac{\log n}{T}}\biggr).\label{eq:xv_bound}
\end{align}
For the second term, by \eqref{eq:max_x2_abs} and the fact that $T^{-1}\sum_{t=1}^{T}\mathbb{E}(x_{i,t}v_{i,t+1})$
converges to $\sigma_{xv}$ in an exponential rate by \eqref{eq:sigma_xv},
we have
\begin{align*}
 & \max_{1\leq i\leq n}{}\left|\frac{T^{-1}\sum_{t=1}^{T}\mathbb{E}(x_{i,t}v_{i,t+1})-\sigma_{xv}\sigma_{xx}^{-1}T^{-1}\sum_{t=1}^{T}(x_{i,t}-\bar{x}_{i})^{2}}{T^{-1}\sum_{t=1}^{T}(x_{i,t}-\bar{x}_{i})^{2}}\right|\\
\leq{} & \max_{1\leq i\leq n}{}\left|\frac{T^{-1}\sum_{t=1}^{T}\mathbb{E}(x_{i,t}v_{i,t+1})-\sigma_{xv}}{T^{-1}\sum_{t=1}^{T}(x_{i,t}-\bar{x}_{i})^{2}}\right|+\left|\frac{\sigma_{xv}}{\sigma_{xx}}\right|\max_{1\leq i\leq n}{}\left|\frac{T^{-1}\sum_{t=1}^{T}(x_{i,t}-\bar{x}_{i})^{2}-\sigma_{xx}}{T^{-1}\sum_{t=1}^{T}(x_{i,t}-\bar{x}_{i})^{2}}\right|\\
\leq{} & \frac{{\displaystyle \left|\frac{1}{T}\sum_{t=1}^{T}\mathbb{E}(x_{i,t}v_{i,t+1})-\sigma_{xv}\right|}}{{\displaystyle \min_{1\leq i\leq n}{}\frac{1}{T}\sum_{t=1}^{T}(x_{i,t}-\bar{x}_{i})^{2}}}+\left|\frac{\sigma_{xv}}{\sigma_{xx}}\right|\frac{{\displaystyle \max_{1\leq i\leq n}{}\left|\frac{1}{T}\sum_{t=1}^{T}(x_{i,t}-\bar{x}_{i})^{2}-\sigma_{xx}\right|}}{{\displaystyle \min_{1\leq i\leq n}{}\frac{1}{T}\sum_{t=1}^{T}(x_{i,t}-\bar{x}_{i})^{2}}}\\
= & o(T^{-\eta})+O_{p}\biggl(\sqrt{\frac{\log n}{T}}\biggr)=O_{p}\biggl(\sqrt{\frac{\log n}{T}}\biggr).
\end{align*}
For the third term, by \eqref{eq:min_x2} we have
\begin{equation}
\max_{1\leq i\leq n}{}\left|\frac{\bar{x}_{i}T^{-1}\sum_{t=1}^{T}x_{i,t+1}}{T^{-1}\sum_{t=1}^{T}(x_{i,t}-\bar{x}_{i})^{2}}\right|\leq\frac{{\displaystyle \max_{1\leq i\leq n}{}|\bar{x}_{i}|\max_{1\leq i\leq n}{}\left|\frac{1}{T}\sum_{t=1}^{T}x_{i,t+1}\right|}}{{\displaystyle \min_{1\leq i\leq n}{}\frac{1}{T}\sum_{t=1}^{T}(x_{i,t}-\bar{x}_{i})^{2}}}=O_{p}\biggl(\frac{\log n}{T}\biggr).\label{eq:xbarx_bound}
\end{equation}
It follows that
\[
\max_{1\leq i\leq n}{}\left|\hat{\rho}_{i}-R(\rho^{*})\right|=O_{p}\biggl(\sqrt{\frac{\log n}{T}}\biggr).
\]

For $\hat{\beta}_{i}$, we have the decomposition
\[
\hat{\beta}_{i}-\beta^{*}=\frac{\sum_{t=1}^{T}x_{i,t}e_{i,t+1}}{\sum_{t=1}^{T}(x_{i,t}-\bar{x}_{i})^{2}}+\frac{\bar{x}_{i}\sum_{t=1}^{T}x_{i,t+1}}{\sum_{t=1}^{T}(x_{i,t}-\bar{x}_{i})^{2}}.
\]
Similar to \eqref{eq:xv_bound} and \eqref{eq:xv_bound}, $\max_{1\leq i\leq n}|\hat{\beta}_{i}-\beta_{i}^{*}|=O_{p}\bigl(\sqrt{\log(n)/T}\bigr)$
follows.
\end{proof}
\begin{proof}[Proof of \prettyref{lem:sup_rho_ur}]
Write $\hat{\rho}_{i}$ as
\[
\hat{\rho}_{i}-1=\frac{\sum_{t=1}^{T}(x_{i,t}-\bar{x}_{i})v_{i,t+1}}{\sum_{t=1}^{T}(x_{i,t}-\bar{x}_{i})^{2}}.
\]
For the numerator, by Proposition~1 of \citet{mei2022lasso},
\[
\max_{1\leq i\leq n}\left|\frac{1}{T}\sum_{t=1}^{T}(x_{i,t}-\bar{x}_{i})v_{i,t+1}\right|=O_{p}\bigl((\log n)^{3/2}\bigr).
\]
For the denominator, by Proposition~B.4(b) of \citet{mei2022lasso},
\[
\min_{1\leq i\leq n}{}\frac{1}{T}\sum_{t=1}^{T}(x_{i,t}-\bar{x}_{i})^{2}\gtrsim_{p}\frac{T}{\log n}.
\]
It thus follows that
\[
\max_{1\leq i\leq n}{}\left|\hat{\rho}_{i}-1\right|\leq\frac{{\displaystyle \max_{1\leq i\leq n}{}\left|\frac{1}{T}\sum_{t=1}^{T}(x_{i,t}-\bar{x}_{i})v_{i,t+1}\right|}}{{\displaystyle \min_{1\leq i\leq n}{}\frac{1}{T}\sum_{t=1}^{T}(x_{i,t}-\bar{x}_{i})^{2}}}=O_{p}\biggl(\frac{(\log n)^{5/2}}{T}\biggr).
\]
This completes the proof.
\end{proof}
\begin{proof}[Proof of \prettyref{prop:classify}]
By \prettyref{lem:sup_rho_hat_bound} and \ref{lem:sup_rho_ur},
we have
\[
\max_{1\leq i\leq n}{}\left|\hat{\rho}_{i}-r_{i}^{*}\right|=O_{p}\biggl(\sqrt{\frac{\log n}{T}}\biggr)=o_{p}(1),
\]
where $r_{i}^{*}=R(\rho_{i}^{*})$ if $\rho_{i}^{*}\in(-1,1)$ and
$r_{i}^{*}=1$ if $\rho_{i}^{*}=1$. Moreover, \prettyref{lem:sup_rho_hat_bound}
also indicates that for any $i,j$ not in the same group $|r_{i}^{*}-r_{j}^{*}|>\underline{c}_{r}$
for some absolute constant $\underline{c}_{r}$ since $R(\cdot)$
is strictly increasing. Similary, we can deduce $\max_{1\leq i\leq n}|\hat{\beta}_{i}-\beta_{i}^{*}|=o_{p}(1)$
as well. Then, the proof strategy of Theorem~3.2 of \citet{wang2021identifying}
goes through here.
\end{proof}
\begin{delayedproof}{thm:divx_under_group}
The theorem follows by \prettyref{prop:classify} and using the argument
in the \proofref{thm:DIVX}.
\end{delayedproof}

\section{Supporting Lemmas \label{sec:Supporting-lemmas}}

Supporting lemmas are collected in this Section. Section~\ref{subsec:Pre-Lemmas}
presents several preliminary lemmas about a generic AR(1) process
that are useful in the proofs of the main results. Section~\ref{subsec:lemmas-for-WG}
includes the lemmas for WG. Section~\ref{subsec:lemmas-for-IVX}
lists the lemmas for IVX. Section~\ref{subsec:stoch_integral} displays
several analytic formulae for moments of stochastic integrals which
appear in \prettyref{lem:lurjoint} as the probability limits of finite
sample moments, and will only be used to verify uniform integrability
in the proof of \prettyref{lem:lurjoint}. Due to space limitation,
the proofs of these technical lemmas are relegated to the Supplementary
Materials.

\subsection{Preliminary Lemmas\label{subsec:Pre-Lemmas}}

\prettyref{lem:u.i.} provides sufficient conditions for uniform integrability
of a generic random sequence, which is important in proving joint
laws of large numbers and CLTs (cf.\ \citealp{phillips1999linear}).
\prettyref{lem:sum_rho} bounds the partial sum of a geometric progression.
\prettyref{lem:generic_AR} and \ref{lem:generic_two_ARs} bound the
moments of generic AR(1) processes.
\begin{lem}
\label{lem:series}Suppose $\{a_{i}\}$ is absolutely summable: $\sum_{i=1}^{\infty}|a_{i}|=A<\infty$,
and $\{w_{n,i}\}$ satisfies $\lim_{n\to\infty}w_{n,i}=w_{i}$ for
each $i$ and $\sup_{n,i}|w_{n,i}|\leq M$ for some $M>0$, then $\lim_{n\to\infty}\sum_{i=1}^{n}w_{n,i}a_{i}=\sum_{i=1}^{\infty}w_{i}a_{i}$.
\end{lem}
Let $\{X_{T}\}_{T=1}^{\infty}$ be a sequence of random variables.
We say $X_{T}$ is \emph{uniformly integrable} (u.i.) in $T$ if
\[
\lim_{B\to\infty}\sup_{T\geq1}\mathbb{E}[|X_{T}|\cdot\bm{1}(|X_{T}|\geq B)]=0,
\]
where $\bm{1}(\cdot)$ is the indicator function for an event.
\begin{lem}
\label{lem:u.i.}The random sequence $\{X_{T}\}_{T=1}^{\infty}$ is
u.i.~if any of the following conditions holds:
\end{lem}
\begin{enumerate}
\item $\sup_{T}\mathbb{E}|X_{T}|^{1+\delta}<\infty$ for some fixed constant
$\delta>0$, or
\begin{enumerate}
\item $X_{T}\geq0$, $\mathbb{E}(X_{T})<\infty$, and there exists some
random variable $X\geq0$ with $\mathbb{E}(X)<\infty$ such that $X_{T}\to_{d}X$
and $\mathbb{E}(X_{T})\to\mathbb{E}(X)$ as $T\to\infty$, or
\item $X_{T}\geq0$, $\mathbb{E}(X_{T})<\infty$, and $\mathbb{E}(X_{T})\to0$
as $T\to\infty$.
\end{enumerate}
\end{enumerate}
\begin{lem}
On the other hand, if $\{X_{T}\}_{T=1}^{\infty}$ is u.i.~and $X_{T}\to_{d}X$,
then $X$ is integrable and
\begin{equation}
\mathbb{E}(X_{T})\to\mathbb{E}(X)\quad\text{as }T\to\infty.\label{eq:limit EXT}
\end{equation}
\end{lem}
\begin{lem}
\label{lem:sum_rho}Let $b\in(0,1)$ and $c>0$ be fixed constants,
and $\rho_{T}\in(-1+b,1+c/T]$ be dependent on $T$. Then, for any
positive integer $m$,
\[
\sum_{t=1}^{T}\rho_{T}^{mt}=O\left(\dfrac{1}{|1-\rho_{T}|}\wedge T\right)\quad\text{as }T\to\infty.
\]
\end{lem}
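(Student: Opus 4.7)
The plan is to establish the upper bound by proving two separate estimates and then taking their minimum. Let $S_T := \sum_{t=1}^T \rho_T^{mt}$. I will show (a) $|S_T| \leq C_1 T$ and (b) $|S_T| \leq C_2/|1-\rho_T|$ whenever $\rho_T \neq 1$, for constants $C_1,C_2$ depending only on $m, c, \nu$. Taking the minimum over (a) and (b) yields the stated conclusion; when $\rho_T = 1$ the second bound is vacuous ($1/0 = \infty$) but the first bound reduces to $S_T = T$, which agrees with the convention in the statement.

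For bound (a) the key observation is that the terms in $S_T$ are uniformly bounded. If $\rho_T \in (-1+\nu, 0]$, then $|\rho_T|^{mt} \leq (1-\nu)^{mt} \leq 1$. If $\rho_T \in (0, 1+c/T]$, then for every $t \leq T$ we have $|\rho_T|^{mt} \leq (1+c/T)^{mT} \leq e^{mc}$ (the standard exponential bound on $(1+c/T)^{T}$). Thus $|S_T| \leq e^{mc}\cdot T$ for all $T$ large enough that $c/T < 1$, giving (a).

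For bound (b) I use the closed-form identity
\[
S_T \;=\; \rho_T^{m}\cdot\frac{\rho_T^{mT}-1}{\rho_T^{m}-1},
\]
valid when $\rho_T^m \neq 1$. The numerator is bounded by $2e^{mc}$ using the estimate from step (a). For the denominator I factor $\rho_T^m - 1 = (\rho_T - 1)\bigl(1 + \rho_T + \cdots + \rho_T^{m-1}\bigr)$, so it suffices to bound the bracket uniformly away from zero by some $\kappa_m > 0$. Splitting cases on the sign of $\rho_T$: if $\rho_T \in [0, 1+c/T]$ the bracket is at least $1$; if $\rho_T \in (-1+\nu, 0)$, I use $1 + \rho_T + \cdots + \rho_T^{m-1} = (1-\rho_T^m)/(1-\rho_T)$, where the numerator is bounded below (for even $m$ by $1 - (1-\nu)^m > 0$, and for odd $m$ by $1$) and the denominator is bounded above by $2-\nu$. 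This produces a $\nu$- and $m$-dependent lower bound $\kappa_m > 0$, so $|\rho_T^m - 1| \geq \kappa_m |1-\rho_T|$ and consequently $|S_T| \leq (2e^{mc}/\kappa_m)\cdot |1-\rho_T|^{-1}$.

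The only step requiring any real care is the uniform lower bound on $|1+\rho_T+\cdots+\rho_T^{m-1}|$ for $\rho_T$ near the left endpoint $-1+\nu$, particularly when $m$ is even and this quantity could in principle be small; however, the strict separation $\rho_T > -1+\nu$ with fixed $\nu > 0$ keeps it bounded below. Everything else is an elementary manipulation of the geometric sum, so no serious obstacle remains.
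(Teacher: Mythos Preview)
Your proof is correct and follows essentially the same route as the paper's: both rely on the geometric-series closed form together with the exponential bound $(1+c/T)^{mT}\le e^{mc}$. The only cosmetic difference is that the paper splits into the two cases $\rho_T\le 1-T^{-1}$ and $\rho_T>1-T^{-1}$ (showing which half of the minimum is active in each), whereas you establish both the $O(T)$ and $O(|1-\rho_T|^{-1})$ bounds globally and then take the minimum; you also handle negative $\rho_T$ explicitly via the factorisation $\rho_T^m-1=(\rho_T-1)\sum_{j=0}^{m-1}\rho_T^{\,j}$, while the paper simply declares $\rho_T\ge 0$ without loss of generality.
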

\begin{lem}
\label{lem:ma_represent}Suppose that $\eta_{t}$ follows $\mathrm{AR}(1)$:
$\eta_{t}=\rho_{\eta,T}\eta_{t-1}+\nu_{\eta,t}$ where $-1+b\leq\rho_{\eta,T}\leq1+c/T$
for some fixed $b\in(0,1)$ and $c>0$, and $\nu_{\eta,t}=\sum_{s=0}^{\infty}g_{s}\varepsilon_{\eta,t-s}$
with $|g_{s}|\leq Cq_{\nu}^{s}$, $q_{\nu}\in(0,1)$ and $C>0$ being
constants. Then the $\mathrm{MA}(\infty)$ representation of $\eta_{t}$
is
\begin{equation}
\eta_{t}=\sum_{\ell=-\infty}^{t}\pi_{\eta,T}(t,\ell)\varepsilon_{\eta,\ell}+\rho_{\eta,T}^{t}\eta_{0},\label{eq:eta_LP_decomp}
\end{equation}
where for $\text{for }\ell=-\infty,\dots,t$,
\[
\pi_{\eta,T}(t,\ell)=\sum_{s=\ell\vee1}^{t}\rho_{\eta,T}^{t-s}g_{s-\ell}.
\]
Furthermore, for $T$ large enough, it holds that
\begin{equation}
|\pi_{\eta,T}(t,\ell)|\lesssim\begin{cases}
(|\rho_{\eta,T}|\vee q_{\nu})^{t}q_{\nu}^{1-\ell} & \ell<1,\\
(|\rho_{\eta,T}|\vee q_{\nu})^{t-\ell} & \ell\geq1.
\end{cases}\label{eq:pi_bound}
\end{equation}
and for any positive integer $k$
\[
\sup_{t\leq T}{}\left|\sum_{\ell=-\infty}^{t}|\pi_{\eta,T}(t,\ell)|^{k}\right|\lesssim\sum_{j=0}^{T}(|\rho_{\eta,T}|\vee q_{\nu})^{kj}=O\left(\dfrac{1}{|1-\rho_{\eta,T}|}\wedge T\right).
\]
\end{lem}
\begin{lem}
\label{lem:generic_AR}Suppose $\eta_{t}$ is a generic $\mathrm{AR}(1)$
process following $\eta_{t}=\rho_{T}\eta_{t-1}+\nu_{t}$ for $t=1,\dots,T$
where
\end{lem}
\begin{enumerate}[label=(\alph*)]
\item $-1+b\leq\rho_{T}\leq1+c/T$ for some fixed $b\in(0,1)$ and $c>0$,
\begin{enumerate}
\item $\mathbb{E}(\eta_{0}^{4})=O(T^{2})$,
\item $\nu_{t}=\sum_{s=0}^{\infty}g_{s}\varepsilon_{t-s}$ with $|g_{s}|\leq Cq_{\nu}^{s}$,
$q_{\nu}\in(0,1)$ and $C>0$ being constants, and $\{\varepsilon_{t}\}$
is a stationary m.d.s.\ with absolutely summable fourth cumulants.
\end{enumerate}
Then, as $T\to\infty$,
\begin{enumerate}
\item \label{enu:E_sum_eta2}$\mathbb{E}\bigl(\sum_{t=1}^{T}\eta_{t}^{2}\bigr)=O\bigl(\frac{T}{|1-\rho_{T}|}\wedge T^{2}\bigr)$,
\item \label{enu:E_sum_eta_4}$\mathbb{E}\bigl[\bigl(\sum_{t=1}^{T}\eta_{t}\bigr)^{4}\bigr]=O\bigl(\frac{T^{2}}{(1-\rho_{T})^{4}}\wedge T^{6}\bigr)$.
\end{enumerate}
\end{enumerate}
\begin{lem}
\label{lem:generic_two_ARs}Suppose that $\xi_{t}$ and $\eta_{t}$
are generic $\mathrm{AR}(1)$ processes following $\xi_{t}=\rho_{\xi,T}\xi_{t-1}+\nu_{\xi,t}$
and $\eta_{t}=\rho_{\eta,T}\eta_{t-1}+\nu_{\eta,t}$ for $t=1,\dots,T$
where
\end{lem}
\begin{enumerate}[label=(\alph*)]
\item  $\rho_{\xi,T},\rho_{\eta,T}\in[-1+b,1+c/T]$ for some fixed $b\in(0,1)$
and $c>0$,
\begin{enumerate}
\item \label{enu:init_vals}$\mathbb{E}(\xi_{0}^{4})=O\bigl((1-\rho_{\xi,T})^{-2}\wedge T^{2}\bigr)$,
$\mathbb{E}(\eta_{0}^{4})=O\bigl((1-\rho_{\eta,T})^{-2}\wedge T^{2}\bigr)$,
and $\mathbb{E}(\xi_{0}\eta_{0})=O\bigl(|1-\rho_{\eta,T}\rho_{\xi,T}|^{-1}\wedge T\bigr)$,
\item \label{enu:error}$\nu_{\xi,t}=\sum_{s=0}^{\infty}g_{\xi,s}\varepsilon_{\eta,t-s}$,
$\nu_{\eta,t}=\sum_{s=0}^{\infty}g_{\eta,s}\varepsilon_{\eta,t-s}$,
where $|g_{\xi,s}|+|g_{\eta,s}|\leq Cq_{0}^{s}$ for some positive
constants $C$ and $q_{0}\in(0,1)$. $\{(\varepsilon_{\xi,t},\varepsilon_{\eta,t})\}$
is a stationary m.d.s.\ with absolutely summable fourth cumulants.
In addition, $\sup_{s\leq0}|\mathbb{E}(\xi_{0}\varepsilon_{\eta,s})|\leq\infty$
and $\sup_{s\leq0}|\mathbb{E}(\eta_{0}\varepsilon_{\xi,s})|\leq\infty$.
\end{enumerate}
Then, as $T\to\infty$,
\begin{enumerate}
\item \label{enu:supE}$\sup_{t\leq T}|\mathbb{E}(\xi_{t}\eta_{t})|=O\bigl(\frac{1}{|1-\rho_{\xi,T}\rho_{\eta,T}|}\wedge T\bigr)$,
\item \label{enu:supE2}$\sup_{t\leq T}\mathbb{E}(\xi_{t}^{2}\eta_{t}^{2})=O\Bigl(\bigl[\frac{1}{|1-\rho_{\xi,T}|}\wedge T\bigr]\cdot\bigr[\frac{1}{|1-\rho_{\eta,T}|}\wedge T\bigr]\Bigr)$,
\item \label{enu:Esum2}$\mathbb{E}\big[(\sum_{t=1}^{T}\xi_{t}\eta_{t})^{2}\bigr]=O\bigl(\frac{T^{2}}{(1-\rho_{\xi,T}\rho_{\eta,T})^{2}}\wedge T^{4}\bigr)$,
\item \label{enu:Esumsum2}$\mathbb{E}\bigl[(\sum_{t=1}^{T}\xi_{t}\sum_{t=1}^{T}\eta_{t})^{2}\bigr]=O\Bigl(\bigl[\frac{T}{(1-\rho_{\xi,T})^{2}}\wedge T^{3}\bigr]\cdot\bigr[\frac{T}{(1-\rho_{\eta,T})^{2}}\wedge T^{3}\bigr]\Bigr)$.
\end{enumerate}
\end{enumerate}

\subsection{Technical Lemmas for WG \label{subsec:lemmas-for-WG}}

For notational convenience, we simply use $T$ instead of $T-1$ to
denote the effective time length. In addition, we assume that the
drift $\alpha_{i}=0$, which is without loss of generality because
the within-group transformation implies that $\xtd_{i,t}=(x_{i,t}-\alpha_{i})-(\bar{x}_{i}-\alpha_{i})$
where $x_{i,t}-\alpha_{i}$ is by design an AR(1) process without
an intercept and its group mean is exactly $\bar{x}_{i}-\alpha_{i}$.
The covariances of $e_{i,t}$ and $v_{i,t}$ are defined as $\omega_{ee}^{*}=\mathbb{E}(e_{i,t}^{2})$,
$\omega_{vv}^{*}=\mathbb{E}(v_{i,t}^{2})$ and $\omega_{ev}^{*}=\mathbb{E}(e_{i,t}v_{i,t})$.
\begin{lem}
\label{lem:epct}Suppose Assumptions~\ref{assump:initval} and \ref{assump:innov}
hold. For for each $i$ and $\gamma\in[0,1]$, as $T\to\infty$ we
have
\end{lem}
\begin{enumerate}
\item \label{enu:E_sum_x_2}$\mathbb{E}\bigl[\bigl(\sum_{t=1}^{T}x_{i,t}\bigr)^{2}\bigr]=O(T^{1+2\gamma})$,
\begin{enumerate}
\item \label{enu:E_sum_x_4}$\mathbb{E}\bigl[\bigl(\sum_{t=1}^{T}x_{i,t}\bigr)^{4}\bigr]=O(T^{2+4\gamma})$,
\item \label{enu:E_sum_x2_2}$\mathbb{E}\bigl[\bigl(\sum_{t=1}^{T}x_{i,t}^{2}\bigr)^{2}\bigr]=O(T^{2+2\gamma})$,
\item \label{enu:E_sum_x_sum_e_2}$\mathbb{E}\bigl[\bigl(\sum_{t=1}^{T}x_{i,t}\sum_{t=1}^{T}e_{i,t+1}\bigr)^{2}\bigr]=O(T^{2(1+\gamma)})$.
\end{enumerate}
\end{enumerate}
Throughout Lemmas~\ref{lem:wg-ts-converge}--\ref{lem:esterror},
Assumptions~\ref{assump:initval} and \ref{assump:iid} are imposed.
To present the asymptotic theory for WG estimator, we first define
a few objects:
\begin{align}
Q_{i,T}^{{\rm WG}} & :=\frac{1}{T^{1+\gamma}}\sum_{t=1}^{T}x_{i,t}^{2},\label{eq:def_Q}\\
R_{i,T}^{{\rm WG}} & :=\Biggl(\frac{1}{T^{1+\frac{\gamma}{2}}}\sum_{t=1}^{T}x_{i,t}\Biggr)^{2},\label{eq:def_R}\\
Z_{i,T}^{{\rm WG}} & :=\frac{1}{T^{\frac{1}{2}(1+\gamma)}}\sum_{t=1}^{T}x_{i,t}e_{i,t+1},\label{eq:def_Z}\\
H_{i,T}^{{\rm WG}} & :=\frac{1}{T^{\frac{1}{2}(3+\gamma)}}\sum_{t=1}^{T}x_{i,t}\sum_{t=1}^{T}e_{i,t+1},\label{eq:def_H}
\end{align}
and
\begin{align}
L_{i,T}^{{\rm WG}} & :=Z_{i,T}^{{\rm WG}}-\left[H_{i,T}^{{\rm WG}}-\mathbb{E}\left(H_{i,T}^{{\rm WG}}\right)\right]=\frac{1}{T^{\frac{1}{2}(1+\gamma)}}\sum_{t=1}^{T}\tilde{x}_{i,t}e_{i,t+1}+\mathbb{E}\left(H_{i,T}^{{\rm WG}}\right).\label{eq:def_L}
\end{align}
Note that $L_{i,T}^{\mathrm{WG}}$ is centered and thus has zero mean.
The expectation of $H_{i,T}^{\mathrm{WG}}$ can be easily deduced
using the DGP formula $x_{i,t}=\sum_{j=1}^{t}\rho^{*t-j}v_{i,j}+\rho^{*t}x_{i,0}$
and the m.d.s.\ assumption:
\begin{align}
\mathbb{E}\left(H_{i,T}^{{\rm WG}}\right) & =\mathbb{E}\Biggl(\frac{1}{T^{\frac{1}{2}(3+\gamma)}}\sum_{t=1}^{T}\sum_{j=1}^{t}\rho^{*t-j}v_{i,j}\sum_{s=1}^{T}e_{i,s+1}\Biggr)=\frac{1}{T^{\frac{1}{2}(3+\gamma)}}\sum_{s=1}^{T}\sum_{j=1}^{T}\Biggl(\sum_{t=j}^{T}\rho^{*t-j}\Biggr)\mathbb{E}(v_{i,j}e_{i,s+1})\nonumber \\
 & =\frac{\omega_{ev}^{*}}{T^{\frac{1}{2}(3+\gamma)}}\sum_{j=2}^{T}\Biggl(\sum_{t=j}^{T}\rho^{*t-j}\Biggr)=\frac{\omega_{ev}^{*}}{T^{\frac{1}{2}(3+\gamma)}}\sum_{t=2}^{T}\sum_{j=2}^{t}\rho^{*t-j},\label{eq:E of H}
\end{align}
where the third equality applies the fact that $\mathbb{E}(v_{i,j}e_{i,s+1})=\omega_{ev}^{*}$
only if $j=s+1$ and 0 otherwise.

\begin{lem}
\label{lem:wg-ts-converge} If $\gamma\in[0,1)$, we have, for each
$i$, as $T\to\infty$,
\end{lem}
\begin{enumerate}
\item \label{enu:Q_wg}$Q_{i,T}^{{\rm WG}}\to_{p}V_{xx}$,
\begin{enumerate}
\item \label{enu:Z_wg}$Z_{i,T}^{{\rm WG}}\to_{d}\mathcal{N}(0,\omega_{ee}^{*}V_{xx})$,
\end{enumerate}
\end{enumerate}
\begin{lem}
where
\begin{equation}
V_{xx}:=\begin{cases}
\omega_{vv}^{*}/(1-\rho^{*2}) & \gamma=0,\\
\omega_{vv}^{*}/(-2c^{*}) & \gamma\in(0,1).
\end{cases}\label{eq:V_xx}
\end{equation}
\end{lem}
\begin{lem}
\label{lem:joint} If $\gamma\in[0,1)$, we have, as $(n,T)\to\infty$,
\end{lem}
\begin{enumerate}
\item $n^{-1}\sum_{i=1}^{n}Q_{i,T}^{{\rm WG}}\to_{p}V_{xx}$,
\begin{enumerate}
\item $n^{-1}\sum_{i=1}^{n}R_{i,T}^{{\rm WG}}\to_{p}0$,
\item $n^{-1/2}\sum_{i=1}^{n}Z_{i,T}^{{\rm WG}}\to_{d}\mathcal{N}(0,\omega_{ee}^{*}V_{xx})$,
\item \label{enu:L_wg_clt}$n^{-1/2}\sum_{i=1}^{n}L_{i,T}^{{\rm WG}}\to_{d}\mathcal{N}(0,\omega_{ee}^{*}V_{xx})$,
where $V_{xx}$ is defined in \eqref{eq:V_xx}.
\end{enumerate}
\end{enumerate}
\begin{lem}
\label{lem:lurjoint} If $\gamma=1$, we have, as $(n,T)\to\infty$,
\end{lem}
\begin{enumerate}
\item $n^{-1}\sum_{i=1}^{n}Q_{i,T}^{{\rm WG}}\to_{p}\Omega_{c^{*}}$,
\begin{enumerate}
\item $n^{-1}\sum_{i=1}^{n}R_{i,T}^{{\rm WG}}\to_{p}\Sigma_{c^{*}}$,
\item \label{enu:L_wg_clt_1}$n^{-1/2}\sum_{i=1}^{n}L_{i,T}^{{\rm WG}}\to_{d}\mathcal{N}(0,\Sigma_{\tilde{x}e})$,
\end{enumerate}
\end{enumerate}
\begin{lem}
where
\begin{align*}
\Omega_{c^{*}} & =\mathbb{E}\left[\int_{0}^{1}J_{2,c^{*}}(r)^{2}\,dr\right],\quad\Sigma_{c^{*}}=\mathbb{E}\left[\biggl(\int_{0}^{1}J_{2,c^{*}}(r)\,dr\biggr)^{2}\right],\\
\Sigma_{\tilde{x}e} & =\mathrm{var}\left[\int_{0}^{1}\biggl(J_{2,c^{*}}(r)-\int_{0}^{1}J_{2,c^{*}}(\tau)\,d\tau\biggr)\,dB_{1}(r)\right].
\end{align*}
The analytic formulae of $\Omega_{c^{*}}$ and $\Sigma_{c^{*}}$ are
respectively given by \eqref{eq:E int J2} and \eqref{eq:E of squared int J}
in \prettyref{lem:expectations}. The analytic formula of $\Sigma_{\tilde{x}e}$
is omitted as it is irrelevant to uniform integrability. Here, we
let $\bm{B}(r):=[B_{1}(r),B_{2}(r)]'$ be a 2-dimensional Brownian
motion with covariances $\mathbb{E}[B_{1}(1)^{2}]=\omega_{ee}^{*}$,
$\mathbb{E}[B_{2}(1)^{2}]=\omega_{vv}^{*}$ and $\mathbb{E}[B_{1}(1)B_{2}(1)]=\omega_{ev}^{*}$,
and define the  functional $J_{2,c^{*}}(r):=\int_{s=0}^{r}e^{(r-s)c^{*}}\,dB_{2}(s)$.
\end{lem}
The following lemma shows the estimation error of $b_{n,T}^{\fe}(\rho)$
(scaled by $\varsigma^{\mathrm{WG}}$) when we plug in a consistent
estimator $\hat{\rho}$ with convergence rate $\hat{\rho}-\rho^{*}=O_{p}\bigl(T^{-\frac{1}{2}(1+\gamma)}\bigr)$.
\begin{lem}
\label{lem:esterror}Suppose $\hat{\rho}-\rho^{*}=O_{p}\bigl(T^{-\frac{1}{2}(1+\gamma)}\bigr)$.
Under \prettyref{assump:initval} and \ref{assump:innov}, the bias
estimation error has the stochastic order
\[
r_{n,T}^{\mathrm{WG}}(\hat{\rho}):=\bigl[b_{n,T}^{{\rm WG}}(\hat{\rho})-b_{n,T}^{{\rm WG}}(\rho^{*})\bigr]\big/\varsigma^{{\rm WG}}=O_{p}\left(\sqrt{\frac{n}{T^{1-3\gamma}}}|\hat{\rho}-\rho^{*}|\right)
\]
as $(n,T)\to\infty$, where $b_{n,T}^{{\rm WG}}(\rho)$ is defined
in \eqref{eq:fe_bias} and $\varsigma^{\mathrm{WG}}$ is defined in
\eqref{eq:fe_se-1}.
\end{lem}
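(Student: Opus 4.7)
The plan is to start from the definition (\ref{eq:fe_bias}) and write
\begin{equation*}
b_{n,T}^{\fe}(\hat{\rho}) - b_{n,T}^{\fe}(\rho^{*}) = \frac{n}{(T-1)\sum_{i=1}^{n}\sum_{t=1}^{T-1}\td{x}_{i,t}^{2}} \sum_{t=2}^{T-1}\sum_{s=2}^{t}\bigl(\hat{\rho}^{t-s}-\rho^{*t-s}\bigr).
\end{equation*}
Two ingredients are already in place: \lemref{joint}(i) and \lemref{lurjoint}(i) give $\sum_{i}\sum_{t}\td{x}_{i,t}^{2}=O_{p}(nT^{1+\gamma})$, while \propref{bhatfe} (together with the variance-convergence step in its proof) implies the scaling $\varsigma^{\fe}\asymp(nT^{1+\gamma})^{-1/2}$. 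The whole problem therefore reduces to controlling the double sum on the right-hand side.

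The core step is a mean value expansion $\hat{\rho}^{t-s}-\rho^{*t-s}=(t-s)\tilde{\rho}^{t-s-1}(\hat{\rho}-\rho^{*})$ for some $\tilde{\rho}$ between $\hat{\rho}$ and $\rho^{*}$, combined with the uniform estimate
\begin{equation*}
\sum_{k=0}^{T-1} k\,|\rho^{*}|^{k-1}=O(T^{2\gamma}) \quad \text{for every } \gamma\in[0,1]\text{ and } c^{*}\in\mathbb{R},
\end{equation*}
which I would verify case by case. In the stationary case the sum converges to $(1-\rho^{*})^{-2}=O(1)$; in MI/LI with $c^{*}<0$ the same identity gives $(1-\rho^{*})^{-2}=O(T^{2\gamma})$ because $1-\rho^{*}=-c^{*}/T^{\gamma}$; and in UR/LE with $\gamma=1$ the factor $\rho^{*k-1}$ remains bounded by $e^{|c^{*}|}$ for $k\leq T$, so the crude estimate $\sum_{k=0}^{T-1}k=O(T^{2})$ already matches $O(T^{2\gamma})$. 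Summing the MVT expansion over both $t$ and $s$ then yields $\sum_{t=2}^{T-1}\sum_{s=2}^{t}(t-s)|\rho^{*}|^{t-s-1}=O(T^{1+2\gamma})$.

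Because $|\hat{\rho}-\rho^{*}|=O_{p}(T^{-(1+\gamma)/2})=o_{p}(T^{-\gamma})$, on an event of probability tending to one $\tilde{\rho}$ lies in the same persistence regime as $\rho^{*}$ and satisfies $|\tilde{\rho}|^{k-1}\leq C|\rho^{*}|^{k-1}$ for all $k\leq T$ and a deterministic constant $C$, so the $O(T^{1+2\gamma})$ bound survives with $\tilde{\rho}$ in place of $\rho^{*}$. Putting everything together,
\begin{equation*}
b_{n,T}^{\fe}(\hat{\rho}) - b_{n,T}^{\fe}(\rho^{*}) = \frac{n}{T\cdot nT^{1+\gamma}}\cdot O_{p}\bigl(T^{1+2\gamma}\bigr)\cdot|\hat{\rho}-\rho^{*}| = O_{p}\bigl(T^{\gamma-1}|\hat{\rho}-\rho^{*}|\bigr),
\end{equation*}
and dividing by $\varsigma^{\fe}\asymp(nT^{1+\gamma})^{-1/2}$ delivers $r_{n,T}^{\fe}(\hat{\rho})=O_{p}\bigl(\sqrt{nT^{3\gamma-1}}|\hat{\rho}-\rho^{*}|\bigr)=O_{p}\bigl(\sqrt{n/T^{1-3\gamma}}|\hat{\rho}-\rho^{*}|\bigr)$, which is the claim.

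The main obstacle will be the uniform $O(T^{2\gamma})$ control of $\sum_{k}k|\rho^{*}|^{k-1}$: the qualitative behavior of $\rho^{*k}$ switches sharply across the five persistence regimes, and the MVT intermediate $\tilde{\rho}$ is random and could in principle sit in a different regime than $\rho^{*}$. The ingredient that saves the argument is precisely the assumed $O_{p}(T^{-(1+\gamma)/2})$ rate of $\hat{\rho}-\rho^{*}$: being $o_{p}(T^{-\gamma})$, it traps $\tilde{\rho}$ inside a neighborhood of $\rho^{*}$ on which $(1+|\tilde{\rho}-\rho^{*}|)^{T}=O_{p}(1)$, which is the exact regularity needed for the case-by-case bound to carry over to the random intermediate value.
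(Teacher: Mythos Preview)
Your approach is essentially the paper's: apply the mean value theorem to $f_T(\rho)=\sum_{t}\sum_{s}\rho^{t-s}$, show the derivative at the intermediate point is $O_p(T^{1+2\gamma})$ via a case split on $\gamma$, and combine with $1/\varsigma^{\fe}=O_p(\sqrt{nT^{1+\gamma}})$. Two small wrinkles to tighten: the equality $O_p(T^{-(1+\gamma)/2})=o_p(T^{-\gamma})$ is false at $\gamma=1$ (there it is only $O_p(T^{-1})$, but that still suffices for your crude $\sum_k k=O(T^2)$ bound since $(1+O_p(T^{-1}))^T=O_p(1)$); and the pointwise claim $|\tilde{\rho}|^{k-1}\leq C|\rho^{*}|^{k-1}$ fails for $\gamma<1$ and $k$ near $T$, so for that regime argue instead that $1-\tilde{\rho}\geq |c^{*}|/(2T^{\gamma})$ w.p.a.1, whence $\sum_{k}k\tilde{\rho}^{k-1}\leq(1-\tilde{\rho})^{-2}=O_p(T^{2\gamma})$ directly.
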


\subsection{Technical Lemmas for IVX\label{subsec:lemmas-for-IVX}}

As we do in Section~\ref{subsec:lemmas-for-WG}, we assume that $\alpha_{i}=0$,
which is without loss of generality because $\alpha_{i}$ is either
eliminated by within-group transformation or by the first differencing
when constructing the IV.

Define
\begin{equation}
\psi_{i,t}=\sum_{j=1}^{t}\rho_{z}^{t-j}x_{i,j-1}\label{eq:psi}
\end{equation}
and $\zeta_{i,t}$ as an AR(1) process such that $\zeta_{i,0}=0$
and
\begin{equation}
\zeta_{i,t}=\rho_{z}\zeta_{i,t-1}+v_{i,t},\qquad t=1,\dots,T.\label{eq:zt def}
\end{equation}
By Equations (13) and (23) in \citet{phillips2009econometric}, the
IV $z_{i,t}$ can be decomposed into one of the following ways:
\begin{equation}
z_{i,t}=\zeta_{i,t}-\left(1-\rho^{*}\right)\psi_{i,t},\label{eq:zeta decom 1}
\end{equation}
or
\begin{equation}
z_{i,t}=x_{i,t}-\rho_{z}^{t}x_{i,0}-\left(1-\rho_{z}\right)\psi_{i,t}.\label{eq:zeta decom 2}
\end{equation}
Throughout Lemmas~\ref{lem:ivxepct}--\ref{lem:ivxjoint}, Assumptions~\ref{assump:initval}
and \ref{assump:innov} are imposed.

\subsubsection{Bounds for Moments}
\begin{lem}
\label{lem:ivxepct}For $\gamma\in[0,1]$, we have, as $T\to\infty$,
\end{lem}
\begin{enumerate}
\item \label{enu:zeta_x}$\mathbb{E}\bigl[\bigl(\sum_{t=1}^{T}\zeta_{i,t}x_{i,t}\bigr)^{2}\bigr]=O\bigl(T^{2[1+(\theta\wedge\gamma)]}\bigr)$,\\
$\mathbb{E}\bigl[\bigl(\sum_{t=1}^{T}\zeta_{i,t}\sum_{t=1}^{T}x_{i,t}\bigr)^{2}\bigr]=O\bigl(T^{2(1+\theta+\gamma)}\bigr)$.
\begin{enumerate}
\item \label{enu:phi_x}$\mathbb{E}\bigl[\bigl(\sum_{t=1}^{T}\psi_{i,t}x_{i,t}\bigr)^{2}\bigr]=O\bigl(T^{2[1+(\theta\wedge\gamma)+\gamma]}\bigr)$,\\
$\mathbb{E}\bigl[\bigl(\sum_{t=1}^{T}\psi_{i,t}\sum_{t=1}^{T}x_{i,t}\bigr)^{2}\bigr]=O\bigl(T^{2(1+\theta+2\gamma)}\bigr)$.
\item \label{enu:zx}$\mathbb{E}\bigl[\bigl(\sum_{t=1}^{T}z_{i,t}x_{i,t}\bigr)^{2}\bigr]=O\bigl(T^{2[1+(\theta\wedge\gamma)]}\bigr)$,\\
$\mathbb{E}\bigl[\bigl(\sum_{t=1}^{T}z_{i,t}\sum_{t=1}^{T}x_{i,t}\bigr)^{2}\bigr]=O\bigl(T^{2(1+\theta+\gamma)}\bigr)$.
\item \label{enu:zeta_e}$\mathbb{E}\bigl[\bigl(\sum_{t=1}^{T}\zeta{}_{i,t}\sum_{t=1}^{T}e_{i,t+1}\bigr)^{2}\bigr]=O\bigl(T^{2(1+\theta)}\bigr)$.
\item \label{enu:phi_e}$\mathbb{E}\bigl[\bigl(\sum_{t=1}^{T}\psi_{i,t}\sum_{t=1}^{T}e_{i,t+1}\bigr)^{2}\bigr]=O\bigl(T^{2(1+\theta+\gamma)}\bigr)$.
\item \label{enu:ze}$\mathbb{E}\bigl[\bigl(\sum_{t=1}^{T}z_{i,t}\sum_{t=1}^{T}e_{i,t+1}\bigr)^{2}\bigr]=O\bigl(T^{2+\theta+(\theta\wedge\gamma)}\bigr)$.
\item \label{enu:z2}$\mathbb{E}\bigl[\bigl(\sum_{t=1}^{T}z_{i,t}^{2}\bigr)^{2}\bigr]=O\bigl(T^{2[1+(\theta\wedge\gamma)]}\bigr)$.
\item \label{enu:z4}$\mathbb{E}\bigl[\bigl(\sum_{t=1}^{T}z_{i,t}\bigr)^{4}\bigr]=O\left(T^{2[1+\theta+(\theta\wedge\gamma)]}\right)$.
\end{enumerate}
\end{enumerate}
\begin{lem}
In addition, we have
\begin{equation}
\mathbb{E}\left(\sum_{t=1}^{T}z_{i,t}\sum_{t=1}^{T}e_{i,t+1}\right)=\sum_{h=0}^{T-2}\Psi_{h,T}(\rho^{*},\rho_{z})\omega_{ev,h}^{*}=O(T^{\theta+\gamma}),\label{eq:E sum zeta sum e}
\end{equation}
where
\[
\Psi_{h,T}(\rho^{*},\rho_{z}):=\sum_{k=h+2}^{T}\frac{\rho_{z}^{T-k+1}-\rho^{*T-k+1}}{\rho_{z}-\rho^{*}}\quad\text{and}\quad\omega_{ev,h}^{*}:=\mathbb{E}(v_{i,t+h}e_{i,t}).
\]
\end{lem}

\subsubsection{Asymptotic Convergence}

To characterize the panel IVX estimator, we define the following quantities:
\begin{align}
Q_{i,T} & :=\frac{1-(\rho^{*}\rho_{z})^{2}}{T}\sum_{t=1}^{T}z_{i,t}x_{i,t},\label{eq:def Q IVX}\\
R_{i,T} & :=\frac{1-(\rho^{*}\rho_{z})^{2}}{T^{2}}\sum_{t=1}^{T}z_{i,t}\sum_{t=1}^{T}x_{i,t},\label{eq:def R IVX}\\
S_{i,T} & :=\frac{1-(\rho^{*}\rho_{z})^{2}}{T}\sum_{t=1}^{T}z_{i,t}^{2},\label{eq:def S IVX}\\
Z_{i,T} & :=\sqrt{\frac{1-(\rho^{*}\rho_{z})^{2}}{T}}\sum_{t=1}^{T}z_{i,t}e_{i,t+1},\label{eq:def Z IVX}\\
H_{i,T} & :=\sqrt{\frac{1-(\rho^{*}\rho_{z})^{2}}{T}}\cdot\dfrac{1}{T}\sum_{t=1}^{T}z_{i,t}\sum_{t=1}^{T}e_{i,t+1},\label{eq:def H IVX}\\
L_{i,T} & :=\sqrt{\frac{1-(\rho^{*}\rho_{z})^{2}}{T}}\sum_{t=1}^{T}\tilde{z}_{i,t}e_{i,t+1}=Z_{i,T}-H_{i,T}+\mathbb{E}(H_{i,T}).\label{eq:def L IVX}
\end{align}
For convenience, denote the long-run variances of innovations $e_{i,t}$
and $v_{i,t}$ as
\[
\omega_{ee}^{*}=\mathbb{E}\left(e_{i,t}^{2}\right),\quad\omega_{vv}^{*}=\sum_{\ell=-\infty}^{\infty}\Gamma_{vv}(\ell),\quad\Gamma_{vv}(\ell)=\mathbb{E}(v_{t}v_{t-\ell}).
\]
Note that $e_{i,t}$ is m.d.s.~and thus its long-run variance is
the vanilla unconditional variance.

Lemmas~\ref{lem:ivx_marginal} and \ref{lem:ivxjoint} present asymptotic
theory (as $T\to\infty$ and as $(n,T)\to\infty$, respectively) for
these quantities.
\begin{lem}
\label{lem:ivx_marginal}For each individual $i$, as $T\to\infty$,
\end{lem}
\begin{enumerate}
\item \label{enu:Q_dto}$Q_{i,T}\to_{d}Q_{zx}$, where
\[
Q_{zx}=\begin{cases}
2\bigl(\int_{0}^{1}J_{v,c^{*}}\,dB_{v}+\omega_{vv}^{*}+c^{*}\int_{0}^{1}J_{v,c^{*}}^{2}\bigr) & \gamma=1,\\
\omega_{vv}^{*} & \gamma\in[0,1).
\end{cases}
\]

\begin{enumerate}
\item \label{enu:R_dto}$R_{i,T}\to_{d}R_{zx}$, where
\[
R_{zx}=\begin{cases}
2\left[B_{v}(1)\int_{0}^{1}J_{v,c^{*}}+\bigl(\int_{0}^{1}J_{v,c^{*}}\bigr)^{2}\right] & \gamma=1,\\
0 & \gamma\in[0,1).
\end{cases}
\]
\item \label{enu:S_pto}$S_{i,T}\to_{p}\omega_{vv}^{*}$.
\item \label{enu:Z_dto}$Z_{i,T}\to_{d}\mathcal{N}(0,{\rm S}_{xe})$, where
\[
{\rm S}_{xe}=\begin{cases}
(1-\rho^{*2}){\rm S}_{0,xe} & \gamma=0,\\
\omega_{ee}^{*}\omega_{vv}^{*} & \gamma\in(0,1],
\end{cases}
\]
 with ${\rm S}_{0,xe}=\mathbb{E}(x_{i,t}^{2}e_{i,t+1}^{2})$.
\end{enumerate}
\end{enumerate}
\begin{lem}
Here $Q_{i,T}$, $R_{i,T}$, $S_{i,T}$, and $Z_{i,T}$ are defined
in \eqref{eq:def Q IVX}, \eqref{eq:def R IVX}, \eqref{eq:def S IVX},
and \eqref{eq:def Z IVX}, respectively.
\end{lem}
\begin{lem}
\label{lem:ivxjoint}As $(n,T)\to\infty$ we have,
\end{lem}
\begin{enumerate}
\item \label{enu:Q_pto}$n^{-1}\sum_{i=1}^{n}Q_{i,T}\to_{p}\mathbb{E}(Q_{zx})$,
\begin{enumerate}
\item \label{enu:R_pto}$n^{-1}\sum_{i=1}^{n}R_{i,T}\to_{p}\mathbb{E}(R_{zx})$,
\item \label{enu:S_pto_joint}$n^{-1}\sum_{i=1}^{n}S_{i,T}\to_{p}\omega_{vv}^{*}$,
\item \label{enu:L_dto}$n^{-1/2}\sum_{i=1}^{n}L_{i,T}\to_{d}\mathcal{N}(0,{\rm S}_{xe})$.
\end{enumerate}
\end{enumerate}
\begin{lem}
Here $Q_{i,T}$, $R_{i,T}$, $S_{i,T}$, and $L_{i,T}$ are defined
in \eqref{eq:def Q IVX}, \eqref{eq:def R IVX}, \eqref{eq:def S IVX},
and \eqref{eq:def L IVX}, respectively.
\end{lem}
The next lemma shows the estimation error for the (co)variances $\omega_{ev,h}^{*}$,
$\omega_{ee}^{*}$ and $\omega_{vv}^{*}$ when generic estimates $\hat{\beta}$
and $\hat{\rho}$ are plugged in. Define
\[
\hat{\omega}_{ev,h}(\hat{\beta},\hat{\rho})=\dfrac{1}{n(T-h-1)}\sum_{i=1}^{n}\sum_{t=1}^{T-h-1}(\tilde{y}_{i,t+h+1}-\hat{\beta}\tilde{x}_{i,t+h})(x_{i,t+1}-\hat{\rho}\tilde{x}_{i,t}),
\]
\[
\hat{\omega}_{ee}(\hat{\beta})=\dfrac{1}{n(T-h-1)}\sum_{i=1}^{n}\sum_{t=1}^{T-1}(\tilde{y}_{i,t+1}-\hat{\beta}\tilde{x}_{i,t})^{2},
\]
\[
\hat{\omega}_{vv}(\hat{\rho})=\dfrac{1}{n(T-h-1)}\sum_{i=1}^{n}\sum_{t=1}^{T-1}(x_{i,t+1}-\hat{\rho}\tilde{x}_{i,t})^{2}.
\]

\begin{lem}
\label{lem:Omega_hat} Under Assumptions~\ref{assump:initval} and
\ref{assump:innov}, we have, for $G=o(T)$, as $(n,T)\to\infty$,
\begin{align}
\sum_{h=1}^{G}|\hat{\omega}_{ev,h}(\hat{\beta},\hat{\rho})-\omega_{ev,h}^{*}|^{2} & =O_{p}\left(G\left[\frac{1}{\sqrt{nT}}+|\hat{\rho}-\rho^{*}|+|\hat{\beta}-\beta^{*}|+T^{\gamma}|\hat{\rho}-\rho^{*}||\hat{\beta}-\beta^{*}|\right]^{2}\right),\label{eq:hat omg 12 rate}\\
|\hat{\omega}_{ee}(\hat{\beta})-\omega_{ee}^{*}| & =O_{p}\left(\dfrac{1}{\sqrt{nT}}+|\hat{\beta}-\beta^{*}|+T^{\gamma}|\hat{\beta}-\beta^{*}|^{2}\right),\label{eq:hat omg 11 rate}\\
|\hat{\omega}_{vv}(\hat{\rho})-\omega_{vv}^{*}| & =O_{p}\left(\dfrac{1}{\sqrt{nT}}+|\hat{\rho}-\rho^{*}|+T^{\gamma}|\hat{\rho}-\rho^{*}|^{2}\right),\label{eq:hat omg 22 rate}
\end{align}
\end{lem}
An immediate corollary to the above lemma is the following, where
$\hat{\beta}-\beta^{*}=O_{p}\bigl((nT)^{-1/2}+T^{-1}\bigr)$ and $\hat{\rho}-\rho^{*}=O_{p}\bigl((nT)^{-1/2}+T^{-1}\bigr)$;
these rates are satisfied by WG.
\begin{cor}
\label{cor:omega_mixed}Suppose Assumptions~\ref{assump:initval}
and \ref{assump:innov} hold. If $\hat{\beta}-\beta^{*}=O_{p}\bigl((nT)^{-1/2}+T^{-1}\bigr)$
and $\hat{\rho}-\rho^{*}=O_{p}\bigl((nT)^{-1/2}+T^{-1}\bigr)$. Then
for all $\gamma\in[0,1]$, we have
\[
\sum_{h=1}^{G}|\hat{\omega}_{ev,h}(\hat{\beta},\hat{\rho})-\omega_{ev,h}^{*}|^{2}=O_{p}\left(\dfrac{G}{nT}+\frac{G}{T^{2}}\right).
\]

\end{cor}
The following lemma shows the estimation error for $b_{n,T}^{\mathrm{IVX}}(\{\omega_{ev,h}^{*}\},\rho^{*},\rho_{z})$
when we plug in a consistent estimator $\hat{\rho}$ with convergence
rate $\hat{\rho}-\rho^{*}=O_{p}\bigl(T^{-\eta}\bigr)$ and $\hat{\omega}_{h}$
with convergence rate given by \prettyref{cor:omega_mixed}. An estimator
for $b_{n,T}^{\mathrm{IVX}}(\rho)$ is given by
\[
\hat{b}_{n,T}^{\mathrm{IVX}}(\{\hat{\omega}_{ev,h}\},\hat{\rho},\rho_{z})=\frac{n\sum_{h=0}^{G}\Psi_{h,T}(\hat{\rho},\rho_{z})\hat{\omega}_{ev,h}}{T\sum_{i=1}^{n}\sum_{t=1}^{T}\tilde{z}_{i,t}x_{i,t}}.
\]

\begin{lem}
\label{lem:ivxerror} Suppose Assumptions~\ref{assump:initval} and
\ref{assump:innov} hold. If $\hat{\rho}-\rho^{*}=O_{p}\bigl(T^{-\eta}\bigr)$
for some $\eta\geq0$ and $\hat{\rho}\leq1+O_{p}(T^{-1})$ and $\hat{\omega}_{ev,h}$
satisfies the rate in \prettyref{cor:omega_mixed}, then as $(n,T)\to\infty$
we have
\begin{align*}
 & \hat{b}_{n,T}^{\mathrm{IVX}}(\{\hat{\omega}_{ev,h}\},\hat{\rho},\rho_{z})-b_{n,T}^{\mathrm{IVX}}(\{\omega_{ev,h}^{*}\},\rho^{*},\rho_{z})\\
 & \quad=O_{p}\left(\dfrac{G}{\sqrt{nT^{5-2(\theta\vee\gamma)}}}+\frac{G}{T^{3-(\theta\vee\gamma)}}+\frac{|\hat{\rho}-\rho^{*}|}{T^{2-(\theta\vee\gamma)-\gamma}}+\frac{q_{\nu}^{G}}{T^{2-(\theta\vee\gamma)}}\right),
\end{align*}
where $b_{n,T}^{\mathrm{IVX}}(\{\omega_{ev,h}^{*}\},\rho^{*},\rho_{z})$
is given by \eqref{eq:ivx_bias}.
\end{lem}

\subsection{Moments of Stochastic Integrals\label{subsec:stoch_integral}}

In this subsection, we present a few analytic formulae regarding moments
of stochastic integrals. They appear in \prettyref{lem:lurjoint}
as the probability limits of finite sample moments, and are needed
for verifying  uniform integrability (by \prettyref{lem:u.i.}) in
the proof of \prettyref{lem:lurjoint}.
\begin{lem}
\label{lem:expectations}Let $\bm{B}(r):=[B_{1}(r),B_{2}(r)]'$ be
a 2-dimensional Brownian motion with covariances $\mathbb{E}[B_{1}(1)^{2}]=\omega_{ee}^{*}$,
$\mathbb{E}[B_{2}(1)^{2}]=\omega_{vv}^{*}$ and $\mathbb{E}[B_{1}(1)B_{2}(1)]=\omega_{ev}^{*}$.
Define a functional $J_{2,c^{*}}(r):=\int_{s=0}^{r}e^{c^{*}(r-s)}\,dB_{2}(s)$
where $c^{*}\in\mathbb{R}$ is a constant. We have the following analytic
formulae:
\begin{align}
\mathbb{E}\left[\int_{0}^{1}J_{2,c^{*}}(r)^{2}\,dr\right] & =\omega_{vv}^{*}\frac{e^{2c^{*}}-2c^{*}-1}{4c^{*2}},\label{eq:E int J2}\\
\mathbb{E}\left[\biggl(\int_{0}^{1}J_{2,c^{*}}(r)\,dr\biggr)^{2}\right] & =\omega_{vv}^{*}\frac{2c^{*}+(1-e^{c^{*}})(3-e^{c^{*}})}{2c^{*3}},\label{eq:E of squared int J}\\
\mathbb{E}\left[\biggl(B_{1}(1)\int_{0}^{1}J_{2,c^{*}}(r)\,dr\biggr)^{2}\right] & =\omega_{ee}^{*}\omega_{vv}^{*}\frac{2c^{*}+(1-e^{c^{*}})(3-e^{c^{*}})}{2c^{*3}}+2\omega_{ev}^{*2}\frac{(e^{c^{*}}-c^{*}-1)^{2}}{c^{*4}}.\label{eq:E of squared B_int_J}
\end{align}
If $c^{*}=0$, then the expressions are obtained by taking limits
of the right-hand side formulae as $c^{*}\to0$, and we have
\begin{align*}
\mathbb{E}\left[\int_{0}^{1}B_{2}(r)^{2}\,dr\right] & =\frac{1}{2}\omega_{vv}^{*},\quad\mathbb{E}\left[\biggl(\int_{0}^{1}B_{2}(r)\,dr\biggr)^{2}\right]=\frac{1}{3}\omega_{vv}^{*},\\
\mathbb{E}\left[\biggl(B_{1}(1)\int_{0}^{1}B_{2}(r)\,d\tau\biggr)^{2}\right] & =\frac{1}{3}\omega_{ee}^{*}\omega_{vv}^{*}+\frac{1}{2}\omega_{ev}^{*2}.
\end{align*}
\end{lem}

\section{Proofs of Preliminary Lemmas\label{sec:Proof-Pre-Lemmas}}
\begin{proof}[Proof of \prettyref{lem:series}]
 Denote $S_{n}:=\sum_{i=1}^{n}w_{n,i}a_{i}$. We have
\[
\left|S_{n}-\sum_{i=1}^{\infty}w_{i}a_{i}\right|\leq\sum_{i=1}^{n}|w_{n,i}-w_{i}||a_{i}|+\sum_{i=n+1}^{\infty}|w_{i}||a_{i}|.
\]
Since $\sup_{n,i}|w_{n,i}|\leq M$, we have $\sup_{i}|w_{i}|\leq M$.
Given $\varepsilon>0$, there is some $N_{1}>0$ so that we can choose
$n>N_{1}$ to achieve $\sum_{i=n+1}^{\infty}|w_{i}||a_{i}|\leq M\varepsilon$.
For the first term on the right-hand side, we write
\[
\sum_{i=1}^{n}|w_{n,i}-w_{i}||a_{i}|=\sum_{i=1}^{N_{2}}|w_{n,i}-w_{i}||a_{i}|+\sum_{i=N_{2}+1}^{n}|w_{n,i}-w_{i}||a_{i}|.
\]
We can choose $N_{2}$ large enough so that $\sum_{i=N_{2}+1}^{\infty}|w_{i}||a_{i}|\leq M\varepsilon$
and thus $\sum_{i=N_{2}+1}^{n}|w_{n,i}-w_{i}||a_{i}|\leq2M\varepsilon$.
Moreover, for each $i\leq N_{2}$, there exists $N_{3}(i)$ such that
for all $n>N_{3}(i)$ we have $|w_{n,i}-w_{i}|\leq\varepsilon$. We
can choose $N_{3}=\max_{i\leq N_{2}}N_{3}(i)$, then for $n>N_{3}$
it holds that $\sum_{i=1}^{N_{2}}|w_{n,i}-w_{i}||a_{i}|\leq A\varepsilon$.
Ultimately, we choose $N=\max(N_{1},N_{2},N_{3})$, then for $n\geq N$,
we have
\[
\left|S_{n}-\sum_{i=1}^{\infty}w_{i}a_{i}\right|\leq(A+3M)\varepsilon.
\]
Since $\varepsilon$ is arbitrary and $A$ and $M$ are absolute constants,
this shows $\lim_{n\to\infty}S_{n}=\sum_{i=1}^{\infty}w_{i}a_{i}$.
\end{proof}
\begin{proof}[Proof of \prettyref{lem:u.i.}]
The result can be found in classical probability books, for instance,
\citet{billingsley1968convergence}. Part (i) follows from the argument
between Equations (5.1) and (5.2) in \citet[p.~32]{billingsley1968convergence};
(ii) and \eqref{eq:limit EXT} come from Theorem~5.4 in \citet{billingsley1968convergence};
and (iii) is a special case of (ii) because $\epct(X_{T})\to0$ together
with nonnegativity of $X_{T}$ implies $X_{T}\pto0$ by Markov's inequality.
\end{proof}
\begin{proof}[Proof of \prettyref{lem:sum_rho}]
Without loss of generality, we assume $\rho_{T}\ge0$ in this proof
for simplicity of exposition. When $\rho_{T}\leq1-T^{-1},$ it follows
\begin{align*}
\sum_{t=1}^{T}\rho_{T}^{mt} & =\dfrac{\rho_{T}^{m}(1-\rho_{T}^{mT})}{1-\rho_{T}^{m}}\leq\dfrac{1}{1-\rho_{T}}=O\biggl(\dfrac{1}{1-\rho_{T}}\wedge T\biggr).
\end{align*}
On the other hand, when $\rho_{T}>1-T^{-1}$, we have
\begin{align*}
\sum_{t=1}^{T}\rho_{T}^{mt} & \leq\sum_{t=1}^{T}\left(1+\frac{c}{T}\right){}^{mt}=\left(1+\frac{c}{T}\right)^{m}\dfrac{(1+\frac{c}{T})^{mT}-1}{(1+\frac{c}{T})^{m}-1}\leq\left(1+\frac{c}{T}\right)^{m}\dfrac{(1+\frac{c}{T})^{mT}-1}{c/T}\\
 & =O(T)=O\left(\dfrac{1}{|1-\rho_{T}|}\wedge T\right).
\end{align*}
This completes the proof.
\end{proof}
\begin{proof}[Proof of \prettyref{lem:ma_represent}]
Note that $\eta_{t}$ can be expressed as
\begin{align*}
\eta_{t} & =\sum_{s=1}^{t}\rho_{\eta,T}^{t-s}v_{\eta,s}+\rho_{\eta,T}^{t}\eta_{0}\\
 & =\sum_{s=1}^{t}\rho_{\eta,T}^{t-s}\sum_{\tau=0}^{\infty}g_{\tau}\varepsilon_{\eta,s-\tau}+\rho_{\eta,T}^{t}\eta_{0}\\
 & =\sum_{\ell=-\infty}^{t}\sum_{s=\ell\vee1}^{t}\rho_{\eta,T}^{t-s}g_{s-\ell}\varepsilon_{\eta,\ell}+\rho_{\eta,T}^{t}\eta_{0}\\
 & =:\sum_{\ell=-\infty}^{t}\pi_{\eta,T}(t,\ell)\varepsilon_{\eta,\ell}+\rho_{\eta,T}^{t}\eta_{0},
\end{align*}
Under the condition $|g_{s}|\leq C_{g}q_{\nu}^{s}$, we have for $\ell<1$:
\[
|\pi_{\eta,T}(t,\ell)|\leq\sum_{s=1}^{t}|\rho_{\eta,T}|^{t-s}|g_{s-\ell}|\leq C_{g}\frac{|\rho_{\eta,T}|^{t}q_{\nu}^{1-\ell}-q_{\nu}^{t-\ell+1}}{|\rho_{\eta,T}|-q_{\nu}}\lesssim(|\rho_{\eta,T}|\vee q_{\nu})^{t}q_{\nu}^{1-\ell}
\]
and for $\ell\geq1$:
\[
|\pi_{\eta,T}(t,\ell)|\leq\sum_{s=\ell}^{t}|\rho_{\eta,T}|^{t-s}|g_{s-\ell}|\leq C_{g}\frac{|\rho_{\eta,T}|^{t-\ell+1}-q_{\nu}^{t-\ell+1}}{|\rho_{\eta,T}|-q_{\nu}}\lesssim(|\rho_{\eta,T}|\vee q_{\nu})^{t-\ell}.
\]
For any positive integer $k$, it holds that
\begin{align*}
\left|\sum_{\ell=-\infty}^{t}|\pi_{\eta,T}(t,\ell)|^{k}\right| & \lesssim\sum_{\ell=-\infty}^{0}[(|\rho_{\eta,T}|\vee q_{\nu})^{t}q_{\nu}^{1-\ell}]^{k}+\sum_{\ell=1}^{t}[(|\rho_{\eta,T}|\vee q_{\nu})^{t-\ell}]^{k}\\
 & \lesssim\sum_{\ell=0}^{t}(|\rho_{\eta,T}|\vee q_{\nu})^{k(t-\ell)}=\sum_{j=0}^{t}(|\rho_{\eta,T}|\vee q_{\nu})^{kj}.
\end{align*}
It follows that
\begin{align*}
\sup_{t\leq T}\left|\sum_{\ell=-\infty}^{t}|\pi_{\eta,T}(t,\ell)|^{k}\right| & \lesssim\sum_{j=0}^{T}(|\rho_{\eta,T}|\vee q_{\nu})^{kj}\\
 & =O\left(\dfrac{1}{|1-(|\rho_{\eta,T}|\vee q_{\nu})|}\wedge T\right)=O\left(\dfrac{1}{|1-\rho_{\eta,T}|}\wedge T\right),
\end{align*}
where the last equality is based on this fact: if $\rho_{\eta,T}<q$,
then $|1-q|_{\nu}^{-1}=O(|1-\rho_{\eta,T}|^{-1})$.
\end{proof}
\begin{proof}[Proof of \prettyref{lem:generic_AR}]
Similar to the proof of \prettyref{lem:sum_rho}, we assume $\rho_{T}\ge0$
here for simplicity. Part~\ref{enu:E_sum_eta2}. Since $\epsilon_{t}$
is an m.d.s., using \prettyref{lem:ma_represent} we have
\begin{align*}
\mathbb{E}\Biggl(\sum_{t=1}^{T}\eta_{t}^{2}\Biggr) & =\sum_{t=1}^{T}\mathbb{E}\left[\left(\sum_{\ell=-\infty}^{t}\pi_{T}(t,\ell)\varepsilon_{\ell}+\rho_{T}^{t}\eta_{0}\right)^{2}\right]\\
 & \leq2\mathbb{E}(\epsilon_{1}^{2})\sum_{t=1}^{T}\sum_{\ell=-\infty}^{t}[\pi_{T}(t,\ell)]^{2}+2\mathbb{E}(\eta_{0}^{2})\sum_{t=1}^{T}\rho_{T}^{2t}\\
 & \leq2\mathbb{E}(\epsilon_{1}^{2})\cdot T\sup_{t\leq T}\sum_{\ell=-\infty}^{t}[\pi_{T}(t,\ell)]^{2}+2\sqrt{\mathbb{E}(\eta_{0}^{4})}\sum_{t=1}^{T}\rho_{T}^{2t}\\
 & =O(T)\cdot O\left(\dfrac{1}{|1-\rho_{T}|}\wedge T\right)+O(T)\cdot O\left(\dfrac{1}{|1-\rho_{T}|}\wedge T\right)=O\left(\dfrac{T}{|1-\rho_{T}|}\wedge T^{2}\right).
\end{align*}

Part~\ref{enu:E_sum_eta_4}. For generic scalars $a,b>0$, we have
$(a+b)^{4}\leq8(a^{4}+b^{4})$. Applying this fact, we have
\begin{align}
\bigl({\textstyle \sum_{t=1}^{T}\eta_{t}}\bigr)^{4} & =\Biggl(\sum_{t=1}^{T}\sum_{\ell=-\infty}^{t}\pi_{T}(t,\ell)\varepsilon_{\ell}+\sum_{t=1}^{T}\rho_{T}^{t}\xi_{0}\Biggr)^{4}\nonumber \\
 & \leq8\Biggl(\sum_{t=1}^{T}\sum_{\ell=-\infty}^{t}\pi_{T}(t,\ell)\varepsilon_{\ell}\Biggr)^{4}+8\left(\sum_{t=1}^{T}\rho_{T}^{t}\eta_{0}\right)^{4}.\label{eq:xi 4 bound}
\end{align}
The expectation of the second term in the rightmost expression has
the order
\begin{align}
\mathbb{E}\left[\left(\sum_{t=1}^{T}\rho_{T}^{t}\eta_{0}\right)^{4}\right] & =\mathbb{E}(\eta_{0}^{4})\cdot\left(\sum_{t=1}^{T}\rho_{T}^{t}\right)^{4}\nonumber \\
 & =O(T^{2})\cdot O\left(\dfrac{1}{(1-\rho_{T})^{4}}\wedge T^{4}\right)=O\left(\dfrac{T^{2}}{(1-\rho_{T})^{4}}\wedge T^{6}\right).\label{eq:xi 4 term 2}
\end{align}
It thus suffices to bound the first term. Note that
\[
\sum_{t=1}^{T}\sum_{\ell=-\infty}^{t}\pi_{T}(t,\ell)\varepsilon_{\ell}=\sum_{\ell=-\infty}^{T}\Biggl(\sum_{t=\ell\vee1}^{T}\pi_{T}(t,\ell)\Biggr)\varepsilon_{\ell}=:\sum_{\ell=-\infty}^{T}R_{T,\ell}\varepsilon_{\ell},
\]
where $R_{T,\ell}:=\sum_{t=\ell\vee1}^{T}\pi_{T}(t,\ell)$ and by
\prettyref{lem:ma_represent} it holds that
\begin{align*}
\sum_{\ell=-\infty}^{T}|R_{T,\ell}|^{m} & \leq\sum_{\ell=-\infty}^{0}\left|\sum_{t=1}^{T}\pi_{T}(t,\ell)\right|^{m}+\sum_{\ell=1}^{T}\left|\sum_{t=\ell}^{T}\pi_{T}(t,\ell)\right|^{m}\\
 & \lesssim\sum_{\ell=-\infty}^{0}q^{m(1-\ell)}\left|\sum_{t=1}^{T}(|\rho_{T}|\vee q_{\nu})^{t}\right|^{m}+\sum_{\ell=1}^{T}\left|\sum_{t=\ell}^{T}(|\rho_{T}|\vee q_{\nu})^{t-\ell}\right|^{m}\\
 & =O\left(\frac{1}{|1-\rho_{T}|^{m}}\wedge T^{m}\right)+O\left(\frac{T}{|1-\rho_{T}|^{m}}\wedge T^{m+1}\right)\\
 & =O\left(\frac{T}{|1-\rho_{T}|^{m}}\wedge T^{m+1}\right).
\end{align*}
Hence, using the fact that $\varepsilon_{\ell}$ is an m.d.s., we
have
\begin{align}
 & \mathbb{E}\left[\Biggl(\sum_{t=1}^{T}\sum_{\ell=-\infty}^{t}\pi_{T}(t,\ell)\varepsilon_{\ell}\Biggr)^{4}\right]\nonumber \\
 & \quad=\sum_{\ell=-\infty}^{T}\sum_{k\neq\ell}R_{T,k}^{3}R_{T,\ell}\mathbb{E}(\varepsilon_{k}^{3}\varepsilon_{\ell})+\sum_{\ell=-\infty}^{T}\sum_{k=-\infty}^{T}R_{T,k}^{2}R_{T,\ell}^{2}\mathbb{E}(\varepsilon_{k}^{2}\varepsilon_{\ell}^{2})\nonumber \\
 & \quad\quad+\sum_{k=-\infty}^{T}\sum_{\ell\neq j=-\infty}^{k-1}R_{T,k}^{2}R_{T,\ell}R_{T,j}\mathbb{E}(\varepsilon_{k}^{2}\varepsilon_{\ell}\varepsilon_{j})\nonumber \\
 & \quad\lesssim\sum_{\ell=-\infty}^{T}|R_{T,\ell}|\sum_{k=-\infty}^{T}|R_{T,k}|^{3}+\left(\sum_{\ell=-\infty}^{T}|R_{T,\ell}|^{2}\right)^{2}\nonumber \\
 & \quad\quad+\left(\sup_{\ell\leq T}R_{T,\ell}^{2}\right)\sum_{k=-\infty}^{\infty}\sum_{\ell\neq j=-\infty}^{k-1}\left|\mathbb{E}(\varepsilon_{k}^{2}\varepsilon_{\ell}\varepsilon_{j})\right|\nonumber \\
 & \quad=O\left(\frac{T^{2}}{(1-\rho_{T})^{4}}\wedge T^{6}\right)+O\left(\frac{T^{2}}{(1-\rho_{T})^{4}}\wedge T^{6}\right)+O\left(\frac{1}{|1-\rho_{T}|}\wedge T^{m}\right)\cdot O(1)\nonumber \\
 & \quad=O\left(\frac{T^{2}}{(1-\rho_{T})^{4}}\wedge T^{6}\right).\label{eq:sum_sum_pi_4}
\end{align}
where we use the fact that $\mathbb{E}(\epsilon_{k}^{2}\epsilon_{\ell}\varepsilon_{j})=\kappa(k,k,\ell,j)$
when $k\geq\ell\neq j$ and the cumulant condition to get
\[
\sum_{k=-\infty}^{\infty}\sum_{\ell\neq j=-\infty}^{k-1}\left|\mathbb{E}(\varepsilon_{k}^{2}\varepsilon_{\ell}\varepsilon_{j})\right|<\infty.
\]
Then Part~\ref{enu:E_sum_eta_4} is confirmed by \eqref{eq:xi 4 bound},
\eqref{eq:xi 4 term 2} and \eqref{eq:sum_sum_pi_4}.
\end{proof}
\begin{proof}[Proof of \prettyref{lem:generic_two_ARs}\ref{enu:supE}]
We assume $\rho_{\xi,T},\rho_{\eta,T}\geq0$ here for simplicity.
The negative coefficients generate stationary processes and can be
easily handled by parallel arguments.

Using \eqref{eq:eta_LP_decomp}, we have
\begin{align}
\xi_{t}\eta_{t} & =\sum_{k=-\infty}^{t}\pi_{\xi,T}(t,k)\varepsilon_{\xi,k}\sum_{\ell=-\infty}^{t}\pi_{\eta,T}(t,\ell)\varepsilon_{\eta,\ell}+\rho_{\xi,T}^{t}\xi_{0}\sum_{\ell=-\infty}^{t}\pi_{\eta,T}(t,\ell)\varepsilon_{\eta,\ell}\nonumber \\
 & \qquad+\rho_{\eta,T}^{t}\eta_{0}\sum_{k=-\infty}^{t}\pi_{\xi,T}(t,k)\varepsilon_{\xi,k}+(\rho_{\xi,T}\rho_{\eta,T})^{t}\xi_{0}\eta_{0}\nonumber \\
 & =:\Lambda_{1,t}+\Lambda_{2,t}+\Lambda_{3,t}+\Lambda_{4,t}.\label{eq:xi_eta}
\end{align}
For $\Lambda_{2,t}$, by \eqref{eq:pi_bound} and $\sup_{s\leq0}|\mathbb{E}(\xi_{0}\varepsilon_{\eta,s})|\leq\infty$
in Condition~\ref{enu:error}, we obtain
\[
\sup_{t\leq T}{}\bigl|\mathbb{E}(\Lambda_{2,t})\bigr|\leq\sup_{t\leq T}\rho_{\xi,T}^{t}\sum_{\ell=-\infty}^{0}|\pi_{\eta,T}(t,\ell)||\mathbb{E}(\xi_{0}\varepsilon_{\eta,\ell})|=O(1).
\]
We can deduce in the same manner that
\[
\sup_{t\leq T}{}\bigl|\mathbb{E}(\Lambda_{3,t})\bigr|\leq\sup_{t\leq T}\rho_{\eta,T}^{t}\sum_{k=-\infty}^{0}|\pi_{\xi,T}(t,k)||\mathbb{E}(\eta_{0}\varepsilon_{\xi,k})|=O(1).
\]
For $\Lambda_{1,t}$, again by \eqref{eq:pi_bound},
\begin{align*}
\sup_{t\leq T}{}\bigl|\mathbb{E}(\Lambda_{1,t})\bigr| & =\sup_{t\leq T}{}\Biggl|\sum_{\ell=1}^{t}\pi_{\xi,T}(t,\ell)\pi_{\eta,T}(t,\ell)\mathbb{E}(\epsilon_{\xi,\ell}\varepsilon_{\eta,\ell})\Biggr|\lesssim\sup_{t\leq T}{}\sum_{\ell=1}^{t}[(\rho_{\xi,T}\vee q_{0})(\rho_{\eta,T}\vee q_{0})]^{t-\ell}\\
 & \leq\sum_{j=0}^{T}[(\rho_{\xi,T}\vee q_{0})(\rho_{\eta,T}\vee q_{0})]^{j}=O\left(\dfrac{1}{|1-\rho_{\xi,T}\rho_{\eta,T}|}\wedge T\right),
\end{align*}
where the last equality follows from \prettyref{lem:sum_rho} since
$\rho_{\xi,T}\rho_{\eta,T}\leq1+\frac{2c+c^{2}}{T}$. By Condition~\ref{enu:init_vals},
\[
\sup_{t\leq T}{}\bigl|\mathbb{E}(\Lambda_{4,t})\bigr|=|\mathbb{E}(\xi_{0}\eta_{0})|\sup_{t\leq T}{}(\rho_{\xi,T}\rho_{\eta,T})^{t}=O\left(\frac{1}{|1-\rho_{\xi,T}\rho_{\eta,T}|}\wedge T\right).
\]
It then follows that $\sup_{t\leq T}|\mathbb{E}(\xi_{t}\eta_{t})|=O\bigl(\frac{1}{|1-\rho_{\xi,T}\rho_{\eta,T}|}\wedge T\bigr)$.
\end{proof}
\begin{proof}[Proof of \prettyref{lem:generic_two_ARs}\ref{enu:supE2}]
We first show that
\begin{equation}
\left(\frac{1}{|1-\rho_{\xi,T}\rho_{\eta,T}|}\wedge T\right)\lesssim\left(\dfrac{1}{|1-\rho_{\xi,T}|}\wedge T\right)\ensuremath{\wedge\left(\dfrac{1}{|1-\rho_{\eta,T}|}\wedge T\right)}.\label{eq:rho_ineq}
\end{equation}

Case I: $T|1-\rho_{\xi,T}|\to c_{\xi}\geq0$ and $T|1-\rho_{\eta,T}|\to c_{\eta}\geq0$.
Then $|1-\rho_{\xi,T}|^{-1}\gtrsim T$, $|1-\rho_{\eta,T}|^{-1}\gtrsim T$,
and $|1-\rho_{\xi,T}\rho_{\eta,T}|^{-1}=|1-\rho_{\xi,T}+1-\rho_{\eta,T}-(1-\rho_{\xi,T})(1-\rho_{\eta,T})|^{-1}\gtrsim T$.
The inequality holds.

Case II: $T|1-\rho_{\xi,T}|\to c_{\xi}\geq0$ and $T|1-\rho_{\eta,T}|\to\infty$.
Then $|1-\rho_{\xi,T}|^{-1}\gtrsim T$, $|1-\rho_{\eta,T}|^{-1}\lesssim T$,
and
\[
|1-\rho_{\xi,T}\rho_{\eta,T}|^{-1}=|1-\rho_{\eta,T}|^{-1}\left|\frac{1-\rho_{\xi,T}}{1-\rho_{\eta,T}}+\rho_{\xi,T}\right|^{-1}\lesssim|1-\rho_{\eta,T}|^{-1}.
\]
The inequality follows.

Case III: $T|1-\rho_{\xi,T}|\to\infty$ and $T|1-\rho_{\eta,T}|\to c_{\eta}\geq0$.
Same argument as Case II.

Case IV: $T|1-\rho_{\xi,T}|\to\infty$ and $T|1-\rho_{\eta,T}|\to\infty$.
Then $|1-\rho_{\xi,T}|^{-1}\lesssim T$, $|1-\rho_{\eta,T}|^{-1}\lesssim T$,
and
\[
|1-\rho_{\xi,T}\rho_{\eta,T}|^{-1}=|1-\rho_{\xi,T}|^{-1}|1-\rho_{\eta,T}|^{-1}\left|\frac{1}{1-\rho_{\xi,T}}+\frac{1}{1-\rho_{\eta,T}}-1\right|^{-1}\lesssim|1-\rho_{\xi,T}|^{-1}\wedge|1-\rho_{\eta,T}|^{-1}.
\]
The inequality follows.

Noting that for any generic $a_{i}\geq0$, we have $\frac{1}{n}\sum_{i=1}^{n}a_{i}\leq\bigl(\frac{1}{n}\sum_{i=1}^{n}a_{i}^{2}\bigr)^{1/2}$
for $a_{i}\geq0$. This inequality gives
\[
\mathbb{E}\left(\xi_{t}^{2}\eta_{t}^{2}\right)\leq4\mathbb{E}\left[\Lambda_{1,t}^{2}+\Lambda_{2,t}^{2}+\Lambda_{3,t}^{2}+\Lambda_{4,t}^{2}\right],
\]
where $\Lambda_{1,t}$, $\Lambda_{2,t}$, $\Lambda_{3,t}$, and $\Lambda_{4,t}$
are defined in \eqref{eq:xi_eta}.

Bound for $\Lambda_{1,t}$. Note that using the m.d.s.\ property
we can write
\begin{align*}
\mathbb{E}\left[\Lambda_{1,t}^{2}\right] & =\mathbb{E}\Biggl(\sum_{j,k=-\infty}^{t}\sum_{\ell,m=-\infty}^{t}\pi_{\xi,T}(t,j)\pi_{\xi,T}(t,k)\pi_{\eta,T}(t,\ell)\pi_{\eta,T}(t,m)\epsilon_{\xi,j}\epsilon_{\xi,k}\epsilon_{\eta,\ell}\epsilon_{\eta,m}\Biggr)\\
 & =\sum_{j=-\infty}^{t}\sum_{\ell=-\infty}^{t}[\pi_{\xi,T}(t,j)]^{2}[\pi_{\eta,T}(t,\ell)]^{2}\mathbb{E}(\epsilon_{\xi,j}^{2}\epsilon_{\eta,\ell}^{2})\\
 & \qquad+2\sum_{j=-\infty}^{t}\sum_{\substack{k=-\infty\\
k\neq j
}
}^{t}[\pi_{\xi,T}(t,j)\pi_{\eta,T}(t,j)][\pi_{\xi,T}(t,k)\pi_{\eta,T}(t,k)]\mathbb{E}(\epsilon_{\xi,j}\epsilon_{\eta,j}\epsilon_{\xi,k}\epsilon_{\eta,k})\\
 & \qquad+2\sum_{j=-\infty}^{t}\sum_{m=-\infty}^{j-1}[\pi_{\xi,T}(t,j)]^{2}\pi_{\eta,T}(t,j)\pi_{\eta,T}(t,m)\mathbb{E}(\epsilon_{\xi,j}^{2}\epsilon_{\eta,j}\epsilon_{\eta,m})\\
 & \qquad+2\sum_{j=-\infty}^{t}\sum_{k=-\infty}^{j-1}[\pi_{\eta,T}(t,j)]^{2}\pi_{\xi,T}(t,j)\pi_{\xi,T}(t,k)\mathbb{E}(\epsilon_{\eta,j}^{2}\epsilon_{\xi,j}\epsilon_{\xi,k})\\
 & \qquad+\sum_{j=-\infty}^{t}\sum_{\substack{\ell,m=-\infty\\
\ell\neq m
}
}^{j-1}[\pi_{\xi,T}(t,j)]^{2}\pi_{\eta,T}(t,\ell)\pi_{\eta,T}(t,m)\mathbb{E}(\epsilon_{\xi,j}^{2}\epsilon_{\eta,\ell}\epsilon_{\eta,m})\\
 & \qquad+\sum_{\ell=-\infty}^{t}\sum_{\substack{j,k=-\infty\\
j\neq k
}
}^{j-1}[\pi_{\eta,T}(t,\ell)]^{2}\pi_{\xi,T}(t,j)\pi_{\xi,T}(t,k)\mathbb{E}(\epsilon_{\eta,\ell}^{2}\epsilon_{\xi,j}\epsilon_{\xi,k})\\
 & \qquad+4\sum_{j=-\infty}^{t}\sum_{\substack{\ell,m=-\infty\\
\ell\neq m
}
}^{j-1}[\pi_{\xi,T}(t,j)\pi_{\eta,T}(t,j)]\pi_{\xi,T}(t,\ell)\pi_{\eta,T}(t,m)\mathbb{E}(\epsilon_{\xi,j}\epsilon_{\eta,j}\epsilon_{\xi,\ell}\epsilon_{\eta,m})\\
 & =:E_{1,t}+E_{2,t}+E_{3,t}+E_{4,t}+E_{5,t}+E_{6,t}+E_{7,t.}
\end{align*}
$E_{1,t}$ is from the case ($j=k,\ell=m$); $E_{2,t}$ is from ($j=\ell\neq k=m$)
or ($j=m\neq k=\ell$), so involves a factor of 2; $E_{3,t}$ is from
($j=k=\ell>m$) or ($j=k=m>\ell$); $E_{4,t}$ is from ($k=\ell=m>j$)
or ($j=\ell=m>k$); $E_{5,t}$ is from ($j=k>\ell\neq m$); $E_{6,t}$
is from ($\ell=m>j\neq k$); $E_{7,t}$ is from ($j=\ell>k\neq m$)
or ($j=m>k\neq\ell$) or ($k=\ell>j\ne m$) or ($k=m>j\neq\ell$).
For $E_{1,t}$, we have
\[
\sup_{t\leq T}|E_{1,t}|\lesssim\sup_{t\leq T}\sum_{j=-\infty}^{t}[\pi_{\xi,T}(t,j)]^{2}\cdot\sup_{t\leq T}\sum_{\ell=-\infty}^{t}[\pi_{\eta,T}(t,\ell)]^{2}=O\left(\left[\dfrac{1}{|1-\rho_{\xi,T}|}\wedge T\right]\cdot\left[\dfrac{1}{|1-\rho_{\eta,T}|}\wedge T\right]\right).
\]
Likewise we can show that $\sup_{t\leq T}|E_{2,t}|$ has order
\[
\sup_{t\leq T}|E_{2,t}|=O\left(\left[\dfrac{1}{|1-\rho_{\xi,T}\rho_{\eta,T}|}\wedge T\right]^{2}\right)=O\left(\left[\dfrac{1}{|1-\rho_{\xi,T}|}\wedge T\right]\cdot\left[\dfrac{1}{|1-\rho_{\eta,T}|}\wedge T\right]\right),
\]
where we use \eqref{eq:rho_ineq}. For $E_{3,t}$, we have
\begin{align*}
\sup_{t\leq T}|E_{3,t}| & \lesssim\sup_{t\leq T}\sum_{j=-\infty}^{t}[\pi_{\xi,T}(t,j)]^{2}|\pi_{\eta,T}(t,j)|\cdot\sup_{t\leq T}\sum_{m=-\infty}^{t}|\pi_{\eta,T}(t,m)|\\
 & =O\left(\dfrac{1}{|1-\rho_{\xi,T}\rho_{\eta,T}|}\wedge T\right)\cdot O\left(\frac{1}{|1-\rho_{\eta,T}|}\wedge T\right)\\
 & =O\left(\left[\dfrac{1}{|1-\rho_{\xi,T}|}\wedge T\right]\cdot\left[\dfrac{1}{|1-\rho_{\eta,T}|}\wedge T\right]\right),
\end{align*}
where the last line uses \eqref{eq:rho_ineq}. Similarly, we can show
that $\sup_{t\leq T}|E_{4,t}|$ has the same order. For $E_{5,t}$,
we have
\[
\sup_{t\leq T}|E_{5,t}|\leq\left(\sup_{t,j\leq T}[\pi_{\xi,T}(t,j)]^{2}\right)\biggl(\sup_{t,\ell\leq T}|\pi_{\eta,T}(t,\ell)|\biggr)\sum_{j=-\infty}^{\infty}\sum_{\substack{\ell,m=-\infty\\
\ell\neq m
}
}^{j-1}\left|\mathbb{E}\left(\epsilon_{\xi,j}^{2}\epsilon_{\eta,\ell}\epsilon_{\eta,m}\right)\right|=O(1),
\]
where the fact that $\mathbb{E}\bigl(\epsilon_{\xi,j}^{2}\epsilon_{\eta,\ell}\epsilon_{\eta,m}\bigr)=\kappa_{\xi\xi\eta\eta}(j,j,\ell,m)$
when $j\geq\ell\neq m$ and the cumulant condition imply
\[
\sum_{j=-\infty}^{\infty}\sum_{\substack{\ell,m=-\infty\\
\ell\neq m
}
}^{j-1}\left|\mathbb{E}\left(\epsilon_{\xi,j}^{2}\epsilon_{\eta,\ell}\epsilon_{\eta,m}\right)\right|<\infty.
\]
By the same argument we can show $\sup_{t\leq T}|E_{6,t}|$ and $\sup_{t\leq T}|E_{7,t}|$
are both bounded. The above discussion gives rise to
\[
\sup_{t\leq T}\mathbb{E}\left[\Lambda_{1,t}^{2}\right]=O\left(\left[\dfrac{1}{|1-\rho_{\xi,T}|}\wedge T\right]\cdot\left[\dfrac{1}{|1-\rho_{\eta,T}|}\wedge T\right]\right).
\]

Bounds for $\Lambda_{2,t}$ and $\Lambda_{3,t}$. Using the Cauchy-Schwarz
inequality and by the same argument in \eqref{eq:sum_sum_pi_4} (with
\prettyref{lem:ma_represent} invoked), we have
\begin{align*}
\sup_{t\leq T}\mathbb{E}\left[\Lambda_{2,t}^{2}\right] & \leq\left(\sup_{t\leq T}\rho_{\xi,T}^{2t}\right)\sqrt{\mathbb{E}(\xi_{0}^{4})}\left\{ \sup_{t\leq T}\mathbb{E}\left[\left(\sum_{\ell=-\infty}^{t}\pi_{\eta,T}(t,\ell)\epsilon_{\eta,\ell}\right)^{4}\right]\right\} ^{1/2}\\
 & =O\left(\frac{1}{|1-\rho_{\xi,T}|}\wedge T\right)\cdot O\left(\frac{1}{|1-\rho_{\eta,T}|}\wedge T\right).
\end{align*}
We can show in the same manner that $\mathbb{E}\bigl[\Lambda_{3,t}^{2}\bigr]$
has the same order.

Finally we bound $\mathbb{E}\bigl[\Lambda_{4,t}^{2}\bigr]$. The Cauchy-Schwarz
inequality and Condition~\ref{enu:init_vals} lead to
\begin{align*}
\sup_{t\leq T}\mathbb{E}\left[\Lambda_{4,t}^{2}\right] & =\mathbb{E}(\xi_{0}^{2}\eta_{0}^{2})\sup_{t\leq T}{}(\rho_{\xi,T}\rho_{\eta,T})^{2t}\leq\sqrt{\mathbb{E}(\xi_{0}^{4})\cdot\mathbb{E}(\eta_{0}^{4})}\sup_{t\leq T}{}(\rho_{\xi,T}\rho_{\eta,T})^{2t}\\
 & =O\left(\left[\dfrac{1}{|1-\rho_{\xi,T}|}\wedge T\right]\cdot\left[\dfrac{1}{|1-\rho_{\eta,T}|}\wedge T\right]\right).
\end{align*}
We complete the proof of \prettyref{lem:generic_two_ARs}\ref{enu:supE2}.
\end{proof}
\begin{proof}[Proof of \prettyref{lem:generic_two_ARs}\ref{enu:Esum2}]
 Let $\Xi_{\ell,T}:=\sum_{t=1}^{T}\Lambda_{\ell,t}$ for $\ell=1,2,3,4$
where $\Lambda_{\ell,t}$ are defined in \eqref{eq:xi_eta}. Then
we have
\[
\mathbb{E}\left[\left(\sum_{t=1}^{T}\xi_{t}\eta_{t}\right)^{2}\right]=\mathbb{E}\left[\left(\Xi_{1,T}+\Xi_{2,T}+\Xi_{3,T}+\Xi_{4,T}\right)^{2}\right]\leq4\mathbb{E}\left(\Xi_{1,T}^{2}+\Xi_{2,T}^{2}+\Xi_{3,T}^{2}+\Xi_{4,T}^{2}\right).
\]

Bound for $\mathbb{E}\left(\Xi_{1,T}^{2}\right)$. We have
\[
\mathbb{E}\left(\Xi_{1,T}^{2}\right)=\sum_{t=1}^{T}\mathbb{E}\left(\Lambda_{1,t}^{2}\right)+2\sum_{t=1}^{T}\sum_{s=1}^{t-1}\mathbb{E}\left(\Lambda_{1,s}\Lambda_{1,t}\right).
\]
The first term, by the proof of \ref{enu:supE2}, has order
\[
\sum_{t=1}^{T}\mathbb{E}\left(\Lambda_{1,t}^{2}\right)\leq T\cdot\sup_{t\le T}\mathbb{E}\left(\Lambda_{1,t}^{2}\right)=O\left(T\cdot\left[\dfrac{1}{|1-\rho_{\xi,T}|}\wedge T\right]\cdot\left[\dfrac{1}{|1-\rho_{\eta,T}|}\wedge T\right]\right).
\]
The second term, using the m.d.s.\ condition, can be decomposed as
\begin{align*}
 & \quad\sum_{t=1}^{T}\sum_{s=1}^{t-1}\mathbb{E}\left(\Lambda_{1,s}\Lambda_{1,t}\right)\\
 & =\sum_{t=1}^{T}\sum_{s=1}^{t-1}\sum_{j,\ell=-\infty}^{t}\sum_{k,m=-\infty}^{s}\pi_{\xi,T}(t,j)\pi_{\xi,T}(s,k)\pi_{\eta,T}(t,\ell)\pi_{\eta,T}(s,m)\mathbb{E}\left(\epsilon_{\xi,j}\epsilon_{\xi,k}\epsilon_{\eta,\ell}\epsilon_{\eta,m}\right)\\
 & =\sum_{t=1}^{T}\sum_{s=1}^{t-1}\sum_{j=-\infty}^{t}\sum_{k=-\infty}^{s}[\pi_{\xi,T}(t,j)\pi_{\eta,T}(t,j)][\pi_{\xi,T}(s,k)\pi_{\eta,T}(s,k)]\mathbb{E}\left(\varepsilon_{\xi,j}\varepsilon_{\eta,j}\varepsilon_{\xi,k}\varepsilon_{\eta,k}\right)\\
 & \qquad+\sum_{t=1}^{T}\sum_{s=1}^{t-1}\sum_{m=-\infty}^{s}\sum_{\substack{k=-\infty\\
k\neq m
}
}^{s}[\pi_{\xi,T}(t,m)\pi_{\eta,T}(s,m)][\pi_{\xi,T}(s,k)\pi_{\eta,T}(t,k)]\mathbb{E}\left(\varepsilon_{\xi,m}\varepsilon_{\eta,m}\varepsilon_{\xi,k}\varepsilon_{\eta,k}\right)\\
 & \qquad+\sum_{t=1}^{T}\sum_{s=1}^{t-1}\sum_{\substack{k,m=-\infty\\
k\neq m
}
}^{s}[\pi_{\xi,T}(t,k)\pi_{\xi,T}(s,k)][\pi_{\eta,T}(t,m)\pi_{\eta,T}(s,m)]\mathbb{E}\left(\epsilon_{\xi,k}^{2}\epsilon_{\eta,m}^{2}\right)\\
 & \qquad+\sum_{t=1}^{T}\sum_{s=1}^{t-1}\sum_{k=-\infty}^{s}\sum_{m=-\infty}^{k-1}[\pi_{\xi,T}(t,k)\pi_{\xi,T}(s,k)\pi_{\eta,T}(t,k)]\pi_{\eta,T}(s,m)\mathbb{E}\left(\varepsilon_{\xi,k}^{2}\varepsilon_{\eta,k}\varepsilon_{\eta,m}\right)\\
 & \qquad+\sum_{t=1}^{T}\sum_{s=1}^{t-1}\sum_{m=-\infty}^{s}\sum_{\ell=-\infty}^{m-1}[\pi_{\xi,T}(t,m)\pi_{\xi,T}(s,m)\pi_{\eta,T}(s,m)]\pi_{\eta,T}(t,\ell)\mathbb{E}\left(\varepsilon_{\xi,m}^{2}\varepsilon_{\eta,m}\varepsilon_{\eta,\ell}\right)\\
 & \qquad+\sum_{t=1}^{T}\sum_{s=1}^{t-1}\sum_{m=-\infty}^{s}\sum_{k=-\infty}^{m-1}[\pi_{\eta,T}(t,m)\pi_{\eta,T}(s,m)\pi_{\xi,T}(t,m)]\pi_{\xi,T}(s,k)\mathbb{E}\left(\varepsilon_{\eta,m}^{2}\varepsilon_{\xi,m}\varepsilon_{\eta,k}\right)\\
 & \qquad+\sum_{t=1}^{T}\sum_{s=1}^{t-1}\sum_{m=-\infty}^{s}\sum_{j=-\infty}^{m-1}[\pi_{\eta,T}(t,m)\pi_{\eta,T}(s,m)\pi_{\xi,T}(s,m)]\pi_{\xi,T}(t,j)\mathbb{E}\left(\varepsilon_{\eta,m}^{2}\varepsilon_{\xi,m}\varepsilon_{\eta,j}\right)\\
 & \qquad+\sum_{t=1}^{T}\sum_{s=1}^{t-1}\sum_{k=-\infty}^{s}\sum_{\substack{\ell,m=-\infty\\
\ell\neq m
}
}^{k-1}[\pi_{\xi,T}(t,k)\pi_{\xi,T}(s,k)]\pi_{\eta,T}(t,\ell)\pi_{\eta,T}(s,m)\mathbb{E}\left(\varepsilon_{\xi,k}^{2}\varepsilon_{\eta,\ell}\varepsilon_{\eta,m}\right)\\
 & \qquad+\sum_{t=1}^{T}\sum_{s=1}^{t-1}\sum_{m=-\infty}^{s}\sum_{\substack{j,k=-\infty\\
j\neq k
}
}^{m-1}[\pi_{\eta,T}(t,m)\pi_{\eta,T}(s,m)]\pi_{\xi,T}(t,j)\pi_{\xi,T}(s,k)\mathbb{E}\left(\varepsilon_{\eta,m}^{2}\varepsilon_{\xi,j}\varepsilon_{\xi,k}\right)\\
 & \qquad+\sum_{t=1}^{T}\sum_{s=1}^{t-1}\sum_{j=-\infty}^{t}\sum_{\substack{k,m=-\infty\\
k\neq m
}
}^{(j-1)\wedge s}\pi_{\xi,T}(t,j)\pi_{\eta,T}(t,j)\pi_{\xi,T}(s,k)\pi_{\eta,T}(s,m)\mathbb{E}\left(\varepsilon_{\xi,j}\varepsilon_{\eta,j}\varepsilon_{\xi,k,}\varepsilon_{\eta,m}\right)\\
 & \qquad+\sum_{t=1}^{T}\sum_{s=1}^{t-1}\sum_{m=-\infty}^{s}\sum_{\substack{k,\ell=-\infty\\
k\neq\ell
}
}^{m-1}\pi_{\xi,T}(t,m)\pi_{\eta,T}(s,m)\pi_{\xi,T}(s,k)\pi_{\eta,T}(t,\ell)\mathbb{E}\left(\varepsilon_{\xi,m}\varepsilon_{\eta,m}\varepsilon_{\xi,k,}\varepsilon_{\eta,\ell}\right)\\
 & \qquad+\sum_{t=1}^{T}\sum_{s=1}^{t-1}\sum_{m=-\infty}^{s}\sum_{\substack{j,\ell=-\infty\\
j\neq\ell
}
}^{m-1}\pi_{\xi,T}(s,m)\pi_{\eta,T}(s,m)\pi_{\xi,T}(t,k)\pi_{\eta,T}(t,\ell)\mathbb{E}\left(\varepsilon_{\xi,m}\varepsilon_{\eta,m}\varepsilon_{\xi,j,}\varepsilon_{\eta,\ell}\right)\\
 & \qquad+\sum_{t=1}^{T}\sum_{s=1}^{t-1}\sum_{k=-\infty}^{s}\sum_{\substack{j,m=-\infty\\
j\neq m
}
}^{k-1}\pi_{\xi,T}(s,m)\pi_{\eta,T}(t,m)\pi_{\xi,T}(t,j)\pi_{\eta,T}(s,m)\mathbb{E}\left(\varepsilon_{\xi,k}\varepsilon_{\eta,k}\varepsilon_{\xi,j,}\varepsilon_{\eta,m}\right)\\
 & =:F_{1,t}+\dots+F_{13,t}.
\end{align*}
$F_{1,t}$ is from the case $(j=\ell,k=m)$; $F_{2,t}$ is from $(j=m\neq k=\ell)$;
$F_{3,t}$ is from $(j=k\neq\ell=m)$; $F_{4,t}$ is from $(j=k=\ell>m)$;
$F_{5,t}$ is from $(j=k=m>\ell)$; $F_{6,t}$ is from $(\ell=m=j>k)$;
$F_{7,t}$ is from $(\ell=m=k>j)$; $F_{8,t}$ is from $(j=k>\ell\neq m)$;
$F_{9,t}$ is from $(\ell=m>j\ne k)$; $F_{10,t}$ is from $(j=\ell>k\ne m)$;
$F_{11,t}$ is from $(j=m>k\ne\ell)$; $F_{12,t}$ is from $(k=m>j\ne\ell)$;
$F_{13,t}$ is from $(k=\ell>j\ne m)$. For $F_{1,t}$, we have
\begin{align*}
\sup_{t\leq T}|F_{1,t}| & \lesssim\sum_{t=1}^{T}\sum_{s=1}^{t-1}\sum_{j=-\infty}^{t}\sum_{k=-\infty}^{s}|\pi_{\xi,T}(t,j)\pi_{\eta,T}(t,j)||\pi_{\xi,T}(s,k)\pi_{\eta,T}(s,k)|\\
 & \leq T^{2}\cdot\left(\sup_{t\leq T}\sum_{j=-\infty}^{t}|\pi_{\xi,T}(t,j)\pi_{\eta,T}(t,j)|\right)^{2}=O\left(\dfrac{T^{2}}{(1-\rho_{\xi,T}\rho_{\eta,T})^{2}}\wedge T^{4}\right).
\end{align*}
For $F_{2,t}$, we use \prettyref{lem:ma_represent} to get
\begin{align*}
\sup_{t\leq T}|F_{2,t}| & \lesssim\sum_{t=1}^{T}\sum_{s=1}^{t-1}\sum_{m=-\infty}^{s}\sum_{k=-\infty}^{s}|\pi_{\xi,T}(t,m)\pi_{\eta,T}(s,m)||\pi_{\xi,T}(s,k)\pi_{\eta,T}(t,k)|\\
 & \lesssim\sum_{t=1}^{T}\sum_{s=1}^{t-1}[(\rho_{\xi,T}\vee q_{0})(\rho_{\eta,T}\vee q_{0})]^{t-s}\left(\sum_{m=0}^{s}[(\rho_{\xi,T}\vee q_{0})(\rho_{\eta,T}\vee q_{0})]^{s-m}\right)^{2}\\
 & \leq T\cdot\left(\sup_{t\leq T}\sum_{s=0}^{t}[(\rho_{\xi,T}\vee q_{0})(\rho_{\eta,T}\vee q_{0})]^{s}\right)^{3}=O\left(\dfrac{T}{|1-\rho_{\xi,T}\rho_{\eta,T}|^{3}}\wedge T^{4}\right).
\end{align*}
For $F_{3,t}$, in the same manner,
\begin{align*}
\sup_{t\leq T}|F_{3,t}| & \lesssim\sum_{t=1}^{T}\sum_{s=1}^{t-1}\sum_{k,m=-\infty}^{s}|\pi_{\xi,T}(t,k)\pi_{\xi,T}(s,k)||\pi_{\eta,T}(t,m)\pi_{\eta,T}(s,m)|\\
 & \lesssim\sum_{t=1}^{T}\sum_{s=1}^{t-1}[(\rho_{\xi,T}\vee q_{0})(\rho_{\eta,T}\vee q_{0})]^{t-s}\left(\sum_{k=0}^{s}[(\rho_{\xi,T}\vee q_{0})^{2}]^{s-k}\right)\left(\sum_{m=0}^{s}[(\rho_{\eta,T}\vee q_{0})^{2}]^{s-m}\right)\\
 & \leq T\cdot\left(\sup_{t\leq T}\sum_{s=0}^{t}[(\rho_{\xi,T}\vee q_{0})(\rho_{\eta,T}\vee q_{0})]^{s}\right)\left(\sup_{t\leq T}\sum_{s=0}^{t}(\rho_{\xi,T}\vee q_{0})^{2s}\right)\left(\sup_{t\leq T}\sum_{s=0}^{t}(\rho_{\eta,T}\vee q_{0})^{2s}\right)\\
 & =O\left(\dfrac{T}{|1-\rho_{\xi,T}\rho_{\eta,T}|}\wedge T^{2}\right)\cdot O\left(\dfrac{1}{|1-\rho_{\xi,T}|}\wedge T\right)\cdot O\left(\dfrac{1}{|1-\rho_{\eta,T}|}\wedge T\right).
\end{align*}
For $F_{4,t}$, we have
\begin{align*}
 & \quad\sup_{t\leq T}|F_{4,t}|\lesssim\sum_{t=1}^{T}\sum_{s=1}^{t-1}\sum_{k=-\infty}^{s}\sum_{m=-\infty}^{k-1}|\pi_{\xi,T}(t,k)\pi_{\xi,T}(s,k)\pi_{\eta,T}(t,k)||\pi_{\eta,T}(s,m)|\\
 & \lesssim\sum_{t=1}^{T}\sum_{s=1}^{t-1}[(\rho_{\xi,T}\vee q_{0})(\rho_{\eta,T}\vee q_{0})]^{t-s}\left(\sum_{k=0}^{s}[(\rho_{\xi,T}\vee q_{0})^{2}(\rho_{\eta,T}\vee q_{0})]^{s-k}\right)\left(\sum_{m=0}^{k-1}(\rho_{\eta,T}\vee q_{0})^{s-m}\right)\\
 & \leq T\left(\sup_{t\leq T}\sum_{s=0}^{t}[(\rho_{\xi,T}\vee q_{0})(\rho_{\eta,T}\vee q_{0})]^{s}\right)\left(\sup_{t\leq T}\sum_{s=0}^{t}[(\rho_{\xi,T}\vee q_{0})^{2}(\rho_{\eta,T}\vee q_{0})]^{s}\right)\left(\sup_{t\leq T}\sum_{s=0}^{t}(\rho_{\eta,T}\vee q_{0})^{s}\right)\\
 & =O\left(\dfrac{T}{|1-\rho_{\xi,T}\rho_{\eta,T}|}\wedge T^{2}\right)\cdot O\left(\dfrac{1}{|1-\rho_{\xi,T}\rho_{\eta,T}|}\wedge T\right)\cdot O\left(\dfrac{1}{|1-\rho_{\eta,T}|}\wedge T\right).
\end{align*}
Similarly, we can show that $F_{5,t},F_{6,t},F_{7,t}$ all have the
same order as $F_{4,t}$. For $F_{8,t},\dots,F_{13,t}$, as in the
proof of \ref{enu:supE2}, we use the cumulant condition to conclude
that they are all $O(1)$. Furthermore, use the argument for proving
\eqref{eq:rho_ineq}, we can show
\begin{equation}
\left(\dfrac{1}{|1-\rho_{\xi,T}|}\wedge T\right)\cdot\left(\dfrac{1}{|1-\rho_{\eta,T}|}\wedge T\right)\lesssim T\left(\dfrac{1}{|1-\rho_{\xi,T}\rho_{\eta,T}|}\wedge T\right).\label{eq:rho_ineq2}
\end{equation}
The above discussion leads to
\[
\sum_{t=1}^{T}\sum_{s=1}^{t-1}\mathbb{E}\left(\Lambda_{1,s}\Lambda_{1,t}\right)=O\left(\dfrac{T^{2}}{(1-\rho_{\xi,T}\rho_{\eta,T})^{2}}\wedge T^{4}\right).
\]
It follows that
\begin{equation}
\mathbb{E}\left(\Xi_{1,T}^{2}\right)=O\left(\dfrac{T^{2}}{(1-\rho_{\xi,T}\rho_{\eta,T})^{2}}\wedge T^{4}\right).\label{eq:Xi1_bound}
\end{equation}

Bound for $\mathbb{E}\left(\Xi_{2,T}^{2}\right)$ and $\mathbb{E}\left(\Xi_{3,T}^{2}\right)$.
We have
\[
\mathbb{E}\left(\Xi_{2,T}^{2}\right)=\sum_{t=1}^{T}\mathbb{E}\left(\Lambda_{2,t}^{2}\right)+2\sum_{t=1}^{T}\sum_{s=1}^{t-1}\mathbb{E}\left(\Lambda_{2,s}\Lambda_{2,t}\right).
\]
The first term, by the proof of \ref{enu:supE2}, has order
\[
\sum_{t=1}^{T}\mathbb{E}\left(\Lambda_{2,t}^{2}\right)\leq T\cdot\sup_{t\le T}\mathbb{E}\left(\Lambda_{2,t}^{2}\right)=O\left(T\cdot\left[\dfrac{1}{|1-\rho_{\xi,T}|}\wedge T\right]\cdot\left[\dfrac{1}{|1-\rho_{\eta,T}|}\wedge T\right]\right).
\]
The second term can be written as
\begin{align*}
 & \quad\sum_{t=1}^{T}\sum_{s=1}^{t-1}\mathbb{E}\left(\Lambda_{2,s}\Lambda_{2,t}\right)\\
 & =\sum_{t=1}^{T}\sum_{s=1}^{t-1}\sum_{k=-\infty}^{s}\sum_{\ell=-\infty}^{t}\rho_{\xi,T}^{s+t}\pi_{\eta,T}(s,k)\pi_{\eta,T}(t,\ell)\mathbb{E}\left(\varepsilon_{\eta,k}\varepsilon_{\eta,\ell}\xi_{0}^{2}\right)\\
 & =\sum_{t=1}^{T}\sum_{s=1}^{t-1}\sum_{k=1}^{s}\rho_{\xi,T}^{s+t}\pi_{\eta,T}(s,k)\pi_{\eta,T}(t,k)\mathbb{E}\left(\varepsilon_{\eta,k}^{2}\xi_{0}^{2}\right)\\
 & \qquad+\sum_{t=1}^{T}\sum_{s=1}^{t-1}\sum_{k=-\infty}^{0}\sum_{\ell=-\infty}^{0}\rho_{\xi,T}^{s+t}\pi_{\eta,T}(s,k)\pi_{\eta,T}(t,\ell)\mathbb{E}\left(\varepsilon_{\eta,k}\varepsilon_{\eta,\ell}\xi_{0}^{2}\right)\\
 & \lesssim\sum_{t=1}^{T}\sum_{s=1}^{t-1}\rho_{\xi,T}^{s+t}(\rho_{\eta,T}\vee q_{0})^{t-s}\left(\sup_{t\leq T}\sum_{\ell=0}^{t}(\rho_{\eta,T}\vee q_{0})^{2\ell}\right)\sqrt{\mathbb{E}\big(\varepsilon_{\eta,k}^{4}\bigr)\mathbb{E}\left(\xi_{0}^{4}\right)}\\
 & \qquad+\sum_{t=1}^{T}\sum_{s=1}^{t-1}[\rho_{\xi,T}(\rho_{\eta,T}\vee q_{0})]^{s+t}\left(\sum_{\ell=-\infty}^{0}q_{0}^{\ell}\right)^{2}\sqrt{\mathbb{E}\big(\varepsilon_{\eta,k}^{4}\bigr)\mathbb{E}\left(\xi_{0}^{4}\right)}\\
 & =O\left(\dfrac{1}{|1-\rho_{\xi,T}|}\wedge T\right)\cdot O\left(\dfrac{1}{|\rho_{\xi,T}-\rho_{\eta,T}|}\wedge T\right)\cdot O\left(\dfrac{1}{|1-\rho_{\eta,T}|}\wedge T\right)\cdot O\left(\dfrac{1}{|1-\rho_{\xi,T}|}\wedge T\right)\\
 & \qquad+O\left(\dfrac{1}{|1-\rho_{\xi,T}\rho_{\eta,T}|^{2}}\wedge T^{2}\right)\cdot O(1)\cdot O\left(\dfrac{1}{|1-\rho_{\xi,T}|}\wedge T\right)\\
 & =O\left(\left[\dfrac{1}{|1-\rho_{\xi,T}|}\wedge T\right]^{2}\cdot\left[\dfrac{1}{|1-\rho_{\eta,T}|}\wedge T\right]^{2}\right).
\end{align*}
Thus, by \eqref{eq:rho_ineq2} we have
\begin{equation}
\mathbb{E}\left(\Xi_{2,T}^{2}\right)=O\left(\dfrac{T^{2}}{(1-\rho_{\xi,T}\rho_{\eta,T})^{2}}\wedge T^{4}\right).\label{eq:Xi2_bound}
\end{equation}
We can show that $\mathbb{E}\bigl(\Xi_{3,T}^{3}\bigr)$ has the same
order by the above argument.

Bound for $\mathbb{E}\bigl(\Xi_{4,T}^{2}\bigr)$. It is easy to deduce
\begin{align}
\mathbb{E}\left(\Xi_{4,T}^{2}\right) & =\mathbb{E}\left(\xi_{0}^{2}\eta_{0}^{2}\right)\left[\sum_{t=1}^{T}(\rho_{\xi,T}\rho_{\eta,T})^{t}\right]^{2}\nonumber \\
 & =O\left(\left[\dfrac{1}{|1-\rho_{\xi,T}|}\wedge T\right]\cdot\left[\dfrac{1}{|1-\rho_{\eta,T}|}\wedge T\right]\right)\cdot O\left(\dfrac{1}{(1-\rho_{\xi,T}\rho_{\eta,T})^{2}}\wedge T^{2}\right)\nonumber \\
 & =O\left(\dfrac{T}{(1-\rho_{\xi,T}\rho_{\eta,T})^{3}}\wedge T^{3}\right).\label{eq:Xi4_bound}
\end{align}
By \eqref{eq:Xi1_bound}--\eqref{eq:Xi4_bound}, it follows that
\[
\mathbb{E}\left[\left(\sum_{t=1}^{T}\xi_{t}\eta_{t}\right)^{2}\right]=O\left(\dfrac{T^{2}}{(1-\rho_{\xi,T}\rho_{\eta,T})^{2}}\wedge T^{4}\right).
\]
This completes the proof of \prettyref{lem:generic_two_ARs}\prettyref{enu:Esum2}.
\end{proof}
\begin{proof}[Proof of \prettyref{lem:generic_two_ARs}\prettyref{enu:Esumsum2}]
By \prettyref{lem:generic_AR}\ref{enu:E_sum_eta_4} we have
\begin{align*}
\mathbb{E}\left[\left(\sum_{t=1}^{T}\xi_{t}\sum_{t=1}^{T}\eta_{t}\right)^{2}\right] & \leq\left\{ \mathbb{E}\left[\left(\sum_{t=1}^{T}\xi_{t}\right)^{4}\right]\cdot\mathbb{E}\left[\left(\sum_{t=1}^{T}\eta_{t}\right)^{4}\right]\right\} ^{1/2}\\
 & =\left\{ O\left(\frac{T^{2}}{(1-\rho_{\xi,T})^{4}}\wedge T^{6}\right)\cdot O\left(\frac{T^{2}}{(1-\rho_{\eta,T})^{4}}\wedge T^{6}\right)\right\} ^{1/2}\\
 & =O\left(\left[\frac{T}{(1-\rho_{\xi,T})^{2}}\wedge T^{3}\right]\cdot\left[\frac{T}{(1-\rho_{\eta,T})^{2}}\wedge T^{3}\right]\right),
\end{align*}
 where the inequality invokes the Cauchy-Schwarz. This completes the
proof of \prettyref{lem:generic_two_ARs}\prettyref{enu:Esumsum2}.
\end{proof}

\section{Proofs of Technical Lemmas for WG \label{sec:Proofs-for-WG-1}}
\begin{proof}[Proof of \prettyref{lem:epct}]
Note that $\{x_{i,t}\}$ is an AR(1) process with coefficient $\rho^{*}=1+c^{*}/T^{\gamma}$
and $\{e_{i,t}\}$ is an m.d.s.\ that can also be considered as an
AR(1) process with $\rho_{e}=0$. Then Items \ref{enu:E_sum_x_2}\ref{enu:E_sum_x_4}
follow from \prettyref{lem:generic_AR}\ref{enu:E_sum_eta_4}, and
Items \ref{enu:E_sum_x2_2}\ref{enu:E_sum_x_sum_e_2} follow from
\prettyref{lem:generic_two_ARs}\ref{enu:Esum2}\ref{enu:Esumsum2},
respectively.
\end{proof}
\begin{proof}[Proof of \prettyref{lem:wg-ts-converge}]
 If $\gamma\in(0,1)$, Parts \ref{enu:Q_wg} and \ref{enu:Z_wg}
follow from Equation~(7) and Lemma~3.3 of \citet{magdalinos2009limit},
respectively. If $\gamma=0$, the ergodic theorem and the classic
martingale CLT give rise to the same results.
\end{proof}
\begin{proof}[Proof of \prettyref{lem:joint}]
(i) By \prettyref{lem:wg-ts-converge}\ref{enu:Q_wg}, we know that
$Q_{i,T}^{{\rm WG}}\to_{p}V_{xx}$ as $T\to\infty$ for each $i$.
Since $Q_{i,T}^{\mathrm{WG}}$ is nonnegative and integrable, if we
show $\mathbb{E}\bigl(Q_{i,T}^{\mathrm{WG}}\bigr)\to V_{xx}$ as $T\to\infty$,
then $Q_{i,T}^{\mathrm{WG}}$ is u.i.\ in $T$ by \prettyref{lem:u.i.}.
Using the DGP formula $x_{i,t}=\sum_{j=1}^{t}\rho^{*t-j}v_{i,j}+\rho^{*t}x_{i,0}$
we have
\begin{align}
\mathbb{E}\bigl(Q_{i,T}^{\mathrm{WG}}\bigr) & =\frac{1}{T^{1+\gamma}}\sum_{t=1}^{T}\mathbb{E}\left[\Biggl(\sum_{j=1}^{t}\rho^{*t-j}v_{i,j}+\rho^{*t}x_{i,0}\Biggr)^{2}\right]\nonumber \\
 & =\frac{1}{T^{1+\gamma}}\Biggl[\sum_{t=1}^{T}\sum_{j=1}^{t}\rho^{*2(t-j)}\mathbb{E}(v_{i,j}^{2})+\sum_{t=1}^{T}\rho^{*2t}\mathbb{E}(x_{i,0}^{2})\Biggr]\nonumber \\
 & =\frac{1}{T^{1+\gamma}}\biggl[\omega_{vv}^{*}\biggl(\frac{T}{1-\rho^{*2}}-\frac{\rho^{*2}(1-\rho^{*2T})}{(1-\rho^{*2})^{2}}\biggr)+\dfrac{\rho^{*2}(1-\rho^{*2T})}{1-\rho^{*2}}\mathbb{E}(x_{i,0}^{2})\biggr]\nonumber \\
 & \to V_{xx}=\begin{cases}
\omega_{vv}^{*}/(1-\rho^{*2}), & \text{if }\gamma=0,\\
\omega_{vv}^{*}/(-2c^{*}), & \text{if }\gamma\in(0,1),
\end{cases}\qquad\text{as }T\to\infty,\label{eq:Ex2 lim}
\end{align}
where the last convergence is due to, as $T\to\infty$,
\begin{align*}
\frac{\omega_{vv}^{*}}{T^{1+\gamma}}\cdot\frac{T}{1-\rho^{*2}} & =\dfrac{\omega_{vv}^{*}}{T^{1+\gamma}}\cdot\frac{T}{\frac{-c^{*}}{T^{\gamma}}(2+\frac{c^{*}}{T^{\gamma}})}\to V_{xx}=\begin{cases}
\omega_{vv}^{*}/(1-\rho^{*2}), & \text{if }\gamma=0,\\
\omega_{vv}^{*}/(-2c^{*}), & \text{if }\gamma\in(0,1),
\end{cases}\\
\frac{\omega_{vv}^{*}}{T^{1+\gamma}}\frac{\rho^{*2}(1-\rho^{*2T})}{(1-\rho^{*2})^{2}} & =O\left(\dfrac{T^{2\gamma}}{T^{1+\gamma}}\right)\to0,\qquad\text{and}\\
\frac{1}{T^{1+\gamma}}\dfrac{\rho^{*2}(1-\rho^{*2T})}{1-\rho^{*2}}\mathbb{E}(x_{i,0}^{2}) & =O\left(\dfrac{T^{\gamma}}{T^{1+\gamma}}\right)\cdot O\left(T^{\gamma}\right)\to0.
\end{align*}
Hence, \eqref{eq:Ex2 lim} is proved and $Q_{i,T}^{\mathrm{WG}}$
is u.i.\ in $T$. By Corollary~1 of \citet{phillips1999linear},
we can establish the joint convergence
\[
\frac{1}{n}\sum_{i=1}^{n}Q_{i,T}^{{\rm WG}}\to_{p}\mathop{\mathrm{plim}}_{T\to\infty}Q_{i,T}^{{\rm WG}}=V_{xx}\qquad\text{as }(n,T)\to\infty.
\]

\medskip{}
(ii) By \prettyref{lem:epct}\ref{enu:E_sum_x_2}, we have as $T\to\infty$,
\[
\mathbb{E}(R_{i,T}^{{\rm WG}})=\frac{1}{T^{2+\gamma}}\cdot O(T^{1+2\gamma})=O\biggl(\frac{1}{T^{1-\gamma}}\biggr)\to0.
\]
In addition, $R_{i,T}^{{\rm WG}}$ is clearly nonnegative and integrable.
Thus, $R_{i,T}^{{\rm WG}}\to_{p}0$ as $T\to\infty$ by Markov's inequality
and it is u.i.\ in $T$ by \prettyref{lem:u.i.}. By Corollary~1
of \citet{phillips1999linear}, we have the joint convergence $\frac{1}{n}\sum_{i=1}^{n}R_{i,T}^{{\rm WG}}\to_{p}\mathop{\mathrm{plim}}_{T\to\infty}R_{i,T}^{{\rm WG}}=0$
as $(n,T)\to\infty$.

\medskip{}
(iii) It is clear that $Z_{i,T}^{{\rm WG}}$ has mean zero and is
square integrable. If we further show that $(Z_{i,T}^{{\rm WG}})^{2}$
is u.i.\ in $T$, then by Theorem~3 of \citet{phillips1999linear},
we would have the joint convergence
\[
\frac{1}{\sqrt{n}}\sum_{i=1}^{n}Z_{i,T}^{{\rm WG}}\to_{d}\mathcal{N}\Bigl(0,\lim_{T\to\infty}\mathbb{E}\bigl[(Z_{i,T}^{{\rm WG}})^{2}\bigr]\Bigr)\qquad\text{as }(n,T)\to\infty.
\]
We now show that $(Z_{i,T}^{{\rm WG}})^{2}$ is u.i.\ in $T$ and
$\lim_{T\to\infty}\mathbb{E}\bigl[(Z_{i,T}^{{\rm WG}})^{2}\bigr]=\omega_{ee}^{*}V_{xx}$.

By \prettyref{lem:wg-ts-converge}\ref{enu:Z_wg},
\[
Z_{i,T}^{{\rm WG}}\to_{d}Z_{\infty}\sim\mathcal{N}(0,\omega_{ee}^{*}V_{xx})\qquad\text{as }T\to\infty.
\]
Then by the continuous mapping theorem, $(Z_{i,T}^{{\rm WG}})^{2}\to_{d}Z_{\infty}^{2}$
as $T\to\infty,$ where $\mathbb{E}(Z_{\infty}^{2})=\omega_{ee}^{*}V_{xx}$.
By \prettyref{lem:u.i.}, to show that $(Z_{i,T}^{{\rm WG}})^{2}$
is u.i.\ in $T$, it suffices to show
\begin{equation}
\lim_{T\to\infty}\mathbb{E}\left[(Z_{i,T}^{{\rm WG}})^{2}\right]=\mathbb{E}(Z_{\infty}^{2})=\omega_{ee}^{*}V_{xx}.\label{eq:limEZWG2I0MI}
\end{equation}
Noting that $\{x_{i,t}e_{i,t+1}\}$ is an m.d.s., we have
\begin{align}
\mathbb{E}\left[(Z_{i,T}^{{\rm WG}})^{2}\right] & =\frac{1}{T^{1+\gamma}}\sum_{t=1}^{T}\mathbb{E}(x_{i,t}^{2}e_{i,t+1}^{2})=\frac{\omega_{ee}^{*}}{T^{1+\gamma}}\sum_{t=1}^{T}\mathbb{E}(x_{i,t}^{2})=\omega_{ee}^{*}\mathbb{E}\bigl(Q_{i,T}^{\mathrm{WG}}\bigr).\label{eq: EZ2EX2}
\end{align}
By \eqref{eq:Ex2 lim}, $\mathbb{E}\bigl(Q_{i,T}^{\mathrm{WG}}\bigr)\to V_{xx}$
as $T\to\infty$, and thus \eqref{eq:limEZWG2I0MI} holds and $(Z_{i,T}^{{\rm WG}})^{2}$
is u.i.\ in $T$, giving rise to the desired joint convergence.

\medskip{}
(iv) Note that
\[
\frac{1}{\sqrt{n}}\sum_{i=1}^{n}L_{i,T}^{{\rm WG}}=\frac{1}{\sqrt{n}}\sum_{i=1}^{n}Z_{i,T}^{{\rm WG}}-\frac{1}{\sqrt{n}}\sum_{i=1}^{n}\left[H_{i,T}^{{\rm WG}}-\mathbb{E}\left(H_{i,T}^{{\rm WG}}\right)\right].
\]
By (iii), it suffices to show
\begin{equation}
\frac{1}{\sqrt{n}}\sum_{i=1}^{n}\left[H_{i,T}^{{\rm WG}}-\mathbb{E}\left(H_{i,T}^{{\rm WG}}\right)\right]\to_{p}0\qquad\text{as }(n,T)\to\infty.\label{eq:H limit 0}
\end{equation}
By \prettyref{lem:epct}\ref{enu:E_sum_x_sum_e_2}, for any $\gamma\in[0,1)$,
\[
\mathbb{E}\left[(H_{i,T}^{{\rm WG}})^{2}\right]=\frac{1}{T^{3+\gamma}}\cdot O(T^{2(1+\gamma)})=O\left(\frac{1}{T^{1-\gamma}}\right)\to0\qquad\text{as }T\to\infty.
\]
Then by the i.i.d.\ condition across $i$, as $(n,T)\to\infty$:
\[
\mathbb{E}\left[\left(\frac{1}{\sqrt{n}}\sum_{i=1}^{n}\left[H_{i,T}^{{\rm WG}}-\mathbb{E}\left(H_{i,T}^{{\rm WG}}\right)\right]\right)^{2}\right]=\mathbb{E}\left(\left[H_{i,T}^{{\rm WG}}-\mathbb{E}\left(H_{i,T}^{{\rm WG}}\right)\right]^{2}\right)\leq\mathbb{E}\left[(H_{i,T}^{{\rm WG}})^{2}\right]\to0.
\]
By Markov's inequality we have \eqref{eq:H limit 0}. The desired
joint convergence then follows.
\end{proof}
\begin{proof}[Proof of Lemma~\ref{lem:lurjoint}]
(i) Recall $Q_{i,T}^{{\rm WG}}=T^{-2}\sum_{t=1}^{T}x_{i,t}^{2}$
when $\gamma=1$. By \prettyref{lem:epct}\ref{enu:E_sum_x2_2},
\[
\mathbb{E}\left[(Q_{i,T}^{{\rm WG}})^{2}\right]=T^{-4}\cdot O(T^{4})=O(1)\qquad\text{as }T\to\infty.
\]
Thus by \prettyref{lem:u.i.}, $Q_{i,T}$ is u.i.\ in $T$. By Lemma~1(c)
of \citet{phillips1987towards}, we obtain
\begin{equation}
Q_{i,T}^{{\rm WG}}\to_{d}\int_{0}^{1}J_{2,c^{*}}(r)^{2}\:dr\qquad\text{as }T\to\infty.\label{eq:QWG lim dis}
\end{equation}
Then, by Corollary~1 of \citet{phillips1999linear}, we have
\[
\frac{1}{n}\sum_{i=1}^{n}Q_{i,T}^{{\rm WG}}\to_{p}\mathbb{E}\left(\int_{0}^{1}J_{2,c^{*}}(r)^{2}\:dr\right)=\Omega_{c^{*}}\qquad\text{as }(n,T)\to\infty.
\]

\medskip{}
(ii) Recall $R_{i,T}^{{\rm WG}}=(T^{-3/2}\sum_{t=1}^{T}x_{i,t})^{2}$
when $\gamma=1$. By \prettyref{lem:epct}(ii),
\[
\mathbb{E}\left[(R_{i,T}^{{\rm WG}})^{2}\right]=T^{-6}\cdot O(T^{6})=O(1)\qquad\text{as }T\to\infty.
\]
Thus by \prettyref{lem:u.i.}(i), $R_{i,T}^{{\rm WG}}$ is u.i.\ in
$T$. By Lemma~1(b) of \citet{phillips1987towards} and the continuous
mapping theorem, we obtain $R_{i,T}^{{\rm WG}}\to_{d}\bigl(\int_{0}^{1}J_{2,c^{*}}(r)\,dr\bigr)^{2}$
as $T\to\infty.$ Thus, by Corollary~1 of \citet{phillips1999linear},
we have
\[
\frac{1}{n}\sum_{i=1}^{n}R_{i,T}^{{\rm WG}}\to_{p}\mathbb{E}\left[\biggl(\int_{0}^{1}J_{2,c^{*}}(r)\,dr\biggr)^{2}\right]=\Sigma_{c^{*}}\qquad\text{as }(n,T)\to\infty.
\]

\medskip{}
(iii) By Lemma~1(a)(b) of \citet{phillips1987towards}, we obtain
\[
\frac{1}{\sqrt{T}}\tilde{x}_{i,t=[Tr]}=\frac{1}{\sqrt{T}}x_{i,t=[Tr]}-\frac{1}{T^{3/2}}\sum_{s=1}^{T}x_{i,s}\to_{d}J_{2,c^{*}}(r)-\int_{0}^{1}J_{2,c^{*}}(\tau)\,d\tau
\]
as $T\to\infty,$ where $[Tr]$ means integer part of $Tr$. Using
the same argument from that lemma, we can show, as $T\to\infty$,
\begin{align}
Z_{i,T}^{{\rm WG}}-H_{i,T}^{\mathrm{WG}} & =\frac{1}{T}\sum_{t=1}^{T}\tilde{x}_{i,t}e_{i,t+1}\to_{d}\int_{0}^{1}\biggl(J_{2,c^{*}}(r)-\int_{0}^{1}J_{2,c^{*}}(\tau)\,d\tau\biggr)\,dB_{1}(r)\qquad\text{and}\nonumber \\
H_{i,T}^{\mathrm{WG}} & =\frac{1}{\sqrt{T}}\sum_{t=1}^{T}e_{i,t+1}\frac{1}{T^{3/2}}\sum_{t=1}^{T}x_{i,t}\to_{d}B_{1}(1)\int_{0}^{1}J_{2,c^{*}}(\tau)\,d\tau.\label{eq:H lim dist}
\end{align}
If we show that $(Z_{i,T}^{{\rm WG}})^{2}$ and $(H_{i,T}^{{\rm WG}})^{2}$
are u.i.\ in $T$, then by \prettyref{lem:u.i.},\footnote{By \prettyref{lem:epct}\ref{enu:E_sum_x_sum_e_2} we can also show
$\mathbb{E}\bigl[(H_{i,T}^{{\rm WG}})^{2}\bigr]=O(1)$ when $\gamma=1$.} as $T\to\infty$,
\begin{align*}
\mathbb{E}\bigl(H_{i,T}^{{\rm WG}}\bigr) & \to\mathbb{E}\left(B_{1}(1)\int_{0}^{1}J_{2,c^{*}}(\tau)\,d\tau\right)\qquad\text{and}\\
\mathbb{E}\bigl[(H_{i,T}^{{\rm WG}})^{2}\bigr] & \to\mathbb{E}\left[\left(B_{1}(1)\int_{0}^{1}J_{2,c^{*}}(\tau)\,d\tau\right)^{2}\right]<\infty
\end{align*}
which implies $\mathbb{E}\bigl[(H_{i,T}^{{\rm WG}})^{2}\bigr]=O(1)$.
Therefore, $L_{i,T}^{{\rm WG}}=Z_{i,T}^{{\rm WG}}-\bigl[H_{i,T}^{{\rm WG}}-\mathbb{E}\bigl(H_{i,T}^{{\rm WG}}\bigr)\bigr]$
is square u.i.\ in $T$ and, as $T\to\infty$,
\begin{equation}
L_{i,T}^{{\rm WG}}\to_{d}\int_{0}^{1}\biggl(J_{2,c^{*}}(r)-\int_{0}^{1}J_{2,c^{*}}(\tau)\,d\tau\biggr)\,dB_{1}(r)+\mathbb{E}\left(B_{1}(1)\int_{0}^{1}J_{2,c^{*}}(\tau)\,d\tau\right).\label{eq:limit dist L}
\end{equation}
Then, by Theorem~3 of \citet{phillips1999linear}, since $(L_{i,T}^{{\rm WG}})^{2}$
is u.i.\ in $T$,
\begin{align*}
\frac{1}{n}\sum_{i=1}^{n}L_{i,T}^{{\rm WG}} & \to_{d}\mathcal{N}(0,\Sigma_{\tilde{x}e})\qquad\text{as }(n,T)\to\infty.
\end{align*}
Therefore, it suffices to show the uniform integrability of $(Z_{i,T}^{{\rm WG}})^{2}$
and $(H_{i,T}^{{\rm WG}})^{2}$.

\textbf{Step I. }Showing $(Z_{i,T}^{{\rm WG}})^{2}$ is u.i.\ in
$T$. Recall $Z_{i,T}^{{\rm WG}}=T^{-1}\sum_{t=1}^{T}x_{i,t}e_{i,t+1}$
when $\gamma=1$. By Lemma~1(d) of \citet{phillips1987towards} and
the continuous mapping theorem,
\[
(Z_{i,T}^{{\rm WG}})^{2}\to_{d}\left(\int_{0}^{1}J_{2,c^{*}}(r)\,dB_{1}(r)\right)^{2}\qquad\text{as }T\to\infty.
\]
Now $\rho^{*}=1+c^{*}/T$. When $c^{*}\neq0$, similar to \eqref{eq:Ex2 lim},
we have
\begin{align*}
\mathbb{E}\bigl(Q_{i,T}^{\mathrm{WG}}\bigr) & =\frac{1}{T^{2}}\biggl[\omega_{vv}^{*}\biggl(\frac{T}{1-\rho^{*2}}-\frac{\rho^{*2}(1-\rho^{*2T})}{(1-\rho^{*2})^{2}}\biggr)+\dfrac{\rho^{*2}(1-\rho^{*2T})}{1-\rho^{*2}}\mathbb{E}(x_{i,0}^{2})\biggr]\\
 & =\frac{1}{T^{2}}\biggl[\omega_{vv}^{*}\biggl(\frac{T}{\frac{-c^{*}}{T}(2+\frac{c^{*}}{T})}-\frac{\rho^{*2}(1-\rho^{*2T})}{\frac{c^{*2}}{T^{2}}(2+\frac{c^{*}}{T})^{2}}\biggr)+O(T)\cdot o(T)\biggr]\\
 & \to\omega_{vv}^{*}\frac{e^{2c^{*}}-2c^{*}-1}{4c^{*2}}\qquad\text{as }T\to\infty,
\end{align*}
where it should be noted that $\mathbb{E}(x_{i,0}^{2})=o(T)$ by \prettyref{assump:initval}
and $\rho^{*T}\to\exp(c^{*})$. We still have \eqref{eq: EZ2EX2}
so that as $T\to\infty$:
\begin{align*}
\mathbb{E}[(Z_{i,T}^{{\rm WG}})^{2}] & =\omega_{ee}^{*}\mathbb{E}\bigl(Q_{i,T}^{\mathrm{WG}}\bigr)\to\omega_{ee}^{*}\omega_{vv}^{*}\frac{e^{2c^{*}}-2c^{*}-1}{4c^{*2}}=\mathbb{E}\left[\left(\int_{0}^{1}J_{2,c^{*}}(r)dB_{1}(r)\right)^{2}\right]
\end{align*}
When $c^{*}=0$, by the second row of \eqref{eq:Ex2 lim} and \eqref{eq: EZ2EX2},
as $T\to\infty$,
\begin{align*}
\mathbb{E}[(Z_{i,T}^{{\rm WG}})^{2}] & =\omega_{ee}^{*}\mathbb{E}\bigl(Q_{i,T}^{\mathrm{WG}}\bigr)=\omega_{ee}^{*}\omega_{vv}^{*}\left[\frac{1}{T^{2}}\sum_{t=1}^{T}t+\frac{1}{T^{2}}\cdot o(T^{2})\right]\\
 & \to\frac{\omega_{ee}^{*}\omega_{vv}^{*}}{2}=\mathbb{E}\left[\left(\int_{0}^{1}B_{2}(r)\,dB_{1}(r)\right)^{2}\right].
\end{align*}
Thus by \prettyref{lem:u.i.}, $(Z_{i,T}^{{\rm WG}})^{2}$ is u.i.\ in
$T$.

\textbf{Step II}. Showing  $(H_{i,T}^{{\rm WG}})^{2}$ is u.i.\ in
$T$. Recall $H_{i,T}^{{\rm WG}}=T^{-2}\sum_{t=1}^{T}x_{i,t}\sum_{t=1}^{T}e_{i,t+1}$
when $\gamma=1$. By \eqref{eq:H lim dist} and the continuous mapping
theorem,
\[
(H_{i,T}^{{\rm WG}})^{2}\to_{d}\left(B_{1}(1)\int_{0}^{1}J_{2,c^{*}}(\tau)\,d\tau\right)^{2}\qquad\text{as }T\to\infty.
\]
Here we only focus on the case where $c^{*}<0$ so that $\rho^{*}=1+c^{*}/T<1$;
the case where $\rho^{*}=1$ is relatively simpler. Using the DGP
formula $x_{i,t}=\sum_{j=1}^{t}\rho^{*t-j}v_{i,j}+\rho^{*t}x_{i,0}$
we have
\[
\sum_{t=1}^{T}x_{i,t}=\sum_{j=1}^{T}\Biggl(\sum_{t=j}^{T}\rho^{*t-j}\Biggr)v_{i,j}+\text{\ensuremath{\sum_{t=1}^{T}}\ensuremath{\ensuremath{\rho^{*t}x_{i,0}}.}}
\]
For convenience, let $v_{i,0}=x_{i,0}$ and define
\[
r_{T,j}:=\begin{cases}
\sum_{t=j}^{T}\rho^{*t-j}, & \text{for }j>0,\\
\sum_{t=1}^{T}\ensuremath{\rho^{*t}}, & \text{for }j=0.
\end{cases}
\]
Then we can write $\sum_{t=1}^{T}x_{i,t}=\sum_{j=0}^{T}r_{T,j}v_{i,j}$.
We decompose $(H_{i,T}^{{\rm WG}})^{2}$ as
\begin{align*}
(H_{i,T}^{{\rm WG}})^{2} & =\frac{1}{T^{4}}\left(\sum_{t=1}^{T}x_{i,t}\right)^{2}\cdot\Biggl(\sum_{t=1}^{T}e_{i,t+1}^{2}+\sum_{t=1}^{T}\sum_{s\neq t}e_{i,s+1}e_{i,t+1}\Biggr)\\
 & =\frac{1}{T^{4}}\Biggl(\sum_{j=0}^{T}r_{T,j}v_{i,j}\Biggr)^{2}\cdot\left(I_{i,T}+\mathit{II_{i,T}}\right).
\end{align*}
Note that $r_{T,j}=O(T)$ uniformly for all $j\leq T$. Under \prettyref{assump:iid},
we deduce
\begin{align}
 & \mathbb{E}\left[\frac{1}{T^{4}}\Biggl(\sum_{j=0}^{T}r_{T,j}v_{i,j}\Biggr)^{2}\cdot I_{i,T}\right]=\frac{1}{T^{4}}\sum_{t=1}^{T}\mathbb{E}\left[\Biggl(\sum_{j=0}^{T}r_{T,j}v_{i,j}\Biggr)^{2}\cdot e_{i,t+1}^{2}\right]\nonumber \\
={} & \frac{1}{T^{4}}\sum_{t=1}^{T}\sum_{\substack{1\leq j\leq T-1\\
j\neq t+1
}
}r_{T,j}^{2}\mathbb{E}\left(v_{i,j}^{2}\cdot e_{i,t+1}^{2}\right)+\frac{1}{T^{4}}r_{T,0}^{2}\sum_{t=1}^{T}\mathbb{E}\left(v_{i,0}^{2}\cdot e_{i,t+1}^{2}\right)+\frac{1}{T^{4}}\sum_{t=1}^{T}r_{T,t+1}^{2}\mathbb{E}\left(v_{i,t+1}^{2}\cdot e_{i,t+1}^{2}\right)\nonumber \\
={} & \frac{\omega_{ee}^{*}\omega_{vv}^{*}}{T^{4}}\Biggl(\sum_{t=1}^{T}\sum_{j=1}^{T}r_{T,j}^{2}-\sum_{j=2}^{T}r_{T,j}^{2}\Biggr)+O(T^{-1})\cdot\mathbb{E}(x_{i,0}^{2})\omega_{ee}^{*}+O(T^{-1})\cdot O(1)\nonumber \\
={} & \frac{\omega_{ee}^{*}\omega_{vv}^{*}}{T^{3}}\sum_{j=1}^{T}r_{T,j}^{2}+o(1)=\dfrac{\omega_{ee}^{*}\sigma_{vv}^{*}}{[T(1-\rho^{*})]^{2}}\left[1-\dfrac{2(\rho^{*}-\rho^{*T+1})}{T(1-\rho^{*})}+\dfrac{\rho^{*2}-\rho^{*2(T+1)})}{T(1-\rho^{*2})}\right]+o(1)\nonumber \\
\to{} & \omega_{ee}^{*}\omega_{vv}^{*}\frac{2c^{*}+(1-e^{c^{*}})(3-e^{c^{*}})}{2c^{*3}}\qquad\text{as }T\to\infty,\label{eq:EI3}
\end{align}
and
\begin{align}
 & \mathbb{E}\left[\frac{1}{T^{4}}\Biggl(\sum_{j=0}^{T}r_{T,j}v_{i,j}\Biggr)^{2}\cdot\mathit{II_{i,T}}\right]=\frac{1}{T^{4}}\sum_{t=1}^{T}\sum_{s\neq t}\mathbb{E}\left[\Biggl(\sum_{j=0}^{T}r_{T,j}v_{i,j}\Biggr)^{2}\cdot e_{i,s+1}e_{i,t+1}\right]\nonumber \\
 & =\frac{2}{T^{4}}\sum_{t=1}^{T-1}\sum_{s\neq t}r_{T,s+1}r_{T,t+1}\mathbb{E}(v_{i,s+1}e_{i,s+1}\cdot v_{i,t+1}e_{i,t+1})\nonumber \\
 & =\frac{2}{T^{4}}\sum_{t=1}^{T-1}\sum_{s\neq t}r_{T,s+1}r_{T,t+1}\omega_{ev}^{*2}=\frac{2\omega_{ev}^{*2}}{T^{4}}\left(\sum_{t=1}^{T-1}r_{T,t+1}\right)^{2}-\frac{2\omega_{ev}^{*2}}{T^{4}}\sum_{t=1}^{T-1}r_{T,t+1}^{2}\nonumber \\
 & =2\omega_{ev}^{*2}\left[\dfrac{T-1}{T^{2}(1-\rho^{*})}-\dfrac{\rho^{*}-\rho^{*T}}{T^{2}(1-\rho^{*})^{2}}\right]^{2}+o(1)\nonumber \\
 & \to2\omega_{ev}^{*2}\left(\frac{1}{-c^{*}}-\frac{1-e^{c^{*}}}{c^{*2}}\right)^{2}=2\omega_{ev}^{*2}\frac{(e^{c^{*}}-c^{*}-1)^{2}}{c^{*4}}.\label{eq:EII3}
\end{align}
By \eqref{eq:EI3} and \eqref{eq:EII3}, as $T\to\infty$,
\begin{align*}
\mathbb{E}\bigl[(H_{i,T}^{{\rm WG}})^{2}\bigr] & \to\omega_{ee}^{*}\omega_{vv}^{*}\frac{2c^{*}+(1-e^{c^{*}})(3-e^{c^{*}})}{2c^{*3}}+2\omega_{ev}^{*2}\frac{(e^{c^{*}}-c^{*}-1)^{2}}{c^{*4}}\\
 & =\mathbb{E}\left[\left(B_{1}(1)\int_{0}^{1}J_{2,c^{*}}(\tau)\,d\tau\right)^{2}\right],
\end{align*}
where the last equality is by \eqref{eq:E of squared B_int_J}. By
\prettyref{lem:u.i.}, $(H_{i,T}^{{\rm WG}})^{2}$ is u.i.\ in $T$.
This completes the proof of \prettyref{lem:lurjoint}.
\end{proof}
\begin{proof}[Proof of \prettyref{lem:esterror}]
By \eqref{eq:sigmaWG lim} we have $1/\varsigma^{{\rm WG}}=O_{p}(\sqrt{nT^{1+\gamma}})$.
Then
\begin{align*}
r_{n,T}^{\mathrm{WG}}(\hat{\rho}) & =\bigl[b_{n,T}^{\mathrm{WG}}(\hat{\rho})-b_{n,T}^{\mathrm{WG}}(\rho^{*})\bigr]O_{p}(\sqrt{nT^{1+\gamma}})=\Biggl(\sum_{t=2}^{T_{1}}\sum_{s=2}^{t}\hat{\rho}^{t-s}-\sum_{t=2}^{T_{1}}\sum_{s=2}^{t}\rho^{*t-s}\Biggr)O_{p}\left(\sqrt{\frac{n}{T^{3+\gamma}}}\right)\\
 & =\sum_{t=2}^{T_{1}}\sum_{s=2}^{t}(\hat{\rho}^{t-s}-\rho^{*t-s})O_{p}\left(\sqrt{\frac{n}{T^{3+\gamma}}}\right).
\end{align*}
It remains to bound $\sum_{t=2}^{T_{1}}\sum_{s=2}^{t}(\hat{\rho}^{t-s}-\rho^{*t-s})$.
Define
\[
f_{T}(\rho):=\sum_{t=2}^{T_{1}}\sum_{s=2}^{t}\rho^{t-s}=\frac{T_{1}}{1-\rho}-\frac{1}{(1-\rho)^{2}}+\frac{\rho^{T_{1}}}{(1-\rho)^{2}}.
\]
By the differential mean value theorem, there exists some $\check{\rho}$
between $\hat{\rho}$ and $\rho^{*}$ such that
\[
f_{T}(\hat{\rho})-f_{T}(\rho^{*})=f_{T}^{\prime}(\check{\rho})(\hat{\rho}-\rho^{*}).
\]
When $\gamma<1$ and $c^{*}<0$, by $\hat{\rho}-\rho^{*}=O_{p}\bigl(T^{-\frac{1+\gamma}{2}}\bigr)$
we have that with high probability,
\[
0<1+\frac{2c^{*}}{T^{\gamma}}\leq\hat{\rho}=1+\frac{c^{*}}{T^{\gamma}}+O_{p}\biggl(\frac{1}{T^{(1+\gamma)/2}}\biggr)\leq1+\frac{c^{*}}{2T^{\gamma}}
\]
for $T$ large enough. This together with the fact that $\check{\rho}$
is between $\rho^{*}$ and $\hat{\rho}$ implies
\[
1+\frac{2c^{*}}{T^{\gamma}}\leq\check{\rho}\leq1+\frac{c^{*}}{2T^{\gamma}}
\]
for $T$ large enough. Accordingly, we have
\[
f_{T}^{\prime}(\check{\rho})=\frac{T_{1}}{(1-\check{\rho})^{2}}-\frac{2}{(1-\check{\rho})^{3}}+\frac{T_{1}\check{\rho}^{T_{1}-1}}{(1-\check{\rho})^{2}}+\frac{2\check{\rho}^{T_{1}}}{(1-\check{\rho})^{3}}=O_{p}\left(T^{1+2\gamma}\right).
\]
Furthermore, when $\gamma=1$, $\hat{\rho}=1+O_{p}(T^{-1})$ and thus
$\check{\rho}=1+O_{p}(T^{-1})$. Then
\begin{align*}
f_{T}^{\prime}(\check{\rho}) & =\sum_{t=3}^{T_{1}}\sum_{s=3}^{t}(t-s)\check{\rho}^{t-s-1}\leq\sum_{t=3}^{T_{1}}\sum_{s=3}^{t}(t-s)\bigl[1+O_{p}(T^{-1})\bigr]^{T}\\
 & =O(T^{3})\cdot O_{p}(1)=O_{p}(T^{3}).
\end{align*}
Thus, for any $\gamma\in[0,1]$, $f'_{T}(\check{\rho})=O_{p}(T^{1+2\gamma})$
and it follows that
\[
r_{n,T}^{\mathrm{WG}}(\hat{\rho})=O_{p}\left(T^{1+2\gamma}|\hat{\rho}-\rho^{*}|\cdot\sqrt{\frac{n}{T^{3+\gamma}}}\right)=O_{p}\left(\sqrt{\frac{n}{T^{1-3\gamma}}}|\hat{\rho}-\rho^{*}|\right).
\]
This completes the proof.
\end{proof}

\section{Proofs of Technical Lemmas for IVX\label{sec:Proofs-for-IVX-1}}

We note that the lower bounds imposed on $\theta$ and $\gamma$ in
\citet{phillips2009econometric}, which are required to handle the
effect of the long run covariances, are not necessary for m.d.s.;
see their Proposition~A2. Also, the asymptotics in \citet{phillips2009econometric}
for LUR accommodates the locally explosive regressor with $\rho^{*}=1+c^{*}/T$
where $c^{*}>0$; see also the paragraph right before Theorem~2.1
of \citet{phillips2016robust}.
\begin{proof}[Proof of \prettyref{lem:ivxepct}]
For simplicity, we assume $\rho^{*}\geq0$ and $\rho_{z}\neq\rho^{*}$
without loss of generality. Note that $\zeta_{i,t}$ and $x_{i,t}$
are AR(1) processes with coefficients $\rho_{z}=1+c_{z}/T^{\theta}$
and $\rho^{*}=1+c^{*}/T^{\gamma}$. Thus, $(1-\rho_{z}\rho^{*})^{-1}=O(T^{\theta\wedge\gamma})$
and Part~\ref{enu:zeta_x} follows from \prettyref{lem:generic_two_ARs}\ref{enu:Esum2}--\ref{enu:Esumsum2}.

\textbf{Item \ref{enu:phi_x}.} Note that
\begin{align}
\psi_{i,t} & =\sum_{j=0}^{t-1}\rho_{z}^{t-1-j}x_{i,j}=\sum_{j=0}^{t-1}\rho_{z}^{t-1-j}\left[\sum_{\ell=-\infty}^{j}\pi_{i,T}(j,\ell)\varepsilon_{i,\ell}+\rho^{*j}x_{i,0}\right]\nonumber \\
 & =\sum_{\ell=-\infty}^{t-1}\Biggl(\sum_{j=\ell\vee0}^{t-1}\rho_{z}^{t-1-j}\pi_{i,T}(j,\ell)\Biggr)\varepsilon_{i,\ell}+\sum_{j=0}^{t-1}\rho_{z}^{t-1-j}\rho^{*j}x_{i,0}\nonumber \\
 & =:\sum_{\ell=-\infty}^{t-1}P_{i}(t,\ell)\varepsilon_{i,\ell}+\frac{\rho_{z}^{t}-\rho^{*t}}{\rho_{z}-\rho^{*}}x_{i,0},\label{eq:psi eq Pv}
\end{align}
where
\begin{equation}
P_{i}(t,\ell):=\sum_{j=\ell\vee0}^{t-1}\rho_{z}^{t-1-j}\pi_{i,T}(j,\ell).\label{eq:def_P}
\end{equation}
By \prettyref{lem:ma_represent} we can show
\[
|P_{i}(t,\ell)|\leq\begin{cases}
\dfrac{\rho_{z}^{t}-(\rho^{*}\vee q_{\nu})^{t}}{\rho_{z}-(\rho^{*}\vee q_{\nu})}q_{\nu}^{1-\ell} & \ell\leq0,\\
\dfrac{\rho_{z}^{t-\ell}-(\rho^{*}\vee q_{\nu})^{t-\ell}}{\rho_{z}-(\rho^{*}\vee q_{\nu})} & \ell\geq1,
\end{cases}
\]
where $q_{\nu}:=\exp(-C_{g})$. We thus have
\begin{align*}
\psi_{i,t}x_{i,t} & =\sum_{k=-\infty}^{t-1}P_{i}(t,k)\varepsilon_{i,k}\sum_{\ell=-\infty}^{t}\pi_{i}(t,\ell)\varepsilon_{i,\ell}+\frac{\rho_{z}^{t}-\rho^{*t}}{\rho_{z}-\rho^{*}}x_{i,0}\sum_{\ell=-\infty}^{t}\pi_{i}(t,\ell)\varepsilon_{i,\ell}\\
 & \qquad+\rho^{*t}x_{i,0}\sum_{k=-\infty}^{t-1}P_{i}(t,k)\varepsilon_{i,k}+\frac{(\rho_{z}\rho^{*})^{t}-\rho^{*2t}}{\rho_{z}-\rho^{*}}x_{i,0}^{2}\\
 & =:\Psi_{1,i,t}+\Psi_{2,i,t}+\Psi_{3,i,t}+\Psi_{4,i,t}.
\end{align*}
Let $\Theta_{m,i,T}:=\sum_{t=1}^{T}\Phi_{m,t}$ for $m=1,2,3,4$.
Then,
\begin{align*}
\mathbb{E}\left[\left(\sum_{t=1}^{T}\psi_{i,t}x_{i,t}\right)^{2}\right] & =\mathbb{E}\left[\left(\Theta_{1,i,T}+\Theta_{2,i,T}+\Theta_{3,i,T}+\Theta_{4,i,T}\right)^{2}\right]\\
 & \leq4\mathbb{E}\left(\Theta_{1,i,T}^{2}+\Theta_{2,i,T}^{2}+\Theta_{3,i,T}^{2}+\Theta_{4,i,T}^{2}\right).
\end{align*}
Using the same argument for proving \eqref{eq:Xi1_bound}, we can
show $\mathbb{E}\bigl(\Theta_{1,i,T}^{2}\bigr)=O\left(T^{2[1+\gamma+(\theta\wedge\gamma)]}\right)$.
For $\mathbb{E}\bigl(\Theta_{2,i,T}^{2}\bigr)$ and $\mathbb{E}\bigl(\Theta_{3,i,T}^{2}\bigr)$,
similar to the proof for \eqref{eq:Xi2_bound}, we can also show they
are of order $O\left(T^{2[1+\gamma+(\theta\wedge\gamma)]}\right)$.
For $\mathbb{E}\bigl(\Theta_{4,i,T}^{2}\bigr)$, it is easy to see
\[
\mathbb{E}\left(\Theta_{4,i,T}^{2}\right)=\mathbb{E}\left(x_{i,0}^{4}\right)\left(\sum_{t=1}^{T}\frac{(\rho_{z}\rho^{*})^{t}-\rho^{*2t}}{\rho_{z}-\rho^{*}}\right)^{2}=O\left(T^{2\gamma}\right)\cdot O\left(T^{2(\theta+\gamma)}\right)=O\left(T^{2(\theta+2\gamma)}\right).
\]
These results lead to the desired order

\begin{equation}
\mathbb{E}\left[\left(\sum_{t=1}^{T}\psi_{i,t}x_{i,t}\right)^{2}\right]=O\bigl(T^{2[1+\gamma+(\theta\wedge\gamma)]}\bigr).\label{eq:EpsiX2}
\end{equation}

To bound\textbf{ $\mathbb{E}\bigl[\bigl(\sum_{t=1}^{T}\psi_{i,t}\sum_{t=1}^{T}x_{i,t}\bigr)^{2}\bigr]$},\textbf{
}we first bound $\mathbb{E}\bigl[(\sum_{t=1}^{T}\psi_{i,t})^{4}\bigr]$.
Note that \eqref{eq:psi eq Pv} yields
\begin{align}
\sum_{t=1}^{T}\psi_{i,t} & =\sum_{t=1}^{T}\sum_{\ell=-\infty}^{t-1}P_{i}(t,\ell)\varepsilon_{i,\ell}+\sum_{t=1}^{T}\frac{\rho_{z}^{t}-\rho^{*t}}{\rho_{z}-\rho^{*}}x_{i,0}\nonumber \\
 & =\sum_{\ell=-\infty}^{T-1}\sum_{t=(\ell+1)\vee1}^{T}P_{i}(t,\ell)\varepsilon_{i,\ell}+O(T^{\theta+\gamma})x_{i,0}=:\sum_{\ell=-\infty}^{T-1}S_{i,T}(\ell)\varepsilon_{i,\ell}+O(T^{\theta+\gamma})x_{i,0},\label{eq:sum_psi}
\end{align}
where, using the definition of $P_{i}(t,\ell)$ given by \eqref{eq:def_P},
\[
S_{i,T}(\ell):=\sum_{t=(\ell+1)\vee1}^{T}P_{i}(t,\ell)=\begin{cases}
O(T^{\theta+\gamma})q_{\nu}^{1-\ell} & \ell\leq0,\\
O(T^{\theta+\gamma}) & \ell>1.
\end{cases}
\]
The first term in \eqref{eq:sum_psi} has order (similar to the derivation
of \eqref{eq:sum_sum_pi_4}, using the cumulant condition)
\begin{align}
 & \quad\mathbb{E}\left[\left(\sum_{\ell=-\infty}^{T-1}S_{i,T}(\ell)\varepsilon_{i,\ell}\right)^{4}\right]\nonumber \\
 & =\sum_{\ell=-\infty}^{T-1}\sum_{k\neq\ell}[S_{i,T}(k)]^{3}S_{i,T}(\ell)\mathbb{E}(\varepsilon_{i,k}^{3}\varepsilon_{i,\ell})+\sum_{\ell=-\infty}^{T-1}\sum_{k\neq\ell}[S_{i,T}(k)]^{2}[S_{i,T}(\ell)]^{2}\mathbb{E}(\varepsilon_{i,k}^{2}\varepsilon_{i,\ell}^{2})\nonumber \\
 & \qquad+\sum_{k=-\infty}^{T-1}\sum_{\substack{\ell,j=\infty\\
\ell\neq j
}
}^{k-1}[S_{i,T}(k)]^{2}S_{i,T}(\ell)S_{i,T}(j)\mathbb{E}(\varepsilon_{i,k}^{2}\varepsilon_{i,\ell}\varepsilon_{i,j})\nonumber \\
 & =O\bigl(T^{1+4\theta+4\gamma}\bigr)+O\bigl(T^{2+4\theta+4\gamma}\bigr)+O\bigl(T^{4\theta+4\gamma}\bigr)=O\bigl(T^{2+4\theta+4\gamma}\bigr).\label{eq:E psi 4 term 1}
\end{align}
The second term in \eqref{eq:sum_psi} is bounded by
\begin{equation}
\mathbb{E}\left[\left(O(T^{\theta+\gamma})x_{i,0}\right)^{4}\right]=O(T^{4\theta+4\gamma})\mathbb{E}(x_{i,0}^{4})=O(T^{4\theta+6\gamma}).\label{eq:Epsi 4 term 2}
\end{equation}
It follows from \eqref{eq:E psi 4 term 1} and \eqref{eq:Epsi 4 term 2}
that
\begin{equation}
\mathbb{E}\left[\left(\sum_{t=1}^{T}\psi_{i,t}\right)^{4}\right]=O\bigl(T^{2+4\theta+4\gamma}\bigr).\label{eq:Esum_psi4}
\end{equation}
By \prettyref{lem:generic_AR}\ref{enu:E_sum_eta_4} we have
\begin{equation}
\mathbb{E}\left[\left(\sum_{t=1}^{T}x_{i,t}\right)^{4}\right]=O(T^{2+4\gamma}).\label{eq:E sum x4}
\end{equation}
Then, by the Cauchy-Schwarz inequality, \eqref{eq:Esum_psi4} and
\eqref{eq:E sum x4} together yield
\begin{equation}
\mathbb{E}\left[\left(\sum_{t=1}^{T}\psi_{i,t}\sum_{t=1}^{T}x_{i,t}\right)^{2}\right]\leq\left\{ \mathbb{E}\left[\left(\sum_{t=1}^{T}\psi_{i,t}\right)^{4}\right]\cdot\mathbb{E}\left[\left(\sum_{t=1}^{T}x_{i,t}\right)^{4}\right]\right\} ^{1/2}=O\bigl(T^{2(1+\theta+2\gamma)}\bigr).\label{eq:EpsiEX2}
\end{equation}
By \eqref{eq:EpsiX2} and \eqref{eq:EpsiEX2} we complete the proof
of Item \ref{enu:phi_x}.

\textbf{Item \ref{enu:zx} }follows from \ref{enu:zeta_x}, \ref{enu:phi_x},
and the decomposition \eqref{eq:zeta decom 1}.

\textbf{Item \ref{enu:zeta_e} }follows from \prettyref{lem:generic_two_ARs}\ref{enu:Esumsum2}
and the fact that\textbf{ $z_{i,t}$ }and $e_{i,t}$ are AR(1) with
autoregressive coefficients $\rho_{z}$ and $0$, respectively.

\textbf{Item \ref{enu:phi_e}.} By \prettyref{lem:generic_AR}\ref{enu:E_sum_eta_4},
we have $\mathbb{E}\bigl[\bigl(\sum_{t=1}^{T}e_{i,t+1}\bigr)^{4}\bigr]=O(T^{2})$.
This result together with \eqref{eq:E psi 4 term 1} yields, by the
Cauchy-Schwarz inequality,
\[
\mathbb{E}\left[\left(\sum_{t=1}^{T}\psi_{i,t}\sum_{t=1}^{T}e_{i,t+1}\right)^{2}\right]\leq\left\{ \mathbb{E}\left[\left(\sum_{t=1}^{T}\psi_{i,t}\right)^{4}\right]\cdot\mathbb{E}\left[\left(\sum_{t=1}^{T}e_{i,t+1}\right)^{4}\right]\right\} ^{1/2}=O\bigl(T^{2(1+\theta+\gamma)}\bigr).
\]

\textbf{Item \ref{enu:ze}.} If $\theta\leq\gamma$, by \ref{enu:zeta_e},
\ref{enu:phi_e}, and using the decomposition \eqref{eq:zeta decom 1}
we have
\begin{align}
 & \mathbb{E}\left[\left(\sum_{t=1}^{T}z_{i,t}\sum_{t=1}^{T}e_{i,t+1}\right)^{2}\right]\lesssim\mathbb{E}\left[\left(\sum_{t=1}^{T}\zeta_{i,t}\sum_{t=1}^{T}e_{i,t+1}\right)^{2}\right]+(1-\rho^{*})^{2}\mathbb{E}\left[\left(\sum_{t=1}^{T}\psi_{i,t}\sum_{t=1}^{T}e_{i,t+1}\right)^{2}\right]\nonumber \\
 & \qquad=O\bigl(T^{2(1+\theta)}\bigr)+O(T^{-2\gamma})\cdot O\bigl(T^{2(1+\theta+\gamma)}\bigr)=O\bigl(T^{2(1+\theta)}\bigr),\label{eq:E sum zeta sum e2 case 1}
\end{align}
where the inequality uses $(a+b)^{2}\leq2(a^{2}+b^{2})$.

If $\gamma<\theta$, we use the other decomposition \eqref{eq:zeta decom 2}
to get
\begin{align}
 & \mathbb{E}\left[\left(\sum_{t=1}^{T}z_{i,t}\sum_{t=1}^{T}e_{i,t+1}\right)^{2}\right]\nonumber \\
\lesssim{} & \mathbb{E}\left[\left(\sum_{t=1}^{T}x_{i,t}\sum_{t=1}^{T}e_{i,t+1}\right)^{2}\right]+(1-\rho_{z})^{2}\mathbb{E}\left[\left(\sum_{t=1}^{T}\psi_{i,t}\sum_{t=1}^{T}e_{i,t+1}\right)^{2}\right]\nonumber \\
 & +\left(\sum_{t=1}^{T}\rho_{z}^{t}\right)^{2}\mathbb{E}\left[\left(x_{i,0}\sum_{t=1}^{T}e_{i,t+1}\right)^{2}\right]\nonumber \\
={} & O\bigl(T^{2(1+\gamma)}\bigr)+O(T^{-2\theta})\cdot O\bigl(T^{2(1+\theta+\gamma)}\bigr)+O(T^{2\theta})\cdot O(T^{1+\gamma})\nonumber \\
={} & O\bigl(T^{2(1+\gamma)}\vee T^{1+2\theta+\gamma}\bigr)=O(T^{2+\theta+\gamma}).\label{eq:E sum zeta sum e 2 case 2}
\end{align}
We combine \eqref{eq:E sum zeta sum e2 case 1} and \eqref{eq:E sum zeta sum e 2 case 2}
to get $\mathbb{E}\bigl[\bigl(\sum_{t=1}^{T}z_{i,t}\sum_{t=1}^{T}e_{i,t+1}\bigr)^{2}\bigr]=O\bigl(T^{2+\theta+(\theta\wedge\gamma)}\bigr)$.

\textbf{Item \ref{enu:z2}.} Case I: $\theta\leq\gamma$. By \prettyref{lem:generic_two_ARs}\ref{enu:Esum2},
we have
\begin{equation}
\mathbb{E}\left[\left(\sum_{t=1}^{T}\zeta_{i,t}^{2}\right)^{2}\right]=O\bigl(T^{2(1+\theta)}\bigr).\label{eq:E sum z2 2}
\end{equation}
By \eqref{eq:psi eq Pv}, we have
\begin{equation}
\sum_{t=1}^{T}\psi_{i,t}^{2}=\sum_{t=1}^{T}\left(\sum_{\ell=-\infty}^{t-1}P_{i}(t,\ell)\varepsilon_{i,\ell}+\frac{\rho_{z}^{t}-\rho^{*t}}{\rho_{z}-\rho^{*}}x_{i,0}\right)^{2}=\sum_{t=1}^{T}\left(\Pi_{1,i,t}+\Pi_{2,i,t}\right)^{2}\lesssim\sum_{t=1}^{T}\left(\Pi_{1,i,t}^{2}+\Pi_{2,i,t}^{2}\right).\label{eq:sum psi2}
\end{equation}
Note that
\[
\mathbb{E}\left[\left(\sum_{t=1}^{T}\Pi_{1,i,t}^{2}\right)^{2}\right]=\sum_{t=1}^{T}\mathbb{E}\left(\Pi_{1,i,t}^{4}\right)+2\sum_{t=1}^{T}\sum_{s=1}^{t-1}\mathbb{E}\left(\Pi_{1,i,s}^{2}\Pi_{1,i,t}^{2}\right).
\]
Similar to the derivation of \eqref{eq:sum_sum_pi_4}, we can deduce
\[
\sum_{t=1}^{T}\mathbb{E}\left(\Pi_{1,i,t}^{4}\right)\leq T\cdot\sup_{t\leq T}\mathbb{E}\left(\Pi_{1,i,t}^{4}\right)=O\bigl(T^{4(\theta\wedge\gamma)+2(\theta\vee\gamma)]}\bigr).
\]
For the second term, arguing as in the proof of \prettyref{lem:generic_two_ARs}\ref{enu:Esum2},
we can show
\begin{align*}
\sum_{t=1}^{T}\sum_{s=1}^{t-1}\mathbb{E}\left(\Pi_{1,i,s}^{2}\Pi_{1,i,t}^{2}\right) & =\sum_{t=1}^{T}\sum_{s=1}^{t-1}\sum_{j,k=-\infty}^{s-1}\sum_{\ell,m=-\infty}^{t-1}P_{i}(s,j)P_{i}(s,k)P_{i}(t,\ell)P_{i}(t,m)\varepsilon_{i,j}\varepsilon_{i,k}\varepsilon_{i,\ell}\varepsilon_{i,m}\\
 & =O\bigl(T^{2[1+2(\theta\wedge\gamma)+(\theta\vee\gamma)]}\bigr).
\end{align*}
It follows that
\begin{equation}
\mathbb{E}\left[\left(\sum_{t=1}^{T}\Pi_{1,i,t}^{2}\right)^{2}\right]=O\bigl(T^{2[1+2(\theta\wedge\gamma)+(\theta\vee\gamma)]}\bigr).\label{eq:E psi I}
\end{equation}
In addition,
\begin{align}
\mathbb{E}\left[\left(\sum_{t=1}^{T}\left(\Pi_{2,i,t}\right)^{2}\right)^{2}\right] & =\mathbb{E}(x_{i,0}^{4})\left[\sum_{t=1}^{T}\left(\frac{\rho_{z}^{t}-\rho^{*t}}{\rho_{z}-\rho^{*}}\right)^{2}\right]^{2}\nonumber \\
 & =O(T^{2\gamma})\cdot O(T^{4(\theta\wedge\gamma)+(\theta\vee\gamma)})=O\bigl(T^{2[\gamma+2(\theta\wedge\gamma)+(\theta\vee\gamma)]}\bigr).\label{eq:E psi II}
\end{align}
By \eqref{eq:sum psi2}, \eqref{eq:E psi I} and \eqref{eq:E psi II},
\begin{equation}
\mathbb{E}\left[\left(\sum_{t=1}^{T}\psi_{i,t}^{2}\right)^{2}\right]=O\bigl(T^{2[1+2(\theta\wedge\gamma)+(\theta\vee\gamma)]}\bigr).\label{eq:E sum psi2 2}
\end{equation}
Thus, \eqref{eq:E sum z2 2}, \eqref{eq:E sum psi2 2}, and the decomposition
\eqref{eq:zeta decom 1} together yield
\begin{align}
\mathbb{E}\left[\left(\sum_{t=1}^{T}z_{i,t}^{2}\right)^{2}\right] & \leq\mathbb{E}\left[\left(2\sum_{t=1}^{T}\left[\zeta_{i,t}^{2}+(1-\rho^{*})^{2}\psi_{i,t}^{2}\right]\right)^{2}\right]\nonumber \\
 & \lesssim\mathbb{E}\left[\left(\sum_{t=1}^{T}\zeta_{i,t}^{2}\right)^{2}\right]+(1-\rho^{*})^{4}\mathbb{E}\left[\left(\sum_{t=1}^{T}\psi_{i,t}^{2}\right)^{2}\right]\nonumber \\
 & =O\bigl(T^{2(1+\theta)}\bigr)+O\bigl(T^{2(1+2\theta-\gamma)}\bigr)=O\bigl(T^{2(1+\theta)}\bigr).\label{eq:E sum zeta2 2 case1}
\end{align}

Case II: $\gamma<\theta$. By \prettyref{lem:generic_two_ARs}\ref{enu:Esum2},
we have
\begin{equation}
\mathbb{E}\left[\left(\sum_{t=1}^{T}x_{i,t}^{2}\right)^{2}\right]=O\bigl(T^{2(1+\gamma)}\bigr).\label{eq:E sum x2 2}
\end{equation}
We use the other decomposition \eqref{eq:zeta decom 2} to deduce
\begin{align}
\mathbb{E}\left[\left(\sum_{t=1}^{T}z_{i,t}^{2}\right)^{2}\right] & \lesssim\mathbb{E}\left[\left(\sum_{t=1}^{T}\left[x_{i,t}^{2}+\rho_{z}^{2t}x_{i,0}^{2}+(1-\rho_{z})^{2}\psi_{i,t}^{2}\right]\right)^{2}\right]\nonumber \\
 & \lesssim\mathbb{E}\left[\left(\sum_{t=1}^{T}x_{i,t}^{2}\right)^{2}\right]+\left(\sum_{t=1}^{T}\rho_{z}^{2t}\right)^{2}\mathbb{E}(x_{i,0}^{4})+(1-\rho_{z})^{4}\mathbb{E}\left[\left(\sum_{t=1}^{T}\psi_{i,t}^{2}\right)^{2}\right]\nonumber \\
 & =O\bigl(T^{2(1+\gamma)}\bigr)+O\bigl(T^{2(\theta+\gamma)}\bigr)+O\bigl(T^{2(1+2\gamma-\theta)}\bigr)=O\bigl(T^{2(1+\gamma)}\bigr),\label{eq:E sum zeta2 2 case 2}
\end{align}
where the third line is due to \eqref{eq:E sum psi2 2}, \eqref{eq:E sum x2 2}
and \prettyref{assump:initval}. By \eqref{eq:E sum zeta2 2 case1}
and \eqref{eq:E sum zeta2 2 case 2}, we conclude that $\mathbb{E}\bigl[\bigl(\sum_{t=1}^{T}z_{i,t}^{2}\bigr)^{2}\bigr]=O\bigl(T^{2[1+(\theta\wedge\gamma)]}\bigr)$.

For\textbf{ Item \ref{enu:z4}}, by \prettyref{lem:generic_AR}\ref{enu:E_sum_eta_4}
we have
\begin{equation}
\mathbb{E}\left[\left(\sum_{t=1}^{T}\zeta_{i,t}\right)^{4}\right]=O(T^{2+4\theta}).\label{eq:E sum z4}
\end{equation}
If $\theta\leq\gamma$, using the decomposition \eqref{eq:zeta decom 1},
\eqref{eq:E sum z4} and \eqref{eq:Esum_psi4} together we have
\begin{align*}
\mathbb{E}\left[\left(\sum_{t=1}^{T}z_{i,t}\right)^{4}\right] & \lesssim\mathbb{E}\left[\left(\sum_{t=1}^{T}\zeta_{i,t}\right)^{4}\right]+(1-\rho^{*})^{4}\mathbb{E}\left[\left(\sum_{t=1}^{T}\psi_{i,t}\right)^{4}\right]\\
 & =O(T^{2+4\theta})+O(T^{2+4\theta})=O(T^{2+4\theta}).
\end{align*}
If $\gamma<\theta$, using the other decomposition \eqref{eq:zeta decom 2},
then by \eqref{eq:E sum x4}, \eqref{eq:Esum_psi4}, and \prettyref{assump:initval}
we have
\begin{align*}
\mathbb{E}\left[\left(\sum_{t=1}^{T}z_{i,t}\right)^{4}\right] & \lesssim\mathbb{E}\left[\left(\sum_{t=1}^{T}x_{i,t}\right)^{4}\right]+\left(\sum_{t=1}^{T}\rho_{z}^{t}\right)^{4}\mathbb{E}(x_{i,0}^{4})+(1-\rho_{z})^{4}\mathbb{E}\left[\left(\sum_{t=1}^{T}\psi_{i,t}\right)^{4}\right]\\
 & =O(T^{2+4\gamma})+O(T^{4\theta+2\gamma})+O(T^{2+4\theta})=O\bigl(T^{2(1+\theta+\gamma)}\bigr).
\end{align*}
The two cases can be jointly expressed as $\mathbb{E}\bigl[\bigl(\sum_{t=1}^{T}z_{i,t}\bigr)^{4}\bigr]=O\left(T^{2[1+\theta+(\theta\wedge\gamma)]}\right)$.

\textbf{Equation~\eqref{eq:E sum zeta sum e}.} Let $v_{i,0}:=x_{i,0}$.
Note that by construction, $z_{i,t}$ can be written as
\begin{align*}
z_{i,t} & =\sum_{j=1}^{t}\rho_{z}^{t-j}(x_{i,j}-x_{i,j-1})=\sum_{j=1}^{t}\rho_{z}^{t-j}[(\rho^{*}-1)x_{i,j-1}+v_{i,j}]\\
 & =\sum_{j=1}^{t}\rho_{z}^{t-j}\left[(\rho^{*}-1)\left(\sum_{k=0}^{j-1}\rho^{*j-1-k}v_{i,k}\right)+v_{i,j}\right]\\
 & =\sum_{k=0}^{t-1}\Biggl[\sum_{j=k+1}^{t}(\rho^{*}-1)\rho_{z}^{t-j}\rho^{*j-1-k}\Biggr]v_{i,k}+\sum_{j=1}^{t}\rho_{z}^{t-j}v_{i,j}.
\end{align*}
It follows that
\begin{align*}
\sum_{t=1}^{T}z_{i,t} & =\sum_{t=1}^{T}\left\{ \sum_{k=0}^{t-1}\Biggl[\sum_{j=k+1}^{t}(\rho^{*}-1)\rho_{z}^{t-j}\rho^{*j-1-k}\Biggr]v_{i,k}+\sum_{j=1}^{t}\rho_{z}^{t-j}v_{i,j}\right\} \\
 & =\sum_{k=0}^{T-1}\Biggl[\sum_{t=k+1}^{T}\sum_{j=k+1}^{t}(\rho^{*}-1)\rho_{z}^{t-j}\rho^{*j-1-k}\Biggr]v_{i,k}+\sum_{j=1}^{T}\Biggl(\sum_{t=j}^{T}\rho_{z}^{t-j}\Biggr)v_{i,j}\\
 & =\sum_{k=0}^{T-1}\Biggl[(\rho^{*}-1)\sum_{t=1}^{T-k}\frac{\rho_{z}^{t}-\rho^{*t}}{\rho_{z}-\rho^{*}}\Biggr]v_{i,k}+\sum_{j=1}^{T}\Biggl(1+\sum_{t=1}^{T-j}\rho_{z}^{t}\Biggr)v_{i,j}\\
 & =(\rho^{*}-1)\sum_{t=1}^{T}\frac{\rho_{z}^{t}-\rho^{*t}}{\rho_{z}-\rho^{*}}v_{i,0}+\sum_{k=1}^{T}\left(\frac{\rho_{z}^{T-k+1}-\rho^{*T-k+1}}{\rho_{z}-\rho^{*}}\right)v_{i,k}.
\end{align*}
Thus,
\begin{align*}
\psi_{0}(\rho^{*},\rho_{z}) & =(\rho^{*}-1)\sum_{t=1}^{T}\frac{\rho_{z}^{t}-\rho^{*t}}{\rho_{z}-\rho^{*}},\\
\psi_{k}(\rho^{*},\rho_{z}) & =\frac{\rho_{z}^{T-k+1}-\rho^{*T-k+1}}{\rho_{z}-\rho^{*}}\qquad k=1,\dots,T.
\end{align*}
Under stationarity, $\mathbb{E}(v_{i,k}e_{i,s+1})$ only depends on
the time gap $|k-s-1|$. Therefore
\[
\mathbb{E}(v_{i,k}e_{i,s+1})=\begin{cases}
0, & s+1>k,\\
\omega_{ev}(k-s-1), & s+1\leq k,
\end{cases}
\]
where $\omega_{ev}(h)\equiv\omega_{ev,h}^{*}$ denotes the ``covariance
function.'' We then have
\begin{align*}
\mathbb{E}\left(\sum_{t=1}^{T}z_{i,t}\sum_{s=1}^{T}e_{i,s+1}\right) & =\sum_{k=2}^{T}\psi_{k}(\rho^{*},\rho_{z})\mathbb{E}\left(v_{i,k}\sum_{s=1}^{k-1}e_{i,s+1}\right)\\
 & =\sum_{k=2}^{T}\psi_{k}(\rho^{*},\rho_{z})\sum_{s=1}^{k-1}\omega_{ev}(k-s-1)\\
 & =\sum_{k=2}^{T}\psi_{k}(\rho^{*},\rho_{z})\sum_{h=0}^{k-2}\omega_{ev}(h)=\sum_{h=0}^{T-2}\left[\sum_{k=h+2}^{T}\psi_{k}(\rho^{*},\rho_{z})\right]\omega_{ev,h}^{*}\\
 & =:\sum_{h=0}^{T-2}\Psi_{h,T}(\rho^{*},\rho_{z})\omega_{ev,h}^{*},
\end{align*}
where
\begin{equation}
\Psi_{h,T}(\rho^{*},\rho_{z}):=\sum_{k=h+2}^{T}\psi_{k}(\rho^{*},\rho_{z})=\frac{1}{\rho_{z}-\rho^{*}}\left(\frac{\rho_{z}-\rho_{z}^{T-h}}{1-\rho_{z}}-\frac{\rho^{*}-\rho^{*T-h}}{1-\rho^{*}}\right).\label{eq:Psi-h}
\end{equation}
In addition, noting that $v_{i,t}=\sum_{s=0}^{\infty}g_{s}\varepsilon_{i,t-s}$,
we have $|\omega_{ev,h}|\lesssim|g_{h}|\lesssim q_{\nu}^{h}$ where
$q_{\nu}:=\exp(-C_{g})$. Hence,
\[
\left|\sum_{h=0}^{T-2}\Psi_{h,T}(\rho^{*},\rho_{z})\omega_{ev,h}\right|\leq\sum_{h=0}^{T-2}\frac{1}{|\rho_{z}-\rho^{*}|}\left(\frac{|\rho_{z}|+|\rho_{z}^{T-h}|}{|1-\rho_{z}|}+\frac{|\rho^{*}|+|\rho^{*T-h}|}{|1-\rho^{*}|}\right)q_{\nu}^{h}=O(T^{\theta+\gamma}).
\]
This establishes \eqref{eq:E sum zeta sum e}. The proof of \prettyref{lem:ivxepct}
is thus complete.
\end{proof}
\begin{rem}
The following proof invokes Lemma~3.1 of \citet{phillips2009econometric},
which relies on a prerequisite $\theta>1/2$ when the innovation in
$x_{i,t}$ is a weakly dependent linear process. However, this requirement
turns out to be unnecessary if the coefficients of the linear process
are exponentially decaying as specified in \prettyref{assump:innov}\ref{enu:LP}.
The argument is as follows. We can show that, similar to the proof
of \prettyref{lem:ivxepct}\ref{enu:phi_x}, $\mathbb{E}\bigl[\bigl(\sum_{t=1}^{T}\psi_{i,t}v_{i,t}\bigr)^{2}\bigr]=O\bigl(T^{2[1+(\theta\wedge\gamma)]}\bigr)$,
which implies $\sum_{t=1}^{T}\psi_{i,t}v_{i,t}=O_{p}\left(T^{1+(\theta\wedge\gamma)}\right)$.
This, however, is different from Equation~(41) of \citet{phillips2009econometric},
which is $\sum_{t=1}^{T}\psi_{i,t}v_{i,t}=o_{p}\bigl(T^{\frac{1}{2}[1+(\theta\wedge\gamma)]+(\theta\vee\gamma)}\bigr)$
provided $\theta>1/2$, but $\sum_{t=1}^{T}\psi_{i,t}v_{i,t}=O_{p}\left(T^{1+(\theta\wedge\gamma)}\right)$
would be sufficient for the proof of their Lemma~3.1 to still hold
without imposing $\theta>1/2$.
\end{rem}
\begin{proof}[Proof of \prettyref{lem:ivx_marginal}\ref{enu:Q_dto}]
For notational simplicity, we write $J_{v,c^{*}}(r)$ in short as
$J_{v,c^{*}}$ and $B_{v}(r)$ as $B_{v}$ in the proofs.

\textbf{CASE I:} $0<\theta<\gamma\leq1$. According to Equation~(20)
of \citet{phillips2009econometric}, we have, as $T\to\infty$,
\[
\frac{1}{T^{1+\theta}}\sum_{t=1}^{T}\zeta_{i,t}x_{i,t}\to_{d}\begin{cases}
\omega_{vv}^{*}/(-c_{z}), & \text{if }\gamma\in(0,1),\\
\bigl(\int_{0}^{1}J_{v,c^{*}}\,dB_{v}+\omega_{vv}^{*}\bigr)/(-c_{z}), & \text{if }\gamma=1.
\end{cases}
\]
The proof of Lemma~3.1(ii), specifically the second equation on page
28 of \citet{phillips2009econometric} gives us, as $T\to\infty$,
\begin{align*}
\frac{c^{*}}{T^{1+\theta+\gamma}}\sum_{t=1}^{T}\psi_{i,t}x_{i,t} & =-\frac{c^{*}}{c_{z}}\frac{1}{T^{1+\gamma}}\sum_{t=1}^{T}x_{i,t}^{2}+o_{p}(1)\to_{d}\begin{cases}
\omega_{vv}^{*}/(2c_{z}), & \text{if }\gamma\in(0,1),\\
-\frac{c^{*}}{c_{z}}\int_{0}^{1}J_{v,c^{*}}^{2}, & \text{if }\gamma=1.
\end{cases}
\end{align*}
where the convergence applies \prettyref{lem:wg-ts-converge} and
\eqref{eq:QWG lim dis}. Then, by \eqref{eq:zeta decom 1},
\[
\frac{1}{T^{1+\theta}}\sum_{t=1}^{T}z_{i,t}x_{i,t}\to_{d}\begin{cases}
\omega_{vv}^{*}/(-2c_{z}), & \text{if }\gamma\in(0,1),\\
\bigl(\int_{0}^{1}J_{v,c^{*}}\,dB_{v}+\omega_{vv}^{*}+c^{*}\int_{0}^{1}J_{v,c^{*}}^{2}\bigr)/(-c_{z}), & \text{if }\gamma=1.
\end{cases}
\]
 as $T\to\infty$. Since $T^{\theta}[1-(\rho^{*}\rho_{z})^{2}]\to-2c_{z}$,
we have
\begin{align*}
Q_{i,T} & =T^{\theta}[1-(\rho^{*}\rho_{z})^{2}]\cdot\frac{1}{T^{1+\theta}}\sum_{t=1}^{T}z_{i,t}x_{i,t}\to_{d}\begin{cases}
\omega_{vv}^{*}, & \text{if }\gamma\in(0,1),\\
2\bigl(\int_{0}^{1}J_{v,c^{*}}\,dB_{v}+\omega_{vv}^{*}+c^{*}\int_{0}^{1}J_{v,c^{*}}^{2}\bigr), & \text{if }\gamma=1.
\end{cases}
\end{align*}

\textbf{CASE II:} $0<\theta=\gamma<1$. By Lemma~3.6(i) of \citet{phillips2009econometric},
we have
\[
\frac{1}{T^{1+\theta}}\sum_{t=1}^{T}z_{i,t}x_{i,t}\to_{p}\dfrac{\omega_{vv}^{*}}{-2(c^{*}+c_{z})},
\]
which implies $Q_{i,T}\to_{p}\omega_{vv}^{*}$ in view of $T^{\theta}[1-(\rho^{*}\rho_{z})^{2}]\to-2(c_{z}+c^{*})$.

\textbf{CASE III:} $0\leq\gamma<\theta<1$. By Lemma~3.5(ii) of \citet{phillips2009econometric},\footnote{\label{fn:stationary-1}Although $\gamma=0$ is not considered in
that lemma, it can be easily verified that the result and its proof
applies to the stationary case. If $x_{i,t}$ is stationary, we simply
invoke the ergodic theorem to get convergence in probability. This
argument about $\gamma=0$ applies to other places that cite the lemmas
in \citet{phillips2009econometric}.}
\[
\frac{1}{T^{1+\gamma}}\sum_{t=1}^{T}z_{i,t}x_{i,t}\to_{p}\begin{cases}
\omega_{vv}^{*}/(1-\rho^{*2}), & \text{if stationary (\ensuremath{\gamma=0})},\\
\omega_{vv}^{*}/(-2c^{*}), & \text{if MI (\ensuremath{0<\gamma<\theta})}.
\end{cases}
\]
as $T\to\infty$. Also,
\[
T^{\gamma}[1-(\rho^{*}\rho_{z})^{2}]\to\begin{cases}
1-\rho^{*2} & \text{if stationary (\ensuremath{\gamma=0})},\\
-2c^{*}, & \text{if MI (\ensuremath{0<\gamma<\theta})}.
\end{cases}
\]
 as $T\to\infty$. It follows that
\begin{align*}
Q_{i,T} & =T^{\gamma}[1-(\rho^{*}\rho_{z})^{2}]\cdot\frac{1}{T^{1+\gamma}}\sum_{t=1}^{T}z_{i,t}x_{i,t}\to_{d}\omega_{vv}^{*}.
\end{align*}
This completes proof of \prettyref{lem:ivx_marginal}\ref{enu:Q_dto}.
\end{proof}
\begin{proof}[Proof of \prettyref{lem:ivx_marginal}\ref{enu:R_dto}]
\textbf{ CASE I:} $0<\theta<\gamma\leq1$. When $\gamma=1$, we apply
the standard limit theory in \citet{phillips1987towards} to get,
as $T\to\infty$,
\begin{align*}
\frac{1}{T^{\frac{1}{2}+\theta}}\sum_{t=1}^{T}\zeta_{i,t}\frac{1}{T^{3/2}}\sum_{t=1}^{T}x_{i,t} & \to_{d}-\frac{1}{c_{z}}B_{v}(1)\int_{0}^{1}J_{v,c^{*}},\\
\frac{1}{T^{\frac{3}{2}+\theta}}\sum_{t=1}^{T}\psi_{i,t}\frac{1}{T^{3/2}}\sum_{t=1}^{T}x_{i,t} & \to_{d}-\frac{1}{c_{z}}\left(\int_{0}^{1}J_{v,c^{*}}\right)^{2}.
\end{align*}
When $\gamma<1$, we know from \prettyref{lem:wg-ts-converge} that
$\sum_{t=1}^{T}x_{i,t}=O_{p}\bigl(T^{\frac{1}{2}+\gamma}\bigr)$ and
$\sum_{t=1}^{T}\zeta_{i,t}=O_{p}\bigl(T^{\frac{1}{2}+\theta}\bigr)$.
In addition, by \eqref{eq:Esum_psi4}, we have $\sum_{t=1}^{T}\psi_{i,t}=O_{p}\bigl(T^{\frac{1}{2}+\theta+\gamma}\bigr)$.
It follows that
\begin{align*}
\frac{1}{T^{\frac{1}{2}+\theta}}\sum_{t=1}^{T}\zeta_{i,t}\frac{1}{T^{\frac{1}{2}+\gamma}}\sum_{t=1}^{T}x_{i,t} & \to_{p}0\quad\text{and}\quad\frac{1}{T^{\frac{3}{2}+\theta}}\sum_{t=1}^{T}\psi_{i,t}\frac{1}{T^{\frac{1}{2}+\gamma}}\sum_{t=1}^{T}x_{i,t}\to_{p}0.
\end{align*}
Since $T^{\theta}[1-(\rho^{*}\rho_{z})^{2}]\to-2c_{z}$, we have
\begin{align*}
R_{i,T} & =T^{\theta}[1-(\rho^{*}\rho_{z})^{2}]\cdot\frac{1}{T^{2+\theta}}\sum_{t=1}^{T}z_{i,t}\sum_{t=1}^{T}x_{i,t}\\
 & \to_{d}\begin{cases}
0, & \text{if }\gamma\in(0,1),\\
2\left[B_{v}(1)\int_{0}^{1}J_{v,c^{*}}+\bigl(\int_{0}^{1}J_{v,c^{*}}\bigr)^{2}\right], & \text{if }\gamma=1.
\end{cases}
\end{align*}

\textbf{CASE II:} $0\leq\gamma<\theta<1$ or $\gamma=\theta\in(0,1)$.
By \prettyref{lem:ivxepct}\ref{enu:zx} and the fact that $T^{\gamma}[1-(\rho^{*}\rho_{z})^{2}]=O(1)$
we have
\[
\mathbb{E}\left(R_{i,T}^{2}\right)=\left[\frac{1-(\rho^{*}\rho_{z})^{2}}{T^{2}}\right]^{2}O\bigl(T^{2(1+\theta+\gamma)}\bigr)=O\left(\frac{1}{T^{2(1-\theta)}}\right)\to0,
\]
which implies $R_{i,T}\to_{p}0.$ This completes the proof of \prettyref{lem:ivx_marginal}\ref{enu:R_dto}.
\end{proof}
\begin{proof}[Proof of \prettyref{lem:ivx_marginal}\ref{enu:S_pto}]
\textbf{CASE I:} $0<\theta<\gamma\leq1$. According to Lemma 3.1(iii)
and Equation~(14) of \citet{phillips2009econometric}, we have $T^{-(1+\theta)}\sum_{t=1}^{T}z_{i,t}^{2}\to_{p}-\omega_{vv}^{*}/(2c_{z})$
as $T\to\infty$. Since $T^{\theta}[1-(\rho^{*}\rho_{z})^{2}]\to-2c_{z}$,
we thus have
\begin{align*}
S_{i,T} & =T^{\theta}[1-(\rho^{*}\rho_{z})^{2}]\cdot\frac{1}{T^{1+\theta}}\sum_{t=1}^{T}z_{i,t}^{2}\to_{p}\omega_{vv}^{*}.
\end{align*}

\textbf{CASE II:} $0<\theta=\gamma<1$. By Lemma~3.6(ii) of \citet{phillips2009econometric},
we have
\[
\frac{1}{T^{1+\theta}}\sum_{t=1}^{T}z_{i,t}^{2}\to_{p}\dfrac{\omega_{vv}^{*}}{-2(c^{*}+c_{z})},
\]
which implies $S_{i,T}\to_{p}\omega_{vv}^{*}$ in view of $T^{\theta}[1-(\rho^{*}\rho_{z})^{2}]\to-2(c_{z}+c^{*})$.

\textbf{CASE III:} $0\leq\gamma<\theta<1$. By Lemma~3.5(ii) of \citet{phillips2009econometric},
we have
\[
S_{i,T}:=\frac{1}{T^{1+\gamma}}\sum_{t=1}^{T}z_{i,t}^{2}\to_{p}\begin{cases}
\omega_{vv}^{*}/(1-\rho^{*2}), & \text{if stationary (\ensuremath{\gamma=0})},\\
\omega_{vv}^{*}/(-2c^{*}), & \text{if MI (\ensuremath{0<\gamma<\theta})}.
\end{cases}
\]
Also note that
\[
T^{\gamma}[1-(\rho^{*}\rho_{z})^{2}]\to\begin{cases}
1-\rho^{*2}, & \text{if stationary (\ensuremath{\gamma=0})},\\
-2c^{*}, & \text{if MI (\ensuremath{0<\gamma<\theta})}.
\end{cases}
\]
It follows that
\begin{align*}
S_{i,T} & =T^{\gamma}[1-(\rho^{*}\rho_{z})^{2}]\cdot\frac{1}{T^{1+\gamma}}\sum_{t=1}^{T}z_{i,t}x_{i,t}\to_{p}\omega_{vv}^{*}.
\end{align*}
The proof shows that $S_{i,T}$ has the same probability limit as
$Q_{i,T}$ whenever $\gamma<1$.
\end{proof}
\begin{proof}[Proof of \prettyref{lem:ivx_marginal}\ref{enu:Z_dto}]
\textbf{CASE I:} $0<\theta<\gamma\leq1$. According to Lemma~3.1(i)
of \citet{phillips2009econometric}, as $T\to\infty$,
\begin{align*}
\frac{1}{T^{\frac{1}{2}(1+\theta)}}\sum_{t=1}^{T}z_{i,t}e_{i,t+1} & =\frac{1}{T^{\frac{1}{2}(1+\theta)}}\sum_{t=1}^{T}\zeta_{i,t}e_{i,t+1}+o_{p}(1)\to_{d}\mathcal{N}\left(0,\frac{\omega_{ee}^{*}\omega_{vv}^{*}}{-2c_{z}}\right),
\end{align*}
where the weak limit applies the fact that $\zeta_{i,t}$ is a mildly
integrated process with $\rho_{z}=1+c_{z}/T^{\theta}$, and Lemma~B4
of \citet{Kostakis2015}. Since $T^{\theta}[1-(\rho^{*}\rho_{z})^{2}]\to-2c_{z}$,
we have
\[
Z_{i,T}=\sqrt{T^{\theta}[1-(\rho^{*}\rho_{z})^{2}]}\cdot\frac{1}{T^{\frac{1}{2}(1+\theta)}}\sum_{t=1}^{T}z_{i,t}e_{i,t+1}\to_{d}\sqrt{-2c_{z}}\cdot\mathcal{N}\left(0,\frac{\omega_{ee}^{*}\omega_{vv}^{*}}{-2c_{z}}\right)=\mathcal{N}(0,\omega_{ee}^{*}\omega_{vv}^{*}).
\]

\textbf{CASE II:} $0<\theta=\gamma<1$. By Lemma~3.3 of \citet{magdalinosLeastSquaresIvx2022}
and Lemma~3.6(ii) of \citet{phillips2009econometric}, we have
\begin{align*}
\frac{1}{T^{\frac{1}{2}(1+\gamma)}}\sum_{t=1}^{T}z_{i,t}e_{i,t+1} & \to_{d}\mathcal{N}\left(0,\left[\mathop{\mathrm{plim}}_{T\to\infty}\frac{1}{T^{1+\gamma}}\sum_{t=1}^{T}z_{i,t}z_{i,t}\right]\omega_{ee}^{*}\right)=\mathcal{N}\left(0,\frac{\omega_{ee}^{*}\omega_{vv}^{*}}{-2(c^{*}+c_{z}\text{)}}\right).
\end{align*}
Since $T^{\gamma}[1-(\rho^{*}\rho_{z})^{2}]\to-2(c^{*}+c_{z}),$ we
have
\begin{align*}
Z_{i,T} & =\sqrt{T^{\gamma}[1-(\rho^{*}\rho_{z})^{2}]}\cdot\frac{1}{T^{\frac{1}{2}(1+\gamma)}}\sum_{t=1}^{T}z_{i,t}e_{i,t+1}\\
 & \to_{d}\sqrt{-2(c^{*}+c_{z})}\cdot\mathcal{N}\left(0,\frac{\omega_{ee}^{*}\omega_{vv}^{*}}{-2(c^{*}+c_{z})}\right)=\mathcal{N}(0,\omega_{ee}^{*}\omega_{vv}^{*}).
\end{align*}

\textbf{CASE III:} $0\leq\gamma<\theta<1$. When $\gamma=0$, following
the CLT in the Online Appendix of \citet[Eqs. (31)--(32)]{Kostakis2015}
for the stationary case, we have
\[
\frac{1}{\sqrt{T}}\sum_{t=1}^{T}z_{i,t}e_{i,t+1}\to_{d}\mathcal{N}\left(0,{\rm S}_{0,xe}\right).
\]
When $\gamma>0,$ by Lemma~3.5(i) of \citet{phillips2009econometric},
we have
\begin{align*}
\frac{1}{T^{\frac{1}{2}(1+\gamma)}}\sum_{t=1}^{T}z_{i,t}e_{i,t+1} & =\frac{1}{T^{\frac{1}{2}(1+\gamma)}}\sum_{t=1}^{T}x_{i,t}e_{i,t+1}+o_{p}(1)\to_{d}\mathcal{N}\left(0,\frac{\omega_{ee}^{*}\omega_{vv}^{*}}{-2c^{*}}\right),
\end{align*}
where the weak limit again applies Lemma~B4 of \citet{Kostakis2015}.
Since
\[
T^{\gamma}[1-(\rho^{*}\rho_{z})^{2}]\to\begin{cases}
1-\rho^{*2} & \text{if stationary (\ensuremath{\gamma=0})},\\
-2c^{*}, & \text{if MI (\ensuremath{0<\gamma<\theta})},
\end{cases}
\]
we have
\[
Z_{i,T}=\sqrt{T^{\gamma}[1-(\rho^{*}\rho_{z})^{2}]}\cdot\frac{1}{T^{\frac{1}{2}(1+\gamma)}}\sum_{t=1}^{T}z_{i,t}e_{i,t+1}\to_{d}\mathcal{N}(0,{\rm S}_{xe}).
\]
We complete the proof of \prettyref{lem:ivx_marginal}\ref{enu:Z_dto}.
\end{proof}
\begin{proof}[Proof of \prettyref{lem:ivxjoint}]
\ref{enu:Q_pto} By \prettyref{lem:ivx_marginal}\ref{enu:Q_dto}
and Corollary~1 of \citet{phillips1999linear}, it suffices to show
that $Q_{i,T}$ is u.i.~in $T$. By \prettyref{lem:ivxepct}\ref{enu:zx}
we deduce that
\[
\mathbb{E}\left(Q_{i,T}^{2}\right)=\left(\frac{1-(\rho^{*}\rho_{z})^{2}}{T}\right)^{2}\mathbb{E}\left[\left(\sum_{t=1}^{T}z_{i,t}x_{i,t}\right)^{2}\right]=O\left(\dfrac{1}{T^{2+2(\theta\wedge\gamma)}}\right)\cdot O\bigl(T^{2[1+(\theta\wedge\gamma)]}\bigr)=O(1).
\]
Then $Q_{i,T}$ is u.i.~in $T$ by \prettyref{lem:u.i.}.

\ref{enu:R_pto} By \prettyref{lem:ivx_marginal}\ref{enu:R_dto}
and Corollary~1 of \citet{phillips1999linear}, it suffices to show
that $R_{i,T}$ is u.i.~in $T$. By \prettyref{lem:ivxepct}\ref{enu:zx}
we deduce that
\[
\mathbb{E}\left(R_{i,T}^{2}\right)=\left(\frac{1-(\rho^{*}\rho_{z})^{2}}{T^{2}}\right)^{2}\mathbb{E}\left[\left(\sum_{t=1}^{T}z_{i,t}\sum_{t=1}^{T}x_{i,t}\right)^{2}\right]=O\left(\dfrac{1}{T^{4+2(\theta\wedge\gamma)}}\right)\cdot O\bigl(T^{2(1+\theta+\gamma)}\bigr)=O(1).
\]
Then $R_{i,T}$ is u.i.~in $T$ by \prettyref{lem:u.i.}.

\ref{enu:S_pto_joint} By \prettyref{lem:ivxepct}\ref{enu:zeta_e}
we have
\[
\mathbb{E}\left(S_{i,T}^{2}\right)=\left(\frac{1-(\rho^{*}\rho_{z})^{2}}{T}\right)^{2}\mathbb{E}\left[\left(\sum_{t=1}^{T}z_{i,t}^{2}\right)^{2}\right]=O\left(\dfrac{1}{T^{2+2(\theta\wedge\gamma)}}\right)\cdot O\bigl(T^{2[1+(\theta\wedge\gamma)]}\bigr)=O(1).
\]
This indicates that $S_{i,T}$ is u.i.~in $T$ by \prettyref{lem:u.i.}.
Then by Corollary~1 of \citet{phillips1999linear}, we conclude $n^{-1}\sum_{i=1}^{n}S_{i,T}\to_{p}\omega_{vv}^{*}$
as $(n,T)\to\infty$.

\ref{enu:L_dto} By definition, $L_{i,T}=Z_{i,T}-[H_{i,T}-\mathbb{E}(H_{i,T})]$.
The proof consists of two steps: Step I: showing $n^{-1/2}\sum_{i=1}^{n}Z_{i,T}\to_{d}\mathcal{N}(0,{\rm S}_{xe})$
as $(n,T)\to\infty$; Step II: showing $n^{-1/2}\sum_{i=1}^{n}[H_{i,T}-\mathbb{E}(H_{i,T})]\to_{p}0$
as $(n,T)\to\infty$.

\textbf{Step I. }We show the asymptotic normality of $n^{-1/2}\sum_{i=1}^{n}Z_{i,T}$.
First, we show that $Z_{i,T}^{2}$ is u.i.~in $T$. Note that in
\ref{enu:S_pto_joint} we have shown the u.i.\ of $S_{i,T}$ and
\prettyref{lem:ivx_marginal}\ref{enu:S_pto} gives us $S_{i,T}\to_{p}\omega_{vv}^{*}$.
Hence, by \prettyref{lem:u.i.} we have $\mathbb{E}(S_{i,T})\to\omega_{vv}^{*}$
as $T\to\infty$. It is easy to see that $\{z_{i,t}e_{i,t+1}\}$ is
an m.d.s., so we have
\[
\mathbb{E}\left(Z_{i,T}^{2}\right)=\frac{1-(\rho^{*}\rho_{z})^{2}}{T}\sum_{t=1}^{T}\mathbb{E}\left(z_{i,t}^{2}e_{i,t+1}^{2}\right).
\]
We now show that
\begin{equation}
\lim_{T\to\infty}\mathbb{E}\left(Z_{i,T}^{2}\right)={\rm S}_{xe}.\label{eq:E_Z2_converge}
\end{equation}
When $\gamma=0$, the asymptotics in the Online Appendix of \citet[Lemmas B2(iv) and B4(iii)]{Kostakis2015}
yields that
\[
\frac{1-(\rho^{*}\rho_{z})^{2}}{T}\sum_{t=1}^{T}\mathbb{E}\left(z_{i,t}^{2}e_{i,t+1}^{2}\right)\to(1-\rho^{*2}){\rm S}_{0,xe}={\rm S}_{xe}.
\]
We now focus on $\gamma>0$. First we express $z_{i,t}$ as a linear
process:
\begin{align*}
z_{i,t} & =\sum_{k=1}^{t}\left(\frac{\rho_{z}-1}{\rho_{z}-\rho^{*}}\rho_{z}^{t-k}-\frac{\rho^{*}-1}{\rho_{z}-\rho^{*}}\rho^{*t-k}\right)v_{i,k}+(\rho^{*}-1)\frac{\rho_{z}^{t}-\rho^{*t}}{\rho_{z}-\rho^{*}}x_{i,0}\\
 & =\sum_{\ell=-\infty}^{t}\left[\sum_{k=\ell\vee1}^{t}B_{T}(t,k)g_{k-\ell}\right]\varepsilon_{i,\ell}+(\rho^{*}-1)\frac{\rho_{z}^{t}-\rho^{*t}}{\rho_{z}-\rho^{*}}x_{i,0}\\
 & =\sum_{\ell=-\infty}^{t}M_{T}(t,\ell)\varepsilon_{i,\ell}+(\rho^{*}-1)\frac{\rho_{z}^{t}-\rho^{*t}}{\rho_{z}-\rho^{*}}x_{i,0}=:\Upsilon_{1,i,t}+\Upsilon_{2,i,t},
\end{align*}
where we denote
\begin{align*}
B_{T}(t,k) & :=\frac{\rho_{z}-1}{\rho_{z}-\rho^{*}}\rho_{z}^{t-k}-\frac{\rho^{*}-1}{\rho_{z}-\rho^{*}}\rho^{*t-k},\\
M_{T}(t,\ell) & :=\sum_{k=\ell\vee1}^{t}B(t,k)g_{k-\ell}.
\end{align*}
We make explicit the dependence of $B_{T}(t,k)$ and $M_{T}(t,\ell)$
on $T$ since $\rho_{z}$ and $\rho^{*}$ are dependent on $T$. First,
we can deduce
\begin{eqnarray}
 &  & \frac{1-(\rho^{*}\rho_{z})^{2}}{T}\sum_{t=1}^{T}\mathbb{E}\left(\Upsilon_{2,i,t}^{2}e_{i,t+1}^{2}\right)\nonumber \\
 & \leq & \frac{1-(\rho^{*}\rho_{z})^{2}}{T}(\rho^{*}-1)^{2}\sum_{t=1}^{T}\left(\frac{\rho_{z}^{t}-\rho^{*t}}{\rho_{z}-\rho^{*}}\right)^{2}\sqrt{\mathbb{E}\bigl(x_{i,0}^{4}\bigr)\mathbb{E}\bigr(e_{i,1}^{4}\bigr)}\nonumber \\
 & = & O\bigl(T^{-1-(\theta\wedge\gamma)}\bigr)\cdot O\left(T^{-2\gamma}\right)\cdot O\bigl(T^{2(\theta\wedge\gamma)+(\theta\vee\gamma)}\bigr)\cdot O\left(T^{\gamma}\right)=O\bigr(T^{\theta-1}\bigr).\label{eq:upsilon2}
\end{eqnarray}
Moreover, we have
\begin{align*}
 & \frac{1-(\rho^{*}\rho_{z})^{2}}{T}\sum_{t=1}^{T}\mathbb{E}\left(\Upsilon_{1,i,t}^{2}\right)\\
={} & \frac{1-(\rho^{*}\rho_{z})^{2}}{T}\sum_{t=1}^{T}\sum_{k=1}^{t}[B_{T}(t,k)]^{2}\Gamma_{vv}(0)+2\frac{1-(\rho^{*}\rho_{z})^{2}}{T}\sum_{t=1}^{T}\sum_{h=1}^{t-1}\sum_{\ell=1}^{t-h}B_{T}(t,\ell+h)B_{T}(t,\ell)\Gamma_{vv}(h).
\end{align*}
We now show this expression tends to $\sum_{\ell=-\infty}^{\infty}\Gamma_{vv}(\ell)$
by showing that the first term converges to $\Gamma_{vv}(0)$ as $T\to\infty$
and the second term converges to $2\sum_{\ell=1}^{\infty}\Gamma_{vv}(\ell)$.
The convergence of the first term holds since, with careful calculation,
we can show that
\[
\lim_{T\to\infty}\frac{1-(\rho^{*}\rho_{z})^{2}}{T}\sum_{t=1}^{T}\sum_{k=1}^{t}[B_{T}(t,k)]^{2}=1.
\]
For the second term, we have
\begin{align*}
 & \frac{1-(\rho^{*}\rho_{z})^{2}}{T}\sum_{t=1}^{T}\sum_{h=1}^{t-1}\sum_{\ell=1}^{t-h}B_{T}(t,\ell+h)B_{T}(t,\ell)\Gamma_{vv}(h)\\
 & \quad=\frac{1}{T}\sum_{t=1}^{T}\sum_{h=1}^{t-1}\left(\sum_{\ell=1}^{t-h}\left[1-(\rho^{*}\rho_{z})^{2}\right]B_{T}(t,\ell+h)B_{T}(t,\ell)\right)\Gamma_{vv}(h)\\
 & \quad=\frac{1}{T}\sum_{t=1}^{T}\sum_{h=1}^{t-1}D_{T}(t,h)\Gamma_{vv}(h)=\sum_{h=1}^{T-1}\left[\frac{1}{T}\sum_{t=h+1}^{T}D_{T}(t,h)\right]\Gamma_{vv}(h),
\end{align*}
where
\[
D_{T}(t,h):=\sum_{\ell=1}^{t-h}\left[1-(\rho^{*}\rho_{z})^{2}\right]B_{T}(t,\ell+h)B_{T}(t,\ell).
\]
It is easy to verify that
\[
\lim_{T\to\infty}\frac{1}{T}\sum_{t=h+1}^{T}D_{T}(t,h)=1\quad\text{and}\quad\sup_{T,h}\left|\frac{1}{T}\sum_{t=h+1}^{T}D_{T}(t,h)\right|<\infty,
\]
Therefore, by \prettyref{lem:series} we have
\[
\sum_{h=1}^{T-1}\left[\frac{1}{T}\sum_{t=h+1}^{T}D_{T}(t,h)\right]\Gamma_{vv}(h)\to\sum_{h=1}^{\infty}\Gamma_{vv}(h).
\]
Hence, it follows that
\begin{equation}
\frac{1-(\rho^{*}\rho_{z})^{2}}{T}\sum_{t=1}^{T}\mathbb{E}\left(\Upsilon_{1,i,t}^{2}\right)\to\omega_{vv}^{*}=\sum_{\ell=-\infty}^{\infty}\Gamma_{vv}(\ell).\label{eq:upsilon1}
\end{equation}

By \eqref{eq:upsilon2} and \eqref{eq:upsilon1}, to show \eqref{eq:E_Z2_converge},
it suffices to show that, by \prettyref{assump:innov}\ref{enu:hetero_e},
\begin{equation}
\frac{1-(\rho^{*}\rho_{z})^{2}}{T}\sum_{t=1}^{T}\mathbb{E}\left[\Upsilon_{1,i,t}^{2}(e_{i,t+1}^{2}-\omega_{ee}^{*})\right]=\frac{1-(\rho^{*}\rho_{z})^{2}}{T}\sum_{t=1}^{T}\mathbb{E}\left[\Upsilon_{1,i,t}^{2}(h_{i,t+1}-\omega_{ee}^{*})\right]\to0.\label{eq:Upsilon1_e_2}
\end{equation}
We have
\begin{align*}
\sum_{t=1}^{T}\mathbb{E}\left[\Upsilon_{1,i,t}^{2}(h_{i,t+1}-\omega_{ee}^{*})\right] & =\sum_{t=1}^{T}\sum_{\ell=-\infty}^{t}[M_{T}(t,\ell)]^{2}\mathbb{E}\left[\varepsilon_{i,\ell}^{2}(h_{i,t+1}-\omega_{ee}^{*})\right]\\
 & \qquad+2\sum_{t=1}^{T}\sum_{\ell=-\infty}^{t}\sum_{k=-\infty}^{\ell-1}M_{T}(t,\ell)M_{T}(t,k)\mathbb{E}\left[\varepsilon_{i,\ell}\varepsilon_{i,k}(h_{i,t+1}-\omega_{ee}^{*})\right]\\
 & =:A_{1,i,T}+2A_{2,i,T}.
\end{align*}
First note that
\begin{align*}
A_{1,i,T} & =\sum_{t=1}^{T}\sum_{\ell=-\infty}^{t}[M_{T}(t,\ell)]^{2}\mathbb{E}\left[\left(\varepsilon_{i,\ell}^{2}-\sigma_{\varepsilon\varepsilon}^{*}+\sigma_{\varepsilon\varepsilon}^{*}\right)(h_{i,t+1}-\omega_{ee}^{*})\right]\\
 & =\sum_{t=1}^{T}\sum_{\ell=-\infty}^{t}[M_{T}(t,\ell)]^{2}\mathbb{E}\left[\left(\varepsilon_{i,\ell}^{2}-\sigma_{\varepsilon\varepsilon}^{*}\right)h_{i,t+1}\right]\\
 & =\sum_{\ell=-\infty}^{T}\sum_{t=\ell\vee1}^{T}[M_{T}(t,\ell)]^{2}\mathbb{E}\left[\left(\varepsilon_{i,\ell}^{2}-\sigma_{\varepsilon\varepsilon}^{*}\right)h_{i,t+1}\right].
\end{align*}
Denote $\Sigma_{i,j,t}:=\left(\varepsilon_{i,j}^{2}-\sigma_{\varepsilon\varepsilon}^{*}\right)h_{i,t+1}$.
Using \prettyref{assump:innov}\ref{enu:hetero_e}, we have
\begin{align}
\Sigma_{i,j,t} & =\sum_{m=1}^{p\vee r}c_{m}\Sigma_{i,j,t-m}+\nu_{i,j,t},\label{eq:Sigma_AR}\\
\nu_{i,j,t} & =\left(\varepsilon_{i,j}^{2}-\sigma_{\varepsilon\varepsilon}^{*}\right)\left[\phi+\sum_{k=1}^{r}a_{k}\left(e_{i,t+1-k}^{2}-h_{i,t+1-k}\right)\right],\nonumber \\
c_{m} & =\begin{cases}
a_{m}+b_{m} & i\leq p\wedge r,\\
a_{m} & p<i\leq r,\\
b_{m} & r<i\leq p.
\end{cases}\nonumber
\end{align}
Note that
\begin{align*}
\mathbb{E}(\nu_{i,j,t}) & =\mathbb{E}\left[\left(\varepsilon_{i,j}^{2}-\sigma_{\varepsilon\varepsilon}^{*}\right)\phi\right]+\sum_{k=1}^{r}a_{k}\mathbb{E}\left[\left(\varepsilon_{i,j}^{2}-\sigma_{\varepsilon\varepsilon}^{*}\right)\left(e_{i,t+1-k}^{2}-h_{i,t+1-k}\right)\right],
\end{align*}
where the first term is obviously zero, and the second term is also
zero for $j\leq t-r$ because
\begin{align*}
\mathbb{E}\left[\left(\varepsilon_{i,j}^{2}-\sigma_{\varepsilon\varepsilon}^{*}\right)\left(e_{i,t+1-k}^{2}-h_{i,t+1-k}\right)\right] & =\mathbb{E}\left\{ \mathbb{E}_{t-k}\left[\left(\varepsilon_{i,j}^{2}-\sigma_{\varepsilon\varepsilon}^{*}\right)\left(e_{i,t+1-k}^{2}-h_{i,t+1-k}\right)\right]\right\} \\
 & =\mathbb{E}\left[\left(h_{i,t+1-k}-h_{i,t+1-k}\right)\left(\varepsilon_{i,j}^{2}-\sigma_{\varepsilon\varepsilon}^{*}\right)\right]=0.
\end{align*}
Hence, for each $j$, for $t\geq j+r$, we have $\mathbb{E}(\nu_{i,j,t})=0$
which indicates $\mathbb{E}(\Sigma_{i,j,t})=0$ in view of \eqref{eq:Sigma_AR}.
For $t<j+r$, we simply use
\[
|\mathbb{E}(\Sigma_{i,j,t})|=\left|\mathbb{E}\left(\varepsilon_{i,\ell}^{2}h_{i,t+\ell}\right)-\sigma_{\varepsilon\varepsilon}^{*}\omega_{ee}^{*}\right|\leq\sqrt{\mathbb{E}\bigl(\varepsilon_{i,1}^{4}\bigr)\mathbb{E}\bigl(e_{i,1}^{4}\bigr)}+\sigma_{\varepsilon\varepsilon}^{*}\omega_{ee}^{*}=O(1).
\]
Thus,
\[
\frac{1-(\rho^{*}\rho_{z})^{2}}{T}|A_{1,i,T}|\lesssim\frac{1-(\rho^{*}\rho_{z})^{2}}{T}\sum_{\ell=-\infty}^{T}\sum_{t=\ell\vee1}^{\ell+r}[M_{T}(t,\ell)]^{2}=O\left(1-(\rho^{*}\rho_{z})^{2}\right)\to0
\]
when $\gamma>0$. We can use the same argument to show that
\[
\lim_{T\to\infty}\frac{1-(\rho^{*}\rho_{z})^{2}}{T}|A_{2,i,T}|=0.
\]
Consequently, \eqref{eq:Upsilon1_e_2} holds. It then follows that
\begin{align*}
\mathbb{E}\left(Z_{i,T}^{2}\right) & =\frac{1-(\rho^{*}\rho_{z})^{2}}{T}\sum_{t=1}^{T}\mathbb{E}\left(z_{i,t}^{2}e_{i,t+1}^{2}\right)\\
 & =\frac{1-(\rho^{*}\rho_{z})^{2}}{T}\sum_{t=1}^{T}\mathbb{E}\left(\Upsilon_{1,i,t}^{2}e_{i,t+1}^{2}\right)+\frac{1-(\rho^{*}\rho_{z})^{2}}{T}\sum_{t=1}^{T}\mathbb{E}\left(\Upsilon_{2,i,t}^{2}e_{i,t+1}^{2}\right)\\
 & \qquad+2\frac{1-(\rho^{*}\rho_{z})^{2}}{T}\sum_{t=1}^{T}\mathbb{E}\left(\Upsilon_{1,i,t}\Upsilon_{2,i,t}e_{i,t+1}^{2}\right)\\
 & =\frac{1-(\rho^{*}\rho_{z})^{2}}{T}\sum_{t=1}^{T}\mathbb{E}\left(\Upsilon_{1,i,t}^{2}(e_{i,t+1}^{2}-\omega_{ee}^{*})\right)+\frac{1-(\rho^{*}\rho_{z})^{2}}{T}\sum_{t=1}^{T}\mathbb{E}\left(\Upsilon_{1,i,t}^{2}\right)\omega_{ee}^{*}+o(1)\\
 & =o(1)+\omega_{ee}^{*}\omega_{vv}^{*}+o(1)\to\omega_{ee}^{*}\omega_{vv}^{*}.
\end{align*}

By \prettyref{lem:ivx_marginal}\ref{enu:Z_dto} and the continuous
mapping theorem,
\[
Z_{i,T}^{2}\to_{d}Z_{\infty}^{2}\qquad\text{as }T\to\infty,
\]
where $\mathbb{E}(Z_{\infty}^{2})={\rm S}_{xe}$. Thus, $Z_{i,T}^{2}$
is u.i.~in $T$ by \prettyref{lem:u.i.}. By Theorem~3 of \citet{phillips1999linear},
we conclude that $n^{-1/2}\sum_{i=1}^{n}Z_{i,T}\to_{d}\mathcal{N}(0,{\rm S}_{xe})$
as $(n,T)\to\infty$.

\textbf{Step II. }By the i.i.d.\ condition across $i$ and \prettyref{lem:ivxepct}\ref{enu:ze},
as $(n,T)\to\infty$,
\begin{align}
 & \mathbb{E}\left[\left(\frac{1}{\sqrt{n}}\sum_{i=1}^{n}[H_{i,T}-\mathbb{E}(H_{i,T})]\right)^{2}\right]=\mathbb{E}\left([H_{i,T}-\mathbb{E}(H_{i,T})]^{2}\right)\leq\mathbb{E}(H_{i,T}^{2})\nonumber \\
 & \qquad=\frac{1-(\rho^{*}\rho_{z})^{2}}{T^{3}}\mathbb{E}\left[\left(\sum_{t=1}^{T}z_{i,t}\sum_{t=1}^{T}e_{i,t+1}\right)^{2}\right]\nonumber \\
 & \qquad=O\left(\dfrac{1}{T^{3+(\theta\wedge\gamma)}}\right)\cdot O\bigl(T^{2+\theta+(\theta\wedge\gamma)}\bigr)=O\left(\frac{1}{T^{1-\theta}}\right)\to0,\label{eq: var H o(1)}
\end{align}
which, by Markov's inequality, implies $n^{-1/2}\sum_{i=1}^{n}[H_{i,T}-\mathbb{E}(H_{i,T})]\to_{p}0$
as $(n,T)\to\infty$. It then follows that $n^{-1/2}\sum_{i=1}^{n}L_{i,T}\to_{d}\mathcal{N}(0,{\rm S}_{xe})$.
\end{proof}
\begin{proof}[Proof of \prettyref{lem:Omega_hat}]
We only show \eqref{eq:hat omg 12 rate}. The other two formulae
can be deduced in the same manner. By elementary calculations, we
can decompose the estimation error into
\[
\hat{\omega}_{ev,h}(\hat{\beta},\hat{\rho})-\omega_{h}^{*}=W_{1,h,n,T}+W_{2,h,n,T}+W_{3,h,n,T}+W_{4,h,n,T},
\]
where
\begin{align*}
W_{1,h,n,T} & =\dfrac{1}{n(T-h)}\sum_{i=1}^{n}\sum_{t=1}^{T-h}\tilde{x}_{i,t-1}\tilde{x}_{i,t+h-1}(\hat{\beta}-\beta^{*})(\hat{\rho}-\rho^{*}),\\
W_{2,h,n,T} & =\dfrac{1}{n(T-h)}\sum_{i=1}^{n}\sum_{t=1}^{T-h}\tilde{x}_{i,t+h-1}e_{i,t}(\rho^{*}-\hat{\rho}),\\
W_{3,h,n,T} & =\dfrac{1}{n(T-h)}\sum_{i=1}^{n}\sum_{t=1}^{T-h}\tilde{x}_{i,t-1}v_{i,t+h}(\beta^{*}-\hat{\beta}),\\
W_{4,h,n,T} & =\dfrac{1}{n(T-h)}\sum_{i=1}^{n}\sum_{t=1}^{T-h}(e_{i,t}-\bar{e}_{i})v_{i,t+h}-\omega_{ev,h}^{*}.
\end{align*}
By the same argument as in the proof of \prettyref{lem:generic_two_ARs}\ref{enu:Esum2}
we can get for $G=o(T)$,
\[
\sup_{h\leq G}\mathbb{E}\left[\left(\sum_{t=1}^{T}\tilde{x}_{i,t}\tilde{x}_{i,t+h}\right)^{2}\right]=O(T^{2+2\gamma}),
\]
which implies
\[
\sup_{h\leq G}\mathbb{E}(W_{1,h,n,T}^{2})=O\big((\hat{\rho}-\rho^{*})^{2}(\hat{\beta}-\beta^{*})^{2}T^{2\gamma}\bigr).
\]
Similarly we can deduce
\[
\sup_{h\leq G}\mathbb{E}(W_{2,h,n,T}^{2})=O\big((\hat{\rho}-\rho^{*})^{2}\bigr)\quad\text{and}\quad\sup_{h\leq G}\mathbb{E}(W_{3,h,n,T}^{2})=O\big((\hat{\beta}-\beta^{*})^{2}\bigr).
\]
Lastly, note that by the i.i.d.\ condition across $i$ and $\mathbb{E}(e_{i,t+1}^{2}v_{i,t+1}^{2})<\infty$
uniformly for all $t$ (which is implied by \prettyref{assump:innov}\ref{enu:w_cumu}),
\begin{eqnarray*}
 &  & \sup_{h\leq G}\mathbb{E}\left[\left(\dfrac{1}{\sqrt{n(T-h)}}\sum_{i=1}^{n}\sum_{t=1}^{T-h}(e_{i,t}v_{i,t+h}-\omega_{ev,h}^{*})\right)^{2}\right]\\
 & = & \sup_{h\leq G}\frac{1}{T-h}\sum_{t=1}^{T-h}\mathbb{E}\left[(e_{i,t}v_{i,t+h}-\omega_{ev,h}^{*})^{2}\right]=O(1),
\end{eqnarray*}
Moreover, we can deduce by \prettyref{lem:generic_two_ARs}\ref{enu:Esumsum2}
that
\begin{eqnarray*}
 &  & \sup_{h\leq G}\mathbb{E}\left[\left(\dfrac{1}{\sqrt{n}(T-h)}\sum_{i=1}^{n}\left[\sum_{t=1}^{T-h}e_{i,t}\sum_{t=1}^{T-h}v_{i,t+h}\right]\right)^{2}\right]\\
 & = & \sup_{h\leq G}\mathbb{E}\left[\left(\dfrac{1}{T-h}\sum_{t=1}^{T-h}e_{i,t}\sum_{t=1}^{T-h}v_{i,t+h}\right)^{2}\right]=O(1).
\end{eqnarray*}
It thus follows that
\[
\sup_{h\leq G}\mathbb{E}(W_{4,h,n,T}^{2})=O\left(\frac{1}{nT}\right).
\]
These combined yield
\begin{align*}
\mathbb{E}\left|\sum_{h=1}^{G}|\hat{\omega}_{ev,h}(\hat{\beta},\hat{\rho})-\omega_{ev,h}^{*}|\right| & \leq\sum_{h=1}^{G}\mathbb{E}\left(|\hat{\omega}_{ev,h}(\hat{\beta},\hat{\rho})-\omega_{ev,h}^{*}|\right)\\
 & \lesssim\sum_{h=1}^{G}\mathbb{E}\left(|W_{1,h,n,T}|+|W_{2,h,n,T}|+|W_{3,h,n,T}|+|W_{4,h,n,T}|\right)\\
 & =O\left(G\left[\frac{1}{\sqrt{nT}}+|\hat{\rho}-\rho^{*}|+|\hat{\beta}-\beta^{*}|+T^{\gamma}|\hat{\rho}-\rho^{*}||\hat{\beta}-\beta^{*}|\right]\right)
\end{align*}
as well as
\begin{align*}
\mathbb{E}\left|\sum_{h=1}^{G}[\hat{\omega}_{ev,h}(\hat{\beta},\hat{\rho})-\omega_{ev,h}^{*}]^{2}\right| & \leq\sum_{h=1}^{G}\mathbb{E}\left([\hat{\omega}_{ev,h}(\hat{\beta},\hat{\rho})-\omega_{ev,h}^{*}]^{2}\right)\\
 & \lesssim\sum_{h=1}^{G}\mathbb{E}\left(W_{1,h,n,T}^{2}+W_{2,h,n,T}^{2}+W_{3,h,n,T}^{2}+W_{4,h,n,T}^{2}\right)\\
 & =O\left(G^{2}\left[\frac{1}{nT}+(\hat{\rho}-\rho^{*})^{2}+(\hat{\beta}-\beta^{*})^{2}+T^{2\gamma}(\hat{\rho}-\rho^{*})^{2}(\hat{\beta}-\beta^{*})^{2}\right]\right).
\end{align*}
This implies
\[
\sum_{h=1}^{G}|\hat{\omega}_{ev,h}(\hat{\beta},\hat{\rho})-\omega_{ev,h}^{*}|=O_{p}\left(G\left[\frac{1}{\sqrt{nT}}+|\hat{\rho}-\rho^{*}|+|\hat{\beta}-\beta^{*}|+T^{\gamma}|\hat{\rho}-\rho^{*}||\hat{\beta}-\beta^{*}|\right]\right)
\]
and
\[
\sum_{h=1}^{G}[\hat{\omega}_{ev,h}(\hat{\beta},\hat{\rho})-\omega_{ev,h}^{*}]^{2}=O_{p}\left(G^{2}\left[\frac{1}{nT}+(\hat{\rho}-\rho^{*})^{2}+(\hat{\beta}-\beta^{*})^{2}+T^{2\gamma}(\hat{\rho}-\rho^{*})^{2}(\hat{\beta}-\beta^{*})^{2}\right]\right).
\]
This completes the proof.
\end{proof}
\begin{proof}[Proof of \prettyref{lem:ivxerror}]
Without loss of generality, we assume $\rho^{*}>0$. By \prettyref{lem:ivxjoint}\ref{enu:Q_pto}\ref{enu:R_pto},
as $(n,T)\to\infty$,
\[
\dfrac{1-(\rho_{z}\rho^{*})^{2}}{nT}\sum_{i=1}^{n}\sum_{t=1}^{T}\tilde{z}_{i,t}x_{i,t}\to_{p}\mathbb{E}(Q_{zx}-R_{zx})\neq0.
\]
Hence, in view of $1-(\rho_{z}\rho^{*})^{2}=T^{-(\theta\wedge\gamma)}$,
we have
\begin{equation}
\left(\frac{1}{n}\sum_{i=1}^{n}\sum_{t=1}^{T}\tilde{z}_{i,t}x_{i,t}\right)^{-1}=O_{p}\left(\dfrac{1}{T^{1+(\theta\wedge\gamma)}}\right).\label{eq:1 over zeta x}
\end{equation}
It follows that
\begin{align*}
 & b_{n,T}^{\mathrm{IVX}}(\hat{\rho})-b_{n,T}^{\mathrm{IVX}}(\rho^{*})\\
={} & \left(\sum_{h=0}^{G}\Psi_{h,T}(\hat{\rho},\rho_{z})\hat{\omega}_{ev,h}-\sum_{h=0}^{T-2}\Psi_{h,T}(\rho^{*},\rho_{z})\omega_{ev,h}^{*}\right)\cdot O_{p}\left(\dfrac{1}{T^{2+(\theta\wedge\gamma)}}\right)\\
={} & \left(\sum_{h=0}^{G}\Psi_{h,T}(\hat{\rho},\rho_{z})(\hat{\omega}_{ev,h}-\omega_{ev,h}^{*})+\sum_{h=0}^{G}[\Psi_{h,T}(\hat{\rho},\rho_{z})-\Psi_{h,T}(\rho^{*},\rho_{z})]\omega_{ev,h}^{*}\right.\\
 & -\left.\sum_{h=G+1}^{T-2}\Psi_{h,T}(\rho^{*},\rho_{z})\omega_{ev,h}^{*}\right)\cdot O_{p}\left(\dfrac{1}{T^{2+(\theta\wedge\gamma)}}\right)
\end{align*}
Since $\hat{\rho}-\rho^{*}=O_{p}(T^{-\eta})$,
\[
\hat{\rho}-1=\hat{\rho}-\rho^{*}+\rho^{*}-1=O_{p}\left(\frac{1}{T^{\eta}}+\frac{1}{T^{\gamma}}\right)=O_{p}(T^{-(\eta\wedge\gamma)}).
\]
Thus by \eqref{eq:Psi-h} it holds that
\begin{equation}
\sum_{h=0}^{G}[\Psi_{h,T}(\hat{\rho},\rho_{z})]^{2}=O_{p}\bigl(GT^{2(\theta\wedge\gamma)+2[(\eta\wedge\gamma)\vee\theta]}\bigr)=O_{p}\bigl(GT^{2(\theta\wedge\gamma)+2(\gamma\vee\theta)}\bigr)=O_{p}\bigl(GT^{2(\theta+\gamma)}\bigr).\label{eq:sum_psi_2}
\end{equation}
By \prettyref{cor:omega_mixed}, we have
\begin{equation}
\sum_{h=0}^{G}(\hat{\omega}_{ev,h}-\omega_{ev,h}^{*})^{2}=O_{p}\left(\dfrac{G}{nT}+\frac{G}{T^{2}}\right).\label{eq:sum_omega_divx_2}
\end{equation}
Then by the Cauchy-Schwarz inequality, \eqref{eq:sum_psi_2} and \eqref{eq:sum_omega_divx_2}
lead to
\begin{align}
\sum_{h=0}^{G}\Psi_{h,T}(\hat{\rho},\rho_{z})(\hat{\omega}_{ev,h}-\omega_{ev,h}^{*}) & \leq\left(\sum_{h=0}^{G}[\Psi_{h,T}(\hat{\rho},\rho_{z})]^{2}\right)^{1/2}\left(\sum_{h=0}^{G}(\hat{\omega}_{ev,h}-\omega_{ev,h}^{*})^{2}\right)^{1/2}\nonumber \\
 & =O_{p}\left(\dfrac{GT^{\theta+\gamma}}{\sqrt{nT}}+\frac{G}{T^{1-\theta-\gamma}}\right).\label{eq:sum_psi_omega_diff}
\end{align}

By the differential mean value theorem, there exists a $\check{\rho}$
between $\hat{\rho}$ and $\rho^{*}$ such that $\Psi_{h,T}(\hat{\rho},\rho_{z})-\Psi_{h,T}(\rho^{*},\rho_{z})=\frac{d}{d\rho}\Psi_{h,T}(\check{\rho},\rho_{z})(\hat{\rho}-\rho^{*}).$
To bound $\frac{d}{d\rho}\Psi_{h,T}(\check{\rho},\rho_{z})$, we first
prove a useful asymptotic result
\begin{equation}
\sup_{h\leq T}{}\left|\dfrac{1-(T-h)\check{\rho}^{T-h-1}}{1-\check{\rho}}\right|=O(T^{2\gamma}).\label{eq:useful asymp}
\end{equation}
Since $\hat{\rho}-1=O_{p}(T^{-(\eta\wedge\gamma)})$ and $\check{\rho}$
is between $\hat{\rho}$ and $\rho^{*}$, we have $1-O_{p}(T^{-(\eta\wedge\gamma)})\leq\check{\rho}\leq1+O_{p}(T^{-\gamma})$,
which gives rise to $1-\check{\rho}=O_{p}(T^{-\gamma})$.

When $c^{*}<0$ and $\gamma\in[0,1)$, for any constant $\alpha$,
as $T\to\infty$ we have $T^{\alpha}\check{\rho}^{T}\leq T^{\alpha}\bigl(1+\frac{c^{*}}{T^{\gamma}}\bigr)^{T}\to0$
with probability approaching one, where the convergence is due to
\[
\lim_{T\to\infty}T^{\alpha}\left(1+\frac{c^{*}}{T^{\gamma}}\right)^{T}=\exp\left(\lim_{x\to0^{+}}\frac{\log(1+c^{*}x^{\gamma})-\alpha x\log x}{x}\right)=\exp(-\infty)=0.
\]
This implies that $T^{\alpha}\check{\rho}^{T}=o_{p}(1)$ for any constant
$\alpha$. Therefore, $\sup_{h\leq T}{}\bigl|\frac{1-(T-h)\check{\rho}^{T-h-1}}{1-\check{\rho}}\bigr|=O_{p}(T^{\gamma}).$
When $\gamma=1$, $\check{\rho}^{T}\leq\bigl(1+\frac{|c^{*}|}{T^{\gamma}}\bigr)^{T}\to\exp(|c^{*}|)$
and thus $\sup_{h\leq T}\bigl|\frac{1-(T-h)\check{\rho}^{T-h-1}}{1-\check{\rho}}\bigr|=O_{p}(T^{2\gamma}).$
Then \eqref{eq:useful asymp} is verified.

For $\rho\neq1$, $\Psi_{h,T}(\rho,\rho_{z})$ can be written as
\[
\Psi_{h,T}(\rho,\rho_{z})=\frac{1}{\rho_{z}-\rho}\left(\frac{\rho_{z}-\rho_{z}^{T-h}}{1-\rho_{z}}-\frac{\rho-\rho^{T-h}}{1-\rho}\right).
\]
Then, we have
\begin{eqnarray*}
 &  & \sup_{h\leq T}{}\left|\frac{d}{d\rho}\Psi_{h,T}(\check{\rho},\rho_{z})\right|\\
 & = & \sup_{h\leq T}{}\left|\dfrac{1}{(\rho_{z}-\check{\rho})^{2}}\left(\dfrac{\rho_{z}-\rho_{z}^{T-h}}{1-\rho_{z}}-\dfrac{\check{\rho}-\check{\rho}^{T-h}}{1-\check{\rho}}\right)-\dfrac{1}{\rho_{z}-\check{\rho}}\left(\dfrac{\check{\rho}-\check{\rho}^{T-h}}{(1-\check{\rho})^{2}}+\dfrac{1-(T-h)\check{\rho}^{T-h-1}}{1-\check{\rho}}\right)\right|\\
 & = & O_{p}\bigl(T^{2(\theta\wedge\gamma)}\cdot(T^{\theta}+T^{\gamma})\bigr)+O_{p}\bigl(T^{\theta\wedge\gamma}\cdot(T^{2\gamma}+T^{2\gamma})\bigr)\\
 & = & O_{p}\bigl(T^{2(\theta\wedge\gamma)+(\theta\vee\gamma)}\bigr)+O_{p}\bigl(T^{(\theta\wedge\gamma)+2\gamma}\bigr)=O_{p}(T^{\theta+2\gamma}),
\end{eqnarray*}
where the second line applies \eqref{eq:useful asymp}. It follows
that
\begin{align}
\left|\sum_{h=0}^{G}[\Psi_{h,T}(\hat{\rho},\rho_{z})-\Psi_{h,T}(\rho^{*},\rho_{z})]\omega_{ev,h}^{*}\right| & \leq\left(\sup_{h\leq T}{}\left|\Psi_{h,T}(\hat{\rho},\rho_{z})-\Psi_{h,T}(\rho^{*},\rho_{z})\right|\right)\sum_{h=0}^{G}|\omega_{ev,h}^{*}|\nonumber \\
 & \leq\sup_{h\leq T}{}\left|\frac{d}{d\rho}\Psi_{h,T}(\check{\rho},\rho_{z})\right||\hat{\rho}-\rho^{*}|\cdot O(1)\nonumber \\
 & =O_{p}(T^{\theta+2\gamma}|\hat{\rho}-\rho^{*}|).\label{eq:sum_psi_diff}
\end{align}
Lastly, noting that $v_{i,t}=\sum_{s=0}^{\infty}g_{s}\varepsilon_{i,t-s}$,
we have $|\omega_{ev,h}|\lesssim|g_{h}|\lesssim q_{\nu}^{h}$ where
$q_{\nu}=\exp(-C_{g})$ and therefore
\begin{align}
\left|\sum_{h=G+1}^{T-2}\Psi_{h,T}(\rho^{*},\rho_{z})\omega_{ev,h}^{*}\right| & \lesssim\sum_{h=G+1}^{T-2}\frac{1}{|\rho_{z}-\rho^{*}|}\left(\frac{|\rho_{z}|+|\rho_{z}^{T-h}|}{|1-\rho_{z}|}+\frac{|\rho^{*}|+|\rho^{*T-h}|}{|1-\rho^{*}|}\right)q_{\nu}^{h}\nonumber \\
 & =O\bigl(T^{\theta+\gamma}\cdot q_{\nu}^{G}\bigr).\label{eq:sum_psi_omega}
\end{align}

Combining \eqref{eq:sum_psi_omega_diff}, \eqref{eq:sum_psi_diff}
and \eqref{eq:sum_psi_omega} we conclude that
\begin{align*}
 & b_{n,T}^{\mathrm{IVX}}(\hat{\rho})-b_{n,T}^{\mathrm{IVX}}(\rho^{*})\\
={} & O_{p}\left(\dfrac{1}{T^{2+(\theta\wedge\gamma)}}\right)\cdot O_{p}\left(\dfrac{GT^{\theta+\gamma}}{\sqrt{nT}}+\frac{G}{T^{1-\theta-\gamma}}+T^{\theta+2\gamma}|\hat{\rho}-\rho^{*}|+T^{\theta+\gamma}\cdot q^{G}\right)\\
={} & O_{p}\left(\dfrac{G}{\sqrt{nT^{5-2(\theta\vee\gamma)}}}+\frac{G}{T^{3-(\theta\vee\gamma)}}+\frac{|\hat{\rho}-\rho^{*}|}{T^{2-(\theta\vee\gamma)-\gamma}}+\frac{q^{G}}{T^{2-(\theta\vee\gamma)}}\right).
\end{align*}
We complete the proof.
\end{proof}

\section{Proof of Moments of Stochastic Integrals\label{sec:Proof-of-Moments}}
\begin{proof}[Proof of \prettyref{lem:expectations}]
For (\ref{eq:E int J2}), we have
\[
\mathbb{E}\left[\int_{0}^{1}J_{2,c^{*}}(r)^{2}\,dr\right]=\int_{0}^{1}\mathbb{E}[J_{2,c^{*}}(r)^{2}]\,dr=\omega_{vv}^{*}\int_{0}^{1}\frac{1}{2c^{*}}(e^{2rc^{*}}-1)\,dr=\omega_{vv}^{*}\frac{e^{2c^{*}}-2c^{*}-1}{4c^{*2}},
\]
where the first equality is due to Fubini's theorem and the second
uses \eqref{eq:E JJ}.

For (\ref{eq:E of squared int J}), we have
\begin{eqnarray*}
 &  & \mathbb{E}\left[\biggl(\int_{0}^{1}J_{2,c^{*}}(r)\,dr\biggr)^{2}\right]\\
 & = & \int_{r=0}^{1}\int_{s=0}^{1}\mathbb{E}[J_{2,c^{*}}(r)J_{2,c^{*}}(s)]\,ds\,dr\\
 & = & \int_{r=0}^{1}\int_{s=0}^{r}\mathbb{E}[J_{2,c^{*}}(r)J_{2,c^{*}}(s)]\,ds\,dr+\int_{r=0}^{1}\int_{s=r}^{1}\mathbb{E}[J_{2,c^{*}}(r)J_{2,c^{*}}(s)]\,ds\,dr\\
 & = & 2\int_{r=0}^{1}\int_{s=0}^{r}\mathbb{E}[J_{2,c^{*}}(r)J_{2,c^{*}}(s)]\,ds\,dr=2\int_{r=0}^{1}\int_{s=0}^{r}\frac{\omega_{vv}^{*}}{2c^{*}}[e^{c^{*}(r+s)}-e^{c^{*}(r-s)}]\,ds\,dr\\
 & = & \frac{\omega_{vv}^{*}}{c^{*2}}\int_{r=0}^{1}[e^{2c^{*}r}-2e^{c^{*}r}+1]\,dr=\omega_{vv}^{*}\frac{2c^{*}+(1-e^{c^{*}})(3-e^{c^{*}})}{2c^{*3}},
\end{eqnarray*}
where the first line is due to Fubini's theorem, and the fourth equality
applies the fact that for $s\leq r$,
\begin{eqnarray}
 &  & \mathbb{E}[J_{2,c^{*}}(r)J_{2,c^{*}}(s)]\nonumber \\
 & = & \mathbb{E}\left[\int_{u=0}^{r}e^{(r-u)c^{*}}\,dB_{2}(u)\int_{\tau=0}^{s}e^{(s-\tau)c^{*}}\,dB_{2}(\tau)\right]\nonumber \\
 & = & e^{c^{*}(r+s)}\mathbb{E}\left[\int_{u=0}^{r}e^{-c^{*}u}\,dB_{2}(u)\int_{\tau=0}^{s}e^{-c^{*}\tau}\,dB_{2}(\tau)\right]\nonumber \\
 & = & e^{c^{*}(r+s)}\mathbb{E}\left[\omega_{vv}^{*}\int_{u=0}^{s}e^{-2c^{*}u}\,du\right]=\frac{\omega_{vv}^{*}}{2c^{*}}[e^{c^{*}(r+s)}-e^{c^{*}(r-s)}],\label{eq:E JJ}
\end{eqnarray}
where the third line uses Itô's isometry (cf.\ \citet{karatzas2014brownian}
Equation (2.14) in Section~3.2).

For (\ref{eq:E of squared B_int_J}), we first derive some useful
quantities. Let $K_{2,c^{*}}(r):=\int_{0}^{r}e^{-c^{*}\tau}\,dB_{2}(\tau)$.
By Itô's isometry, we have
\begin{align}
\mathbb{E}\left[K_{2,c^{*}}(r)^{2}\right] & =\omega_{vv}^{*}\int_{0}^{r}e^{-2c^{*}\tau}\,d\tau=\frac{\omega_{vv}^{*}}{2c^{*}}(1-e^{-2c^{*}r}),\label{eq:E K2}\\
\mathbb{E}\bigl[B_{1}(r)K_{2,c^{*}}(r)\bigr] & =\omega_{ev}^{*}\int_{0}^{r}e^{-c^{*}\tau}\,d\tau=\frac{\omega_{ev}^{*}}{c^{*}}(1-e^{-c^{*}r}).\label{eq:E BK}
\end{align}
Let $W(r)$ and $W^{\perp}(r)$ be two independent standard Brownian
motions defined on the same probability space as $\bm{B}(r)$. Specifically,
they are constructed by
\begin{align*}
W(r) & =\frac{1}{\sqrt{\omega_{ee}^{*}}}B_{1}(r),\ \ \text{and }\ \ W^{\perp}(r)=\frac{1}{\sqrt{1-\varrho^{*2}}}\left[\frac{1}{\sqrt{\omega_{vv}^{*}}}B_{2}(r)-\frac{\varrho^{*}}{\sqrt{\omega_{ee}^{*}}}B_{1}(r)\right],
\end{align*}
where $\varrho^{*}:=\omega_{ev}^{*}/\sqrt{\omega_{ee}^{*}\omega_{vv}^{*}}$
is the correlation between $B_{1}(r)$ and $B_{2}(r)$. In matrix
form, we have
\[
\begin{bmatrix}B_{1}(r)\\
B_{2}(r)
\end{bmatrix}=\begin{bmatrix}\sqrt{\omega_{ee}^{*}} & 0\\
\sqrt{\omega_{vv}^{*}}\varrho^{*} & \sqrt{\omega_{vv}^{*}(1-\varrho^{*2})}
\end{bmatrix}\begin{bmatrix}W(r)\\
W^{\perp}(r)
\end{bmatrix}.
\]
Then, $[B_{1}(r),K_{2,c^{*}}(r)]'$ satisfy the following stochastic
differential equation:
\[
\begin{bmatrix}dB_{1}(r)\\
dK_{2,c^{*}}(r)
\end{bmatrix}=\begin{bmatrix}1 & 0\\
0 & e^{-c^{*}r}
\end{bmatrix}\begin{bmatrix}dB_{1}(r)\\
dB_{2}(r)
\end{bmatrix}=\begin{bmatrix}\sqrt{\omega_{ee}^{*}} & 0\\
\sqrt{\omega_{vv}^{*}}\varrho^{*}e^{-c^{*}r} & \sqrt{\omega_{ee}^{*}(1-\varrho^{*2})}e^{-c^{*}r}
\end{bmatrix}\begin{bmatrix}dW(r)\\
dW^{\perp}(r)
\end{bmatrix}.
\]
Invoking Itô's lemma (cf.\ \citet{karatzas2014brownian} Theorem~3.6
in Section~3.3), the stochastic differential of $B_{1}(r)^{2}K_{2,c^{*}}(r)$
is given by
\begin{align*}
d[B_{1}(r)^{2}K_{2,c^{*}}(r)] & =\left[\omega_{ee}^{*}K_{2,c^{*}}(r)+2\sqrt{\omega_{ee}^{*}\omega_{vv}^{*}}\varrho^{*}e^{-c^{*}r}B_{1}(r)\right]\,dr\\
 & \quad\,+\left[2\sqrt{\omega_{ee}^{*}}B_{1}(r)K_{2,c^{*}}(r)+\sqrt{\omega_{vv}^{*}}\varrho^{*}e^{-c^{*}r}B_{1}(r)^{2}\right]\,dW(r)\\
 & \quad\,+\sqrt{\omega_{vv}^{*}(1-\varrho^{*2})}e^{-c^{*}r}B_{1}(r)^{2}\,dW^{\perp}(r),
\end{align*}
or in the integral form:
\begin{align*}
B_{1}(r)^{2}K_{2,c^{*}}(r) & =\int_{0}^{r}\left[\omega_{11}^{*}K_{2,c^{*}}(\tau)+2\sqrt{\omega_{ee}^{*}\omega_{vv}^{*}}\varrho^{*}e^{-c^{*}\tau}B_{1}(\tau)\right]\,d\tau\\
 & \quad\,+\int_{0}^{r}\left[2\sqrt{\omega_{ee}^{*}}B_{1}(\tau)K_{2,c^{*}}(\tau)+\sqrt{\omega_{vv}^{*}}\varrho^{*}e^{-c^{*}\tau}B_{1}(\tau)^{2}\right]\,dW(\tau)\\
 & \quad\,+\int_{0}^{r}\sqrt{\omega_{vv}^{*}(1-\varrho^{*2})}e^{-c^{*}\tau}B_{1}(\tau)^{2}\,dW^{\perp}(\tau).
\end{align*}
Its conditional expectation with respect to $\mathcal{F}_{s}$ for
$s\leq r$ equals
\begin{align*}
\mathbb{E}[B_{1}(r)^{2}K_{2,c^{*}}(r)|\mathcal{F}_{s}] & =\int_{0}^{s}\left[\omega_{ee}^{*}K_{2,c^{*}}(\tau)+2\omega_{ev}^{*}e^{-c^{*}\tau}B_{1}(\tau)\right]\,d\tau\\
 & \quad\,+\int_{s}^{r}\left[\omega_{ee}^{*}K_{2,c^{*}}(s)+2\omega_{ev}^{*}e^{-c^{*}\tau}B_{1}(s)\right]\,d\tau\\
 & \quad\,+\int_{0}^{s}\left[2\sqrt{\omega_{ee}^{*}}B_{1}(\tau)K_{2,c^{*}}(\tau)+\sqrt{\omega_{vv}^{*}}\varrho^{*}e^{-c^{*}\tau}B_{1}(\tau)^{2}\right]\,dW(\tau)\\
 & \quad\,+\int_{0}^{s}\sqrt{\omega_{vv}^{*}(1-\varrho^{*2})}e^{-c^{*}\tau}B_{1}(\tau)^{2}\,dW^{\perp}(\tau).
\end{align*}
Then we have, for $s\leq r$,
\begin{align}
 & \mathbb{E}[B_{1}(r)^{2}K_{2,c^{*}}(r)K_{2,c^{*}}(s)]=\mathbb{E}\left[\mathbb{E}[B_{1}(r)^{2}K_{2,c^{*}}(r)|\mathcal{F}_{s}]K_{2,c^{*}}(s)\right]\nonumber \\
={} & \int_{0}^{s}\left(\omega_{ee}^{*}\mathbb{E}[K_{2,c^{*}}(\tau)^{2}]+2\omega_{ev}^{*}e^{-c^{*}\tau}\mathbb{E}\bigl[B_{1}(\tau)K_{2,c^{*}}(\tau)\bigr]\right)\,d\tau\nonumber \\
 & +\int_{s}^{r}\left(\omega_{ee}^{*}\mathbb{E}[K_{2,c^{*}}(s)^{2}]+2\omega_{ev}^{*}e^{-c^{*}\tau}\mathbb{E}\bigl[B_{1}(s)K_{2,c^{*}}(s)\bigr]\right)\,d\tau\nonumber \\
 & +\int_{0}^{s}\sqrt{\omega_{vv}^{*}}\varrho^{*}e^{-c^{*}\tau}\left(2\sqrt{\omega_{ee}^{*}}\mathbb{E}\bigl[B_{1}(\tau)K_{2,c^{*}}(\tau)\bigr]+\sqrt{\omega_{vv}^{*}}\varrho^{*}e^{-c^{*}\tau}\mathbb{E}\left[B_{1}(\tau)^{2}\right]\right)\,d\tau\nonumber \\
 & +\int_{0}^{s}\omega_{vv}^{*}(1-\varrho^{*2})e^{-2c^{*}\tau}\mathbb{E}\left[B_{1}(\tau)^{2}\right]\,d\tau\nonumber \\
={} & \frac{\omega_{ee}^{*}\omega_{vv}^{*}}{2c^{*}}\int_{0}^{s}(1-e^{-2c^{*}\tau})\,d\tau+\frac{2\omega_{ev}^{*2}}{c^{*}}\int_{0}^{s}e^{-c^{*}\tau}(1-e^{-c^{*}\tau})\,d\tau\nonumber \\
 & +\frac{\omega_{ee}^{*}\omega_{vv}^{*}}{2c^{*}}(r-s)(1-e^{-2c^{*}s})+\frac{2\omega_{ev}^{*2}}{c^{*}}(1-e^{-c^{*}s})\int_{s}^{r}e^{-c^{*}\tau}\,d\tau\nonumber \\
 & +\int_{0}^{s}e^{-c^{*}\tau}\left(\frac{2\omega_{ev}^{*2}}{c^{*}}(1-e^{-c^{*}\tau})+\omega_{ev}^{*2}e^{-c^{*}\tau}\tau\right)\,d\tau\nonumber \\
 & +\int_{0}^{s}(\omega_{ee}^{*}\omega_{vv}^{*}-\omega_{ev}^{*2})e^{-2c^{*}\tau}\tau\,d\tau\nonumber \\
={} & \frac{\omega_{ee}^{*}\omega_{vv}^{*}}{2c^{*}}\left[s-\frac{1}{2c^{*}}(1-e^{-2c^{*}s})\right]+\biggl[\frac{\omega_{ev}^{*}}{c^{*}}(1-e^{-c^{*}s})\biggr]^{2}\nonumber \\
 & +\frac{\omega_{ee}^{*}\omega_{vv}^{*}}{2c^{*}}(r-s)(1-e^{-2c^{*}s})+\frac{2\omega_{ev}^{*2}}{c^{*2}}(1-e^{-c^{*}s})(e^{-c^{*}s}-e^{-c^{*}r})\nonumber \\
 & +\biggl[\frac{\omega_{ev}^{*}}{c^{*}}(1-e^{-c^{*}s})\biggr]^{2}+\frac{\omega_{ee}^{*}\omega_{vv}^{*}}{4c^{*2}}\bigl[1-(2c^{*}s+1)e^{-2c^{*}s}\bigr]\nonumber \\
={} & \frac{\omega_{ee}^{*}\omega_{vv}^{*}}{2c^{*}}r(1-e^{-2c^{*}s})+\frac{2\omega_{ev}^{*2}}{c^{*2}}(1-e^{-c^{*}s})(1-e^{-c^{*}r}),\label{eq:E B2KK}
\end{align}
where the third equality uses \eqref{eq:E K2} and \eqref{eq:E BK}.
Finally, we have
\begin{align*}
 & \mathbb{E}\left[\biggl(B_{1}(1)\int_{0}^{1}J_{2,c^{*}}(r)\,dr\biggr)^{2}\right]=\int_{r=0}^{1}\int_{s=0}^{1}\mathbb{E}[B_{1}(1)^{2}J_{2,c^{*}}(r)J_{2,c^{*}}(s)]\,ds\,dr\\
={} & 2\int_{r=0}^{1}\int_{s=0}^{r}\mathbb{E}[B_{1}(1)^{2}J_{2,c^{*}}(r)J_{2,c^{*}}(s)]\,ds\,dr\\
={} & 2\int_{r=0}^{1}\int_{s=0}^{r}e^{c^{*}(r+s)}\mathbb{E}[B_{1}(r)^{2}K_{2,c^{*}}(r)K_{2,c^{*}}(s)]\,ds\,dr\\
 & +2\omega_{ee}^{*}\int_{r=0}^{1}\int_{s=0}^{r}(1-r)\mathbb{E}[J_{2,c^{*}}(r)J_{2,c^{*}}(s)]\,ds\,dr\\
={} & \frac{\omega_{ee}^{*}\omega_{vv}^{*}}{c^{*}}\int_{r=0}^{1}\int_{s=0}^{r}[e^{c^{*}(r+s)}-e^{c^{*}(r-s)}]\,ds\,dr+\frac{4\omega_{ev}^{*2}}{c^{*2}}\int_{r=0}^{1}\int_{s=0}^{r}(e^{c^{*}s}-1)(e^{c^{*}r}-1)\,ds\,dr\\
={} & \omega_{ee}^{*}\omega_{vv}^{*}\frac{2c^{*}+(1-e^{c^{*}})(3-e^{c^{*}})}{2c^{*3}}+2\omega_{ev}^{*2}\frac{(e^{c^{*}}-c^{*}-1)^{2}}{c^{*4}},
\end{align*}
where the third equality uses the fact that $B_{1}(r)^{2}-\omega_{ee}^{*}r$
is a martingale and the fourth equality uses \eqref{eq:E JJ} and
\eqref{eq:E B2KK}.

If $c^{*}=0$, we can perform the same calculations as above. A more
convenient way is to invoke the dominated convergence theorem to allow
interchanging expectation and limit.
\end{proof}

\end{document}